\theoremstyle{plain}
\newtheorem{theorem}{Theorem}[section]
\newtheorem{lemma}[theorem]{Lemma}
\newtheorem{corollary}[theorem]{Corollary}
\newtheorem{claim}[theorem]{Claim}
\theoremstyle{definition}
\newtheorem{remark}[theorem]{Remark}
\newtheorem{definition}[theorem]{Definition}
\newtheorem{fact}[theorem]{Fact}
\newtheorem{example}[theorem]{Example}
\crefname{claim}{Claim}{Claims}
\Crefname{claim}{Claim}{Claims}
\newcommand{\RR}{\mathbb{R}} 
\newcommand{\scalprod}[2]{\left\langle#1,#2\right\rangle}
\DeclareMathOperator{\onethreeSAT}{1-in-3-SAT} 
\DeclareMathOperator{\twofourSAT}{(\ge 2)-in-4-SAT} 
\DeclareMathOperator{\threeXOR}{3-XOR} 
\DeclareMathOperator{\threeSAT}{3-SAT} 
\DeclareMathOperator{\fourSAT}{4-SAT} 
\DeclareMathOperator{\NAE}{NAE} 
\DeclareMathOperator{\twoSAT}{2-SAT} 
\DeclareMathOperator{\PCSP}{PCSP}
\DeclareMathOperator{\fiPCSP}{fiPCSP}
\DeclareMathOperator{\fPCSP}{fPCSP}
\DeclareMathOperator{\CSP}{CSP}
\DeclareMathOperator{\Pol}{Pol} 
\DeclareMathOperator{\OR}{\mathsf{OR}}
\DeclareMathOperator{\AND}{\mathsf{AND}}
\DeclareMathOperator{\NAND}{\mathsf{NAND}}
\DeclareMathOperator{\NOR}{\mathsf{NOR}}
\DeclareMathOperator{\xNOR}{\mathsf{ANDNOR}}
\newcommand{\XOR}{\oplus}
\newcommand{\BigXOR}{\bigoplus}
\newcommand{\id}{\mathsf{id}}             
\newcommand{\idneg}[1]{\id\overline{#1}}  
\newcommand{\negate}[1]{\overline{#1}}  
\DeclareMathOperator{\Maj}{\mathsf{Maj}} 
\DeclareMathOperator{\MajFam}{\bm{\mathsf{Maj}}}
\DeclareMathOperator{\Par}{\mathsf{Par}} 
\DeclareMathOperator{\ParFam}{\bm{\mathsf{Par}}}
\DeclareMathOperator{\invMaj}{\overline{\Maj}} 
\DeclareMathOperator{\invMajFam}{\overline{\MajFam}} 
\DeclareMathOperator{\invPar}{\overline{\Par}} 
\DeclareMathOperator{\invParFam}{\overline{\ParFam}} 
\DeclareMathOperator{\AT}{\mathsf{AT}} 
\DeclareMathOperator{\ATFam}{\bm{\mathsf{AT}}}
\newcommand{\Minion}{\mathcal{M}}
\newcommand{\xmark}{\ding{55}}
\newcommand{\gcmark}{{\color{teal}\checkmark}}
\newcommand{\rxmark}{{\color{red}\xmark}}
\newcommand\bstr[1]{\ensuremath{\texttt{#1}}}
\newcommand{\OmitTables}{
  \NewEnviron{omittedtable}{\missingfigure{Table omitted here, comment out the OmitTables macro in main.tex to show it.}}
  \let\longtblr\omittedtable
  \let\endlongtblr\endomittedtable
}
\title{On the Usefulness of Promises}
\author{Anonymous submission}
\author{Per Austrin}
\author{Johan Håstad}
\author{Björn Martinsson}
\affil{KTH Royal Institute of Technology}
\begin{document}

\maketitle

\begin{abstract}
A Boolean predicate $A$ is defined to be promise-useful if $\PCSP(A,B)$ is tractable for some non-trivial $B$ and otherwise it is promise-useless.  We initiate investigations of this notion and derive sufficient conditions for both promise-usefulness and promise-uselessness (assuming $\text{P} \ne \text{NP}$).  While we do not obtain a complete characterization, our conditions are sufficient to classify all predicates of arity at most $4$ and almost all predicates of arity $5$.  We also derive asymptotic results to show that for large arities a vast majority of all predicates are promise-useless.

Our results are primarily obtained by a thorough study of the ``Promise-SAT'' problem, in which we are given a $k$-SAT instance with the promise that there is a satisfying assignment for which the literal values of each clause satisfy some additional constraint.

The algorithmic results are based on the basic LP + affine IP algorithm of Brakensiek et al.~(SICOMP, 2020) while we use a number of novel criteria to establish NP-hardness.
\end{abstract}

\tableofcontents

\clearpage

\section{Introduction}

The class of Constraint Satisfaction Problems ($\CSP$s) gives a very general framework that includes many well-known problems studied in mathematics and computer science. You are given a set of constraints over a set of variables, with each constraint depending only on a  constant number of the variables, and your goal is to find an assignment that satisfies all constraints. By limiting the constraint language $A$, i.e., the types of constraints allowed, it is possible to create a wide range of different problems $\CSP(A)$. Central examples are given by 
$\threeSAT$ and graph $k$-coloring.

A fundamental question for $\CSP$s is to classify them as being tractable or being NP-hard.  In the case when the variables are Boolean-valued this was done already in 1978 when Schaefer \cite{Schaefer} gave a full classification.  Feder and Vardi conjectured in 1997 \cite{FederVardi} that such a dichotomy between polynomial time solvability and NP-hardness holds also for general finite domains, and after a long sequence of partial results this conjecture was fully proven independently by Bulatov \cite{Bulatov} and Zhuk \cite{Zhuk} in 2017.

A majority of CSPs are NP-hard and hence to allow for efficient algorithms some relaxation is needed.  One such relaxation is in the form of approximation algorithms where we ask for an assignment that may not satisfy all constraints but satisfies a non-trivial number of constraints. For example, even if we cannot find an assignment that satisfies all the constraints, maybe we can find an assignment that satisfies 90\% of the constraints using the promise that there is an assignment that satisfies all the constraints.  

A different relaxation is obtained if we instead ask for an assignment that satisfies all constraints in a relaxed form.  Such problems are called Promise Constraint Satisfaction Problems ($\PCSP$s).  One example of a $\PCSP$ is the classical problem of $k$-coloring a $3$-colorable graph. The study of this type of problem is very old, but the concept of $\PCSP$s was only recently formally introduced by Austrin, Guruswami and Håstad in their study of the ``$(2+\epsilon)$-SAT'' problem \cite{2pluseps}.   In $\PCSP(A, B)$, you are given instances of $\CSP(B)$ with the promise that the instance has a solution if interpreted as a $\CSP(A)$ instance. In the case of $k$-coloring a $3$-colorable graph, $\CSP(A)$ is the $3$-coloring problem, and $\CSP(B)$ is the $k$-coloring problem. Note that $\PCSP$s generalize $\CSP$s, since $\PCSP(A, A)$ is the same as $\CSP(A)$. This implies that understanding $\PCSP$s is at least as difficult as understanding $\CSP$, but we also expect that some tools useful for understanding $\CSP$s will be useful for the study of $\PCSP$s.

In the study of $\CSP$s and the resolution of the Feder-Vardi conjecture, algebraic methods play a crucial role, and one of the key concepts here is that of polymorphisms, introduced by Jeavons \cite{Jeavons97, Jeavons98}.   A polymorphism is a function that given multiple solutions to a problem, combines them in some way to form another solution. Polymorphisms are key to understanding the complexity of $\CSP$s, and the set of polymorphisms of a $\CSP$ determines its complexity.  In other words if two $\CSP$s have the same set of polymorphisms, they are of equal computational complexity.  In general, the fewer polymorphisms a $\CSP$ has, the higher is its computational complexity.

Polymorphisms can be extended to $\PCSP$s and the requirement now is that the function, given a set of solutions that satisfy the constraints of $A$, produces a solution to $B$.  It turns out that, also in this case, two problems with the same set of polymorphisms are of equal complexity \cite{symmetricBooleanDichotomy}. This gives a possible avenue of attack to understand the complexity of $\PCSP(A,B)$, but it is only a starting point.  It is known that the existence of some specific polymorphisms immediately gives efficient algorithms and that establishing some specific properties of the set of polymorphisms yields NP-hardness, but the known conditions are far from complementary.  On top of this, given $A$ and $B$, it is many times difficult to get a grip on the corresponding set of possible polymorphisms.  Given these difficulties, the study of $\PCSP$s is still at an early stage and for the rest of this paper we focus on the Boolean case which already is quite challenging.  In other words we assume that the inputs to both $A$ and $B$ are Boolean.  Even in this simple case our understanding is rather limited.  There is dichotomy when the predicates are symmetric \cite{symmetricBooleanDichotomy, symmetricBooleanDichotomy2}.  Brandts and Živný \cite{NAE} study the question when $A$ is a close to the $t$-in-$k$ predicate.  The general understanding of even Boolean $\PCSP$s is however very far from complete and a potential analogue of Schaefer's Theorem for this setting remains out of reach.

In the rest of this paper we restrict our attention to Boolean $\PCSP$s where the constraint language is given by a single pair of predicates (rather than a general collection of predicates).  Thus from now on, whenever we discuss $\PCSP(A, B)$, $A$ and $B$ are understood to be predicates on Boolean strings of some fixed arity.  That said, our general tractability and hardness results are all polymorphism-based and thus in principle they also apply to general constraint languages.

In the setting of approximation algorithms,  Austrin and Håstad \cite{usefulness_predicates} introduced a notion of useless predicates.  We do not give the exact definition here, but the general idea is that a predicate is useless if, even given a promise that there is an assignment that satisfies almost all the constraints, no non-trivial solution can be found in the approximating sense even when the algorithm is allowed to replace the predicate by any other predicate of its choosing.  It turns out that in this situation there is a simple and elegant characterization: assuming the Unique Games Conjecture, a predicate is useless if and only if the set of accepted strings can support a pairwise independent distribution.

We introduce and study an analogous notion of ``promise-uselessness'' for $\PCSP$s.  
In particular, let us say that a predicate $A$ is promise-useful if and only if there exists a (non-trivial) $B$ such that $\PCSP(A, B)$ is solvable in polynomial time.  Otherwise, $A$ is said to be promise-useless.  Apart from being natural in its own right, we hope this notion will be a helpful focus for further exploration and classification of the complex landscape of Boolean $\PCSP$s.

Note that by this definition, if $\CSP(A)$ is tractable, then $A$ is promise-useful. But even if $\CSP(A)$ is NP-hard, it is still possible that $A$ is promise-useful. For example, $\onethreeSAT$ is NP-hard by Schaefer's characterization, but $\onethreeSAT$ is promise-useful since $\PCSP(\onethreeSAT, \threeXOR)$ is tractable (since $\CSP(\threeXOR)$ is solved by Gaussian elimination over $\mathbf{F}_2$).  In general, for any tractable $\CSP(B)$, any predicate $A$ which implies $B$ is promise-useful.  A more interesting example is $\twofourSAT$ (where the objective is to find an assignment satisfying at least $2$ out of the $4$ literals in each clause): it is NP-hard and does not imply any tractable predicate.  However it is still promise-useful, because $\PCSP(\twofourSAT, \fourSAT)$ is tractable, as shown in the characterization of the ``$(2+\epsilon)$-SAT'' problem \cite{2pluseps}.  A basic example of a promise-useless predicate is $\threeSAT$, since the only non-trivial possible $B$ is $\threeSAT$, and $\PCSP(\threeSAT, \threeSAT) = \CSP(\threeSAT)$ which is NP-hard.

Even for a constraint language defined by a single fixed predicate, it is natural in the Boolean setting to consider the so-called \emph{folded} setting, where we allow negation of variables\footnote{This can be phrased in $\CSP$ terminology as including the ``not-equal'' predicate in the template.}.  Another natural variant is the so-called \emph{idempotent} setting, where we allow fixing some variables to constants\footnote{In $\CSP$ terminology this is equivalent to including all singleton unary predicates in the template.}.  Because of this, the notion of promise-useful/promise-useless, even in the single-predicate Boolean case, comes in four different flavors. 

In this work, our focus is the folded case.  It turns out that whether or not we work in the folded idempotent case or folded non-idempotent case does not make a significant difference.  For the rest of the introduction, unless explicitly mentioned otherwise, promise-uselessness is understood to be the folded idempotent Boolean case, a variant we, following \cite{bgs2025}, denote by $\fiPCSP$.  In this setting, it is not difficult to see that understanding which predicates are promise-useful boils down to understanding for which predicates $A$ the $\fiPCSP(A, \OR)$ problem is tractable (we establish this connection formally in \cref{sec:useful intro}).  We refer to this family of problems -- where we are given a $k$-SAT instance with an additional promise that there is a satisfying assignment which satisfies a stronger predicate $A$ for each clause -- as \emph{Promise-SAT} problems.  While the name is new, Promise-SAT was essentially the main focus of early work on general PCSPs \cite{2pluseps}.  In particular, its complexity for symmetric $A$ is well-understood from more general results \cite{bababe2021, symmetricBooleanDichotomy, symmetricBooleanDichotomy2}, but for arbitrary $A$ it has as far as we are aware not been studied in detail before.

\subsection{Results}

We derive general conditions to determine whether a predicate $A$ is promise-useful or promise-useless (assuming $\text{P} \ne \text{NP}$).  We apply these conditions to all predicates of arity up to five\footnote{Code for verifying these results is available at \protect\href{https://github.com/bjorn-martinsson/On-the-Usefulness-of-Promises}{github.com/bjorn-martinsson/On-the-Usefulness-of-Promises}} and it turns out that we can successfully characterize the promise-usefulness of most predicates.
The raw count of the classification is given in \Cref{tab:promise_overview_intro} below. For $k=5$, there are $59$ different (non-equivalent) predicates where we have been unable to determine whether the predicate is promise-useful or promise-useless.  The numbers in this table apply both for the idempotent and non-idempotent settings, as both our algorithms and hardness results are agnostic to this property.

 \begin{table}[h]
    \caption{Summary of classification of promise-usefulness and promise-uselessness of predicates $A$ of arity up to $5$.\label{tab:promise_overview_intro}}
    \centering
    \begin{tblr}{
      colspec = {l|rrrr},
      rowhead = 1,
      row{odd} = {blue9},
      row{1} = {gray9},
      columns = {colsep=4pt},
    } 
         & Total & Useful & Useless & Unknown \\
    \hline
        $k=2$ & $4$ & $4$ & $0$ & $0$ \\
        $k=3$ & $20$ & $16$ & $4$ & $0$ \\
        $k=4$ & $400$ & $230$ & $170$ & $0$ \\
        $k=5$ & $1\,228\,156$ & $156\,135$ & $1\,071\,962$ & $59$ \\
    \end{tblr}
\end{table}

 On the algorithmic side, our general condition for usefulness (yielding the results summarized in this table) is particularly simple and natural.  Somewhat informally, we establish (see \cref{thm:useful_condition} for a formal statement) that $A$ is promise-useful in the folded setting (both in idempotent and non-idempotent settings) if either
 \begin{enumerate}
        \item There is a weighted majority which is satisfied by all satisfying assignments, or
        \item There is a non-trivial parity constraint which is satisfied by all satisfying assignments.
\end{enumerate}
This is more or less a direct consequence of well-known tractability results for Boolean PCSPs, but what is more interesting is that despite our best efforts we have been unable to find any evidence of additional tractable cases, raising the tantalizing (if perhaps unlikely) possibility that this simple condition might exactly characterize promise-usefulness (in the folded Boolean setting).

As mentioned in the preceding section, understanding promise-usefulness boils down to understanding the complexity of the Promise-SAT problem, $\fiPCSP(A, \OR)$.  We consider this an interesting problem in its own right, and in fact most of the work in this paper is devoted to it (yielding the results for promise-usefulness as a byproduct).  Again we derive general tractability and hardness conditions, and apply them to all predicates of arity up to five.  The results are summarized in \Cref{tab:overview_intro} below (the total number of predicates is larger than for usefulness, because there are fewer direct equivalences between predicates).
For arity five we classify all except 189 out of the roughly 18.6 million genuinely different predicates.  Note that for $k=2$, $\fiPCSP(A, \OR)$ is always tractable since $\twoSAT$ is tractable, so the choice of $A$ does not matter. 
 
 \begin{table}[h]
    \caption{Summary of classification of complexity of $\fiPCSP(A, \OR)$ for $A$ of arity up to $5$. \label{tab:overview_intro}}
    \centering
    \begin{tblr}{
      colspec = {l|rrrr},
      rowhead = 1,
      row{odd} = {blue9},
      row{1} = {gray9},
      columns = {colsep=4pt},
    } 
         & Total & Tractable & NP-hard & Unknown \\
    \hline
        $k=2$ & $5$ & $5$ & $0$ & $0$ \\
        $k=3$ & $39$ & $33$ & $6$ & $0$ \\
        $k=4$ & $1\,991$ & $956$ & $1\,035$ & $0$ \\
        $k=5$ & $18\,666\,623$ & $1\,290\,862$ & $17\,375\,572$ & $189$ \\
    \end{tblr}
\end{table}

 From our classification for small arities it is tempting to guess that as the arity increases most predicates tend to be promise-useless.  This is indeed true, and we prove the following asymptotic version of this fact, showing that even exponentially sparse predicates are useless.
\begin{theorem}[Informal version of \cref{random useless}]
For any $s = \omega(k \cdot 2^{5k/6})$, a uniformly random $k$-ary Boolean predicate with $s$ satisfying assignments is promise-useless with high probability, in all four settings (folded/non-folded and idempotent/non-idempotent).
\end{theorem}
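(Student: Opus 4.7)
The strategy is to reduce promise-uselessness to the NP-hardness of $\fiPCSP(A, \OR)$ via the connection established in \cref{sec:useful intro}, and then to show that one of our sufficient conditions for NP-hardness of this problem is met by a random $A$ of the given density with high probability. Thus the proof amounts to a probabilistic verification of one of the hardness criteria from earlier in the paper.

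The probabilistic step proceeds by a union bound over what one may call \emph{tractability witnesses} for $\fiPCSP(A, \OR)$: structural features of the satisfying set $A^{-1}(1)$ whose absence our hardness criteria exploit. The simplest witnesses are non-trivial affine constraints that contain $A^{-1}(1)$ (yielding a parity polymorphism) and Hamming halfspaces that contain $A^{-1}(1)$ (yielding a weighted-majority polymorphism); our novel hardness criteria rule out additional, more refined witnesses in the form of specific configurations of $A$-satisfying tuples. For each candidate witness, the probability that a uniformly random $s$-subset of $\{0,1\}^k$ is consistent with it decays rapidly in $s$, while the total number of candidate witnesses is at most singly exponential in $k$.

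Balancing the enumeration and probability estimates across all classes of witness yields the threshold $s = \omega(k \cdot 2^{5k/6})$. The linear factor $k$ arises from the $2^k$-sized family of parities (where the union-bound loss is only logarithmic), while the exponent $5/6$ is dictated by the ``richest'' class of structural witnesses among our hardness criteria. The argument covers all four settings (folded/non-folded and idempotent/non-idempotent) simultaneously, since our hardness conditions are agnostic to these axes, as is already exploited in the classification tables.

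The main obstacle will be pinning down the exponent $5/6$ precisely. Simple witnesses such as parities or low-complexity halfspaces are eliminated at much smaller values of $s$, so the bottleneck is the most delicate of our novel structural criteria. Ensuring that the number of corresponding witnesses is controlled tightly enough, and that each is sufficiently rare in a random sparse $A$, is where the bulk of the work in this proof will lie; the remaining steps (the reduction to $\fiPCSP(A, \OR)$ and the routine Chernoff/union bounds for the easier witnesses) should be comparatively straightforward.
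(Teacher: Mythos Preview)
Your plan has a genuine gap: it conflates \emph{ruling out tractability} with \emph{proving NP-hardness}. The ``tractability witnesses'' you describe (affine constraints and halfspaces containing $A$) are exactly the features whose presence makes BLP+AIP succeed; showing they are absent for random $A$ only establishes that BLP+AIP fails, not that $\fiPCSP(A,\OR)$ is NP-hard. The hardness criteria in \cref{sec:hard} are not phrased as the absence of such witnesses in $A$; they require that the polymorphism minion lacks certain specific functions (ADAs, UnCADAs, functions of large matching number, etc.), which in turn means that certain obstruction matrices \emph{exist} in $A$. So even if one wanted to verify one of \cref{match+ADA,invmatch+ADA,thm:unate+ADA,thm:split} probabilistically, the union bound would be over candidate polymorphisms to exclude, not over affine/halfspace containments of $A$, and it is unclear what threshold that route would give.

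The paper takes a different and simpler route that bypasses \cref{sec:hard} entirely. The key step is \cref{small nondictator minions}: any minion containing a non-dictator contains a non-dictator of arity $6$. Hence it suffices to show that, with high probability, every non-dictator $f:\{0,1\}^6\to\{0,1\}$ has an obstruction matrix $M\in A^6$. One writes $f(x)=\bigoplus_{i\le t}x_i\oplus g(x_{t+1},\dots,x_6)$; the parity case ($g$ constant) is handled by \cref{lemma:xor}, and for non-constant $g$ one partitions the $k$ rows into $6$ blocks of size $k/6$ and uses \cref{lemma:outside} to fill in each column from $A$. It is precisely this $k/6$ block size that produces the $e^{-p\,2^{k/6}}$ term and hence the exponent $5/6$ in the threshold---not any of the structural criteria from \cref{sec:hard}. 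Finally, promise-uselessness follows by a union bound over the $2^k$ shifts $b$ (\cref{random useless}), and the non-folded case is handled by a minor adaptation noted in the remark after \cref{thm:random hard}. Your stated source of the $5/6$ is therefore incorrect, and your outline as written would not establish NP-hardness.
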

  The bound on $s$ in this result is likely far from tight and we suspect that this result is true with a much smaller bound on $s$, perhaps even polynomial, though we have not been able to prove this and new methods would be needed to push $s$ down to $2^{o(k)}$.
  Supplementing this result, we show (see \cref{thm:BLP+AIP random threshold} for precise statement) that the BLP+AIP algorithm (which, as described in the next section, is the source of our tractability results) stops working once a random predicate of arity $k$ accepts $s = \omega(k)$ strings out the $2^k$ possible inputs.  This leaves a large set of predicates that we are unable to classify.

\subsection{Overview of Techniques}

A surprisingly powerful algorithm for establishing that $\PCSP(A,\OR)$ is tractable is the basic LP + affine IP algorithm of Brakensiek et al.~\cite{BLPAFFINE}.  This algorithm is applicable assuming that the $\PCSP$ admits block-symmetric polymorphisms of arbitrarily large arity.  While for larger domains there are known examples of tractable $\PCSP$s that require different algorithms \cite{LichterPago}, no such examples are (currently) known for Boolean $\PCSP$s, and BLP+AIP is the only algorithm used in the current paper to derive positive results.

The specific polymorphisms we use with BLP+AIP are partly the expected ones, but it has a slight twist.  Unsurprisingly we have the three standard polymorphism families majority, odd parity and alternating threshold, but it turns out that in some cases the tractability of $\fiPCSP(A,\OR)$ needs idempotent versions of minority and even parity which, as far as we are aware, have not been employed before.  It is not difficult to give necessary and sufficient conditions in terms of $A$ for each of these families to be applicable.  On top of being explicit these conditions also turn out to be easy to check by computer for a given predicate $A$.  Thus the we do not introduce any essentially new techniques to establish tractability, we simply make good use of existing techniques.

 On the other hand when it comes to establishing that $A$ is promise-useful, the algorithmic condition becomes even simpler as described in the preceding section.   While for $\fiPCSP(A, \OR)$ we need all five polymorphism families described above (majority, odd parity, alternating threshold, idempotized minority, and idempotized even parity), it turns out that out of these five only majority and odd parity are needed to establish promise-usefulness: if any of the other three families can be used to establish promise-usefulness, then either majority or odd parity can as well.

 Our hardness results are obtained through the study of the minion of polymorphisms of $\fiPCSP(A,\OR)$.  In a small extension of previous results we show that if all such polymorphisms have a small fixing assignment then the problem is NP-hard.  In other words, for every polymorphism, it is possible to fix a constant number of inputs in a way such that the output is determined.  This is the basic criterion we use for hardness but we develop a number of distinct, but similar, conditions to establish that a given polymorphism minion has this property.

These conditions are based on establishing properties shared by all polymorphisms and that some particular (low arity) functions are not polymorphisms.  A toy example would be to show that all polymorphisms of $\fiPCSP(A, \OR)$ are monotone and that there is no polymorphism $f$ of arity $t+1$ such that if its first input is fixed to $0$ then the restricted function becomes an $\AND$ of $t$ variables.  It is then not difficult to prove that all polymorphisms of $\fiPCSP(A, \OR)$ have a fixing set of size $t-1$ from which it follows that $\fiPCSP(A, \OR)$ is NP-hard by known theorems.  In reality we use several more complicated properties of the polymorphism minion, culminating in four different conditions (\cref{match+ADA,,invmatch+ADA,,thm:unate+ADA,thm:split}).  Even for a given $A$ these conditions are most of the time too cumbersome to check by hand, but sufficiently simple that they can be checked relatively quickly by a computer.

As described in the preceding section, these diverse techniques are (somewhat surprisingly) sufficient to completely understand the complexity of $\fiPCSP(A, \OR)$ for all predicates up to arity four --  the small fixing assignment condition exactly complements the block-symmetry condition of the BLP+AIP algorithm for these predicates.  For the unclassified predicates of arity five, we know that they are not solved by BLP+AIP, but we do not know whether or not they satisfy the small fixing assignment condition.  In other words it is conceivable that the ``gap'' in our knowledge here is due to the concrete conditions (\cref{match+ADA,,invmatch+ADA,,thm:unate+ADA,thm:split}) only being sufficient for small fixing assignments, and not necessary.

\subsection{Organization}
\Cref{sec:Preliminaries} covers notation and background material used throughout the paper.
Then in \cref{sec:useful intro} we formally define and give some initial observations on promise-usefulness. 
Following this we analyze tractability in \cref{sec:tract}, applying the BLP+AIP algorithm and discussing the five families of block-symmetric polymorphisms that appear. 
We then turn to hardness in \cref{sec:hard} and give the four different conditions for small fixing assignments, Theorems \ref{match+ADA}, \ref{invmatch+ADA}, \ref{thm:unate+ADA} and \ref{thm:split}. 
In \cref{sec:small} we apply these methods to analyze the tractability and hardness of all predicates of arity up to five. 
The asymptotic bounds for large arities are established in \cref{sec:large arity}, and we give some general conclusions and discuss open problems in \cref{sec:conclusions}.

\section{Preliminaries} \label{sec:Preliminaries}

For a logical statement $P$, we use the Iverson notation where $[P]$ is defined to be 1 if $P$ is true and 0 otherwise.  For a positive integer $n$, $[n]$ denotes $\{1,\ldots, n\}$ (and in particular a generic binary string $x \in \{0,1\}^n$ is $1$-indexed and written as $x_1x_2\ldots x_n$).

Throughout the paper, we use the following notation for binary strings. The exclusive or of two binary strings $x$ and $y$ (of equal length) is denoted by $x \XOR y$.  For $b \in \{0,1\}$, $b^\ell$ denotes the all-$b$ string of length $\ell$. The number of ones in a binary string $x$ is denoted as $w(x) = \sum x_i$.  For a subset $S = \{j_1, \ldots, j_r\} \subseteq [k]$, $x_S$ is the projection of $x$ onto $S$, i.e., the binary string $(x_{j_1}, x_{j_2}, \ldots, x_{j_r})$ of length $|S| = r$. 
We let $\neg$ denote Boolean negation and extend this operation to bit-strings by applying it component-wise.  A Boolean function $f: \{0,1\}^\ell \rightarrow \{0,1\}$ is \emph{folded} if $f(\neg x)=\neg f (x)$ for every $x$, and \emph{idempotent} if $f(0^\ell)=0$ and $f(1^\ell)=1$.  We often blur the distinction between sets and Boolean vectors and for a set $S \subseteq [n]$ we let $f(S)$ be $f$ applied to the indicator vector of $S$ (i.e., the vector $x \in \{0,1\}^n$ defined by $x_i=[i \in S]$).

We identify a \emph{predicate} $A$ of \emph{arity} $k$ with a subset of $\{0,1\}^k$ (the set of \emph{accepting assignments} of $A$).   In order to avoid trivial cases we do not allow $A$ to be empty or to contain all strings of length $k$.  On the other hand we do allow $A$ to be independent of some of its coordinates and also that all strings in $A$ share the same value of some coordinate.

For a $k$-ary predicate $A \subseteq \{0,1\}^k$ and a binary string $p \in \{0,1\}^k$, we define $A \XOR p = \{\, a \XOR p \,:\, a \in A\,\}$.  In particular $A \XOR 1^k$ is the predicate formed by negating all accepting assignments of $A$.

We shall frequently be concerned with matrices where the columns are accepting assignments of a predicate. Given a predicate $A \subseteq \{0,1\}^k$ and an integer $\ell$, let $A^\ell$ be the set of all $k \times \ell$ matrices $M$ where each column is an accepting assignment of $A$. Its columns are denoted by $M^1,\ldots,M^\ell$, and its rows are denoted by $M_1,\ldots,M_k$. The entry at row $i$ column $j$ is denoted by $M_i^j$. 

For a set $X \subseteq \RR^k$ we denote by $K(X)$ the convex hull of $X$.
The following standard (see for instance \cite{boyd2004convex}),
theorem about separating convex sets by hyperplanes is useful for us.

\begin{theorem}\label{thm:separate}
Let $K_1$ and $K_2$ be two disjoint convex sets in $\mathbb{R}^k$ and suppose $K_1$ is closed. Then there exist real numbers such $c_1 \ldots c_k$ and $b$ such that $\sum_{i=1}^k c_i x_i \geq b$ for
any $x\in K_1$ while $\sum_{i=1}^k c_i x_i < b$ for
any $x\in K_2$.
\end{theorem}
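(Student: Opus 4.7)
The plan is the standard Minkowski-difference reduction. Set $D := K_1 - K_2 = \{x - y : x \in K_1,\, y \in K_2\}$; this is convex as a Minkowski sum of convex sets, and disjointness of $K_1,K_2$ forces $0 \notin D$. Any separator $c$ of $0$ from $D$ with a strict gap $c \cdot z \geq \delta > 0$ on $D$ pulls back to a separator of $K_1$ from $K_2$: setting $b := \inf_{x \in K_1} c \cdot x$, one has $c \cdot x \geq b$ for every $x \in K_1$ and $c \cdot y \leq b - \delta < b$ for every $y \in K_2$. So the real task is to produce a strict separator of $0$ from $D$.

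The key subroutine I would invoke is the nearest-point lemma: for any nonempty closed convex set $C \subseteq \mathbb{R}^k$ and any $p \notin C$, the map $x \mapsto \|x - p\|^2$ attains its minimum on $C$ at a unique point $q$. Existence follows by intersecting $C$ with a large closed ball around $p$ to reduce to compactness; uniqueness comes from strict convexity of the squared norm along any chord of $C$. Setting $c := q - p$ and $\beta := c \cdot q$, one has $c \cdot p = \beta - \|c\|^2 < \beta$, while $c \cdot x \geq \beta$ for every $x \in C$: if some $x \in C$ violated this, then by convexity moving from $q$ a short distance toward $x$ would stay in $C$ and strictly decrease the distance to $p$, contradicting the choice of $q$.

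To conclude, I would apply the nearest-point lemma to (the closure of) $D$ with $p = 0$. This is where the main obstacle lies: although $K_1$ is closed by hypothesis, the Minkowski difference $D$ need not be, so $0$ may sit on the boundary of $\overline{D}$ and the nearest-point lemma does not apply directly. The resolution is to invoke a supporting-hyperplane argument at $0 \in \partial\overline{D}$ to obtain a weak separator $c \cdot z \geq 0$ on $\overline{D}$, and then use closedness of $K_1$ to upgrade the resulting inequality to be strict on the $K_2$ side: pick $b := \inf_{x \in K_1} c \cdot x$ and argue that points of $K_2$ cannot accumulate at the hyperplane $\{c \cdot z = b\}$ without violating disjointness with the closed set $K_1$. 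This is exactly the content of the classical separating-hyperplane theorem as developed in standard references on convex analysis such as \cite{boyd2004convex}, which is what I would ultimately cite.
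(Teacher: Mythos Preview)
The paper does not actually prove \cref{thm:separate}; it simply cites it as a standard result from \cite{boyd2004convex}. More importantly, the statement as written is \emph{false}: take $K_1 = \{(x,y) \in \mathbb{R}^2 : y \geq e^{-x}\}$, which is closed and convex, and $K_2$ the $x$-axis. These are disjoint, yet the only candidate direction is $c = (0,1)$, and then no $b$ works: the condition on $K_1$ forces $b \le \inf_x e^{-x} = 0$, while the strict condition on $K_2$ forces $0 < b$. The classical separation theorems require an additional hypothesis --- either one set compact and the other closed, or one set open --- to obtain a strict inequality on one side.

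Your argument is sound up through the nearest-point lemma handling the case $0 \notin \overline{D}$. The gap is exactly the final ``upgrade'' step, where you claim that closedness of $K_1$ alone lets you sharpen the weak inequality on $K_2$ to a strict one because ``points of $K_2$ cannot accumulate at the hyperplane without violating disjointness with the closed set $K_1$.'' The counterexample above shows this is not so: the points $(n,0) \in K_2$ lie at distance $e^{-n}$ from $K_1$, which tends to $0$ with no contradiction. In the paper's actual uses of the theorem (the proofs of \cref{lemma:test_maj} and \cref{lemma:test_AT} in \cref{blocksymmetry proofs}), $K_1 = K(A)$ is the convex hull of a finite set and hence compact, while $K_2$ is an open orthant; under either of those extra hypotheses your outline goes through, and that is the version you should state and prove.
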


\subsection{CSPs, PCSPs, and Polymorphisms}

Given a predicate $A$ we can define the $\CSP(A)$ problem.

\begin{definition}
    Let $A \subseteq \{0,1\}^k$ be a $k$-ary predicate.  An instance $\mathcal{I}=(C,X)$ of $\CSP(A)$ consists of a set of variables $X = \{x_1,\ldots x_n\}$ and a set of constraints $C = \{c_1,\ldots,c_m\}$, where each constraint $c_i \in X^k$ is a $k$-tuple of variables. An assignment $\alpha: X \rightarrow \{0,1\}$ of values to the variables is a satisfying assignment to $\mathcal{I}$ if $\alpha(c_i) \in A$ for every constraint $c_1, \ldots c_m$.  $\mathcal{I}$ is a Yes-instance if there exists a satisfying assignment to $\mathcal{I}$. Otherwise, $\mathcal{I}$ is a No-instance.

    We, most of the time, apply predicates to literals and formally in this case we have $2n$ variables $X = \{x_1, \overline{x_1}, x_2, \overline{x_2}, \ldots, x_n,\overline{x_n}\}$, and an assignment $\alpha: X \rightarrow \{0,1\}$ where $\alpha(x_i) = \neg \alpha(\overline{x_i})$ for every $i$.  We sometimes allow constants and in such a case we have a $\CSP(A)$ instance extended with a variable denoted $x^b$ that always takes the value $b$.
\end{definition}

For two $k$-ary predicates $A$ and $B$ such that $A$ implies $B$ (i.e., $A \subseteq B$), we can define $\PCSP(A, B)$.

\begin{definition}
\label{def:pcsp}
    Let $A, B \in \{0,1\}^k$ be $k$-ary predicates such that $A \subseteq B$.
    An instance $\mathcal{I}=(C,X)$ of the $\PCSP(A, B)$ problem consists of a $\CSP(A)$ instance, and the goal is to distinguish between:
    \begin{description}
        \item[Yes] $\mathcal{I}$ is a satisfiable $\CSP(A)$ instance
        \item[No] $\mathcal{I}$ is not even satisfiable when interpreted as a $\CSP(B)$ instance (i.e., there is no $\alpha: X \rightarrow \{0,1\}$ such that $\alpha(c) \in B$ for every constraint $c \in C$).
    \end{description}
    \end{definition}

Note that CSPs are a special case of PCSPs since $\PCSP(A,A) = \CSP(A)$.  \cref{def:pcsp} describes the decision version of $\PCSP(A, B)$.  There is also a search version of the problem, where we are given an instance that is promised to be a Yes instance, and the goal is to find a satisfying for the corresponding $\CSP(B)$ instance.  Unlike basic $\CSP$s, it is a major open problem whether the decision and search versions of $\PCSP$s are equivalent (with some recent progress indicating that search may be harder than decision \cite{LarrauriPCSPSearch}).  

In the definition of $\CSP$s and $\PCSP$s, the distinction of allowing or disallowing repetition (the same variable appearing multiple times in a clause) is unimportant since the two are polynomial time equivalent (\cite{symmetricBooleanDichotomy}, Section 6.6).

Let us make the useful and obvious observation that stronger promises cannot be worse than weaker promises.

\begin{fact}
\label{pcsp monotonicity fact}
Suppose $A \subset A' \subset B$, then $\PCSP (A', B)$ is not easier than $\PCSP(A, B)$.  In particular if the latter is NP-hard so is the former and if the former is tractable so is the latter.  Similarly if $A \subset B \subset B'$ then $\PCSP(A,B)$ is not easier than $\PCSP(A, B')$.
\end{fact}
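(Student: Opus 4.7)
The plan is to observe that in both cases the identity map on instances yields a valid reduction, so the proof reduces to verifying that the Yes- and No-conditions of \cref{def:pcsp} are each preserved under the relevant change of predicate pair. No auxiliary gadgets are required; the argument is pure bookkeeping against the definition.

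For the first statement, suppose $A \subseteq A' \subseteq B$ and let $\mathcal{I}$ be an instance of $\PCSP(A,B)$. I would interpret $\mathcal{I}$ unchanged as an instance of $\PCSP(A',B)$. If $\mathcal{I}$ is a Yes-instance of $\PCSP(A,B)$, some assignment $\alpha$ has $\alpha(c) \in A$ for every constraint $c$; since $A \subseteq A'$, also $\alpha(c) \in A'$, so $\mathcal{I}$ is Yes for $\PCSP(A',B)$. If $\mathcal{I}$ is a No-instance of $\PCSP(A,B)$, the No-condition (non-satisfiability as a $\CSP(B)$ instance) is literally identical in $\PCSP(A',B)$, so $\mathcal{I}$ is still a No-instance. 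Hence any algorithm for $\PCSP(A',B)$ solves $\PCSP(A,B)$, establishing the comparison in either direction (tractability or NP-hardness).

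For the second statement, suppose $A \subseteq B \subseteq B'$ and let $\mathcal{I}$ be an instance of $\PCSP(A,B')$; once again I pass it through to $\PCSP(A,B)$ unchanged. The Yes-condition in both problems is satisfiability as a $\CSP(A)$ instance and is therefore identical. If $\mathcal{I}$ is a No-instance of $\PCSP(A,B')$, no assignment $\alpha$ achieves $\alpha(c) \in B'$ on every constraint; since $B \subseteq B'$, no assignment achieves $\alpha(c) \in B$ either, so $\mathcal{I}$ is a No-instance of $\PCSP(A,B)$ as well.

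There is no real obstacle; the only thing to keep in mind is the direction of the inclusions — the Yes-side uses $A \subseteq A'$ in the first statement, while the No-side uses $B \subseteq B'$ in the second — and that the ``identity reduction'' is genuinely polynomial-time (in fact, trivial) since the instance description is unchanged.
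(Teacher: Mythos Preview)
Your argument is correct: the identity map on instances is a valid polynomial-time reduction in both directions, and you have verified the Yes/No conditions carefully against \cref{def:pcsp}. The paper in fact states this as an ``obvious observation'' and gives no proof at all, so your write-up is more detailed than the paper's (nonexistent) one while following exactly the intended reasoning.
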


Let us proceed to define polymorphisms of $\PCSP$s. These are functions $f$ which, when given multiple solutions to an instance of $\CSP(A)$, return a solution to the corresponding $\CSP(B)$ instance. The formal definition is the following.

\begin{definition}
    A Boolean function $f:\{0,1\}^\ell \rightarrow \{0, 1\}$ is a \emph{polymorphism} of $\PCSP(A, B)$ if and only if for all matrices $M \in A^\ell$, $f(M) \in B$. Here $f(M)$ denotes the column vector of length $k$ obtained by evaluating $f$ on each row of $M$, i.e., $f(M) = (f(M_1), f(M_2), \ldots , f(M_k))$.  The set of polymorphisms of $\PCSP(A, B)$ is denoted by $\Pol(\PCSP(A,B))$.

     The cases of allowing negations or fixed constants to the $\PCSP$ problem restrict the set of polymorphisms as follows.  If we apply our predicates to literals, then all polymorphism must be folded and we denote the problem by $\fPCSP$ and if we also allow constants then all polymorphisms must be idempotent and we denote this class by $\fiPCSP$.  In other words.
    \begin{itemize}
        \item $\Pol(\fPCSP(A, B)) = \{ \, f \in \Pol(\PCSP(A, B)) \,:\,\text{$f$ is folded} \,\}$
        \item $\Pol(\fiPCSP(A, B)) = \{ \, f \in \Pol(\PCSP(A, B)) \,:\,\text{$f$ is folded \emph{and} idempotent}  \}$
    \end{itemize}
\end{definition}

If a Boolean function is not a polymorphism of $\PCSP(A,B)$, then there must exist some obstruction $M \in A^\ell$ witnesses this fact. 

\begin{definition}
    Given a $\PCSP(A,B)$, $M \in A^\ell$ is called an \emph{obstruction} for $f$ if $f(M) \not \in B$.
\end{definition}

\subsection{Minors and Minions}

We start with a simple but useful definition.

\begin{definition}
    Let $f:\{0,1\}^\ell \rightarrow \{0,1\}$ be a Boolean function. 
    For any $\pi: [\ell] \rightarrow [\ell']$, the function $f_\pi: \{0,1\}^{\ell'} \rightarrow \{0,1\}$ defined by
    \begin{equation*}
        f_\pi(x_1, \ldots, x_{\ell'}) = f(x_{\pi(1)}, \ldots x_{\pi(\ell)})
    \end{equation*}
    is called a \emph{minor} of $f$.
\end{definition}

One informal way to view this is that a minor is obtained by identifying some sets of variables.   Note that it is not allowed to fix variables to constants.

\begin{definition}
    A (Boolean function) \emph{minion} $\Minion$ is a set of Boolean functions (not necessarily of the same arity) such that for every $f \in \Minion$, every minor $f_\pi$ of $f$ is also a member of $\Minion$.
\end{definition}

It is a well-known and easy to prove fact that the set of polymorphisms of a PCSP forms a minion.
A useful consequence of this is that once we have established that some small, constant size $g$ is not a polymorphism we get a structure theorem on the set of all polymorphisms, because not containing $g$ as a minor anywhere is indeed a severe restriction when $f$ is of large arity.
One general instantiation of this phenomenon that is of use for us is the following.

\begin{definition}
    A family $\mathcal{F}$ of Boolean functions is \emph{essentially minion-atomic} if for every Boolean function minion $\Minion$ it holds that either $\mathcal{F} \subseteq \Minion$, or $\mathcal{F} \cap \Minion$ is finite.
\end{definition}

In particular, if $\mathcal{F}$ is essentially minion-atomic and some fixed $f \in \mathcal{F}$ is not a polymorphism of $\PCSP(A, B)$, then $\PCSP(A, B)$ only admits finitely many polymorphisms from $\mathcal{F}$.  As we see below in \cref{sec:symmetric_examples} some families of standard Boolean functions have this property.

\subsection{Tractability Conditions for PCSPs}

For $\PCSP$s, one of the more remarkable algorithms that makes use of polymorphisms is the so-called basic LP relaxation + affine IP relaxation algorithm by Brakensiek et al.~\cite{BLPAFFINE} which we from now on simply refer to as ``BLP+AIP''.  It combines two commonly seen relaxations of the following Boolean problem. 

Given an instance $\mathcal{I}$ of $\CSP(A)$, we naturally have a Boolean variable for each variable in $\mathcal{I}$, and  we add one variable for each constraint and each element of $A$.  This variable is supposed to be true if the constraint is satisfied by the indicated element of $A$.  We require the sum of all Boolean variables corresponding to the same constraint in $\mathcal{I}$ to be $1$ and the assignment to the variables and constraints to be consistent.

The basic LP relaxation of this Boolean problem is defined as the relaxation where the Boolean variables are relaxed to rational numbers in $[0,1]$. The affine IP relaxation takes the Boolean variables and instead relaxes the Boolean variables to integer variables. 

Brakensiek et al. showed that if $\Pol(\PCSP(A, B))$ contains infinitely many symmetric (or block-symmetric with increasing block sizes) polymorphisms, then the combination of these two relaxations can be used to solve (the decision version) of $\PCSP(A, B)$ \cite{BLPAFFINE}. It turns out that this algorithm is able to solve all tractable Boolean $\CSP$s (Schaefer's dichotomy theorem). There are also natural generalization of this algorithm requiring consistency of larger (but still constant) size subsets of the variables.  This potentially makes it more powerful but we do not know of a Boolean CSP solved by this generalization and not by the basic combination.

\begin{definition}
    A function $f: \{0,1\}^\ell \rightarrow \{0,1\}$ is said to be \emph{symmetric} if $f(x)$ depends only on $w(x)$, and \emph{block-symmetric} with parameters $b$ and $m$ which are both constants independent of $\ell$, if there exists a partition $S_1, S_2,\ldots, S_b$ of $[\ell]$, of sizes $\geq m$, such that $f(x)$ depends only on $w(x_{S_1}),\ldots,w(x_{S_b})$. 
\end{definition}

\begin{remark}
    Any symmetric polymorphism is also a block symmetric polymorphism, with a single block.  In the block-symmetric case, the parameter $m$ is called the width of $f$.  The case of two blocks is also of special interest to us and we use the notation $(\ell_1, \ell_2)$-block-symmetric polymorphism for such a function with block sizes $\ell_1$ and $\ell_2$.
\end{remark}

The power of the BLP+AIP algorithm is then characterized by the following theorem.

\begin{theorem}{\cite[Theorem 5.1]{BLPAFFINE}}
    \label{thm:blpaffine}
    The following three statements are equivalent:
    \begin{itemize}
        \item The $\PCSP(A,B)$ decision problem can be solved in polynomial time by the BLP+AIP algorithm.
        \item $\Pol(\PCSP(A,B))$ contains infinitely many block-symmetric polymorphisms of arbitrary large width.
        \item For every $\ell \geq 1$, $\Pol(\PCSP(A,B))$ contains an $(\ell, \ell +1)$-block-symmetric polymorphism. 
    \end{itemize}
\end{theorem}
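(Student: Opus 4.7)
The plan is to cyclically establish the implications $(3) \Rightarrow (2)$, $(2) \Rightarrow (3)$, $(3) \Rightarrow (1)$, and $(1) \Rightarrow (2)$. The direction $(3) \Rightarrow (2)$ is immediate: an $(\ell, \ell+1)$-block-symmetric polymorphism is already block-symmetric of width $\ell$, so (3) yields polymorphisms of arbitrarily large width. For $(2) \Rightarrow (3)$, a combinatorial argument on the polymorphism minion suffices. Given a block-symmetric polymorphism $f$ whose blocks are all of size at least $M = M(\ell)$ for a suitably large $M$, I would construct a minor $f_\pi$ that is $(\ell, \ell+1)$-block-symmetric by (i) selecting two distinguished blocks and defining a map that collapses each onto fresh variable sets of sizes $\ell$ and $\ell+1$ respectively, uniformly within each original block, and (ii) identifying all variables in the remaining blocks with variables already introduced. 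Since polymorphisms are closed under minors, $f_\pi$ is a polymorphism, and by construction it has the desired block structure.

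For $(3) \Rightarrow (1)$, I would round any feasible BLP+AIP solution using the block-symmetric polymorphisms. Given a BLP+AIP solution on an instance $\mathcal{I}$, the BLP part yields per constraint a nonnegative rational distribution over $A$-satisfying assignments and the AIP part a signed integral combination of the same. Clearing denominators I arrange the scaling so the LP weights per constraint sum to some common $\ell$ and the AIP weights to some common $\ell+1$. For each variable $x$, the joint list of its realised values across these $\ell + (\ell+1)$ positions forms a binary string of length $\ell + (\ell+1)$, and by BLP and AIP consistency the two block-weights of this string are the same regardless of which constraint we read it from. Applying an $(\ell, \ell+1)$-block-symmetric polymorphism $f$ then yields a value $\alpha(x)$ depending only on $x$, and running $f$ coordinate-wise on each constraint produces an element of $B$ by the polymorphism property, giving a valid $\CSP(B)$-assignment.

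For $(1) \Rightarrow (2)$, I would feed BLP+AIP a family of ``free-structure'' instances and extract polymorphisms from its outputs. Specifically, for each $n$ consider the instance $\mathcal{I}_n$ on variable set $\{0,1\}^n$ whose constraints are all $k$-tuples $(\alpha^1, \ldots, \alpha^k)$ for which $(\alpha^1_j, \ldots, \alpha^k_j) \in A$ for every $j \in [n]$. Then $\mathcal{I}_n$ is a YES instance because the projections $\alpha \mapsto \alpha_j$ satisfy it, so BLP+AIP must find a BLP+AIP solution. The resulting $B$-assignment on $\{0,1\}^n$ is by definition an $n$-ary polymorphism of $\PCSP(A,B)$, and analysing the symmetries of BLP+AIP on this canonical instance shows that the polymorphism must be block-symmetric, with block sizes growing in $n$.

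The main obstacle I expect is the direction $(1) \Rightarrow (2)$: not merely that \emph{some} polymorphisms are produced, but that they are forced to have a two-block symmetric structure whose widths go to infinity. This requires a careful analysis of what outputs BLP+AIP can possibly produce on the free-structure instance $\mathcal{I}_n$, exploiting that the BLP part identifies assignments only by their multiplicity profile while the AIP part identifies them only up to additive integer combinations; a secondary subtlety is matching these two symmetries so that the recovered polymorphism has exactly two blocks rather than a more complex partition structure.
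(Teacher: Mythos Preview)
The paper does not prove this theorem at all: it is stated as a citation of \cite[Theorem~5.1]{BLPAFFINE} and used as a black box throughout. There is therefore no ``paper's own proof'' to compare against; your proposal is an attempt to reconstruct the argument from the cited work of Brakensiek, Guruswami, Wrochna and \v{Z}ivn\'y.

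That said, your sketch is broadly along the lines of the original proof, but two of your implications have real gaps. For $(2)\Rightarrow(3)$, taking a minor by mapping one large block onto $\ell$ fresh variables and another onto $\ell+1$ does \emph{not} automatically yield an $(\ell,\ell{+}1)$-block-symmetric function: if the original block has size $n_1$ and you partition it into $\ell$ groups of sizes $c_1,\ldots,c_\ell$, the resulting function depends on the weighted sum $\sum_i c_i y_i$, not on $\sum_i y_i$, unless all $c_i$ are equal. You need divisibility or a more careful argument (the original paper handles this via a different route). For $(3)\Rightarrow(1)$, the AIP part produces signed integer combinations, so you cannot directly read off a binary string of length $\ell{+}1$ from it; one has to interpret negative multiplicities correctly (in the original, this is where the alternating/odd-arity structure enters). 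Your description of $(1)\Rightarrow(2)$ via free structures captures the right high-level idea and you correctly flag it as the delicate direction.
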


The last condition is very useful as it can be efficiently checked for concrete $A$ and $B$ and small values of $\ell$.  The non-existence for any specific value of $\ell$ rules out the possibility to apply BLP+AIP.

Note that while BLP+AIP only in general solves the decision version of a PCSP, if the block-symmetric polymorphisms are explicit enough it is possible to solve the search version.  In particular this is true in all applications of BLP+AIP to explicit predicates in the current paper.

\subsection{Hardness Conditions for PCSPs} \label{sec:fixing_set}

There are many ways that polymorphisms can be used to show that a $\PCSP$ is NP-hard. A general approach is to make use of dictator-like properties of the polymorphisms themselves to construct a reduction from gap label cover to $\PCSP$.  There are many possible notions of being ``dictator-like'' and the one we use is based on the following definition.

\begin{definition}
    A function $f: \{0,1\}^\ell \rightarrow \{0,1\}$ has a \emph{$t$-fixing set}
    if there exists a set $T \subseteq [\ell]$ of size $t$ such that $f(x)=1$ whenever $x_T = 1^t$.  More generally $f$ has a {\em $t$-fixing assignment $(T, \alpha)$} if there exists $T \subseteq [\ell]$ of size $t$ and a partial assignment $\alpha \in \{0,1\}^t$ such that $f(x)=1$ whenever $x_T = \alpha$.
\end{definition}

\begin{remark}
    Since our focus in this paper is on folded functions $f$, it makes no difference in the notion of fixing assignments whether we also allow the function to be fixed to $0$ (since we can simply negate $\alpha$ to fix the function to the opposite value).
\end{remark}

Brakensiek and Guruswami \cite[Theorem 5.3]{symmetricBooleanDichotomy} showed that if all polymorphisms of a PCSP have small fixing sets then the PCSP is NP-hard.  This can be naturally extended \cite[Corollary 5.13]{algebraicApproach} to a general setting where we can define a ``choice function'' $C(f)$ which for each polymorphism $f$ identifies a small number of ``relevant'' coordinates, in a way that behaves consistently across minors (meaning that $\pi(C(f)) \cap C(f_\pi) \ne \emptyset$ for every minor $f_\pi$ of $f$).  This approach, which characterizes when the standard reductions from the basic Gap Label Cover problem is applicable, has subsequently been generalized to ``layered choice'' \cite{BWZ21} (corresponding to reductions from Layered Label Cover) and ``injective layered choice'' \cite{BK24} (corresponding to reductions from Smooth Layered Label Cover).  The hardness condition based on small fixing sets can naturally be relaxed to only require small fixing assignments.

\begin{restatable}{theorem}{FixingAssignmentHardness}
    \label{thm:fixing}
    If there exists a $t$ such that every $f \in \Pol(\fPCSP(A,B))$ has a $t$-fixing assignment, then $\fPCSP(A,B)$ is NP-hard. Likewise, if there is a $t$ such that every $f \in \Pol(\fiPCSP (A,B))$ has a $t$-fixing assignment, then $\fiPCSP (A,B)$ is NP-hard.
\end{restatable}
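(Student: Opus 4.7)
The plan is to obtain this theorem as a natural strengthening of the small fixing set hardness theorem of Brakensiek and Guruswami \cite[Theorem 5.3]{symmetricBooleanDichotomy}, which treats the special case of fixing assignments of the form $(T, 1^t)$. Their proof is a reduction from Gap Label Cover, and I would follow the same high-level structure with modifications to accommodate arbitrary $\alpha$. Concretely, for each Label Cover vertex $v$ with label set $[R]$, introduce a block of $\fPCSP$ variables indexed by $\{0,1\}^R$, structured so that satisfying assignments encode folded polymorphisms $g_v$ of $\fPCSP(A,B)$, with edge constraints enforcing the minor relation $g_v = (g_u)_{\pi_{uv}}$ along the Label Cover projection. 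In the $\fiPCSP$ case, additional constraints force each $g_v$ to be idempotent. Dictator polymorphisms $g_v(x) = x_{\ell(v)}$---which are folded, idempotent when required, and trivially admit $1$-fixing assignments---establish completeness.

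For soundness, each $g_v$ in any $\fPCSP$ solution is a folded polymorphism and so by hypothesis has a $t$-fixing assignment $(T_v, \alpha_v)$. I would decode the label of $v$ as a uniformly random element of $T_v$. When the edge projection $\pi = \pi_{uv}$ acts injectively on $T_u$, or identifies only coordinates of $T_u$ that agree under $\alpha_u$, the pair $(\pi(T_u), \alpha_u)$ is itself a $t$-fixing assignment for $g_v$; hence $T_v$ must intersect $\pi(T_u)$ and the decoded labels are consistent, exactly as in the original fixing set argument. This part translates verbatim from \cite{symmetricBooleanDichotomy}, which can equivalently be phrased as applying the choice-function framework of \cite[Corollary 5.13]{algebraicApproach} with $C(g_v) = T_v$.

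The main obstacle is the case where $\pi$ identifies two coordinates of $T_u$ on which $\alpha_u$ disagrees; there $(T_u, \alpha_u)$ does not transfer to $g_v$ and consistency of the canonical choice can fail. To handle this I would exploit folding via the preceding remark: $(T_u, \neg\alpha_u)$ is an equally valid $t$-fixing assignment for the opposite output, so by choosing the sign pattern adaptively one can always realize \emph{some} $t$-fixing assignment of $g_v$ supported on a subset of $\pi(T_u)$. A cleaner alternative is to invoke the \emph{injective layered choice} framework of \cite{BK24}, which only requires consistency for injective minors and hence bypasses the conflicting-identification issue entirely. Either route yields NP-hardness of $\fPCSP(A,B)$, and the $\fiPCSP$ variant follows identically because every step of the reduction preserves idempotency.
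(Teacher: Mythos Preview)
Your second route---invoking the injective (layered) choice framework of \cite{BK24}---is exactly what the paper does. The paper defines $C(f)$ to be the coordinate set of an arbitrarily chosen $t$-fixing assignment, observes that when $\pi$ is injective on $C(f)$ the fixing assignment of $f$ pushes forward to one for $f_\pi$ on $\pi(C(f))$, and then uses the elementary fact that two fixing assignments of a folded function must have intersecting supports to conclude $\pi(C(f))\cap C(f_\pi)\ne\emptyset$. So via option (b) your proposal is correct and matches the paper.

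One caveat: your option (a) is not convincing as written. If $\pi$ identifies $i,j\in T_u$ with $\alpha_u(i)\ne\alpha_u(j)$, then negating $\alpha_u$ still leaves a conflict on $\{i,j\}$, so ``choosing the sign pattern adaptively'' does not obviously yield a fixing assignment of $g_v$ supported in $\pi(T_u)$. The paper in fact remarks that, unlike the fixing-\emph{set} condition, the fixing-\emph{assignment} condition does not seem to follow from basic Gap Label Cover and appears to require smoothness---which is precisely what the injective-choice restriction of \cite{BK24} captures. So you should commit to option (b) rather than leave (a) as a parallel path.
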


Essentially this theorem was shown by Guruswami and Sandeep \cite{rainbow_fixing_assignment} though they only state it for rainbow coloring.  Unlike the fixing set condition, \cref{thm:fixing} does not immediately follow from the basic Gap Label Cover problem but instead seems to require Smooth Label Cover.  For completeness, we give a short proof of this result in \cref{fixing assigns intersect} based on the injective choice condition of Banakh and Kozik \cite{BK24}.

The extension from fixing sets to fixing assignments is not difficult but it is needed for some of our results.  There are many cases where $\Pol(\fiPCSP( A, B))$ contains functions without small fixing sets but where all functions have small fixing assignments.

\subsection{Standard Boolean Functions} \label{sec:symmetric_examples}

Let us recall various standard (and a few not so standard) families of Boolean functions that are relevant for us.  We first recall three families of (block-)symmetric functions that have previously been successfully used in conjunction with the BLP+AIP algorithm.

\begin{definition}
    For odd $\ell \ge 1$, the majority function $\Maj_\ell: \{0,1\}^\ell \rightarrow \{0,1\}$ is defined by $\Maj_\ell(x) = [w(x) \ge \ell/2]$.
\end{definition}

\begin{claim}
\label{maj atomic}
    The family $\MajFam = \{\,\Maj_\ell \,|\,\text{$\ell$ odd}\,\}$ is essentially minion-atomic.
\end{claim}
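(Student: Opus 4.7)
The plan is to show that if $\Minion \cap \MajFam$ is infinite, then in fact $\MajFam \subseteq \Minion$. Since a minion is closed under minors, it suffices to establish the following combinatorial fact: for every odd $\ell^*$ there exists $L(\ell^*)$ such that $\Maj_{\ell^*}$ arises as a minor of $\Maj_\ell$ whenever $\ell \geq L(\ell^*)$ is odd. Granted this, the hypothesized infinitude of $\Minion \cap \MajFam$ yields, for each target $\ell^*$, some $\ell \geq L(\ell^*)$ with $\Maj_\ell \in \Minion$, and hence $\Maj_{\ell^*} \in \Minion$ by the minion-closure property.

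To exhibit $\Maj_{\ell^*}$ as a minor of $\Maj_\ell$, I would take a map $\pi \colon [\ell] \to [\ell^*]$ whose preimage sizes $a_i := |\pi^{-1}(i)|$ are as balanced as possible, so that each $a_i \in \{\lfloor \ell/\ell^* \rfloor, \lceil \ell/\ell^* \rceil\}$. The resulting minor is the weighted threshold function $g(y) = \bigl[\sum_{i=1}^{\ell^*} a_i y_i \geq (\ell+1)/2\bigr]$. Identifying $g$ with $\Maj_{\ell^*}$ amounts to requiring that any selection of $(\ell^*-1)/2$ of the $a_i$'s has sum strictly less than $(\ell+1)/2$; by complementarity (the top $(\ell^*-1)/2$ and the bottom $(\ell^*+1)/2$ of the sorted $a_i$'s partition $[\ell^*]$ and together sum to $\ell$), this is equivalent to the single inequality $\tfrac{\ell^*+1}{2}\lfloor \ell/\ell^* \rfloor \geq (\ell+1)/2$, which an elementary calculation shows holds as soon as $\ell$ is at least a polynomial in $\ell^*$, giving $L(\ell^*) = O((\ell^*)^2)$.

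I do not foresee a genuine obstacle here. The balanced multiplicity choice is the natural one, the verification reduces to a one-line arithmetic estimate, and the oddness of $\ell$ automatically aligns the strict and non-strict versions of the threshold condition (since $(\ell+1)/2$ is then an integer). The one thing that requires a little care is the rounding when $\ell^* \nmid \ell$, but this contributes only an additive $O(\ell^*)$ to the estimate and is easily absorbed by the polynomial bound on $L(\ell^*)$.
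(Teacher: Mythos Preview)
Your proposal is correct and takes essentially the same approach as the paper: both construct $\Maj_{\ell^*}$ as a minor of $\Maj_\ell$ via a balanced partition with block sizes $\lfloor \ell/\ell^*\rfloor$ and $\lceil \ell/\ell^*\rceil$, and both arrive at the quadratic threshold $L(\ell^*)=(\ell^*)^2$ (the paper phrases this contrapositively, fixing the ``missing'' $\Maj_\ell$ and showing $\Maj_{\ell'}\notin\Minion$ for $\ell'\ge\ell^2$). One tiny imprecision: your ``equivalent to the single inequality $\tfrac{\ell^*+1}{2}\lfloor\ell/\ell^*\rfloor\ge(\ell+1)/2$'' is in general only a sufficient condition (when more than $(\ell^*-1)/2$ blocks have the larger size the bottom sum is strictly bigger), but sufficiency is all you need.
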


\begin{proof}
    Let $\Minion$ be a minion and suppose $\Maj_\ell \not\in \Minion$.  For any odd $\ell' = d \cdot \ell + r$ ($d \ge 1$ and $0 \le r < \ell$), consider the function
    \[
    f(x_1, \ldots, x_\ell) = \left[ \sum_{i=1}^r x_i + d \sum_{i=1}^{\ell} x_i \ge \ell'/2 \right]
    \]
    Note that $f$ is a minor of $\Maj_{\ell'}$ (obtained by identifying $r$ groups of $(d+1)$ variables, and $\ell-r$ groups of $d$ variables).  Furthermore, if $d \ge \ell$ then $f = \Maj_\ell$.  This shows that $\Minion$ cannot contain $\Maj_{\ell'}$ for any $\ell' \ge \ell^2$, i.e., $\Minion$ contains only finitely many majority functions and hence $\MajFam$ is essentially minion-atomic.
\end{proof}

\begin{definition}
    \emph{Parity} of arity $\ell$ is the function $\Par_\ell: \{0,1\}^\ell \rightarrow \{0,1\}$, defined by
    \[
    \Par_\ell(x) = \BigXOR_{i \in [\ell]} x_i.
    \]
\end{definition}

Note that $\Par_\ell$ is folded if and only if the arity $\ell$ is odd.  The following claim is easy to verify.

\begin{claim}
\label{par atomic}
    The family $\ParFam = \{\,\Par_\ell \,|\,\text{$\ell$ odd}\,\}$ is essentially minion-atomic.
\end{claim}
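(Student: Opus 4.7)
The plan is to mirror the proof of \cref{maj atomic}. I fix an odd $\ell$ with $\Par_\ell \notin \Minion$ and aim to show $\Par_{\ell'} \notin \Minion$ for every odd $\ell' \ge \ell$; this forces $\ParFam \cap \Minion \subseteq \{\Par_1, \Par_3, \ldots, \Par_{\ell-2}\}$, a finite set, establishing essential minion-atomicity.

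The key algebraic observation I would start from is that $x \oplus x = 0$, so for any $\pi : [\ell'] \to [m]$ the minor of $\Par_{\ell'}$ has the closed form
\[
(\Par_{\ell'})_\pi(x_1, \ldots, x_m) \;=\; \bigoplus_{i=1}^{\ell'} x_{\pi(i)} \;=\; \bigoplus_{j=1}^{m} \bigl(|\pi^{-1}(j)| \bmod 2\bigr)\, x_j,
\]
and so the minor is exactly $\Par_m$ whenever every preimage block $\pi^{-1}(j)$ has odd size. I would then realize $\Par_\ell$ as a minor of $\Par_{\ell'}$ for each odd $\ell' \ge \ell$ as follows: since $\ell' - \ell$ is even, partition $[\ell']$ into one block of size $\ell' - \ell + 1$ (odd) together with $\ell-1$ singletons, and let $\pi : [\ell'] \to [\ell]$ be the associated quotient. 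Every block has odd size, so the displayed identity gives $(\Par_{\ell'})_\pi = \Par_\ell$. Because $\Minion$ is closed under taking minors, $\Par_{\ell'} \in \Minion$ would force $\Par_\ell \in \Minion$, contradicting the assumption.

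I do not anticipate a real obstacle: the argument is in fact a touch cleaner than the majority case, where the minor construction had to work around a size blow-up that required $\ell' \ge \ell^2$. The one point I would take care to state precisely is the direction of the minor map under the paper's convention, namely that the domain of $\pi$ indexes the inputs of the original (larger-arity) function $\Par_{\ell'}$ and the codomain indexes the inputs of the minor $\Par_\ell$, so that the smaller arity $\Par_\ell$ really does appear as a minor of the larger $\Par_{\ell'}$, which is the direction demanded by closure of a minion under minors.
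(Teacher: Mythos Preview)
Your argument is correct and is exactly the natural proof the paper has in mind; the paper in fact omits the proof entirely, simply remarking that the claim ``is easy to verify.'' Your observation that identifying variables into blocks of odd sizes preserves parity gives $\Par_\ell$ as a minor of $\Par_{\ell'}$ for all odd $\ell' \ge \ell$, which is precisely what is needed.
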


\begin{definition}
    For odd $\ell \ge 1$, the \emph{alternating threshold} function $\AT_\ell: \{0,1\}^\ell \rightarrow \{0,1\}$ is defined by
    \begin{equation}
        \AT_\ell(x) = \left[\sum_i (-1)^{1+i+x_i} > 0\right].
    \end{equation}
\end{definition}

Again it easy to see that this family is minion-atomic and we leave the verification to the reader.
\begin{claim}
\label{at atomic}
    The family $\ATFam = \{\,\AT_\ell \,|\,\text{$\ell$ odd}\,\}$ is essentially minion-atomic.
\end{claim}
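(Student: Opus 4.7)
The plan is to establish essential minion-atomicity by proving the following structural fact: for every pair of odd integers $1 \le \ell \le \ell'$, $\AT_\ell$ is a minor of $\AT_{\ell'}$. From this the claim follows immediately, since if a minion $\Minion$ omits some $\AT_\ell$, then by closure under minors $\Minion$ cannot contain $\AT_{\ell'}$ for any odd $\ell' \ge \ell$, so $\ATFam \cap \Minion \subseteq \{\AT_1, \AT_3, \ldots, \AT_{\ell-2}\}$ is finite.

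To verify the structural fact, my first step would be to rewrite $\AT_\ell$ in the convenient signed-threshold form
\[
\AT_\ell(x) = \left[\sum_{i=1}^\ell s_i (1-2x_i) > 0\right], \qquad s_i := (-1)^{1+i},
\]
which is a direct manipulation of the definition using $(-1)^{1+i+x_i} = s_i(1-2x_i)$. Under this reformulation, any minor of $\AT_{\ell'}$ coming from a map $\pi : [\ell'] \to [\ell]$ computes $\left[\sum_{j=1}^\ell w_j(1-2x_j) > 0\right]$ with aggregated weights $w_j := \sum_{i \in \pi^{-1}(j)} s_i$. Hence it suffices to build $\pi$ so that $w_j = s_j$ for every $j \in [\ell]$, and the minor will coincide with $\AT_\ell$ pointwise.

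The construction I would use is the simplest possible one: set $\pi(i) := i$ for $1 \le i \le \ell$ and $\pi(i) := 1$ for $\ell < i \le \ell'$. The weights $w_2, \ldots, w_\ell$ are then trivially correct, and for $w_1$ one only needs to check that the tail sum $s_{\ell+1} + s_{\ell+2} + \cdots + s_{\ell'}$ vanishes. Since $\ell$ is odd, the tail signs are $-1, +1, -1, +1, \ldots$, and since $\ell' - \ell$ is even (the difference of two odd numbers), the tail contains equal numbers of $+1$s and $-1$s, so the sum is indeed zero. There is essentially no obstacle beyond writing this out cleanly; the whole argument is closely parallel to the proofs of \cref{maj atomic} and \cref{par atomic}, the only subtlety being the parity bookkeeping for the tail of $\pi$.
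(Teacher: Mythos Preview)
Your proof is correct. The paper does not actually give a proof of this claim (it writes ``Again it is easy to see that this family is minion-atomic and we leave the verification to the reader''), so your argument is precisely the kind of verification the reader is expected to supply; the identification map extended by sending the even-length tail to a single coordinate is the natural construction, and your parity bookkeeping for the tail sum is clean and correct.
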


For the hardness results, we are also interested in various other, mostly standard, functions and frequently need the fact that these are minion-atomic, which we record in the following simple claim. 

\begin{claim}
    \label{obstructions atomic}
    Let $\{f_\ell\}$ be one of the following function families:
    \begin{align*}
        \AND_\ell(x_1, \ldots, x_\ell) &= x_1 \wedge x_2 \wedge \ldots \wedge x_\ell \\
        \OR_\ell(x_1, \ldots, x_\ell) &= x_1 \vee x_2 \vee \ldots \vee x_\ell \\
        \NAND_\ell(x_1, \ldots, x_\ell) &= \neg \AND_\ell(x_1, \ldots, x_\ell) = \OR_\ell(\neg x_1, \ldots, \neg x_\ell) \\
        \NOR_\ell(x_1, \ldots, x_\ell) &= \neg \OR_\ell(x_1, \ldots, x_\ell) = \AND_\ell(\neg x_1, \ldots, \neg x_\ell) \\
        \xNOR_\ell(x_1, \ldots, x_\ell) &= x_1 \wedge \NOR_{\ell-1}(x_2, \ldots, x_\ell)
    \end{align*}
    For any minion $\Minion$, if $f_{\ell} \not\in \Minion$ for some $\ell \ge 2$ then $f_{\ell+1} \not\in \Minion$.  In particular, each of these five families of functions is essentially minion-atomic.
\end{claim}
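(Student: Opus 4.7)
The proof plan is to exhibit each $f_\ell$ as a minor of $f_{\ell+1}$ by identifying two input coordinates, which by the minion closure property immediately yields the contrapositive $f_{\ell+1} \in \Minion \implies f_\ell \in \Minion$.

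First I would handle the four symmetric families $\AND, \OR, \NAND, \NOR$ in one breath: since $x \wedge x = x$ and $x \vee x = x$ (and likewise after negating the inputs), identifying any two variables of $f_{\ell+1}$ using the map $\pi:[\ell+1]\to[\ell]$ which sends both $\ell$ and $\ell+1$ to $\ell$ and is the identity elsewhere produces $f_\ell$. For $\xNOR_{\ell+1}(x_1,\ldots,x_{\ell+1}) = x_1 \wedge \neg x_2 \wedge \cdots \wedge \neg x_{\ell+1}$, the asymmetric case, I would identify the last two (negated) coordinates using the same $\pi$: the resulting minor is $x_1 \wedge \neg x_2 \wedge \cdots \wedge \neg x_{\ell-1} \wedge (\neg x_\ell \wedge \neg x_\ell) = \xNOR_\ell(x_1,\ldots,x_\ell)$. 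The key observation that makes this work uniformly is that the distinguished first coordinate of $\xNOR$ is untouched by the identification.

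Since $\Minion$ is closed under taking minors, $f_{\ell+1}\in\Minion$ forces $f_\ell\in\Minion$; contrapositively, $f_\ell \not\in \Minion$ for some $\ell \ge 2$ implies $f_{\ell+1} \not\in\Minion$, which is the first part of the claim.

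For the essentially minion-atomic conclusion, I would iterate this implication: if $\Minion$ does not contain the entire family $\{f_\ell\}_{\ell\ge 2}$, let $\ell_0$ be the minimal $\ell\ge 2$ with $f_\ell \not\in\Minion$; then by induction on $\ell$ the first part gives $f_\ell \not\in \Minion$ for all $\ell \ge \ell_0$, so $\Minion$ contains at most the $f_\ell$ with $2\le \ell<\ell_0$, a finite set. I do not foresee a genuine obstacle here; the only place requiring a moment of care is writing down the identification for $\xNOR$ so that the ``positive'' first coordinate is not accidentally merged with a negated coordinate, but choosing $\pi$ to collapse two of the final coordinates sidesteps this entirely.
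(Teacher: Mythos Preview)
Your proposal is correct and follows essentially the same approach as the paper, which simply notes that identifying two appropriate variables in $f_{\ell+1}$ yields $f_\ell$. Your write-up merely makes explicit the choice of which two coordinates to merge (and the care needed for $\xNOR$), together with the routine induction for the ``essentially minion-atomic'' conclusion.
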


That this is true is not difficult to see as identifying two appropriate variables in $f_{\ell+1}$ results in $f_\ell$.

\section{Promise-usefulness}
\label{sec:useful intro}

The main new notion in this paper is the concept of promise-usefulness (and promise-uselessness), formally defined as follows.

\begin{definition}
\label{def:promise-useful}
    A predicate $\emptyset \ne A \subsetneq \{0,1\}^k$ is \emph{promise-useful} if there exists a non-trivial relaxation $B \supseteq A$ such that $\PCSP(A, B)$ is solvable in polynomial time. Otherwise $A$ is \emph{promise-useless}.
\end{definition}

By ``non-trivial'' we mean that $B \ne \{0,1\}^k$. In the non-folded case (without negated literals) we would also demand that $B$ contains neither $0^k$ nor $1^k$ since otherwise every instance has a trivial satisfying assignment.  Throughout the paper we assume that $\text{P} \ne \text{NP}$, and whenever we make a claim that some predicate is promise-useful, this is always under this assumption.

To make it apparent which situation we are in (folded and/or idempotent), we use the terminology $\fPCSP$-useful for the folded case and $\fiPCSP$-useful for the folded idempotent case.

As a first step, let us note that promise-usefulness in the folded setting boils down to understanding the Promise-SAT problem.

\begin{lemma} \label{lemma:OR}
    The predicate $A$ is $\fPCSP$-useful (resp. $\fiPCSP$-useful), if and only if there exists $b \not \in A$ such that $\fPCSP(A \XOR b, \OR)$ (resp. $\fiPCSP(A \XOR b, \OR)$) is in P.
\end{lemma}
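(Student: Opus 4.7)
The plan is to prove the equivalence by combining two simple ingredients: a \emph{bit-flip} transformation at the level of $\PCSP$ instances, and the monotonicity principle for the second coordinate of $\PCSP(A, B)$ stated in \cref{pcsp monotonicity fact}. The bit-flip associated to $b \in \{0,1\}^k$ is the syntactic transformation that replaces each clause $c=(\ell_1,\ldots,\ell_k)$ of an instance by the clause $c'=(\ell_1',\ldots,\ell_k')$, where $\ell_i'=\ell_i$ if $b_i=0$ and $\ell_i'=\neg \ell_i$ if $b_i=1$. For any assignment $\alpha$ we have $\alpha(c') = \alpha(c)\XOR b$, so this map witnesses a polynomial-time equivalence
\[
\fPCSP(A, B) \;\equiv\; \fPCSP(A\XOR b, B\XOR b),
\]
and analogously in the folded idempotent case (where the map sends constants to constants and so preserves idempotency).

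For the ``if'' direction, suppose $b\notin A$ and $\fPCSP(A\XOR b,\OR)\in \text{P}$. Set $B := \{0,1\}^k \setminus \{b\}$. Since $b \notin A$ we have $A\subseteq B$, and since $b\in \{0,1\}^k$ we have $B\ne \{0,1\}^k$, so $B$ is a non-trivial relaxation of $A$. A direct check shows $B\XOR b = \{0,1\}^k\setminus\{0^k\} = \OR$, so the bit-flip equivalence gives $\fPCSP(A, B)\in \text{P}$, witnessing $\fPCSP$-usefulness of $A$. The $\fiPCSP$ case is identical.

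For the ``only if'' direction, assume $A$ is $\fPCSP$-useful, with non-trivial relaxation $B\supseteq A$ such that $\fPCSP(A,B)\in \text{P}$. Non-triviality gives some $b\notin B$, and since $A\subseteq B$ we have $b\notin A$. Now set $B':=\{0,1\}^k\setminus\{b\}\supseteq B$. By \cref{pcsp monotonicity fact}, enlarging the target only makes the problem easier, so $\fPCSP(A, B')\in \text{P}$. Applying the bit-flip by $b$ transforms this into $\fPCSP(A\XOR b, B'\XOR b) = \fPCSP(A\XOR b, \OR)\in \text{P}$, as required. Again the $\fiPCSP$ argument is the same, since the bit-flip preserves idempotency.

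There is no real obstacle here: both directions are a matter of carefully unpacking the definitions. The only point to verify is the compatibility of the bit-flip with the folded and (in the idempotent version) idempotent restrictions, which amounts to noting that the transformation acts cleanly on literals and constants. Everything else is a one-line application of monotonicity.
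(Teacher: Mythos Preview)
Your proof is correct and follows essentially the same approach as the paper: both directions use the bit-flip equivalence $\fPCSP(A,B)\equiv\fPCSP(A\XOR b, B\XOR b)$ together with the monotonicity of \cref{pcsp monotonicity fact}, and the observation that $(\{0,1\}^k\setminus\{b\})\XOR b=\OR$. Your write-up is slightly more explicit about why the bit-flip is compatible with the folded/idempotent settings, but otherwise the argument is the same.
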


\begin{proof}
Suppose $A$ is promise-useful, i.e., there is a non-trivial $B$ such that $\fPCSP(A, B)$ is tractable, and let $b$ be an arbitrary assignment that does not belong to $B$. Then $\fPCSP(A, \{0,1\}^k \setminus \{b\})$ is also tractable by \cref{pcsp monotonicity fact}. Furthermore, since we have negations, $\fPCSP(A, \{0,1\}^k \setminus \{b\})$ is equivalent with $\fPCSP(A \XOR b, \OR)$.  Conversely, if $\fPCSP(A \XOR b, \OR)$ is tractable for some $b \not \in A$ then $\fPCSP(A, \OR{} \XOR{} b)$ is tractable and witnesses that $A$ is promise-useful.  The proof is the same for the idempotent case.
\end{proof}

So in the folded case, the concept of promise-usefulness boils down to understanding the complexity of various Promise-SAT problems and we turn to this question in \cref{sec:tract} and \cref{sec:hard} below.

\begin{remark}
    For the non-folded case, where we do not have negations, a lemma similar to \cref{lemma:OR} is true, but with the OR predicate replaced by the not-all-equal ($\NAE$) predicate: $A$ is promise-useful if and only if $\PCSP(A, \NAE )$ is tractable.  However, since our main focus in this paper is the folded case, we refrain from discussing this further.
\end{remark}

\begin{remark}
    One can define promise-usefulness as either an \emph{adaptive} or \emph{non-adaptive} notion.  In the non-adaptive version (which is what \cref{def:promise-useful} defines), there must exist a fixed $B$ depending only on $A$, such that $A$ is promise-useful for $B$ ($\PCSP(A, B)$ is tractable).  In the adaptive version, an algorithm would be allowed to choose $B$ also based on the given instance of $\CSP(A)$ -- i.e., given as input a satisfiable $\CSP(A)$ instance, the goal becomes to find some non-trivial $B \supseteq A$ and satisfy the instance as a $\CSP(B)$ instance.   In the non-folded setting this is equivalent with finding an assignment to the variables such that none of the input $k$-tuples is constant, i.e., a $2$-coloring of the underlying hypergraph, and thus adaptive vs.~non-adaptive are trivially equivalent here.  In the folded setting, it is equivalent to finding an assignment to the variables such that not all $k$-tuples appear.  If $\fPCSP(A, B)$ is NP-hard for all non-trivial $B$ then $A$ is $\fPCSP$-useless also in this adaptive sense as we can concatenate the results of all hardness reductions on disjoint sets of variables.  In the opposite direction, we are not aware of a proof that an adaptive useful algorithm must yield an efficient solution to $\PCSP(A,B)$ for a fixed $B$, although this seems intuitively likely.
\end{remark}

\subsection{The Non-Idempotent Case}
\label{sec:non-idempotent OR}

As stated above we restrict attention to the folded setting and we study both  $\fiPCSP(A, B)$ as well as $\fPCSP(A, B)$.  By the preceding discussion, our primary interest is the case when $B=\OR$.  As our main tool is to study polymorphisms, the following simple observation is useful, implying that it is sufficient to understand the idempotent polymorphisms.

\begin{lemma}
    \label{lemma:idempotent}
    Let $A$ be a predicate that does not contain $0^k$.  If $A$ contains
    $1^k$, then all polymorphism of $\fPCSP (A, \OR)$ are idempotent. If $A$ does not contain $1^k$ then for every $f \in \Pol(\fPCSP (A, \OR))$ it holds that either $f$ is idempotent or $\neg f$ is an idempotent polymorphism of $\fPCSP (A \XOR 1^k,\OR)$.
\end{lemma}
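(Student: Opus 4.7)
The plan is to exploit the fact that a folded $f\colon\{0,1\}^\ell\to\{0,1\}$ satisfies $f(0^\ell)=\neg f(1^\ell)$, so idempotence is equivalent to the single condition $f(1^\ell)=1$. With this in hand, both halves of the lemma reduce to checking the value of $f$ on the constant inputs.

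For the first part, assume $1^k\in A$. Given any $f\in\Pol(\fPCSP(A,\OR))$ of arity $\ell$, I would consider the matrix $M\in A^\ell$ whose every column equals $1^k$; this is a valid element of $A^\ell$ precisely because $1^k\in A$. Then every row of $M$ is $1^\ell$, so $f(M)=f(1^\ell)\cdot 1^k$. Since $f(M)$ must satisfy $\OR$ and the only all-equal string rejected by $\OR$ is $0^k$, we must have $f(1^\ell)=1$, which combined with folding gives idempotence.

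For the second part, assume $0^k,1^k\notin A$, and let $f\in\Pol(\fPCSP(A,\OR))$. If $f(1^\ell)=1$, folding already gives that $f$ is idempotent and there is nothing to do. Otherwise $f(1^\ell)=0$, and I would set $g=\neg f$ and verify three things. Foldedness of $g$ is immediate from $g(\neg x)=\neg f(\neg x)=f(x)=\neg g(x)$, and idempotence of $g$ follows from $g(1^\ell)=\neg f(1^\ell)=1$. For the polymorphism property, take any $M'\in(A\XOR 1^k)^\ell$ and let $M=\neg M'$ (componentwise); by construction every column of $M$ is in $A$, so $M\in A^\ell$ and $f(M)\in\OR$. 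Using foldedness row-wise, the $i$-th entry of $g(M')$ is $\neg f(M'_i)=\neg f(\neg M_i)=f(M_i)$, so $g(M')=f(M)\in\OR$, as required.

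There is no significant obstacle here; the only thing to be careful about is the symmetry between the two ways a folded $f$ can fail to be idempotent, and the observation that negating $A$ column-wise turns matrices over $A$ into matrices over $A\XOR 1^k$ and vice versa, which makes the second case follow by the same constant-column argument as the first.
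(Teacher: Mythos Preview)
Your proof is correct and follows essentially the same approach as the paper: the all-ones matrix forces $f(1^\ell)=1$ when $1^k\in A$, and in the other case the identity $\neg f(x)=f(\neg x)$ from foldedness shows that $\neg f$ is an idempotent polymorphism of $\fPCSP(A\XOR 1^k,\OR)$. Your version is simply more explicit than the paper's terse one-paragraph proof.
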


\begin{proof}
    If $A$ contains $1^k$ and $f$ is a folded polymorphism then $f$ must be idempotent to avoid that the all-ones matrix is an obstruction.  In the second case if $f$ is not idempotent then $\neg f$ is idempotent.  As $\neg f (x)= f(\neg x)= f(x \XOR 1^k)$, $\neg f$ is a polymorphism of $\fPCSP(A \XOR 1^k,\OR)$.
\end{proof}

Note in particular that for a general $A$ not containing $1^k$ this implies:
\begin{enumerate}
    \item $\fPCSP(A, \OR)$ is tractable via the BLP+AIP algorithm if and only if $\fiPCSP(A, \OR)$ \emph{or} $\fiPCSP(A \XOR 1^k, \OR)$ is tractable via the BLP+AIP algorithm.
    \item $\fPCSP(A, \OR)$ has small fixing assignments (i.e., is NP-hard via \cref{thm:fixing}) if and only if both $\fiPCSP(A, \OR)$ \emph{and} $\fiPCSP(A \XOR 1^k, \OR)$ have small fixing assignments.
\end{enumerate}

It is an interesting question whether this holds true in general regardless of algorithm or NP-hardness proof used, in other words: \emph{Is it the case that $\fPCSP(A, \OR)$ is tractable if and only if $\fiPCSP(A, \OR)$ or $\fiPCSP(A \XOR 1^k, \OR)$ is tractable?}

For promise-usefulness, combining this  observation with \cref{lemma:OR} we obtain the following conclusion.

\begin{corollary}
    \label{promise constants irrelevant}
    A predicate $A$ is $\fPCSP$-useful via the BLP+AIP algorithm if and only if it is $\fiPCSP$-useful via the BLP+AIP algorithm, and it is $\fPCSP$-useless via small fixing assignments if and only if it is $\fiPCSP$-useless via small fixing assignments.
\end{corollary}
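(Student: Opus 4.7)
The plan is to combine \cref{lemma:OR}, which reduces promise-usefulness to tractability of $\fPCSP(A \XOR b, \OR)$ for some $b \notin A$, with \cref{lemma:idempotent}, which controls the gap between the folded and folded-idempotent polymorphism sets. One inclusion in each equivalence is immediate: every idempotent folded polymorphism is also folded, so $\Pol(\fiPCSP(A, B)) \subseteq \Pol(\fPCSP(A, B))$. Hence, via \cref{thm:blpaffine}, block-symmetric polymorphisms witnessing BLP+AIP-tractability of $\fiPCSP$ witness the same for $\fPCSP$; dually, if every folded polymorphism has a uniformly bounded fixing assignment then so does every folded idempotent one.

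For the converse directions, I would fix $b \notin A$ and set $A' = A \XOR b$, so $0^k \notin A'$. If $1^k \in A'$, then \cref{lemma:idempotent} forces every polymorphism of $\fPCSP(A', \OR)$ to be idempotent, i.e., $\Pol(\fPCSP(A', \OR)) = \Pol(\fiPCSP(A', \OR))$, and both the BLP+AIP-tractability and uniform small-fixing-assignment properties transfer unchanged.

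The interesting case is $1^k \notin A'$, equivalently $b \XOR 1^k \notin A$, so $\fiPCSP(A' \XOR 1^k, \OR)$ is also a valid problem. Here \cref{lemma:idempotent} gives a dichotomy: every $f \in \Pol(\fPCSP(A', \OR))$ is either already idempotent, or $\neg f$ is an idempotent polymorphism of $\fPCSP(A' \XOR 1^k, \OR)$. For the BLP+AIP direction, I would apply this dichotomy to the $(\ell, \ell+1)$-block-symmetric polymorphisms $f_\ell$ guaranteed by \cref{thm:blpaffine} for each $\ell$: since $\neg f_\ell$ has the same block structure as $f_\ell$, each $\ell$ contributes a block-symmetric polymorphism to one of $\Pol(\fiPCSP(A', \OR))$ or $\Pol(\fiPCSP(A' \XOR 1^k, \OR))$, and a pigeonhole argument yields an infinite subsequence all landing in the same problem. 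This subsequence still has widths tending to infinity, witnessing BLP+AIP-tractability of that problem via \cref{thm:blpaffine}; the corresponding element of $\{0,1\}^k \setminus A$ witnessing $\fiPCSP$-usefulness is then either $b$ or $b \XOR 1^k$. For the small-fixing-assignment direction, I would use the remark following the definition of $t$-fixing assignments --- namely, for folded $f$, $f$ and $\neg f$ admit the same $t$-fixing assignments up to negation of $\alpha$ --- to conclude that uniform small fixing assignments for both $\fiPCSP(A', \OR)$ and $\fiPCSP(A' \XOR 1^k, \OR)$ imply the same for $\fPCSP(A', \OR)$.

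The main subtle point is the pigeonhole step for BLP+AIP: we need to convert per-$\ell$ choices between two problems into a single problem with block-symmetric polymorphisms of unbounded width. What makes this go through is that \cref{thm:blpaffine}'s characterization requires only infinitely many block-symmetric polymorphisms of arbitrarily large (not all) widths, so passing to an infinite subsequence suffices. Beyond this, the argument is routine bookkeeping around the case split on $1^k \in A'$ and tracking which of $b$, $b \XOR 1^k$ ends up as the witness.
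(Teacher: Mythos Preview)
Your proposal is correct and follows essentially the same approach as the paper, which derives the corollary by combining \cref{lemma:OR} with the two observations immediately preceding it (the enumerated items after \cref{lemma:idempotent}). You have simply spelled out the details that the paper leaves implicit, including the pigeonhole step for BLP+AIP and the transfer of fixing assignments between $f$ and $\neg f$.
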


Thus when it comes to promise-usefulness, being in the idempotent setting (i.e., allowing fixed constants) does not make a difference with the current techniques we have for proving tractability and hardness.
\section{Tractability of Promise-SAT}\label{sec:tract}

Based on our preliminary observations in \cref{sec:useful intro},  understanding promise-usefulness boils down to understanding the Promise-SAT problem, $\fiPCSP(A, \OR)$.  In this section we analyze conditions for this problem to be tractable.

\subsection{Identifying Families of Block-symmetric Polymorphisms}

To identify tractable cases of $\fiPCSP(A, \OR)$, we use the BLP+AIP algorithm, characterized in \cref{thm:blpaffine}.  The requirement for this algorithm is that $\fiPCSP(A, \OR)$ contains infinitely many block-symmetric polymorphisms of arbitrary large arity.  By \cref{thm:blpaffine} this is equivalent to the existence of block-symmetric polymorphisms with two blocks, one with size $\ell$ and the other with size $\ell + 1$ for each value of $\ell$.

For small values of $k$ and $\ell$ the condition that such polymorphisms exist can be checked by computer.  There are then two outcomes. If there is no solution for some $\ell$ then  $\fiPCSP(A, \OR)$ cannot be solved using the BLP+AIP algorithm. On the other hand, if a solution is found this can help us identify block-symmetric polymorphisms of general interest.   Performing such experiments, we have identified five different families of block-symmetric polymorphisms, where for each family $\mathcal{F}$, there exists a predicate $A$ such that $\Pol(\fiPCSP (A, \OR))$ contains infinitely many polymorphisms from $\mathcal{F}$, but only finitely many from the other families. 

Three of these families are the standard idempotent polymorphisms $\MajFam$, $\ParFam$ and $\ATFam$ defined in \cref{sec:symmetric_examples}. The other two are non-standard, closely related to the negated functions minority $\invMaj_\ell(x) = \neg \Maj_\ell(x)$ and even parity $\invPar_\ell(x) = \neg \Par_\ell(x)$.  These two functions are not idempotent and thus are not polymorphisms of $\fiPCSP(A, \OR)$, but we can simply change the values on the constant strings to make them idempotent.

For a function $f: \{0,1\}^\ell \rightarrow \{0,1\}$, we denote by $\id f: \{0,1\}^\ell \rightarrow \{0,1\}$ the \emph{idempotized} function
\[
\id f(x) = \begin{cases} 
   0 & \text{if $x = 0^\ell$} \\
   1 & \text{if $x = 1^\ell$}\\
   f(x) & \text{otherwise}
   \end{cases}
\]
We have the following simple claim, analogous to \cref{maj atomic,par atomic}.  The proofs are analogous to the non-idempotized cases and left to the reader.
\begin{claim}
\label{id atomic}
   The families $\idneg{\MajFam} = \{\,\idneg{\Maj}_\ell \,|\,\text{$\ell$ odd}\,\}$ and $\idneg{\ParFam} = \{\,\idneg{\Par}_\ell \,|\,\text{$\ell$ odd} \,\}$ of idempotized minority and idempotized even parity of odd arities, are both essentially minion-atomic.
\end{claim}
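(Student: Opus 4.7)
The plan is to adapt the minor constructions used in \cref{maj atomic} (for $\MajFam$) and the analogous simple argument for $\ParFam$ (\cref{par atomic}) to the idempotized families. In both adaptations the key additional observation is that if $\pi:[\ell'] \to [\ell]$ has all fibers non-empty, then the expanded vector $y \in \{0,1\}^{\ell'}$ obtained from an input $x \in \{0,1\}^\ell$ is constant if and only if $x$ is. Consequently, the idempotization clauses on $0^\ell$ and $1^\ell$ are triggered on both sides of the desired identity $(\idneg{f}_{\ell'})_\pi = \idneg{f}_\ell$ simultaneously, so we only have to check equality on non-constant inputs, where by definition $\idneg{f}$ agrees with $\neg f$. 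Thus the arguments for majority and odd parity will carry over almost verbatim to their idempotized variants.

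For $\idneg{\MajFam}$ I would reuse the construction of \cref{maj atomic} verbatim: given odd $\ell' = d\ell + r$ with $d \ge \ell > r \ge 0$, let $\pi$ identify $r$ groups of $d+1$ variables and $\ell - r$ groups of $d$ variables. The two-case computation from \cref{maj atomic}, which only needs $d > r$, shows that on a non-constant $x$ the minor equals $[d\, w(x) + w'(x) < \ell'/2] = [w(x) < \ell/2] = \idneg{\Maj}_\ell(x)$, where $w'(x) = \sum_{i=1}^{r} x_i$. Hence if $\idneg{\Maj}_\ell \notin \Minion$ then $\idneg{\Maj}_{\ell'} \notin \Minion$ for every odd $\ell' \ge \ell^2$, proving essential minion-atomicity of $\idneg{\MajFam}$.

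For $\idneg{\ParFam}$ I would instead split an odd $\ell' \ge \ell$ into $\ell - 1$ singleton fibers plus one fiber of size $\ell' - \ell + 1$, which is odd because both $\ell'$ and $\ell$ are. Since every fiber then has odd size, the parity of the expanded vector equals that of $x$, so on non-constant inputs the minor evaluates to $\neg \Par_\ell(x) = \idneg{\Par}_\ell(x)$. Combined with the idempotization observation above, this shows $\idneg{\Par}_\ell$ is a minor of $\idneg{\Par}_{\ell'}$ for every odd $\ell' \ge \ell$, giving essential minion-atomicity of $\idneg{\ParFam}$. The only slightly delicate point is the bookkeeping in the two-case minority estimate, which has already been carried out in \cref{maj atomic}; everything else is immediate from minion closure and the idempotization observation.
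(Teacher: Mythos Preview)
Your proposal is correct and follows exactly the approach the paper intends: the paper does not give a proof of \cref{id atomic} but simply states that ``the proofs are analogous to the non-idempotized cases and left to the reader,'' and your argument does precisely that, adapting the constructions from \cref{maj atomic} and \cref{par atomic} with the (correct) extra observation that surjective $\pi$ preserves constancy so that idempotization commutes with taking the minor.
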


The families $\idneg{\MajFam}$ and $\idneg{\ParFam}$, together with the aforementioned $\MajFam$, $\ParFam$, and $\ATFam$ families then make up our quintet of block-symmetric polymorphism families and we have the following result.

\begin{restatable}{lemma}{FiveFamilyLemma}
    \label{lemma:easy}
    Consider the following five families of idempotent (block-)symmetric functions:
    majority ($\MajFam$), odd parity ($\ParFam$), alternating threshold ($\ATFam$), idempotized minority ($\idneg{\MajFam}$), and idempotized even parity ($\idneg{\ParFam}$).
    For each of these families $\mathcal{F}$, there exists a predicate $A$ such that $\fiPCSP(A, \OR)$ admits infinitely many polymorphisms from $\mathcal{F}$, but only finitely many from the other four families.
    \begin{enumerate}
    \item $A =\{\bstr{01}, \bstr{10}, \bstr{11}\}$ is an example that only admits $\MajFam$.
    \item $A = \{\bstr{001}, \bstr{010}, \bstr{100}, \bstr{111}\}$ is an example that only admits $\ParFam$.
    \item $A = \{\bstr{00011}, \bstr{00101}, \bstr{00110}, \bstr{01000}, \bstr{10000}\}$ is an example that only admits $\ATFam$.
    \item $A =\{\bstr{0011}, \bstr{0100}, \bstr{0110}, \bstr{1000}, \bstr{1001}\}$ is an example that only admits $\idneg{\MajFam}$.
    \item $A = \{\bstr{00111}, \bstr{01010}, \bstr{01101}, \bstr{10000}, \bstr{10011}\}$ is an example that only admits $\idneg{\ParFam}$,
    \end{enumerate}
\end{restatable}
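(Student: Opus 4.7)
The plan is to handle each of the five predicates separately and, for each, verify two claims: (a) infinitely many members of the designated family $\mathcal{F}$ are polymorphisms of $\fiPCSP(A, \OR)$, and (b) for each of the other four families $\mathcal{F}'$, only finitely many members of $\mathcal{F}'$ are polymorphisms.

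Part (b) is the easier half. By the essentially minion-atomic property of each family (Claims~\ref{maj atomic}, \ref{par atomic}, \ref{at atomic}, and \ref{id atomic}), it suffices to exhibit, for each of the twenty pairs $(A, \mathcal{F}')$, a single small-arity representative $f \in \mathcal{F}'$ together with an obstruction matrix $M \in A^\ell$ satisfying $f(M) = 0^k$. Each such obstruction can be located by a short computer search over matrices whose columns are drawn from the (at most five) strings of $A$; once produced, minion-atomicity automatically propagates the failure to all but finitely many $f \in \mathcal{F}'$.

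For part (a) in the first two cases the arguments are standard. In Example~1, $A$ is simply the binary OR predicate, so $\fiPCSP(A, \OR)$ is $\twoSAT$ and every $\Maj_\ell$ is a polymorphism by the classical argument: if $\Maj_\ell(M_1) = 0$, then strictly more than half the columns have first coordinate $0$, and by the column constraint they must all have second coordinate $1$, forcing $\Maj_\ell(M_2) = 1$. In Example~2, $A$ is precisely the set of odd-weight strings of length $3$, so summing the column-wise parity condition across the $\ell$ columns (with $\ell$ odd) yields $\sum_{i=1}^{3} \Par_\ell(M_i) \equiv \ell \equiv 1 \pmod{2}$, which implies $\Par_\ell(M)$ has odd, and hence positive, weight.

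For Examples~3--5 the structural verification of (a) is where the real work lies and is the main obstacle. The template is: assume for contradiction that some $f \in \mathcal{F}$ satisfies $f(M) = 0^k$. This places on each row $M_i$ a combinatorial constraint (a signed-weight inequality for $\AT_\ell$; a failure of strict majority of $1$'s, unless the row is constant, for $\idneg{\Maj}_\ell$; and an even-weight condition on non-constant rows for $\idneg{\Par}_\ell$). Treating the columns of $M$ as being drawn with multiplicities $n_a$ indexed by the strings $a \in A$, these $k$ row constraints translate into a small system of linear or counting conditions on the $n_a$'s that must be shown unsatisfiable for infinitely many $\ell$. A fixed $\ell$ is always mechanically checkable, but producing a uniform argument that rules out obstructions for all sufficiently large $\ell$ requires exploiting the specific combinatorial structure of each listed $A$, and this is the delicate step on which the proof turns.
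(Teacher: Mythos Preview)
Your handling of part (b) matches the paper exactly: exhibit one small obstruction per pair and invoke minion-atomicity. The paper even lists the explicit matrices, and your description of how to find them is correct.

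For part (a), however, you take a harder road than necessary. The paper does not argue directly that the row-constraint systems are infeasible for all large $\ell$. Instead it invokes the characterization lemmas already proved in Section~\ref{sec:Maj Par AT conditions} and the following subsection (Lemmas~\ref{lemma:test_maj}, \ref{lemma:test_odd}, \ref{lemma:test_AT}, \ref{lemma:condition_mino}, \ref{lemma:condition_pario}). Each of these gives a clean algebraic condition equivalent to ``$\fPCSP(A,\OR)$ admits infinitely many polymorphisms from $\mathcal{F}$'', so the verification becomes a one-line check. For instance, in Example~3 the paper simply observes that $2a_1+2a_2+a_3+a_4+a_5=2$ for every $a\in A$ and quotes \cref{lemma:test_AT}; in Examples~4 and~5 it appeals to the subset-based conditions of \cref{lemma:condition_mino} and \cref{lemma:condition_pario} (with the exhaustive search over $S\subseteq[k]$ left implicit). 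This completely sidesteps the ``delicate step'' you identify.

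Your direct approach would work---the infeasibility you describe is exactly the content of those characterization lemmas, just unpacked---but you have not actually carried it out, and you flag it as the main obstacle. Since those lemmas are available in the paper and you may cite them, you should use them: the proof then becomes a mechanical verification of five algebraic identities plus twenty small obstruction matrices, with no residual difficulty.
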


Before proceeding with the proof, which is given in \cref{sec lemma easy proof}, we need characterizations of what it means for $\fiPCSP(A, \OR)$ to admit infinitely many functions from each of these families.  The conditions in the following sections are stated in terms of $\fPCSP(A, \OR)$ admitting infinitely many polymorphisms from one of the families, but note that this is equivalent with $\fiPCSP(A, \OR)$ doing so, since $\fiPCSP$ has all idempotent polymorphisms of $\fPCSP$ and no others.

\subsection{Conditions for \texorpdfstring{$\MajFam$, $\ParFam$, and $\ATFam$}{Maj, Par, and AT}}
\label{sec:Maj Par AT conditions}

We start by establishing necessary and sufficient conditions for the existence of infinitely many polymorphisms from the standard families $\MajFam$, $\ParFam$, and $\ATFam$.  Recall that $K(A)$ denotes the convex hull of $A$. The proofs of the following three standard lemmas can be found in \cref{blocksymmetry proofs}.  The lemmas can be considered "folklore" one possible original source is \cite{bgs2025}.

The existence of majority as a polymorphism of arbitrarily large odd arities can be expressed as a linear program based on the following lemma.

\begin{restatable}{lemma}{MajConditionLemma} \label{lemma:test_maj}
    The following statements are equivalent:
    \begin{enumerate}
        \item\label{item_maj} $\Pol(\fPCSP (A, \OR))$ contains infinitely many polymorphisms from $\MajFam$.
        \item\label{item_K}$[0,1/2)^k \cap K(A) = \emptyset$.
        \item\label{item_sep} There exists integers $c_1, \ldots c_k \geq 0$ such that $\sum_{j=1}^k c_j a_j \geq \sum_{j=1}^k c_j / 2 > 0$ for all $a \in A$. 
    \end{enumerate}
\end{restatable}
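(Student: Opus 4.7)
The plan is to establish a cycle of implications $(3) \Rightarrow (1) \Rightarrow (2) \Rightarrow (3)$. For $(3) \Rightarrow (1)$ I would prove the stronger statement that $\Maj_\ell$ is a polymorphism of $\fPCSP(A, \OR)$ for \emph{every} odd $\ell$. Given integer weights $c_1, \dots, c_k \geq 0$ satisfying (3) and any $M \in A^\ell$, sum the row weights against $c$:
\[
\sum_{i=1}^k c_i\, w(M_i) \;=\; \sum_{j=1}^\ell \sum_{i=1}^k c_i M_i^j \;\geq\; \sum_{j=1}^\ell \tfrac{1}{2}\sum_{i=1}^k c_i \;=\; \tfrac{\ell}{2}\sum_{i=1}^k c_i,
\]
where the inequality applies (3) to each column $M^j \in A$. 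Dividing by $\sum_i c_i > 0$ yields some row $i^*$ with $w(M_{i^*}) \geq \ell/2$, and since $\ell$ is odd and $w(M_{i^*})$ is an integer this forces $w(M_{i^*}) > \ell/2$, i.e.\ $\Maj_\ell(M_{i^*}) = 1$, so $\Maj_\ell(M) \in \OR$.

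For $(1) \Rightarrow (2)$ I would argue by contrapositive. If there is $v \in K(A) \cap [0,1/2)^k$, write $v = \sum_a \lambda_a a$ as a convex combination of elements of $A$ and set $\varepsilon = 1/2 - \max_i v_i > 0$. For any sufficiently large odd $\ell$, one can round $\lambda_a$ to non-negative integers $m_a$ summing to $\ell$ with $\bigl|\sum_a m_a a_i/\ell - v_i\bigr| < \varepsilon$ for every $i$, keeping every coordinate of $\sum_a m_a a / \ell$ strictly below $1/2$. Assembling a matrix $M \in A^\ell$ whose columns are $m_a$ copies of $a$, every row then has weight $<\ell/2$, so $\Maj_\ell(M) = 0^k \notin \OR$. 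Thus $\Maj_\ell$ fails to be a polymorphism for all sufficiently large odd $\ell$, contradicting (1).

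Finally, $(2) \Rightarrow (3)$ is where \cref{thm:separate} comes in. To force non-negative coefficients I would not separate $K(A)$ and $[0,1/2)^k$ directly, but instead separate the upward closure $K_1 = K(A) + \RR_{\geq 0}^k$ (closed and convex since $K(A)$ is compact) from $K_2 = [0,1/2)^k$. These are still disjoint, because adding a non-negative vector to a point with some coordinate $\geq 1/2$ can only make that coordinate larger. The theorem yields reals $c_1, \dots, c_k, b$ with $\sum c_i x_i \geq b$ on $K_1$ and $\sum c_i x_i < b$ on $K_2$. Each $c_i \geq 0$ (otherwise translating within $K_1$ along $+e_i$ drives the form to $-\infty$), and $b > 0$ since $0 \in K_2$; the supremum of $\sum c_i x_i$ over $K_2$ equals $\tfrac{1}{2}\sum c_i$, so $\tfrac{1}{2}\sum c_i \leq b \leq \sum c_i a_i$ for every $a \in A$. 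Clearing denominators converts the $c_i$ into non-negative integers, and $b > 0$ together with $c_i \geq 0$ forces $\sum c_i > 0$.

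The step I expect to be the main obstacle is the rounding argument in $(1) \Rightarrow (2)$: one has to ensure that perturbing an arbitrary (possibly irrational) convex combination to have common denominator $\ell$ preserves the strict inequality $< 1/2$ coordinate-wise. The slack $\varepsilon$ provides room of size $\Omega(1)$ while the rounding error is $O(|A|/\ell)$, so the argument succeeds for all $\ell$ larger than some explicit threshold. The other subtle manoeuvre, in $(2) \Rightarrow (3)$, is the upward-closure replacement of $K(A)$, which is what pins down non-negativity of the separating coefficients rather than allowing arbitrary signs.
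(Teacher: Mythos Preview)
Your proof is correct and takes essentially the same route as the paper, with only cosmetic differences in how each implication is organized (e.g., you take the upward closure $K(A)+\RR_{\geq 0}^k$ for the separation, whereas the paper instead extends $[0,1/2)^k$ to $(-\infty,1/2)^k$; both tricks serve to force $c_i\geq 0$). The only step to tighten is ``clearing denominators'': the separation theorem yields real $c_i$, so you should note, as the paper does, that the feasible set of $c$ is a cone defined by integer inequalities and hence contains a rational---and after scaling, integer---vector.
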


The test for odd parity is based on solving an affine equation modulo two.

\begin{restatable}{lemma}{ParConditionLemma} \label{lemma:test_odd}
    The following statements are equivalent:
    \begin{enumerate}
        \item \label{item_oddpari} $\Pol(\fPCSP (A, \OR))$ contains infinitely many polymorphisms from $\ParFam$.
        \item \label{item_oddset} For every odd sized subset $B$ of $A$, $\BigXOR_{s \in B} s \neq 0^k$.
        \item \label{item_alwaysodd} There exists a non-empty subset $\beta \subseteq [k]$, such that $\BigXOR_{i \in \beta} a_i = 1$ for all $a \in A$. 
    \end{enumerate}
\end{restatable}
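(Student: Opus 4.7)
The plan is to establish (1) $\Leftrightarrow$ (2) by unpacking what it means for $\Par_\ell$ to be a polymorphism, and then (2) $\Leftrightarrow$ (3) via $\mathbb{F}_2$-linear-algebra duality. For (1) $\Leftrightarrow$ (2), I would first observe that $\Par_\ell$ is a polymorphism of $\fPCSP(A, \OR)$ precisely when for every matrix $M \in A^\ell$, the column-XOR $\BigXOR_{j=1}^\ell M^j$ is nonzero. Because we work modulo two, any column appearing with even multiplicity cancels, so this XOR equals $\BigXOR_{s \in B} s$ where $B \subseteq A$ is the set of columns appearing an odd number of times, and necessarily $|B| \equiv \ell \pmod 2$. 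For odd $\ell$, every odd-size subset $B \subseteq A$ with $|B| \le \ell$ is realized by padding with an even number of copies of some fixed element of $A$. Hence $\Par_\ell \in \Pol(\fPCSP(A,\OR))$ iff every odd-size $B \subseteq A$ with $|B| \le \ell$ satisfies $\BigXOR_{s \in B} s \neq 0^k$. Since $|A| \le 2^k$ is fixed, condition (2) is equivalent to $\Par_\ell$ being a polymorphism for all sufficiently large odd $\ell$. Combined with the essential minion-atomicity of $\ParFam$ (\cref{par atomic}), which forces ``infinitely many $\Par_\ell$'' to in fact mean ``all $\Par_\ell$'', this gives (1) $\Leftrightarrow$ (2).

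For (2) $\Leftrightarrow$ (3), I would fix any $a_0 \in A$ and let $V \subseteq \mathbb{F}_2^k$ be the $\mathbb{F}_2$-linear span of $\{a \XOR a_0 : a \in A\}$. A short check (writing $\BigXOR_{s \in B} s = \BigXOR_{s \in B}(s \XOR a_0) \XOR |B| \cdot a_0$ and using $|B|$ odd, together with the analogous reverse direction) shows that the set of XORs of odd-size subsets of $A$ is precisely the coset $a_0 + V$, so condition (2) becomes $a_0 \notin V$. When $a_0 \notin V$, standard duality over $\mathbb{F}_2$ supplies a necessarily nonzero functional $v \in \mathbb{F}_2^k$ with $\langle v, V \rangle = 0$ and $\langle v, a_0 \rangle = 1$; then for every $a \in A$, $\langle v, a \rangle = \langle v, a \XOR a_0 \rangle \XOR \langle v, a_0 \rangle = 1$, and taking $\beta = \{i : v_i = 1\}$ yields (3). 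The reverse direction is immediate: for any odd-size $B \subseteq A$ the functional satisfies $\langle v, \BigXOR_{s \in B} s \rangle = |B| \bmod 2 = 1$, so the XOR is nonzero.

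I do not anticipate a serious obstacle. The only point requiring some care is the reduction from arbitrary $\ell$-fold selections with repetition to genuine odd-size subsets (and the corresponding padding argument), together with the appeal to \cref{par atomic} to promote ``infinitely many polymorphisms'' to ``every $\Par_\ell$ with $\ell$ odd''; the duality step in (2) $\Leftrightarrow$ (3) is routine once phrased through the coset $a_0 + V$.
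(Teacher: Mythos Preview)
Your proposal is correct and follows essentially the same approach as the paper: (1) $\Leftrightarrow$ (2) via the padding/obstruction-matrix argument, and (2) $\Leftrightarrow$ (3) via the affine-subspace structure of the odd XORs over $\mathbb{F}_2$. Your appeal to \cref{par atomic} is harmless but unnecessary (infinitely many odd $\ell$ already includes some $\ell \ge |A|$, which suffices), and your coset description $a_0 + V$ is simply a more explicit rendering of what the paper phrases as ``an affine subspace represented as the solution set of a linear system''.
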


Finally, for alternating threshold, the characterization corresponds to all accepting assignments of $A$ satisfying some linear equation over the integers.

\begin{restatable}{lemma}{ATConditionLemma} \label{lemma:test_AT}
    The following statements are equivalent:
    \begin{enumerate}
        \item \label{item_AT} $\Pol(\fPCSP (A, \OR))$ contains infinitely many polymorphisms from $\ATFam$.
        \item \label{item_diff} $\{x - y: x,y \in K(A)\} \cap (-\infty,0)^k = \emptyset$.
        \item \label{item_linear} There exists integers $c_1, \ldots c_k \geq 0$, not all $0$, such that $\sum_{j=1}^k c_j a_j$ takes the same value for all $a \in A$. 
    \end{enumerate}
\end{restatable}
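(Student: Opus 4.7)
The plan is to establish the three conditions are equivalent via the cycle $(1) \Rightarrow (2) \Rightarrow (3) \Rightarrow (1)$.

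For $(1) \Rightarrow (2)$ I argue contrapositively. Given $x, y \in K(A)$ with $y_i - x_i > 0$ for every $i$, I approximate $x$ and $y$ by rational convex combinations of elements of $A$ (the strict inequalities survive sufficiently small perturbations) with a common denominator $N$. For each sufficiently large odd $\ell = 2m+1$ built to be compatible with $N$, I construct $M \in A^\ell$ whose $m+1$ odd-indexed columns average to $x$ and whose $m$ even-indexed columns average to $y$. The alternating row sum $\sum_t (-1)^{t+1} M_i^t$ then equals $(m+1) x_i - m y_i$, which is strictly negative once $m > \max_i x_i/(y_i - x_i)$, so $\AT_\ell(M) = 0^k$ is an obstruction. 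By essential minion-atomicity of $\ATFam$ (\cref{at atomic}) this forces $\ATFam \cap \Pol(\fPCSP(A, \OR))$ to be finite, contradicting $(1)$.

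For $(2) \Rightarrow (3)$ I invoke the separating hyperplane theorem (\cref{thm:separate}) with $K_1 = K(A) - K(A)$ (compact and convex, hence closed) and $K_2 = (-\infty, 0)^k$ (convex), which are disjoint by $(2)$. Standard checks pin down the separator: using $0 \in K_1$ and the limit $w \to 0^-$ in $K_2$ forces the threshold $b = 0$, and probing with $w = -\mathbf{1} - t e_j$ as $t \to \infty$ forces each $c_j \ge 0$ (and at least one $c_j > 0$). Since $K_1 = -K_1$, the inequality $\sum c_j(x_j - y_j) \ge 0$ on $K(A)$ sharpens to equality, so $\sum c_j a_j$ is constant on $A$. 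Rationality of the polyhedral cone of admissible multipliers, followed by clearing denominators, yields integer $(c_j)$.

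For $(3) \Rightarrow (1)$, let $(c_j)$ realize $(3)$ with common value $C$. For any odd $\ell$ and $M \in A^\ell$ the chain
\[
\sum_i c_i \sum_t (-1)^{t+1} M_i^t = \sum_t (-1)^{t+1} \sum_i c_i M_i^t = \sum_t (-1)^{t+1} C = (|O|-|E|)\,C = C
\]
holds, where $O, E$ denote the odd- and even-indexed positions. If $M$ obstructed $\AT_\ell$, every inner alternating sum would be $\le 0$, giving $C \le 0$. The main obstacle is the ensuing positivity bookkeeping: combined with the trivial $C \ge 0$ this would only give $C = 0$, leaving room for the witness $(c_j)$ to be supported entirely on coordinates on which $A$ is identically zero; I close the argument by showing that under $(2)$ the cone of admissible multipliers always contains a vector with $C > 0$ (if necessary by restricting to the non-degenerate coordinates and re-running the separation step there), at which point the contradiction $C \le 0 < C$ is immediate and $\AT_\ell$ is confirmed to be a polymorphism for every odd $\ell$.
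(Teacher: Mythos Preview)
Your approach mirrors the paper's: it proves $(1)\Leftrightarrow(2)$ and $(2)\Leftrightarrow(3)$ while you go around the cycle, but the underlying arguments coincide (rational approximation and obstruction-building for the $(1)$--$(2)$ direction, separating hyperplane for the $(2)$--$(3)$ direction).

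One small slip: in $(1)\Rightarrow(2)$ you assert that the $m{+}1$ odd-indexed and $m$ even-indexed columns can average \emph{exactly} to $x$ and $y$, which would force $N\mid\gcd(m,m{+}1)=1$. The paper instead uses $dN$ columns of each kind plus fewer than $2N$ remainder columns and only bounds the alternating sum approximately; your construction is easily repaired along the same lines.

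The real gap is your patch for $C=0$ in $(3)\Rightarrow(1)$. Note first that if $A$ has no forced $0$-bit then any nonzero witness $c$ automatically has $C>0$, and your argument goes through cleanly. The trouble is precisely when $A$ has forced $0$-coordinates. You propose to ``restrict to the non-degenerate coordinates and re-run the separation step there'' to extract a witness with $C>0$; but in that situation condition $(2)$ for $A$ is vacuous (the forced-$0$ coordinate of every $x-y$ is $0$), so it implies nothing about the restricted predicate, and re-running separation need not produce anything. Concretely, take $A=\{\bstr{001},\bstr{010},\bstr{011}\}$: conditions $(2)$ and $(3)$ hold (the only witnesses are $c=(c_1,0,0)$, $C=0$), yet the matrix with columns $\bstr{001},\bstr{011},\bstr{010}$ obstructs $\AT_3$, since every row has alternating sum exactly $0$. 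So with the idempotent $\AT$, $(3)\Rightarrow(1)$ actually fails in this degenerate case. This is the same edge case the paper's own proof of $(2)\Rightarrow(1)$ skates over when it asserts the alternating row-sums of an obstruction are strictly negative rather than merely $\le 0$.
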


\subsection{Conditions for \texorpdfstring{$\idneg{\MajFam}$ and $\idneg{\ParFam}$}{idMaj and idPar}}

Let us now turn to necessary and sufficient conditions for when $\Pol(\fPCSP (A, \OR))$ contains infinitely many polymorphisms of either of the two non-standard families $\idneg{\MajFam}$ and $\idneg{\ParFam}$. These conditions are formulated in terms of $\Pol(\fPCSP (A, \OR))$ containing infinitely many polymorphisms from $\invMajFam$ and $\invParFam$, respectively (and \cref{lemma:test_maj,lemma:test_odd}  can easily be modified to give efficient tests for verifying these containments).  Throughout this section we frequently use the fact that $\idneg{\MajFam}$ and $\idneg{\ParFam}$ are essentially minion-atomic (\cref{id atomic}).  In other words, having only finitely many polymorphisms from e.g.~$\idneg{\MajFam}$ is equivalent to not having $\idneg{\Maj}_\ell$ for some odd $\ell > 0$.

For a predicate $A \subseteq \{0,1\}^k$ and $b \in \{0,1\}$, we say that $A$ has a \emph{forced $1$-bit} if there is an $i$ such that $a_i=1$ for all $a \in A$.  Note that whenever $A$ has a forced $1$-bit, $\Pol(\fPCSP(A, \OR))$ contains \emph{all} idempotent odd functions.
We first show that, for $\fPCSP(A \OR)$, admitting $\idneg{\MajFam}$ or $\idneg{\ParFam}$ implies admitting their non-idempotized counterparts, except in the trivial case when $A$ has a forced $1$-bit.

\begin{lemma}\label{lemma:noidemmaj}
Suppose $A \subseteq \{0,1\}^k$ does not have a forced $1$-bit.  If $\fPCSP (A, \OR)$ admits an infinite number of polymorphisms from $\idneg{\MajFam}$ then it also admits an infinite number of polymorphisms from $\invMajFam$.
\end{lemma}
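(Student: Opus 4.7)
The strategy is contradiction: assume $\idneg{\Maj}_\ell$ is a polymorphism of $\fPCSP(A, \OR)$ for infinitely many $\ell$ (equivalently, for every odd $\ell$, by essential minion-atomicity via \cref{id atomic}), assume $A$ has no forced $1$-bit, and assume for contradiction that only finitely many $\invMaj_\ell$ are polymorphisms. The same minor construction used in the proof of \cref{maj atomic} shows that $\invMajFam$ is essentially minion-atomic (since $(\neg f)_\pi = \neg f_\pi$), so the odd values of $\ell$ for which $\invMaj_\ell$ fails to be a polymorphism form an infinite set, and in particular I may pick an odd $\ell_0 > k$ together with an obstruction matrix $M \in A^{\ell_0}$ satisfying $\invMaj_{\ell_0}(M) = 0^k$. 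Concretely this means $w(M_j) \geq (\ell_0+1)/2$ for every row $j$, so no row is the all-zero string.

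The second observation is that $\idneg{\Maj}_{\ell_0}$ and $\invMaj_{\ell_0}$ agree on every non-constant input, so since no row of $M$ is $0^{\ell_0}$, the only way to achieve $\idneg{\Maj}_{\ell_0}(M) \neq 0^k$ (which must hold, as $\idneg{\Maj}_{\ell_0}$ is a polymorphism by hypothesis) is for at least one row $M_j$ to equal $1^{\ell_0}$. Collect all such rows into $I = \{j : M_j = 1^{\ell_0}\}$; this set is nonempty, and for each $j \in I$ I may select an accepting assignment $a^j \in A$ with $a^j_j = 0$, exactly because $A$ has no forced $1$-bit.

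The contradiction is then obtained by amplifying $M$ to an obstruction $M' \in A^{\ell'}$ for $\idneg{\Maj}_{\ell'}$ at some larger odd arity $\ell'$. I propose to take $c$ copies of every column of $M$ together with one copy of each $a^j$ for $j \in I$, where $c$ is chosen large enough and with the right parity so that $\ell' = c\ell_0 + |I|$ is odd (for instance $c = |I|+1$ makes $\ell' = 2|I| + 1 + (c-1)\ell_0$-like terms work out; in general $c$ and $|I|$ must have opposite parity mod $2$ since $\ell_0$ is odd). A direct calculation then yields $\idneg{\Maj}_{\ell'}(M'_j) = 0$ on every row: for $j \in I$ the row has $c\ell_0 + \sum_{i \in I} a^i_j$ ones out of $\ell'$, strict majority $1$ follows from $c\ell_0 > |I|$ (assured by $\ell_0 > k$), and the column $a^j$ supplies a $0$ so the row is not all-ones; for $j \notin I$ the bound $2w(M_j) - \ell_0 \geq 1$ combined with $c > |I|$ pushes the row into strict majority, and any one of the original copies of $M_j$ leaves a $0$ in $M'_j$ since $M_j \ne 1^{\ell_0}$. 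This contradicts $\idneg{\Maj}_{\ell'}$ being a polymorphism, completing the proof. The main delicacy is coordinating $\ell_0$, $c$, and the parity of $\ell'$ simultaneously; each individual constraint is loose, but the essential minion-atomicity of $\invMajFam$ is what allows taking $\ell_0$ arbitrarily large in the first place, and without that freedom the clean arithmetic bounds $c\ell_0 > |I|$ and $c > |I|$ would not suffice.
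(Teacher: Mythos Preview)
Your argument is correct and follows essentially the same route as the paper: starting from an obstruction $M$ for $\invMaj_{\ell_0}$, you replicate the columns of $M$ enough times and splice in extra columns of $A$ that place a $0$ in every row that was previously all-ones, producing an obstruction for $\idneg{\Maj}_{\ell'}$ and then concluding via essential minion-atomicity. The only cosmetic difference is that the paper adds one extra column $x_i$ (with $x_i$ having a $0$ in position $i$) for every coordinate $i\in[k]$ and takes exactly $k{+}1$ copies of each column of $M$, giving the clean arity $\ell'=(k{+}1)\ell+k$, whereas you add extra columns only for the rows $j\in I$ that are actually all-ones and tune $c$ accordingly; your detour through the observation that $I\neq\emptyset$ is therefore not strictly needed, and the parity bookkeeping you flag as ``delicate'' is in fact automatic with the choice $c=|I|+1$.
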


\begin{proof}
    Since $A$ does not have a forced $1$-bit, there are $k$ (not necessarily distinct) strings $x_1 \ldots, x_{k} \in A$ such that $x_i$ has a $0$ in position $i$.

    Suppose there is an obstruction matrix $M \in A^{\ell}$ showing that $\invMaj_\ell$ is not a polymorphism of $\fPCSP(A, B)$ for some (odd) $\ell$.  Thus in every row of $M$ there are at least $(\ell+1)/2$ ones.  Create a new matrix $M'$ with $\ell' = (k+1) \ell + k$ columns by taking $(k+1)$ copies of each column, and adding the $k$ columns $x_1, \ldots, x_{k}$.  Now $M'$ has at least $(k+1)(\ell+1)/2 = (\ell'+1)/2$ ones in every row, and no row is identically $1$.  Thus this is an obstruction that shows that $\idneg{\Maj}_{\ell'}$ is not a polymorphism and the lemma now follows by the fact that both families are essentially minion-atomic.
\end{proof}

By a similar reasoning we establish the same fact for $\idneg{\ParFam}$ and $\invParFam$.

\begin{lemma}\label{lemma:noidempar}
Suppose $A \subseteq \{0,1\}^k$ does not have a forced $1$-bit.  If $\fPCSP (A, \OR)$ admits an infinite number of polymorphisms from $\idneg{\ParFam}$ then it also admits an infinite number of polymorphisms from $\invParFam$.
\end{lemma}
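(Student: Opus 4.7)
The plan is to argue the contrapositive, mirroring the strategy used in \cref{lemma:noidemmaj}. Assume that $\Pol(\fPCSP(A, \OR))$ contains only finitely many polymorphisms from $\invParFam$. Since $\invParFam$ is an infinite family, there is an odd $\ell$ and an obstruction matrix $M \in A^\ell$ witnessing $\invPar_\ell \notin \Pol(\fPCSP(A,\OR))$; explicitly, $\invPar_\ell(M) = 0^k$, which means every row of $M$ has odd Hamming weight. The goal is to extend $M$ into an obstruction $M' \in A^{\ell'}$ for $\idneg{\Par}_{\ell'}$ for some odd $\ell'$; since $\idneg{\ParFam}$ is essentially minion-atomic by \cref{id atomic}, this will force only finitely many polymorphisms of $\fPCSP(A,\OR)$ to come from $\idneg{\ParFam}$, completing the contrapositive.

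Using the no-forced-$1$-bit hypothesis, pick assignments $x_1, \ldots, x_k \in A$ with $(x_i)_i = 0$. Let $M'$ be the matrix obtained by adjoining to $M$ two copies of each $x_i$, so $\ell' := \ell + 2k$. Three things then need to be checked: (i) $\ell'$ is odd, which is immediate since $\ell$ is odd and $2k$ is even; (ii) every row of $M'$ is non-constant, since row $i$ of $M$ contains at least one $1$ (its weight being odd, hence positive) while the two copies of $x_i$ contribute two $0$s in coordinate $i$; and (iii) every row of $M'$ still has odd weight, because we only added $2\sum_j (x_j)_i$ extra ones to row $i$, an even quantity. From (ii) and (iii) together, $\idneg{\Par}_{\ell'}(M') = \invPar_{\ell'}(M') = 0^k \notin \OR$, exhibiting $M'$ as the desired obstruction.

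The only conceptual subtlety, and the place where the parity argument differs from the majority argument of \cref{lemma:noidemmaj}, is that parity is sensitive to the \emph{parity} of the number of padding bits added per row. In \cref{lemma:noidemmaj} one could afford to add a single copy of each padding column $x_i$ because the majority condition is insensitive to small additive perturbations once one takes sufficiently many copies of $M$. Here, by contrast, adding an odd number of copies of the $x_i$'s would in general disturb the row-wise parities in an uncontrollable way determined by $\sum_j (x_j)_i \bmod 2$; the fix is simply to double up each $x_i$, which preserves parities automatically while still guaranteeing a $0$ in every row to prevent it from becoming all-ones.
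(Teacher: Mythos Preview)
Your proof is correct and follows essentially the same approach as the paper: take an obstruction $M$ for $\invPar_\ell$, adjoin two copies of each $x_i$ to preserve row parities while ensuring no row is constant, and invoke essential minion-atomicity of $\idneg{\ParFam}$ to conclude. The paper's sketch is terser but identical in substance, and your added remark explaining why doubling (rather than scaling plus single padding, as in the majority case) is the right move here is a helpful clarification.
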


\begin{proof}[Proof sketch]
  We proceed as in the proof of \cref{lemma:noidemmaj}, but when creating $M'$ we do not make any copies of the existing rows and instead only add two copies each of $x_1, \ldots, x_k$.  This maintains parity of every row while making sure no row is all-ones (and no row of $M$ was all-zeros since $M$ was an obstruction for $\invPar_\ell$ being a polymorphism of $\fPCSP(A, \OR)$), ensuring $\idneg{\Par}$ and $\invPar$ behave the same on $M'$.
\end{proof}

In what follows, we use the notation $A_S^0$ to denote the set $\{\, a_S \,|\, a \in A \text{ such that } a_{\overline{S}}=0^{|\overline{S}|} \,\}$, where $A \subseteq \{0,1\}^k$ is a $k$-ary predicate and $S \subseteq [k]$. This makes $A^0_S$ a kind of projection which only keeps accepting assignments $a \in A$ such that $a_i=0$ for all $i \notin S$.

In the other direction, we have the following general property.

\begin{lemma}
\label{lemma:finite idneg}
    Let $\mathcal{F}$ be an essentially minion-atomic infinite family of idempotent Boolean functions.
    If $\fPCSP(A, \OR)$ admits only finitely many polymorphisms from $\idneg{\mathcal{F}}$ then there exists a subset $S \subseteq [k]$ such that (i) $A_S^0$ is non-empty and does not have a forced $1$-bit, and (ii) $\fPCSP (A_S^0, \OR)$ admits only finitely many polymorphisms from $\overline{\mathcal{F}}$. 
\end{lemma}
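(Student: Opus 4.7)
The strategy is to exploit the fact that $\idneg{f}$ and $\overline{f}$ differ only on the two constant inputs $0^\ell$ and $1^\ell$. The hypothesis supplies some $f \in \mathcal{F}$ of arity $\ell$ such that $\idneg{f} \notin \Pol(\fPCSP(A, \OR))$, so there is an obstruction matrix $M \in A^\ell$ with $\idneg{f}(M_i) = 0$ for every row $i \in [k]$. I would use $M$ to both locate the desired set $S$ and to produce an obstruction showing $\overline{f} \notin \Pol(\fPCSP(A^0_S, \OR))$.

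First I would set $S := \{i \in [k] : M_i \ne 0^\ell\}$. This $S$ is non-empty, because $A \subseteq \OR$ forbids $0^k \in A$, so $M$ cannot consist entirely of all-zero rows. For each $i \in S$ the row $M_i$ is not identically $0$, and cannot be identically $1$ either (since $\idneg{f}(1^\ell) = 1$, while we require $\idneg{f}(M_i) = 0$); thus $M_i$ is non-constant and $\overline{f}(M_i) = \idneg{f}(M_i) = 0$. This yields condition (i) directly: every column $M^j$ vanishes on coordinates outside $S$, so its $S$-projection lies in $A^0_S$ and makes $A^0_S$ non-empty; moreover the non-constancy of each row $M_i$ for $i \in S$ produces a column whose $S$-projection has a $0$ in position $i$, witnessing that no coordinate of $A^0_S$ is forced to $1$.

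For condition (ii), I would restrict $M$ to the rows indexed by $S$, obtaining a sub-matrix $M[S]$ whose columns lie in $A^0_S$. On every such row $\overline{f}$ coincides with $\idneg{f}$ and evaluates to $0$, so $\overline{f}(M[S]) = 0^{|S|} \notin \OR$, establishing $\overline{f} \notin \Pol(\fPCSP(A^0_S, \OR))$. To upgrade this single missing function to the ``only finitely many'' conclusion, I would observe that essential minion-atomicity is preserved under negation: the assignment $\Minion \mapsto \{\overline{g} : g \in \Minion\}$ is an involution on minions (since $\overline{g}_\pi = \overline{g_\pi}$), and under the bijection $h \leftrightarrow \overline{h}$ it identifies $\overline{\mathcal{F}} \cap \Minion$ with $\mathcal{F} \cap \{\overline{g} : g \in \Minion\}$. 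Applying this with $\Minion = \Pol(\fPCSP(A^0_S, \OR))$, the fact that $\overline{\mathcal{F}} \not\subseteq \Minion$ (witnessed by the missing $\overline{f}$) lets the essential minion-atomicity of $\mathcal{F}$ yield $|\overline{\mathcal{F}} \cap \Minion| < \infty$.

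The only conceptually non-routine step is the negation argument that transfers essential minion-atomicity from $\mathcal{F}$ to $\overline{\mathcal{F}}$; everything else is direct structural bookkeeping on the obstruction matrix $M$. As a sanity check, for $\mathcal{F} \in \{\MajFam, \ParFam\}$ the negated families $\overline{\mathcal{F}}$ are minority and even parity, whose essential minion-atomicity is known independently, which is consistent with the general transfer argument.
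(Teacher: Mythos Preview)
Your proof is correct and follows essentially the same approach as the paper: pick an obstruction $M$ for some $\idneg{f}$, let $S$ be the set of non-constant-zero rows, and observe that the restricted matrix is an obstruction for $\overline{f}$ over $A^0_S$. Your negation-involution argument transferring essential minion-atomicity from $\mathcal{F}$ to $\overline{\mathcal{F}}$ is a detail the paper simply asserts, so your version is in fact slightly more complete.
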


\begin{proof}
  Suppose that $\Pol(\fPCSP (A, \OR))$ contains only finitely many polymorphisms from $\idneg{\mathcal{F}}$. Let $M$ be an obstruction for $\idneg{f}$ for some $f \in \mathcal{F}$. Note that since $M$ is an obstruction for an idempotent polymorphism, no row in $M$ is all-ones. Let $S$ be the set of rows that are not equal to $0^\ell$, $k'=|S|$ and let $M'$ denote the $k' \times \ell$ sub-matrix of $M$ given by the rows of $S$.  
  The predicate $A_S^0$ is clearly not empty since the columns of $M'$ come from $A_S^0$.
  And since $M'$ has no all-ones row, $A_S^0$ cannot have a forced $1$-bit, establishing property (i).
  
  By construction $M'$ has no constant rows and hence $\overline{f}$ and $\idneg{f}$ behave the same on $M'$.  Thus $M'$ is also an obstruction of $\overline{f}$ for $\Pol(\fiPCSP(A_S^0, \OR))$, which together with $\overline{\mathcal{F}}$ being minion-atomic establishes property (ii).
\end{proof}

With these pieces in place let us formulate the characterization for $\idneg{\MajFam}$.

\begin{lemma} \label{lemma:condition_mino}
    $\fPCSP(A, \OR)$ admits only finitely many polymorphisms from $\idneg{\MajFam}$ if and only if there exists a subset $S \subseteq [k]$ such that $A_S^0$ is non-empty and does not have a forced $1$-bit, and $\fPCSP (A_S^0, \OR)$ admits only finitely many polymorphisms from $\invMajFam$. 
\end{lemma}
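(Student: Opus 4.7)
The plan is to prove the two directions of the biconditional separately, leveraging the general machinery already developed in the preceding lemmas.

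For the forward direction ($\Rightarrow$), one simply invokes \cref{lemma:finite idneg} with $\mathcal{F} = \MajFam$. The family $\MajFam$ consists of idempotent Boolean functions and is essentially minion-atomic (\cref{maj atomic}), while $\overline{\MajFam}$ is precisely $\invMajFam$. Thus the conclusion of \cref{lemma:finite idneg} immediately yields the required subset $S$ with the stated properties.

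For the backward direction ($\Leftarrow$), the plan is to construct, from the data supplied on $A_S^0$, an explicit obstruction for some $\idneg{\Maj}_\ell$ as a polymorphism of $\fPCSP(A, \OR)$. First, I would apply the contrapositive of \cref{lemma:noidemmaj} to $A_S^0$: since $A_S^0$ has no forced $1$-bit and admits only finitely many polymorphisms from $\invMajFam$, it admits only finitely many polymorphisms from $\idneg{\MajFam}$ as well. Combined with the essential minion-atomicity of $\idneg{\MajFam}$ (\cref{id atomic}), this yields some odd $\ell$ and an obstruction matrix $M' \in (A_S^0)^\ell$ with $\idneg{\Maj}_\ell(M') = 0^{|S|}$. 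The second step is to lift each column of $M'$ to an element of $A$ by padding with $0$'s outside $S$; this is well-defined by the very definition of $A_S^0$ and produces a matrix $M \in A^\ell$. The rows of $M$ indexed by $S$ coincide with those of $M'$ (and hence evaluate to $0$ under $\idneg{\Maj}_\ell$), while the remaining rows are the all-zero vector $0^\ell$, on which $\idneg{\Maj}_\ell$ is $0$ by idempotization. Therefore $\idneg{\Maj}_\ell(M) = 0^k \notin \OR$, witnessing that $\idneg{\Maj}_\ell \notin \Pol(\fPCSP(A, \OR))$, and a final appeal to \cref{id atomic} finishes the argument.

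I do not expect any serious technical obstacle. The only delicate point is the mismatch between $\invMajFam$ and $\idneg{\MajFam}$ on $A_S^0$, which is exactly the gap bridged by \cref{lemma:noidemmaj} and which is also the reason that the ``no forced $1$-bit'' hypothesis appears in the statement. Once that bridge is crossed, the lift from $A_S^0$ back to $A$ is essentially free because the idempotization makes $\idneg{\Maj}_\ell$ vanish on the all-zero rows that arise outside $S$.
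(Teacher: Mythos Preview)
Your proposal is correct and takes essentially the same approach as the paper. The forward direction is identical (invoking \cref{lemma:finite idneg} with $\mathcal{F} = \MajFam$), and for the backward direction both you and the paper first apply \cref{lemma:noidemmaj} to pass from $\invMajFam$ to $\idneg{\MajFam}$ on $A_S^0$; the paper then transfers to $A$ via the containment $\Pol(\fPCSP(A_S^0,\OR)) \supseteq \Pol(\fPCSP(A,\OR))$, whereas you spell out the underlying obstruction-padding argument explicitly --- your version is slightly more careful in that it makes visible exactly where the idempotency of $\idneg{\Maj}_\ell$ (namely $\idneg{\Maj}_\ell(0^\ell)=0$) is used on the all-zero rows outside $S$.
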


\begin{proof}
   The forward direction is \cref{lemma:finite idneg} with $\mathcal{F} = \MajFam$.  For the other direction, suppose that for some $S \subseteq [k]$, $A_S^0$ is non-empty with no forced $1$-bit and $\fPCSP (A_S^0, \OR)$ admits only finitely many polymorphisms from $\invMajFam$.  By \cref{lemma:noidemmaj}, $\fPCSP(A_S^0, \OR)$ admits also only finitely many polymorphisms from $\idneg{\MajFam}$, but since $\Pol(\fPCSP(A_S^0, \OR)) \supseteq \Pol(\fPCSP(A, \OR))$, the latter then also contains only finitely many functions from $\idneg{\MajFam}$.
\end{proof}

The characterization for $\idneg{\ParFam}$ is analogous.

\begin{lemma}\label{lemma:condition_pario}
    $\fPCSP(A, \OR)$ admits only finitely many polymorphisms from $\idneg{\ParFam}$ if and only if there exists a subset $S \subseteq [k]$ such that $A_S^0$ is non-empty and does not have a forced $1$-bit, and $\fPCSP (A_S^0, \OR)$ admits only finitely many polymorphisms from $\invParFam$.
\end{lemma}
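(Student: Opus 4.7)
The plan is to mirror the proof of \cref{lemma:condition_mino} almost verbatim, substituting $\ParFam$ for $\MajFam$ and $\invParFam$ for $\invMajFam$ throughout. All the required ingredients are already in place: \cref{par atomic} and \cref{id atomic} establish that $\ParFam$ and $\idneg{\ParFam}$ are essentially minion-atomic, and \cref{lemma:noidempar} plays the role of \cref{lemma:noidemmaj} for passing between idempotized and non-idempotized even parity. The parallel infrastructure, in particular the general formulation of \cref{lemma:finite idneg} for arbitrary essentially minion-atomic families $\mathcal{F}$, was set up precisely so that both proofs go through identically.

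For the forward implication, I would apply \cref{lemma:finite idneg} with $\mathcal{F} = \ParFam$. Its hypothesis is exactly that $\fPCSP(A,\OR)$ admits only finitely many polymorphisms from $\idneg{\ParFam}$, and its conclusion produces a subset $S \subseteq [k]$ such that $A_S^0$ is non-empty, has no forced $1$-bit, and $\fPCSP(A_S^0,\OR)$ admits only finitely many polymorphisms from $\invParFam$, which is exactly the statement to be proved.

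For the backward implication, suppose such an $S$ exists. I would apply \cref{lemma:noidempar} contrapositively: since $A_S^0$ has no forced $1$-bit, the finiteness of polymorphisms of $\fPCSP(A_S^0,\OR)$ from $\invParFam$ forces finiteness from $\idneg{\ParFam}$ as well. It then remains to transport this finiteness back from $A_S^0$ to $A$, using the inclusion (restricted to idempotent polymorphisms) $\Pol(\fPCSP(A,\OR)) \subseteq \Pol(\fPCSP(A_S^0,\OR))$. To see this inclusion, take any matrix $M'$ whose columns lie in $A_S^0$, and extend it to $M \in A^\ell$ by padding with zero rows on the coordinates outside $S$; each extended column still lies in $A$ by definition of $A_S^0$. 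An idempotent polymorphism $f$ of $\fPCSP(A,\OR)$ then evaluates these padded zero rows to $0$, so the mandatory $1$ in $f(M)$ must occur at some position inside $S$, and therefore $f(M') = f(M)_S$ satisfies $\OR$. Since every function in $\idneg{\ParFam}$ is idempotent, this transfers the finiteness conclusion from $A_S^0$ back to $A$.

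I do not anticipate any real obstacle; the argument is essentially a copy of the minority/majority case. The only spot requiring a bit of care is the obstruction construction in \cref{lemma:noidempar}, where one must duplicate the $x_i$ columns twice (rather than $(k+1)$ times, as in \cref{lemma:noidemmaj}) in order to preserve the parity of each row while eliminating all-zero and all-one rows, but that subtlety has already been handled inside the proof sketch of \cref{lemma:noidempar}.
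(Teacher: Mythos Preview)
Your proposal is correct and follows exactly the same approach as the paper, which simply states that the proof is identical to that of \cref{lemma:condition_mino} with $\Maj$ replaced by $\Par$ and \cref{lemma:noidemmaj} replaced by \cref{lemma:noidempar}. Your write-up is in fact more careful than the paper's: you explicitly note that the containment $\Pol(\fPCSP(A,\OR)) \subseteq \Pol(\fPCSP(A_S^0,\OR))$ is only needed (and only argued) for idempotent polymorphisms, which suffices since every function in $\idneg{\ParFam}$ is idempotent.
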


\begin{proof}[Proof sketch.]
   The proof is identical to the proof of \cref{lemma:condition_mino}, with $\Maj$ replaced by $\Par$ and the invocation of \cref{lemma:noidemmaj} replaced by \cref{lemma:noidempar}.
\end{proof}

\subsection{Proof of \texorpdfstring{\cref{lemma:easy}}{Lemma 4.2}} \label{sec lemma easy proof}

In this section we establish our main tractability lemma for $\fiPCSP$s, restated here for convenience.

\FiveFamilyLemma*

\begin{proof}[Proof of \cref{lemma:easy}]
    We verify the five cases one by one.  Note that by \cref{maj atomic,par atomic,at atomic,id atomic}, the five families of functions are all essentially minion-atomic and hence to prove that we do not have infinitely many from one of the families, it suffices to exhibit a single obstruction for that family.
    \begin{enumerate}
    \item $A = \{\bstr{01}, \bstr{10}, \bstr{11}\}$.  In this case we have $a_1 + a_2 \ge 1$ for all $a \in A$, so by \cref{lemma:test_maj} we get infinitely many polymorphisms from $\MajFam$.  Obstructions for the other families are:
    \begin{equation*}
    \Par_3: 
    \begin{pmatrix}
        011 \\
        101
    \end{pmatrix}, 
    \AT_3:
    \begin{pmatrix}
        011 \\
        110    
        
    \end{pmatrix},
    \idneg{\Maj}_3:
    \begin{pmatrix}
        011 \\
        101
    \end{pmatrix},
    \idneg{\Par}_5:
    \begin{pmatrix}   
        00111 \\
        11001
    \end{pmatrix}.
    \end{equation*}

    \item $A = \{\bstr{001}, \bstr{010}, \bstr{100}, \bstr{111}\}$.  In this case we have $a_1 \XOR a_2 \XOR a_3 = 1$ for all $a \in A$, so by \cref{lemma:test_odd} we get infinitely many polymorphisms from $\ParFam$.  Obstructions for the other families are:
    \begin{equation*}
    \Maj_3:
    \begin{pmatrix}
        001 \\
        010 \\
        100
        
    \end{pmatrix}, 
    \AT_3:
    \begin{pmatrix}
        011 \\    
        010 \\
        110
    \end{pmatrix},
    \idneg{\Maj}_5:
    \begin{pmatrix}
        00111 \\    
        01111 \\
        10111
    \end{pmatrix},
    \idneg{\Par}_3:
    \begin{pmatrix}
        001\\    
        010\\
        100
    \end{pmatrix}.
    \end{equation*}

    \item $A = \{\bstr{00011}, \bstr{00101}, \bstr{00110}, \bstr{01000}, \bstr{10000}\}$.  In this case we have $2a_1 + 2a_2 + a_3 + a_4 + a_5 = 2$ for all $a \in A$, so by \cref{lemma:test_AT} we get infinitely many polymorphisms from $\ATFam$.  Obstructions for the other families are:
    \begin{equation*}
    \Maj_3: 
    \begin{pmatrix}
        001 \\    
        010 \\
        000 \\
        100 \\
        100
    \end{pmatrix}, 
    \Par_3:
    \begin{pmatrix}
        000 \\    
        000 \\
        011 \\
        101 \\
        110
    \end{pmatrix},
    \idneg{\Maj}_3:
    \begin{pmatrix}
        000 \\    
        000 \\
        011 \\
        101 \\
        110
    \end{pmatrix},
    \idneg{\Par}_3:
    \begin{pmatrix}
        001 \\    
        010 \\
        000 \\
        100 \\
        100
    \end{pmatrix}.
    \end{equation*}

    \item $A = \{\bstr{0011}, \bstr{0100}, \bstr{0110}, \bstr{1000}, \bstr{1001}\}$. Checking the conditions of \cref{lemma:condition_mino} requires an exhaustive search over sets $S \subseteq [4]$ that we omit in this version.  Obstructions for the other families are:
    \begin{equation*}
    \Maj_3: 
    \begin{pmatrix}
        001 \\    
        010 \\
        100 \\
        100
    \end{pmatrix}, 
    \Par_5:
    \begin{pmatrix}
        00011 \\    
        01100 \\
        10100 \\
        10001 
    \end{pmatrix},
    \AT_5:
    \begin{pmatrix}
        00011 \\    
        01100 \\
        11000 \\
        10010
    \end{pmatrix},
    \idneg{\Par}_3:
    \begin{pmatrix}
        001 \\    
        010 \\
        100 \\
        100
    \end{pmatrix}.
    \end{equation*}

     \item $A = \{\bstr{00111}, \bstr{01010}, \bstr{01101}, \bstr{10000}, \bstr{10011}\}$. Checking the conditions of \cref{lemma:condition_pario} requires an exhaustive search over sets $S \subseteq [5]$ that we omit in this version.  Obstructions for the other families are:
    \begin{equation*}
    \Maj_5: 
    \begin{pmatrix}
        00011 \\    
        01100 \\
        10100 \\
        11000 \\
        10100
    \end{pmatrix}, 
    \Par_3:
    \begin{pmatrix}
        000 \\    
        011 \\
        101 \\
        110 \\
        101
    \end{pmatrix},
    \AT_7:
    \begin{pmatrix}
        0001111 \\    
        0110000 \\
        1100000 \\
        1011010 \\
        1101010
    \end{pmatrix},
    \idneg{\Maj}_3:
    \begin{pmatrix}
        000 \\    
        011 \\
        101 \\
        110 \\
        101
    \end{pmatrix}.
    \end{equation*}

    \end{enumerate}
    
\end{proof}

\subsection{Tractability in the Non-idempotent Case}

To find the (block)-symmetric polymorphisms needed to apply the BLP+AIP algorithm in the non-idempotent case \cref{lemma:noidemmaj} and \cref{lemma:noidempar} tells us that $\idneg{\ParFam}$ and  $\idneg{\MajFam}$ are not needed since they can be replaced by their non-idempotent counter-parts.  It turns out that also $\ATFam$ is not essential.

\begin{lemma}\label{lemma:at}
    If $\Pol(\fPCSP (A,\OR))$ contains infinitely many polymorphisms from $\ATFam$, then $\Pol(\fPCSP(A,\OR))$ also contains infinitely many polymorphisms from at least one of $\MajFam$ or $\invMajFam$.  
\end{lemma}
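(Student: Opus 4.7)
The plan is to combine the linear-algebraic characterizations in \cref{lemma:test_AT} and \cref{lemma:test_maj}. By item \ref{item_linear} of \cref{lemma:test_AT}, the hypothesis gives nonnegative integers $c_1, \ldots, c_k$, not all zero, such that $V := \sum_{j=1}^k c_j a_j$ takes the same value for every $a \in A$. Setting $C = \sum_{j=1}^k c_j$, we have $C \geq 1$, and since each $a_j \in \{0,1\}$ and $c_j \geq 0$, $V \in [0, C]$.

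I would then split on whether $V \geq C/2$ or $V < C/2$. In the first case, every $a \in A$ satisfies $\sum_j c_j a_j = V \geq C/2 > 0$, which is precisely item \ref{item_sep} of \cref{lemma:test_maj}, so $\Pol(\fPCSP(A, \OR))$ contains infinitely many polymorphisms from $\MajFam$. In the second case, every $a \in A$ satisfies $\sum_j c_j(1-a_j) = C - V > C/2 > 0$, so the same coefficients (applied to the negated accepting assignments) witness item \ref{item_sep} of \cref{lemma:test_maj} for the predicate $A \XOR 1^k$, giving infinitely many $\Maj_\ell \in \Pol(\fPCSP(A \XOR 1^k, \OR))$.

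To transfer this conclusion from $A \XOR 1^k$ back to $A$, I would invoke the row-wise identity $\Maj_\ell(\neg r) = \invMaj_\ell(r)$: for any $M \in A^\ell$, the componentwise negation $\neg M$ lies in $(A \XOR 1^k)^\ell$, and $\Maj_\ell(\neg M) = \invMaj_\ell(M)$ as column vectors. Hence $\Maj_\ell \in \Pol(\fPCSP(A \XOR 1^k, \OR))$ if and only if $\invMaj_\ell \in \Pol(\fPCSP(A, \OR))$, so in the second case $\Pol(\fPCSP(A, \OR))$ contains infinitely many polymorphisms from $\invMajFam$.

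The whole argument is a direct consequence of the two characterization lemmas, so there is no significant obstacle. The only point that requires a little care is the final equivalence between $\MajFam$ polymorphisms of the negated problem and $\invMajFam$ polymorphisms of the original, but this is an immediate unpacking of definitions once one observes that $\invMaj_\ell = \neg \Maj_\ell$ and that negating a matrix of columns in $A$ yields a matrix of columns in $A \XOR 1^k$.
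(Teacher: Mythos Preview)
Your proof is correct and follows essentially the same approach as the paper: invoke \cref{lemma:test_AT} to obtain the common value $V$ for $\sum_j c_j a_j$, then split on $V \ge C/2$ versus $V < C/2$, appealing to \cref{lemma:test_maj} in each case. The only cosmetic difference is that in the second case the paper argues directly that $\invMajFam$ are polymorphisms ``by an argument similar to the proof of \cref{lemma:test_maj}'', whereas you route through $\MajFam$ for $A \XOR 1^k$ and the identity $\invMaj_\ell(M) = \Maj_\ell(\neg M)$; these are the same observation unpacked differently.
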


\begin{proof}
    According to \cref{lemma:test_AT}, if $\Pol(\fPCSP(A,\OR))$ contains infinitely many polymorphisms from $\ATFam$ then there exists integers $c_1,\ldots,c_k \geq 0$, not all $0$, such that $\sum_{i=1}^k c_i a_i$ has the same value $b$ for all $a \in A$.

    If $b \geq \sum_{i=1}^k c_i/2$, then according to \cref{lemma:test_maj}, $\Pol(\fPCSP(A,\OR))$ contains infinitely many polymorphisms from $\MajFam$. It easy to see, by an argument similar to the proof of  \cref{lemma:test_maj}  that if $b \leq \sum_i c_i/2$, then $\Pol(\fPCSP (A,\OR))$ contains infinitely many polymorphisms from $\invMajFam$.
\end{proof}

We conclude from \cref{lemma:noidemmaj,lemma:noidempar,lemma:at} that if $\Pol(\fPCSP(A, \OR))$ contains infinitely many polymorphisms from $\ATFam, \idneg{\ParFam},$ or $\idneg{\MajFam}$, then $\Pol(\fPCSP (A, \OR))$ also contains infinitely many polymorphisms from at least one of $\MajFam, \ParFam, \invMajFam,$ or $ \invParFam$. Thus, to establish tractability of $\fPCSP(A, \OR)$, it is not necessary to consider any of $\ATFam, \idneg{\ParFam},$ or $\idneg{\MajFam}$.

\subsection{Conditions for Promise-usefulness} \label{sec:useful_results}

In the previous sections, we identified some specific families of block-symmetric polymorphisms, $\MajFam, \ParFam, \ATFam, \idneg{\MajFam},$ or $ \idneg{\ParFam}$, that can be used to show that $A$ is $\fiPCSP (A, \OR)$ is tractable.  It turns out that only the first two out of the five are relevant to establish promise usefulness and we have the following tractability theorem.

\begin{theorem} \label{thm:useful_condition}
    A predicate $A \subseteq \{0,1\}^k$ is $\fiPCSP$-useful and $\fPCSP$-useful if either
    \begin{enumerate}
        \item \label{item_first} There exists integers $\alpha_1, \ldots, \alpha_n$ that are not all $0$, such that for all $a \in A$, $\sum_i \alpha_i (a_i - 1/2) \geq 0$.
        \item \label{item_second} There exists a non-empty subset $\beta \subseteq [k]$ such that for all $a \in A$, $\XOR_{i \in \beta} a_i$ is constant (either $0$ or $1$).
    \end{enumerate}
    Furthermore, if $A$ does not satisfy any of the conditions above, then for all $b \notin A$, $\Pol(\fPCSP (A \XOR b, \OR))$ contains at most finitely many polymorphisms from $\MajFam, \ParFam, \ATFam, \idneg{\MajFam},$ and $ \idneg{\ParFam}$.
\end{theorem}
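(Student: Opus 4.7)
The plan is to connect the two conditions to the characterizations of block-symmetric polymorphism families from the preceding subsections. By \cref{lemma:OR}, proving $\fiPCSP$-usefulness reduces to exhibiting some $b \notin A$ for which $\fiPCSP(A \XOR b, \OR)$ is in P; since $\MajFam$ and $\ParFam$ are idempotent families of block-symmetric functions, producing infinitely many polymorphisms from either family for $A \XOR b$ yields tractability via \cref{thm:blpaffine}, and $\fPCSP$-usefulness then follows from \cref{promise constants irrelevant}.

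For the tractability direction under the first condition, I would set $b_i = [\alpha_i < 0]$ and $c_i = |\alpha_i|$, so that $\alpha_i = (1 - 2b_i)c_i$. The identity $a_i - 1/2 = (1 - 2b_i)(a'_i - 1/2)$, with $a' = a \XOR b$, then gives $\sum_i \alpha_i (a_i - 1/2) = \sum_i c_i(a'_i - 1/2)$, so the non-negative integer vector $(c_1,\ldots,c_k)$ witnesses condition~(\ref{item_sep}) of \cref{lemma:test_maj} for $A \XOR b$. The strict inequality $\sum_i c_i / 2 > 0$ also forces $b \notin A$, since $b \in A$ would place $0^k$ into $A \XOR b$ and the inequality would fail there. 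Under the second condition, if $\XOR_{i \in \beta} a_i$ is constantly $1$ take $b = 0^k$, and if it is constantly $0$ take $b$ with $b_j = 1$ for a single $j \in \beta$ and $b_i = 0$ otherwise. In both cases $\XOR_{i \in \beta} a'_i = 1$ for every $a' \in A \XOR b$, yielding condition~(\ref{item_alwaysodd}) of \cref{lemma:test_odd} and hence infinitely many odd parity polymorphisms, and $b \notin A$ is automatic by evaluating the XOR-relation at $b$.

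For the converse, assume neither condition holds for $A$ and fix any $b \notin A$. The key observation is that the two conditions are invariant under coordinate-wise XOR (the first because flipping the sign of $\alpha_i$ corresponds to negating the $i$-th coordinate, the second because negating coordinates inside $\beta$ only swaps the constant between $0$ and $1$), so they also fail for $A \XOR b$ and for $A \XOR b \XOR 1^k$. Infinitely many polymorphisms from $\MajFam$ for $A \XOR b$ would give condition~(\ref{item_sep}) of \cref{lemma:test_maj}, and running the sign-flip correspondence above in reverse produces integers witnessing the first condition for $A$. Infinitely many polymorphisms from $\ParFam$ would, by \cref{lemma:test_odd}, produce a subset $\beta$ with $\XOR_{i \in \beta} a'_i = 1$ for all $a' \in A \XOR b$, and then $\XOR_{i \in \beta} a_i = 1 \XOR \XOR_{i \in \beta} b_i$ would be constant across $A$, giving the second condition. \cref{lemma:at} reduces $\ATFam$ to $\MajFam$ or $\invMajFam$; the identity $\invMaj_\ell(x) = \Maj_\ell(\neg x)$, valid for odd $\ell$, transfers infinitely many polymorphisms from $\invMajFam$ for $A \XOR b$ to infinitely many from $\MajFam$ for $A \XOR b \XOR 1^k$, which is again excluded.

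Finally, the two idempotized families $\idneg{\MajFam}$ and $\idneg{\ParFam}$ require one extra step. A forced coordinate of either value in $A$ would immediately certify the first condition via coefficients supported on that coordinate alone, so the failure of the first condition ensures $A$ has no forced coordinates and consequently $A \XOR b$ has no forced $1$-bit. \cref{lemma:noidemmaj,lemma:noidempar} then reduce $\idneg{\MajFam}$ and $\idneg{\ParFam}$ for $A \XOR b$ to $\invMajFam$ and $\invParFam$, which have just been excluded (using the analogous negation identity $\invPar_\ell(x) = \Par_\ell(\neg x)$ for odd $\ell$). The main potential obstacle is keeping the bookkeeping straight between the three predicates $A$, $A \XOR b$, and $A \XOR b \XOR 1^k$; once this is set up consistently, the theorem becomes an assembly of the already established characterizations.
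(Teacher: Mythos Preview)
Your proof is correct and follows essentially the same route as the paper: both reduce the tractability direction to \cref{lemma:test_maj,lemma:test_odd} via an appropriate choice of $b$, and both handle the converse by reducing the three remaining families to $\MajFam$/$\ParFam$ for $A \XOR b$ or $A \XOR b \XOR 1^k$ via \cref{lemma:noidemmaj,lemma:noidempar,lemma:at} together with the negation identity. You are in fact slightly more careful than the paper in explicitly verifying that $b \notin A$ and that $A \XOR b$ has no forced $1$-bit before invoking \cref{lemma:noidemmaj,lemma:noidempar}.
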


\begin{proof}
That the conditions are sufficient to establish tractability is more or less already established in \cref{lemma:test_maj} and \cref{lemma:test_odd}, respectively. For majority (the first case) we define $b \in \{0,1\}^k$ as
    \begin{equation*}
        b_i = 
        \begin{cases}
            0 \text{ if }\alpha_i\geq0 \\
            1 \text{ otherwise.}
        \end{cases}
    \end{equation*}
    and it is easy to see that $\fiPCSP(A  \XOR b, \OR)$ admits $\MajFam$, and thus it is tractable.  In the second case, if the parity is odd (the constant is $1$), then $\fiPCSP(A, \OR)$ admits $\ParFam$ and thus $\fiPCSP(A, \OR)$ is tractable. Otherwise, if the parity is even (the constant is $0$), then by toggling one bit in $\beta$ we can effectively achieve odd parity. Let $b \in \{0,1\}^k$ be $0$ everywhere expect for a single index $i \in \beta$ where $b_i = 1$. We get that $\fiPCSP(A \XOR b, \OR)$ admits $\ParFam$, and thus $\fiPCSP(A \XOR b, \OR)$ is tractable.
    
    We now prove the final part of the theorem: if neither condition holds, then $A$ is not $\fPCSP$-useful via any of $\MajFam$, $\ParFam$, $\ATFam$, $\idneg{\MajFam}$, or $\idneg{\ParFam}$. The cases of $\MajFam$ or $\ParFam$ follows directly from the corresponding conditions stated in \cref{lemma:test_maj} and \cref{lemma:test_odd}.
    
    By \cref{lemma:noidemmaj,lemma:noidempar,lemma:at}, if $\fPCSP(A \XOR b, \OR)$ admits one of $\ATFam, \idneg{\ParFam},$ or $\idneg{\MajFam}$, then $\fPCSP (A \XOR b, \OR)$ admits at least one of $\MajFam, \ParFam, \invMajFam,$ or $\invParFam$. Since we have already established that $\MajFam$ and $\ParFam$ cannot be polymorphisms, the only remaining cases are $\invMajFam,$ and $\invParFam$. Note that if $\fPCSP(A \XOR b, \OR)$ admits $\invMajFam$ or $\invParFam$, then $\fPCSP(A \XOR b \XOR 1^k, \OR)$ admits $\MajFam$ or $\ParFam$, which we have already ruled out as a possibility. The conclusion is that $A$ cannot be $\fPCSP$-useful via any of $\MajFam$, $\ParFam$, $\ATFam$, $\idneg{\MajFam}$, or $\idneg{\ParFam}$.

\end{proof}
\section{Hardness conditions for Promise-SAT}\label{sec:hard}

Our general tool to prove hardness is \cref{thm:fixing} -- whenever all polymorphisms of $\fiPCSP(A, \OR)$ have small fixing assignments, it is NP-hard.  However it is not clear for a general $A$ how to check this condition, so in this section we develop a number of general conditions that are sufficient (but not necessary) to guarantee that all functions in a minion $\Minion$ have small fixing assignments, while still being relatively easy to test for.

Throughout this section, we often apply terminology for functions to minions, meaning that all functions in the minion have a property.  For example a folded minion is a minion in which all functions are folded.  Similarly we say that $\Minion$ has small fixing sets/assignments if all $f \in \Minion$ have a fixing set/assignment of size at most $t$ for some constant $t$.

\newcommand{\minw}{\operatorname{minw}}

Let us also introduce some notation used in this section.   
For a minion $\Minion$, let $\Minion^0 \supseteq \Minion$ be the minion consisting of all
functions $f$ obtained by taking a function $g \in \Minion$ and fixing
some, possibly empty, set of variables to $0$.  Analogously we denote by $\Minion^1$ and $\Minion^{0,1}$ the minions obtained by allowing fixing variables to $1$, and to both $0$ and $1$, respectively. It is good to keep in mind that even if $\Minion$ only contains folded and idempotent functions this is no longer true for these derived minions.
For a function
$f: \{0,1\}^\ell \rightarrow \{0,1\}$ which is not identically $0$,
let $\minw(f) = \min \{ \,w(x)\,|\,f(x) = 1 \,\}$.

\subsection{Overview}

An obvious \emph{necessary} condition for a folded minion $\Minion$ to have small fixing assignments is that all (non-constant) $f \in \Minion$ have bounded $\minw(f)$ (i.e., a low-weight assignment $x$ such that $f(x) = 1$).  Note that in the case of monotone (non-decreasing) functions, this condition is also sufficient, but in general it is not.  It is not clear whether characterizing this low-weight property is any easier than characterizing small fixing assignments, but our first step is to observe that this property is easily characterized in $\Minion^0$ (which gives a simple sufficient condition for $\Minion$ to have the property):

\begin{claim}
    \label{AND assignment}
    Let $\Minion$ be a minion.
    Then every $f \in \Minion^0$ which is not identically $0$ has $\minw(f) \le t-1$ if and only if $\AND_t \not\in \Minion^0$.
\end{claim}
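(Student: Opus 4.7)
The plan is to prove both directions; the forward implication is essentially free, and the backward implication reduces to a single explicit construction.

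For the $(\Rightarrow)$ direction, the argument is immediate by taking $f = \AND_t$: it is not identically $0$ and satisfies $\minw(\AND_t) = t$, so the hypothesis that every non-identically-zero $f \in \Minion^0$ has $\minw(f) \le t-1$ directly forbids $\AND_t \in \Minion^0$.

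For the $(\Leftarrow)$ direction, I will argue the contrapositive: assume some non-identically-zero $f \in \Minion^0$ has $\minw(f) \ge t$, and construct $\AND_t$ inside $\Minion^0$. The key step is to pick a minimum-weight accepting input $x \in \{0,1\}^\ell$ of $f$, so $w(x) = w := \minw(f) \ge t$. Then I fix all coordinates in $S = \{\,i : x_i = 0\,\}$ to $0$, obtaining a function $f' : \{0,1\}^w \to \{0,1\}$, which lies in $\Minion^0$ because $\Minion^0$ is (by definition) closed under fixing additional variables to $0$. The claim is that $f' = \AND_w$ exactly: on one hand $f'(1^w) = f(x) = 1$; on the other hand, any accepting input $y \ne 1^w$ of $f'$ corresponds to an accepting input of $f$ of weight $w(y) < w$, contradicting $\minw(f) = w$. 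Having produced $\AND_w \in \Minion^0$, I finish by observing that $\AND_t$ is a minor of $\AND_w$ -- identify $w - t$ of the inputs of $\AND_w$ with one of the remaining $t$ variables -- so closure of the minion $\Minion^0$ under minors yields $\AND_t \in \Minion^0$, the desired contradiction.

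There is no real obstacle in this argument. The only point requiring a moment's verification is that $\Minion^0$ truly inherits the two closure properties used above: closure under fixing to $0$ is immediate from the definition, and closure under minors follows because any minor of a function obtained by fixing some variables of $g \in \Minion$ to $0$ can be realized by first taking the corresponding minor of $g$ and then fixing the same variables. Since the paper already asserts that $\Minion^0$ is a minion, both facts can be used without further comment.
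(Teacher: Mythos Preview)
Your proof is correct and essentially identical to the paper's: both directions argue exactly as you do, picking a minimum-weight accepting input, fixing the remaining coordinates to $0$ to obtain an $\AND$, and then reducing the arity. The only cosmetic difference is that the paper phrases the final step by invoking the fact that $\AND_t \notin \Minion^0$ implies $\AND_{t'} \notin \Minion^0$ for all $t' \ge t$ (their \cref{obstructions atomic}), which is precisely your minor argument stated in the other direction.
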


\begin{proof}
    Clearly if $\AND_t \in \Minion^0$ then this is a function with $\minw(f) \ge t$.  In the other direction, suppose $\AND_t \not\in \Minion^0$, take any not identically $0$ function $f \in \Minion^0$, and let
    $S$ be a minimum-cardinality set of coordinates such that $f(S) = 1$.  Consider the function $g \in \Minion^0$ of arity $|S|$ obtained from $f$ by fixing all variables outside $S$ to $0$. By the minimality of $S$, we see that $g = \AND_{|S|}$.  Hence we conclude that $\minw(f) = |S| < t$ (using the simple fact that $\Minion^0$ also does not contain any $\AND_{t'}$ for $t' > t$ as shown in \cref{obstructions atomic}).
\end{proof}

In the following subsections we proceed to give several incomparable conditions, which, together with $\AND_t \not\in \Minion^0$, are sufficient to guarantee small fixing assignments.  Some needed definitions are given below but in summary the conditions are as follows.
\begin{enumerate}
\item $\Minion$ only contains functions with small \emph{matching number} (\cref{lemma:matching} in \cref{sec:matching}).  In this case we even obtain small fixing sets.
\item $\Minion$ only contains functions with small \emph{inverted matching number} (\cref{lemma:invmatch} in \cref{sec:invmatch}).
\item $\Minion$ only contains \emph{unate} functions, and additionally $\Minion^0$ does not contain arbitrarily large $\xNOR$ functions (\cref{lemma:unate+g} in \cref{sec:unate}), where 
\[
\xNOR_{t}(x) = x_1 \wedge \NOR(x_2, \ldots, x_t) = x_1 \wedge \neg x_2 \wedge \ldots \wedge \neg x_t.
\]
Forbidding large $\xNOR$  is a natural variant of the $\AND$ condition which guarantees that a small number of additional variables can be set to $0$ to extend a low-weight $1$-assignment to a fixing assignment.
\end{enumerate}

Many predicates $A$ yield a minion that satisfies one of the three conditions above but it turns out that a ``bottleneck'' (in the sense that a large fraction of hard promises $A$ are not covered by it) is the shared condition $\AND_t \not\in \Minion^0$ for guaranteeing that $\minw(f)$ is small.  To address this, we identify a second, more complicated, explicit condition for bounding $\minw(f)$, which we refer to as $\Minion$ being $t$-ADA-free (\cref{ADA definition})  This condition can replace the $\AND$ condition in all three of the fixing assignment results mentioned above, resulting in stronger versions (Theorems \ref{match+ADA}, \ref{invmatch+ADA} and \ref{thm:unate+ADA}) of these results which significantly reduces the number of predicates $A$ that we are unable to classify.

As one last step to further sharpen these results, we give a different condition for the unate case (item 3) above, where instead of forbidding the $\xNOR$ function (which seems to be the main bottleneck after introducing ADA-freeness), we forbid certain functions that we refer to as UnCADAs and UnDADAs (\cref{thm:split} in \cref{sec:split}).  Thus we have in total four different hardness conditions and \cref{fig:hardness overview} gives a graphical overview of these.

\begin{figure}[ht]
    \centering
    \includegraphics{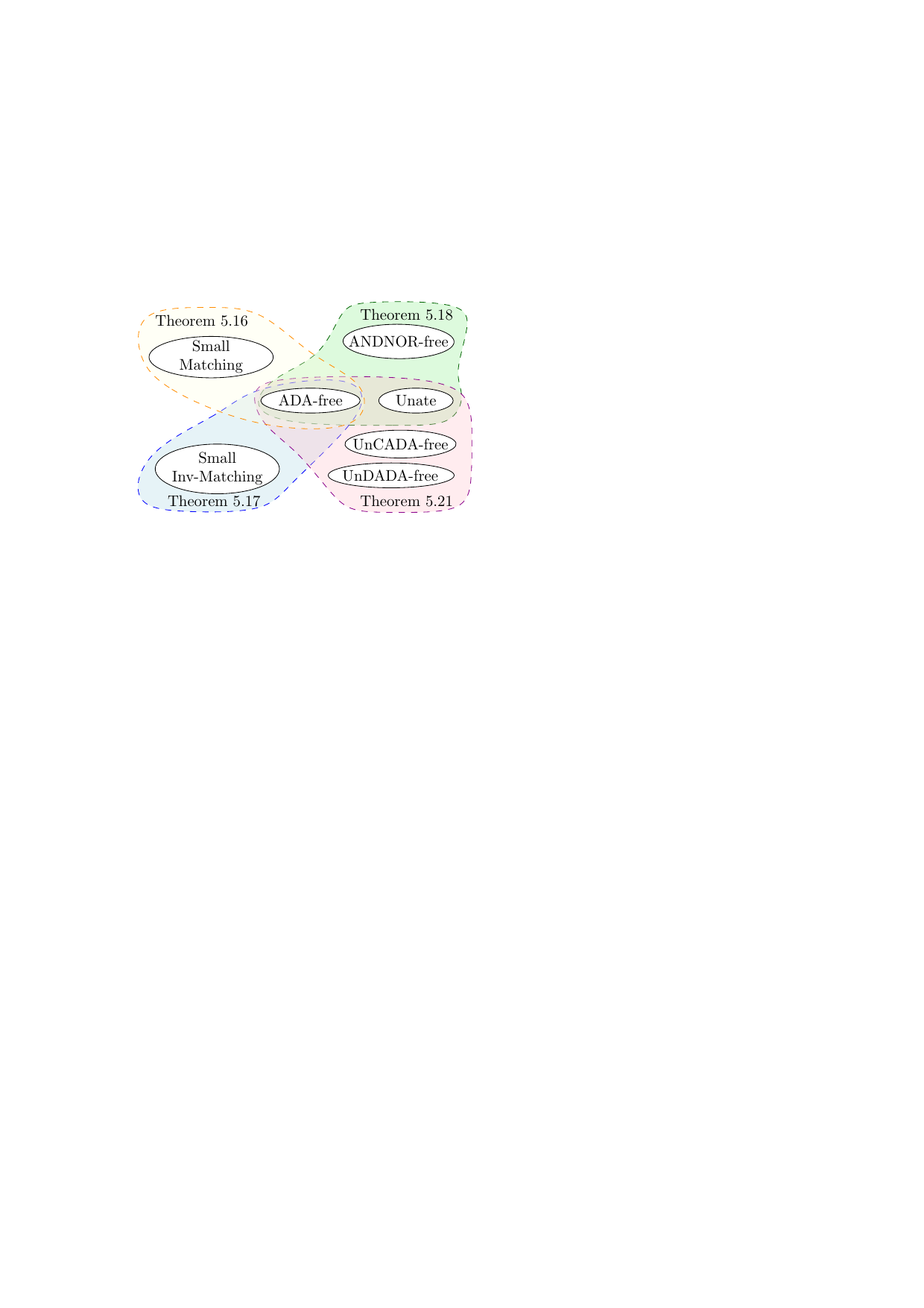}
    \caption{Overview of the conditions required for Theorems \ref{match+ADA}, \ref{invmatch+ADA}, \ref{thm:unate+ADA} and \ref{thm:split} that ensure the existence of small fixing assignments.}
    \label{fig:hardness overview}
\end{figure}

Crucially, given a concrete predicate $A \subseteq \{0,1\}^k$, checking all the various properties shown in \cref{fig:hardness overview} for $\Minion = \Pol(\fiPCSP(A, \OR))$ can be done relatively efficiently because they all boil down to (non-)existence of certain polymorphisms of small constant arity.  We discuss these computational aspects in more detail in \cref{sec:computational aspects}.

\subsection{Bounded Matchings}
\label{sec:matching}

An immediate consequence of \cref{AND assignment} is, as mentioned in the overview, that if all functions in $\Minion$ are monotone and $\AND_t \not\in \Minion^0$, then all $f \in \Minion$ have fixing sets of size at most $t-1$.  However, the monotonicity property can be relaxed, leading us to the following notion which quantifies to what extent a function can take the value $1$ on a large number of disjoint sets.

\begin{definition}
    Let $f: \{0,1\}^\ell \rightarrow \{0,1\}$.  A \emph{matching} of $f$ of size $t$ is a collection of $t$ disjoint subsets $S_1,S_2, \ldots, S_{t}$ of $[\ell]$ such that $f(S_i) = 1$ for all $i \in [t]$.
    The \emph{matching number} of $f$ is the maximum size of any matching of $f$.
\end{definition}

A folded idempotent Boolean function is monotone if and only if it has matching number $1$.  This is because two sets $S \subset T$ such that $f(S)=1$ and $f(T)=0$ implies that $f$ is $1$ on the two disjoint sets $S$ and $\overline{T}$.  Hence the notion of having bounded matching number can be viewed as a generalization of monotonicity. If the functions of a minion $\Minion$ have bounded matching number, then this rules out families such as $\Par$ and $\AT$ from $\Minion$, but it does not rule out $\Maj$.

\begin{remark}
    If a folded function $f$ has a $t$-fixing set then it has a matching number at most $t$. This follows since if $f(S)=1$ then $S$ must intersect the fixing set, otherwise the fixing set would force $f(\overline{S})=1$ contradicting that $f$ is folded.  In other words, if we are looking for small fixing sets, then bounded matching number is a necessary condition to achieve this.
\end{remark}

Having bounded matching number and forbidding $\AND$ in $\Minion^0$ is enough to conclude the existence of small fixing sets.

\begin{lemma} \label{lemma:matching}
    Let $\Minion$ be a folded idempotent minion, and suppose that there exists constants $t_1, t_2$, such that
    \begin{enumerate}
        \item $\AND_{t_1} \not\in \Minion^0$.
        \item Every $f \in \Minion$ has matching number $\leq t_2$.
    \end{enumerate}
    Then all $f \in \Minion$ have a fixing set of size at most $(t_1-1) t_2$.
\end{lemma}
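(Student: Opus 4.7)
The plan is to iteratively build a fixing set for $f \in \Minion$ by greedily finding disjoint low-weight $1$-inputs, using the $\AND_{t_1}$-free condition at each step to keep the weights bounded, and using the matching-number bound $t_2$ to limit the number of steps.

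More concretely, given $f \in \Minion$ of arity $\ell$, I would construct disjoint sets $S_1, S_2, \ldots \subseteq [\ell]$ with $f(S_i) = 1$ and $|S_i| \le t_1 - 1$ for each $i$, one at a time. Initially, applying \Cref{AND assignment} directly to $f \in \Minion \subseteq \Minion^0$ (which is not identically $0$ since $f$ is idempotent) produces $S_1$ of size at most $t_1 - 1$. Inductively, suppose $S_1, \ldots, S_r$ are already constructed and let $T = S_1 \cup \cdots \cup S_r$. If $T$ is a fixing set of $f$, we stop. Otherwise there exists $x \in \{0,1\}^\ell$ with $x_T = 1^{|T|}$ and $f(x) = 0$. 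Since $f$ is folded, $f(\neg x) = 1$, and the support $S = \{i : x_i = 0\}$ of $\neg x$ is disjoint from $T$. Now consider the function $g \in \Minion^0$ of arity $|S|$ obtained from $f$ by fixing every coordinate outside $S$ to $0$; we have $g(1^{|S|}) = f(\neg x) = 1$, so $g$ is not identically $0$, and therefore by \Cref{AND assignment} it has a $1$-input of weight at most $t_1 - 1$. Lifting this back to $f$ yields $S_{r+1} \subseteq S$ with $|S_{r+1}| \le t_1 - 1$ and $f(S_{r+1}) = 1$, which is disjoint from $T$ by construction, so we can continue.

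The sets $S_1, \ldots, S_r$ form a matching of $f$ by definition, so by hypothesis $r \le t_2$, meaning the process must halt (in the ``$T$ is a fixing set'' case) after at most $t_2$ iterations. The resulting $T$ has size $|T| \le t_2 (t_1 - 1)$, which is the desired bound.

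The potential subtlety to handle carefully is the transition in the inductive step from ``$T$ is not a fixing set'' to producing a new disjoint $S_{r+1}$: the foldedness of $f$ is what lets us convert the bad input $x$ (which satisfies $x_T = 1^{|T|}$) into a $1$-input $\neg x$ whose support avoids $T$, and the idempotence together with $\Minion$ being closed under fixing to $0$ (used to define $\Minion^0$) is what allows the invocation of \Cref{AND assignment}. Everything else is routine; no further machinery from the preceding sections is needed.
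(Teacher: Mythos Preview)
Your proof is correct and follows essentially the same approach as the paper: greedily select disjoint small $1$-inputs $S_1,\ldots,S_m$ using \Cref{AND assignment}, bound $m\le t_2$ via the matching-number hypothesis, and observe (via foldedness) that their union is a fixing set. The only cosmetic difference is that the paper phrases the stopping condition as ``no further disjoint $1$-input exists'' while you phrase it as ``$T$ is already a fixing set,'' but these are equivalent by foldedness.
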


\begin{proof}
     Let $f \in \Minion$ and let $S_1$ be the smallest cardinality subset such that $f(S_1) = 1$ (note that $\Minion$ being idempotent guarantees $S_1$ is non-empty). Let $S_2$ be the smallest cardinality subset disjoint from $S_1$ such that $f(S_2) = 1$. Continue this procedure until there is no set $S$ disjoint from all selected sets such that  $f(S) = 1$. Suppose the sequence we end up with is $S_1,S_2,\ldots S_m$. Since $f$ has matching number $\leq t_2$, we know that $m \leq t_2$ and since $f$ is folded, $f(S) = 1$ for any set that contains the union of the selected sets and hence $\bigcup_{i \in [m]} S_i$ is a fixing set.

     Finally by \cref{AND assignment} it follows that we can ensure $|S_i| \le t_1-1$ for each $i$ and hence the fixing set $\bigcup_{i \in [m]} S_i$ is of size at most $(t_1-1)t_2$, as desired.
\end{proof}

\subsection{Bounded Inverted Matchings}
\label{sec:invmatch}

For the condition in \cref{sec:matching} using bounded matching numbers, it is natural to consider a variant where a function cannot have many disjoint sets of variables that can flip the function from $1$ to $0$.   This leads us to the following definition.

\begin{definition}
    For a function $f: \{0,1\}^{\ell} \rightarrow \{0,1\}$ and a, possibly empty, set $S \subseteq [\ell]$ such that $f(S) = 1$, an \emph{inverted matching} of size $t$ of $f$ with respect to $S$ is a collection of $t$ disjoint subsets $T_1, \ldots, T_t$ of $[\ell]$ such that $f(S \cup T_i) = 0$ for all $i \in [t]$.  The \emph{inverted matching number} of $f$ is the maximum size of any inverted matching of $f$ with respect to any $S \in f^{-1}(1)$.
\end{definition}

To obtain small fixing assignments for functions with low inverted matching number, we also need a condition that guarantees that we can always make the sets $T_i$ involved in an inverted matching small.
With these definitions, we then have the following direct generalization of \cref{lemma:matching}.

\begin{lemma}
    \label{NAND+invmatch}
    Let $\Minion$ be a minion such that:
    \begin{enumerate}
    \item $\NAND_{t_1} \not\in \Minion^{0,1}$, and
    \item Every $f \in \Minion$ has inverted matching number $\le t_2$.
    \end{enumerate}
    Then for any $f \in \Minion$ and $S$ such that $f(S) = 1$, there is a set $T$, disjoint from $S$, of size at most $(t_1-1) t_2$ such that setting $x_S = 1$ and $x_T = 0$ is a fixing assignment of $f$. 
\end{lemma}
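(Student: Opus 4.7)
The plan is to mirror the proof of \cref{lemma:matching}, working in the inverted direction with $\NAND$ in $\Minion^{0,1}$ playing the role that $\AND$ in $\Minion^0$ played there. Given $f \in \Minion$ and $S \subseteq [\ell]$ with $f(S) = 1$, I would greedily build a chain $T_0 = \emptyset \subseteq T_1 \subseteq T_2 \subseteq \cdots$ of subsets of $[\ell] \setminus S$ as follows. At step $i$, if setting $x_S = 1$ and $x_{T_i} = 0$ is already a fixing assignment of $f$, the process halts. Otherwise there exists some $R \subseteq [\ell] \setminus (S \cup T_i)$ with $f(S \cup R) = 0$, and I pick $T'_{i+1}$ to be such an $R$ of minimum cardinality and set $T_{i+1} = T_i \cup T'_{i+1}$. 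Two bounds then need to be established.

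First I would bound the size of each $T'_{i+1}$. Let $g$ be the function of arity $|T'_{i+1}|$ obtained from $f$ by fixing $x_S$ to $1$, $x_{T_i}$ to $0$, and all variables in $[\ell] \setminus (S \cup T_i \cup T'_{i+1})$ to $0$; then $g \in \Minion^{0,1}$. By minimality of $T'_{i+1}$, every proper subset $R' \subsetneq T'_{i+1}$ satisfies $f(S \cup R') = 1$, whereas $f(S \cup T'_{i+1}) = 0$, so $g$ takes the value $0$ only on the all-ones input, i.e., $g = \NAND_{|T'_{i+1}|}$. Combining $\NAND_{t_1} \notin \Minion^{0,1}$ with the minion-atomic property of \cref{obstructions atomic} then forces $|T'_{i+1}| \le t_1 - 1$.

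Next I would use assumption (2) to bound the number of iterations. After $m$ steps, the sets $T'_1, \ldots, T'_m$ are pairwise disjoint subsets of $[\ell] \setminus S$ all satisfying $f(S \cup T'_j) = 0$, and so form an inverted matching of $f$ with respect to $S$ of size $m$. Since the inverted matching number of $f$ is assumed to be at most $t_2$, the procedure must halt within $t_2$ iterations, leaving a final $T = T_m$ of size at most $m(t_1 - 1) \le (t_1 - 1) t_2$ for which $(S \cup T, 1^{|S|} 0^{|T|})$ is a fixing assignment of $f$ by construction.

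There is no serious conceptual obstacle; the argument is a direct inversion of the proof of \cref{lemma:matching}. The one subtlety worth flagging is that the auxiliary function $g$ really uses both types of fixings (to $0$ on $T_i$ and the ``outside'' variables, and to $1$ on $S$), which is precisely why the hypothesis must live in $\Minion^{0,1}$ rather than in $\Minion^0$ or $\Minion^1$ alone.
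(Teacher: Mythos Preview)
Your proof is correct and essentially identical to the paper's: both greedily build a sequence of pairwise disjoint minimum-cardinality sets $T'_j \subseteq [\ell]\setminus S$ with $f(S\cup T'_j)=0$, use the inverted-matching bound to cap the number of steps by $t_2$, and show each $T'_j$ has size at most $t_1-1$ by fixing $S$ to $1$ and everything outside $S\cup T'_j$ to $0$ to exhibit $\NAND_{|T'_j|}\in\Minion^{0,1}$. The only difference is notational (you carry the cumulative union $T_i$ explicitly, the paper keeps the pieces separate until the end).
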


\begin{proof}
    The proof is analogous to that of \cref{lemma:matching}.  Create a sequence $T_1, \ldots, T_m$ of subsets
    where $T_i$ is a minimum-cardinality set of coordinates disjoint from $S$ and $T_1, \ldots, T_{i-1}$ such that $f(S \cup T_i) = 0$, stopping when no more such sets exist.  Since $f$ has bounded inverted matching number, we have $m \le t_2$.  Furthermore, it is clear that fixing $S$ to $1$ and the coordinates of $T := \bigcup_{i=1}^m T_i$ to $0$ is a fixing assignment for $f$, so it remains to bound the size $|T|$.

    Consider the function $g \in M^{0,1}$ of arity $|T_i|$ obtained from $f$ by fixing all variables of $S$ to $1$ and all variables outside $S \cup T_i$ to $0$.  By the minimality of $T_i$, the function $g$ is the $\NAND_{|T_i|}$ function and hence $|T_i| \le t_1-1$.  Thus $|T| = \sum |T_i| \le (t_1-1) \cdot t_2$, as desired.
\end{proof}

For folded minions, bounded inverted matching number implies not having $\NAND$.

\begin{claim}
    \label{folded+inverted implies NAND}
    Let $\Minion$ be a folded minion.  If all $f \in \Minion$ have inverted matching number $< t$, then $\NAND_{t} \not\in \Minion^{0,1}$.
\end{claim}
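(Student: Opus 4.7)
The plan is to prove the contrapositive: if $\NAND_t \in \Minion^{0,1}$, then there exists $g \in \Minion$ whose inverted matching number is at least $t$. Unpacking the definition, $\NAND_t \in \Minion^{0,1}$ means there is some $g \in \Minion$ of arity $\ell$ together with a partition $[\ell] = A \sqcup B \sqcup J_1 \sqcup \cdots \sqcup J_t$ (with each $J_k$ non-empty, since $\NAND_t$ depends non-trivially on every coordinate) such that whenever $x_A = 1^{|A|}$, $x_B = 0^{|B|}$, and $x_{J_k}$ is the constant $y_k \in \{0,1\}$ on each block, we have $g(x) = \NAND_t(y_1,\ldots,y_t)$.

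The key idea is to pick $S = B$ and exploit foldedness to read off values of $g$ using the ``dual'' picture where the roles of $A$ and $B$ are swapped. For the indicator vector $\mathbf{1}_B$, its negation $\neg \mathbf{1}_B$ satisfies $(\neg\mathbf{1}_B)_A = 1^{|A|}$, $(\neg\mathbf{1}_B)_B = 0^{|B|}$, and $(\neg\mathbf{1}_B)_{J_k} = 1^{|J_k|}$, which matches the minor structure. Hence $g(\neg\mathbf{1}_B) = \NAND_t(1,\ldots,1) = 0$, so folding gives $g(B) = 1$, and $S = B$ is a legitimate base set for building an inverted matching.

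Next, set $T_k := J_k$ for each $k \in [t]$. These sets are non-empty, pairwise disjoint, and disjoint from $B$. The complement of $B \cup T_k$ is $A \cup \bigcup_{k' \neq k} J_{k'}$, which again matches the minor structure, now with $y_k = 0$ and $y_{k'} = 1$ for $k' \neq k$. Therefore $g\bigl(\neg \mathbf{1}_{B \cup T_k}\bigr) = \NAND_t(0, 1, \ldots, 1) = 1$, and foldedness gives $g(B \cup T_k) = 0$ for every $k$. Thus $\{T_1, \ldots, T_t\}$ is an inverted matching of size $t$ with respect to $S = B$, so $g$ has inverted matching number at least $t$, giving the desired contradiction.

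The main subtlety (and the only real obstacle) is recognizing that one must ``dualize'' and take $S = B$ rather than the more obvious $S = A$. With $S = A$, the equality $g(A) = 1$ is immediate, but forcing $\NAND_t$ to output $0$ requires all $t$ of its inputs to be $1$, which would demand that each $T_k$ contains all of $J_1, \ldots, J_t$; that is incompatible with the $T_k$ being disjoint and only yields a trivial size-one matching. Using the folded symmetry moves the ``$0$'' output of $\NAND_t$ over to the all-ones input on the $J$-blocks, which is then easy to break one block at a time, giving the disjoint sets needed.
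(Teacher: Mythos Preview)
Your proof is correct and takes essentially the same approach as the paper. The paper first uses foldedness to convert $\NAND_t \in \Minion^{0,1}$ into $\NOR_t \in \Minion^{0,1}$ (by swapping which variables are fixed to $0$ versus $1$) and then reads off the inverted matching directly, whereas you keep $\NAND_t$ and apply foldedness pointwise to each evaluation of $g$; these are two equivalent ways of carrying out the same dualization.
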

\begin{proof}
    Suppose for contradiction that $\NAND_t \in \Minion^{0,1}$.  Since $\Minion$ is folded this happens if and only if $\NOR_t \in \Minion^{0,1}$ (because in the folded setting, switching which variables are fixed to $0$ and which variables are fixed to $1$ takes a function $f$ to its dual).  In other words, if $\NAND_t \in \Minion^{0,1}$ there is a function $f \in \Minion$ of arity $t+2$ such that $f(0,1,x) = \NOR_t(x)$.  In particular $f(0,1,0^t) = 1$, and $f(0,1,e_i) = 0$ for $1 \le i \le t$ where $e_i$ is the $i$th unit vector, so $f$ has inverted matching number at least $t$.
\end{proof}

Combining \cref{AND assignment,NAND+invmatch,folded+inverted implies NAND} yields the following immediate corollary.

\begin{lemma}  \label{lemma:invmatch}
Let $\Minion$ be a folded minion such that $\AND_{t_1} \not\in \Minion^0$ and all $f \in \Minion$ have inverted matching number $\le t_2$.  Then all $f \in \Minion$ have a fixing assignment of size at most $t_1-1 + t_2^2$.
\end{lemma}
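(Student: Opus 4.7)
The plan is to chain together the three preceding results (\cref{AND assignment,NAND+invmatch,folded+inverted implies NAND}) into a single argument, which is exactly what the ``combining'' remark before the lemma statement promises.

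First I would use \cref{AND assignment} to control $\minw$. Since $\AND_{t_1} \notin \Minion^0$, every non-identically-zero $f \in \Minion^0$ satisfies $\minw(f) \le t_1 - 1$. Because $\Minion$ is folded, no $f \in \Minion$ can be identically $0$ (that would violate $f(\neg x) = \neg f(x)$), so every $f \in \Minion$ admits a subset $S \subseteq [\ell]$ with $|S| \le t_1 - 1$ and $f(S) = 1$. Next I would apply \cref{folded+inverted implies NAND}: since $\Minion$ is folded and every $f \in \Minion$ has inverted matching number at most $t_2$ (strictly less than $t_2 + 1$), we obtain $\NAND_{t_2+1} \notin \Minion^{0,1}$.

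Finally I would feed these two outputs into \cref{NAND+invmatch}. Playing the $\NAND$-bound $t_2 + 1$ in the role of ``$t_1$'' in that lemma (with the inverted-matching bound $t_2$ unchanged), and starting from the set $S$ produced above, the lemma yields a set $T$ disjoint from $S$ of size at most $((t_2 + 1) - 1)\cdot t_2 = t_2^{2}$ such that the partial assignment $x_S = 1$, $x_T = 0$ is a fixing assignment for $f$. The total size of this fixing assignment is at most $|S| + |T| \le (t_1 - 1) + t_2^{2}$, which is the claimed bound.

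There is essentially no obstacle here, since the work is done in the three preceding results; the argument is a direct concatenation. The only bookkeeping subtlety is the overloaded use of the symbol $t_1$ — in the current statement $t_1$ is the bound ruling out $\AND$, whereas in \cref{NAND+invmatch} the role of ``$t_1$'' is played by the $\NAND$-bound, which we instantiate to $t_2 + 1$ via \cref{folded+inverted implies NAND}. Keeping these distinct is the only thing that requires care.
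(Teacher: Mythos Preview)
Your proof is correct and follows exactly the same approach as the paper's: invoke \cref{folded+inverted implies NAND} to get $\NAND_{t_2+1}\notin\Minion^{0,1}$, use \cref{AND assignment} to bound $\minw(f)\le t_1-1$, and plug both into \cref{NAND+invmatch}. Your remark about the overloaded use of the parameter name $t_1$ is well taken and is indeed the only bookkeeping subtlety.
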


\begin{proof}
    By \cref{folded+inverted implies NAND} we know that $\NAND_{t_2+1} \notin \Minion^{0,1}$ and by \cref{AND assignment} we have $\minw(f) \leq t_1-1$ for any $f \in \Minion$.  The lemma now follows by \cref{NAND+invmatch}.
\end{proof}

Note that unlike \cref{lemma:matching}, this result does not need $\Minion$ to be idempotent.  It is also possible to refrain from using \cref{folded+inverted implies NAND} and instead get the same result with the requirement that $\Minion$ is folded replaced by $\NAND_t \not\in \Minion^{0,1}$.

\subsection{Unate Functions}
\label{sec:unate}

Let us turn to a different generalization of being monotone.  A function is {\em unate} if for each variable $x_i$, $f$ is either positive in $x_i$ or negative in $x_i$ (or both, if $f$ does not depend on $x_i$).  In other words there cannot exist an $i$ and two sets $S$, $T$ such that both $f(S) < f(S \cup \{i\})$ and $f(T) > f(T \cup \{i\})$.  Note that in a fixing assignment for a unate function, we can without loss of generality set all positive variables included to $1$, and all negative variables included to $0$.  This motivates the following strategy: first pick a small set $S$ of positive variables such that $f(S) = 1$ (as guaranteed by e.g.~\cref{AND assignment}).  Then pick a small number $T$ of negative variables, such that even if all other negative variables outside $T$ are set to $1$, $f$ still equals $1$.  Fixing $S$ to $1$ and $T$ to $0$ then yields a fixing assignment of size $|S|+|T|$.

In order to bound the size of $T$ in this plan, we can employ a similar idea as in \cref{AND assignment}.  The function to forbid is now the less natural function $\xNOR_t = x_1 \wedge \NOR(x_2, \ldots, x_t)$.  We have the following lemma.

\begin{lemma}
    \label{xNOR assignment}
    Let $\Minion$ be an idempotent and unate minion such that $\xNOR_{t} \not \in \Minion^0$ for some $t \ge 2$.  Then for every $f \in \Minion$ and every $S$ such that $f(S) = 1$, there is a set $T \subseteq [\ell] \setminus S$ of size $|T| \le t-2$ such that setting $x_S = 1$ and $x_T = 0$ is a fixing assignment of $f$ (of size $|S|+|T|$).
\end{lemma}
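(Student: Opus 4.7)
The plan is to take $T$ to be a minimum-cardinality subset of $[\ell] \setminus S$ such that $(x_S = 1, x_T = 0)$ is a fixing assignment of $f$, and to prove $|T| \le t-2$ by showing that whenever $|T| \ge t-1$, we can build an $\xNOR_{|T|+1}$ minor in $\Minion^0$. This contradicts $\xNOR_t \notin \Minion^0$ via the minion-atomicity of the $\xNOR$ family (\cref{obstructions atomic}).

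First I observe that, by unateness, $T$ may be assumed to lie in $N \setminus S$, where $P$ and $N$ denote the sets of positive and negative variables of $f$: any positive variable (or variable that $f$ does not depend on) included in $T$ could simply be removed without breaking the fixing property, contradicting minimality. Next, by the minimality of $T$, for each $i \in T$ there exists a witness $y^{(i)}$ with $y^{(i)}_S = 1$, $y^{(i)}_{T \setminus \{i\}} = 0$, and $f(y^{(i)}) = 0$; unateness lets us further take $y^{(i)}_{P \setminus S} = 0$, $y^{(i)}_{N \setminus (S \cup T)} = 1$, and $y^{(i)}_i = 1$ (in each case, choosing the input direction that only decreases $f$).

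Assuming $|T| \ge t-1$, the minor is constructed as follows: fix every variable in $P \setminus S$ (together with any coordinate $f$ does not depend on) to $0$, then take the minor of arity $|T|+1$ that identifies all variables in $S \cup (N \setminus (S \cup T))$ to a single variable $a$ while keeping each variable of $T$ as its own $b_j$. Denote the resulting function by $f''(a, b_1, \ldots, b_{|T|}) \in \Minion^0$. I verify $f'' = \xNOR_{|T|+1}$ by checking four cases: $f''(1, 0^{|T|}) = 1$ since this is precisely the worst-case extension of $(x_S = 1, x_T = 0)$ permitted by unateness, and hence is $1$ because $(S, T)$ is a fixing assignment; $f''(1, e_j) = f(y^{(j)}) = 0$ by choice of witness; $f''(1, b) = 0$ for every non-zero $b$ by anti-monotonicity in the $b_j$'s, since $b \ge e_j$ for any $j$ in the support of $b$; and $f''(0, b) = 0$ for every $b$, since on this input all positive variables of $f$ are at $0$ while only some negative variables are at $1$, so unateness combined with $f(0^\ell) = 0$ (from idempotence) forces the value to $0$.

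The main subtlety is that $\Minion^0$ allows fixing variables only to $0$, whereas the worst-case extension naturally requires the variables in $N \setminus (S \cup T)$ to be set to $1$. Identifying these variables with $S$ lets the single variable $a$ play a dual role: when $a = 1$ we recover the desired worst-case extension, and when $a = 0$ unateness combined with idempotence automatically forces $f'' = 0$, reproducing the $\xNOR$ truth table without ever needing to fix a variable to $1$.
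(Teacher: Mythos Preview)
Your proof is correct and follows essentially the same approach as the paper. The only cosmetic difference is that the paper defines $T$ directly as an inclusion-minimal subset of the negative variables $Y = N \setminus S$ satisfying $f(S \cup (Y \setminus T)) = 1$, whereas you define $T$ as a minimum-cardinality fixing assignment and then argue $T \subseteq N \setminus S$; by unateness these two definitions coincide, and the minor you build is identical to the paper's $g$.
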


\begin{proof}
    Let $Y$ be the set of negative variables of $f$ (except those included in $S$, if any) and let $T \subseteq Y$ be minimal such that $f(S \cup (Y \setminus T)) = 1$ (such a $T$ exists since $T=Y$ satisfies the requirement).   Construct a function $g \in \Minion^0$ of arity $|T|+1$ by fixing all variables of $[\ell] \setminus (S \cup Y)$ to $0$, identifying all variables of $S \cup (Y \setminus T)$ into a new variable $x_0$, and keeping the variables of $T$ as $x_1, \ldots, x_{|T|}$.
    By the minimality of $T$, the function $g$ satisfies
    \begin{align*}
        g(1, x_1, \ldots, x_{|T|}) &= \NOR(x_1, \ldots, x_t)
    \end{align*}
    Furthermore, since $\Minion$ is idempotent we have $g(0,0,\ldots,0) = 0$, which together with $g$ being non-positive in $x_1, \ldots, x_{|T|}$ implies that $g(0, x_1, \ldots, x_{|T|}) = 0$.
    We thus see that $g = \xNOR_{|T|+1}$ and hence since $\xNOR_{t} \not\in \Minion^0$ (and using \cref{obstructions atomic}) we have $|T|+1 < t$.

    To see that setting $x_S = 1$ and $x_T = 0$ yields a fixing assignment, note that since $f$ is unate we have for any such $x$ that $f(x) \ge f(S \cup (Y \setminus T)) = 1$.
\end{proof}

\cref{AND assignment,xNOR assignment} yield the following immediate corollary.

\begin{lemma}  \label{lemma:unate+g}
Let $\Minion$ be an idempotent and unate minion such that $\AND_{t_1} \not\in \Minion^0$ and $\xNOR_{t_2} \not\in \Minion^0$  for some $t_1, t_2$.  Then every $f \in \Minion$ has a $(t_1+t_2-3)$-fixing assignment.
\end{lemma}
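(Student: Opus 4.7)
The plan is to chain Claim \ref{AND assignment} and Lemma \ref{xNOR assignment} in a straightforward manner, treating the result as a direct corollary. Fix an arbitrary $f \in \Minion$ of arity $\ell$. The first step is to produce a small set of positive variables on which $f$ already evaluates to $1$: since $\Minion \subseteq \Minion^0$ and $\AND_{t_1} \not\in \Minion^0$, Claim \ref{AND assignment} (together with idempotency of $\Minion$, which ensures $f$ is not identically $0$ via $f(1^\ell)=1$) gives a set $S \subseteq [\ell]$ with $|S| \le t_1 - 1$ and $f(S) = 1$.

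The second step is to extend $S$ to a fixing assignment by adding a small set of negative variables fixed to $0$. This is exactly the scenario handled by Lemma \ref{xNOR assignment}: since $\Minion$ is idempotent and unate and $\xNOR_{t_2} \not\in \Minion^0$, applying that lemma with the set $S$ produced above yields a set $T \subseteq [\ell] \setminus S$ with $|T| \le t_2 - 2$ such that setting $x_S = 1^{|S|}$ and $x_T = 0^{|T|}$ is a fixing assignment of $f$.

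Combining, the pair $(S \cup T, \, 1^{|S|} 0^{|T|})$ is a fixing assignment of $f$ of total size at most $(t_1-1) + (t_2-2) = t_1 + t_2 - 3$, as claimed. Since $f \in \Minion$ was arbitrary, this gives the bound uniformly over the minion.

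There is no real obstacle here, as the two ingredient lemmas are tailored precisely to slot together; the only subtleties to flag are (i) verifying that $f$ is not identically $0$ so Claim \ref{AND assignment} applies (handled by idempotency), and (ii) checking the mild sanity condition $t_2 \ge 2$ required by Lemma \ref{xNOR assignment}, which is already part of the hypothesis.
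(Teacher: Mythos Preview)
Your proof is correct and follows essentially the same approach as the paper: apply Claim~\ref{AND assignment} to obtain a small $S$ with $f(S)=1$, then apply Lemma~\ref{xNOR assignment} to extend it by a small $T$ set to $0$. The only cosmetic difference is that the paper phrases the non-triviality of $f$ as a trivial side case (``if $f$ is identically $0$ it has an empty fixing assignment'') rather than invoking idempotency to rule it out, but for an idempotent minion these are equivalent.
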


\begin{proof}
Let $f \in \Minion$.  If $f$ is identically $0$ it has an empty fixing assignment, otherwise \cref{AND assignment} yields an $S$ of size at most $t_1-1$ such that $f(S) = 1$, and then by \cref{xNOR assignment} there is a $T$ of size at most $t_2-2$ which together with $S$ forms a fixing assignment of size $|S|+|T|=t_1+t_2-3$.
\end{proof}

\subsection{Beyond \texorpdfstring{$\AND$}{AND}: Approximate Double-Ands}
\label{sec:ADA}

While not á priori clear, it turns out from experiments that the main bottleneck in the hitherto described hardness conditions is \cref{AND assignment} which guarantees that all $f \in \Minion$ have low-weight $1$-assignments whenever $\AND_t \not\in \Minion^0$.  To alleviate this we now give a weaker sufficient condition to bound $\minw (f)$.  The idealized type of function we now want to forbid are functions of the form $\AND(x_S) \vee \AND(x_T)$ for two overlapping but incomparable sets $S$ and $T$ of size $\ge t$.  This is still somewhat restrictive, and the actual functions we forbid can be viewed as approximate versions of this.  Let us give the formal definition.

\begin{definition}
\label{ADA definition}
    For positive integers $c, d$, a $(c,d)$-ADA (Approximate Double-AND) is a $(c+2d)$-ary function $f: \{0,1\}^d \times \{0,1\}^c \times \{0,1\}^d \rightarrow \{0,1\}$ such that the following holds:
    \begin{enumerate}
    \item[(a)] $f(1^d, 1^c, 0^d) = f(0^d, 1^c, 1^d) = 1$,
    \item[(b)] $f(x, y, z) = 0$ if $w(x)+w(y)+w(z) < c+d$.
    \item[(c)] $f(x, y, z) = 0$ unless at least two of $x, y, z$ are all-$1$s
    \end{enumerate}
    We say that a minion $\Minion$ is \emph{$t$-ADA-free}, $t\geq2$, if it does not contain a $(t-d,d)$-ADA for any $1 \le d \le t-1$.
\end{definition}

It is easy to see that being $t$-ADA-free is a weaker property than not having $\AND_t$.

\begin{claim} \label{claim:AND_free_ADA_free}
    Let $\Minion$ be a minion.  If $\AND_t \not\in \Minion^0$ then $\Minion^0$ is $t$-ADA-free.
\end{claim}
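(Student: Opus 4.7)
The plan is to proceed by contrapositive: assume $\Minion^0$ contains a $(c,d)$-ADA $f$ for some $d \geq 1$ and $c = t-d$, and construct $\AND_t$ inside $\Minion^0$. Since $\Minion^0$ is closed under fixing variables to $0$, the most natural move is to try to zero out one of the three argument blocks of $f$ and hope that the restriction becomes $\AND_t$ on the remaining $c+d = t$ variables.

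The key step is to fix the $z$-block to $0^d$, obtaining the function $g(x,y) = f(x,y,0^d)$ of arity $t$ in $\Minion^0$. I would then verify that $g = \AND_t$ by combining conditions (a) and (b) of \cref{ADA definition}: condition (a) handles the all-ones input via $g(1^d, 1^c) = f(1^d, 1^c, 0^d) = 1$, and condition (b) handles every other input $(x,y) \neq (1^d, 1^c)$, since then $w(x) + w(y) + w(0^d) = w(x)+w(y) < c+d$ forces $g(x,y) = 0$. This shows $\AND_t \in \Minion^0$, contradicting the hypothesis, and I would conclude by noting that the range $1 \le d \le t-1$ in \cref{ADA definition} is exactly covered since $c = t-d \ge 1$ as well.

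There is essentially no obstacle here; the whole point of the definition of a $(c,d)$-ADA is to require just enough structure that zeroing out one of the two overlapping ``AND-like'' blocks collapses the function to a genuine $\AND$. I note in passing that the symmetric half of condition (a) involving $f(0^d, 1^c, 1^d)$ and condition (c) are not used in this direction; they are what make $t$-ADA-freeness a \emph{strictly} weaker hypothesis than $\AND_t \not\in \Minion^0$, and their role is to enable the strengthened hardness results (Theorems \ref{match+ADA}, \ref{invmatch+ADA}, \ref{thm:unate+ADA}) later in the section.
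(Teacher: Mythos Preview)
Your proposal is correct and takes essentially the same approach as the paper: both argue by contrapositive, fix $z=0^d$ in a $(c,d)$-ADA with $c+d=t$, and observe that the resulting function is $\AND_t$. You simply spell out the verification (using conditions (a) and (b)) in more detail than the paper's one-line proof.
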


\begin{proof}
    To prove the contrapositive, note that if $\Minion^0$ contains a $(c,d)$-ADA $f(x,y,z)$ for some $c+d=t$ then fixing $z=0^d$ yields the $\AND_t$ function.
\end{proof}

We have the following technical lemma, which effectively generalizes \cref{AND assignment}.

\begin{lemma}
\label{ADA-free sets}
Let $\Minion$ be a minion such that $\Minion^0$ is $t$-ADA-free for some $t \ge 2$.  Then for
every $f$ in $\Minion^0$ which is not identically $0$, at least one of the following two properties holds:
\begin{enumerate}
\item $\minw(f) \le t-1$, or
\item $S \cap T = \emptyset$ for every pair of inclusion-wise minimal $S, T \in f^{-1}(1)$.
\end{enumerate}
\end{lemma}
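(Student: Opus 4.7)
I would prove this by contraposition: assume $\minw(f) \ge t$ and there exist distinct inclusion-wise minimal $S, T \in f^{-1}(1)$ with $S \cap T \neq \emptyset$, and derive that $\Minion^0$ contains a $(c, d)$-ADA for some $c, d \ge 1$ with $c + d = t$, contradicting the $t$-ADA-freeness of $\Minion^0$.

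Among all such overlapping minimal pairs I would choose $(S, T)$ to minimize $|S \cup T|$ (breaking ties, say, by $|S| + |T|$). Write $X = S \setminus T$, $Y = S \cap T$, $Z = T \setminus S$, so that $|X|, |Y|, |Z| \ge 1$ and $|X| + |Y|, |Y| + |Z| \ge t$. Then pick positive integers $c, d$ with $c + d = t$, $c \le |Y|$, and $d \le \min(|X|, |Z|)$ (feasibility is immediate from the size constraints on $|S|, |T|, |Y|$); whenever possible I would insist on $c \le d$, which, as explained below, makes the verification of condition (b) automatic. Partition $X = X_1 \sqcup \cdots \sqcup X_d$, $Y = Y_1 \sqcup \cdots \sqcup Y_c$, and $Z = Z_1 \sqcup \cdots \sqcup Z_d$ into nonempty blocks, and let $g \in \Minion^0$ be the minor of $f$ obtained by fixing every variable outside $S \cup T$ to $0$ and identifying the variables of $X_i$, $Y_j$, $Z_i$ to single new variables $x_i$, $y_j$, $z_i$, respectively. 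Then $g \colon \{0,1\}^d \times \{0,1\}^c \times \{0,1\}^d \to \{0,1\}$, and the claim becomes that $g$ is a $(c,d)$-ADA.

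Verification proceeds condition by condition. Condition (a) is immediate from $g(1^d, 1^c, 0^d) = f(S) = 1$ and $g(0^d, 1^c, 1^d) = f(T) = 1$. For (c), I argue by contradiction: if some $(x, y, z)$ has at most one of $x = 1^d$, $y = 1^c$, $z = 1^d$ yet $g(x, y, z) = 1$, let $A$ be the support of the corresponding input to $f$ and take a minimal $A' \subseteq A$ with $f(A') = 1$ (using $f(\emptyset) = 0$ from $\minw(f) \ge t \ge 2$). A short case analysis based on the minimality of $S$ and $T$ shows that $A' \cap S, A' \cap T \neq \emptyset$ (otherwise $A'$ would be contained in a proper subset of $T$ or $S$, forcing $f(A') = 0$). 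Then $(S, A')$ and $(T, A')$ are themselves overlapping minimal pairs, and the set identities $A \cup S = S \cup (A \cap Z)$ and $A \cup T = T \cup (A \cap X)$ combined with the hypothesis ``at most one block fully covered'' yield $|A' \cup S| < |S \cup T|$ or $|A' \cup T| < |S \cup T|$, contradicting the minimality of $|S \cup T|$. For (b), combining (c) with the weight condition shows the only potentially problematic inputs have $x = z = 1^d$ with $w(y) < c - d$, and this range is empty whenever $c \le d$.

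The main obstacle I anticipate is handling the regime $c > d$ (forced when $|Y|$ is large relative to $\min(|X|, |Z|)$), where condition (b) has genuine content. Here the same $A'$-extraction as in (c), applied at an input of the form $X \cup Y^* \cup Z$ with $|Y^*|$ small, shows that $X \cup Z \subseteq A' \subseteq X \cup Y^* \cup Z$; one then needs an additional combinatorial argument exhibiting a pair that beats $(S, T)$ under a secondary minimization criterion (for instance, strictly reducing $|S \cap T|$ among pairs of the same $|S \cup T|$ and $|S| + |T|$, using that $|A' \cap T| = |A' \cap Y^*| + |Z|$ can be estimated against $|Y|$). Making this tiebreaking rigorous — choosing the overall minimization order so that (b) can be finessed in every configuration — is where the bulk of the technical work lies.
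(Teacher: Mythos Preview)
Your extremal set-up and the verifications of conditions (a) and (c) are sound, and your argument for (c) via the pair $(S,A')$ or $(A',T)$ beating $(S,T)$ in $|S\cup T|$ is essentially the paper's argument rephrased without induction. The gap you yourself flag in (b) is, however, real and not obviously fixable along the lines you sketch. The difficulty is that your quotient $g$ destroys the link between Hamming weight in $g$ and in $f$: an input $(1^d,y,1^d)$ of small $g$-weight corresponds to $f(X\cup Y^*\cup Z)$, whose $f$-weight $|X|+|Y^*|+|Z|$ can easily exceed $t$ once $|X|$ or $|Z|$ is large, so $\minw(f)\ge t$ gives no traction. Your proposed secondary tiebreakers ($|S\cap T|$, $|S|+|T|$) do not help either: in the residual case one gets a minimal $A'$ with $X\cup Z\subseteq A'\subseteq X\cup Y^*\cup Z$, and then $(S,A')$ and $(A',T)$ are tied with $(S,T)$ on $|S\cup T|$, while the comparisons you need (e.g.\ $|A'\cap Y^*|+|Z|<|Y|$) have no reason to hold without further control on $|X|,|Y|,|Z|$.

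The paper sidesteps this entirely by arguing by induction on the arity $\ell$ of $f$. The induction hypothesis is first used to force $S\cup T=[\ell]$ (else fix a coordinate outside $S\cup T$ to $0$ and apply the hypothesis to the resulting function of arity $\ell-1$). It is then used a second time, and this is the step you are missing: choosing $(S,T)$ to minimise $(|S|,|T|)$ lexicographically, one shows $|S|=|T|=t$ by identifying two suitable coordinates of $S$ (both in $T$ or both outside $T$), applying the hypothesis to the resulting arity-$(\ell-1)$ minor to produce a weight-$t$ one-set containing those coordinates, and deriving a contradiction to the minimality of $|S|$ (and then of $|T|$). Once $|S|=|T|=t$ and $S\cup T=[\ell]$, no quotient is needed: $f$ itself is the $(c,d)$-ADA with $c=|S\cap T|$ and $d=t-c$, and condition (b) is immediate from $\minw(f)\ge t=c+d$. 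The induction-on-arity device is what makes (b) trivial, and it does not seem to have a direct analogue in your purely extremal framework.
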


In the next section, we show how this enables us to strengthen the hardness conditions of the previous sections that were based on \cref{AND assignment}.
But first, let us prove the technical lemma.

\begin{proof}
  We proceed by induction on the arity $\ell$ of $f$.  The base cases $\ell \le t$ clearly holds -- either $\minw(f) \le t-1$, or the only set in
  $f^{-1}(1)$ is $[\ell]$ itself.
  
  For the inductive step consider a function $f: \{0,1\}^{\ell} \rightarrow \{0,1\}$ and suppose the
  lemma is true for all $\ell' < \ell$.
  Suppose for contradiction that the lemma fails for $f$, i.e.,
  \begin{enumerate}
  \item $\minw(f) \ge t$, \emph{and}
  \item There exists inclusion-wise minimal $S, T \in f^{-1}(1)$ such that $S \cap T \ne \emptyset$.
  \end{enumerate}

  Note that we must have $S \cup T = [\ell]$, since otherwise fixing a
  variable of $f$ outside $S \cup T$ to $0$ yields a contradiction to
  the inductive hypothesis.
  Among all possible ways of choosing the pair $(S, T)$, choose one which
  minimizes $(|S|, |T|)$ (i.e., first minimize $|S|$, then subject to
  this minimize $|T|$).  We then have the following claim.

  \begin{claim}
    $|S| = |T| = t$.
  \end{claim}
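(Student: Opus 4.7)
The plan is to derive a contradiction from $|T| > t$; since the lex-minimization allows us to relabel so that $|S| \le |T|$, the case $|S| > t$ reduces to this one. Combined with $|S|, |T| \ge \minw(f) \ge t$, proving $|T| \le t$ then yields $|S| = |T| = t$.

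First, I would exploit the outer induction to constrain the minimal $1$-sets of $f$. For each $j \in T \setminus S$ (non-empty since $T \ne S$ are distinct minimal sets), the function $f_j \in \Minion^0$ obtained by fixing the $j$-th variable to $0$ has arity $\ell - 1$ and $\minw(f_j) \ge \minw(f) \ge t$, so by the inductive hypothesis option $(2)$ of the lemma must hold for $f_j$: every pair of minimal $1$-sets of $f_j$ is disjoint. Since minimal $1$-sets of $f_j$ coincide with minimal $1$-sets of $f$ avoiding $j$, and $S$ is one of them, every other minimal $1$-set of $f$ avoiding $j$ is disjoint from $S$. Varying $j$ over $T \setminus S$ gives the structural statement that every minimal $1$-set $V \ne S$ of $f$ with $V \cap S \ne \emptyset$ must contain all of $T \setminus S$. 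By a symmetric argument with $j \in S \setminus T$: every minimal $1$-set $V \ne T$ with $V \cap T \ne \emptyset$ contains all of $S \setminus T$.

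Second, I would construct a $(c, d')$-ADA in $\Minion^0$ with $c + d' = t$, violating $t$-ADA-freeness. Set $a = |S \setminus T|$, $d = |S \cap T|$, $b = |T \setminus S|$. Form a minor $g$ of $f$ by partitioning each of $S \setminus T$, $S \cap T$, $T \setminus S$ into groups of appropriate sizes and identifying variables within each group, producing a function on $d' + c + d'$ inputs. Properties (a) of the ADA definition come directly from $f(S) = f(T) = 1$, and property (b) for inputs $(x, y, z)$ whose lift is a proper subset of $S$ (resp.\ $T$) follows from minimality of $S$ (resp.\ $T$).

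The main obstacle is verifying property (b) for inputs where both outer blocks are all-$1$ but the middle $y$ has low weight, since here the lift contains $(S \setminus T) \cup (T \setminus S)$ together with only a strict subset of $S \cap T$, and is not contained in either $S$ or $T$. For $g$ to evaluate to $1$ on such an input, the lift must cover some minimal $1$-set $V$ of $f$; by the structural result from step one, such a $V$ must contain both $S \setminus T$ and $T \setminus S$. Applying lex-minimality of $(S, T)$ to the candidate pair $(S, V)$ (which intersect in $S \setminus T \ne \emptyset$, a set that is non-empty since otherwise $S \subseteq T$ would contradict minimality of $T$), we obtain $|V| \ge |T|$, so any such $V$ must cover a substantial portion of $S \cap T$. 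A small case split on the parameters — for example $(c, d') = (t - 1, 1)$ when $d \ge t - 1$, $(c, d') = (1, t - 1)$ when $a, b \ge t - 1$, with remaining configurations handled by choosing $c \le d$ and $d' \le \min(a, b)$ with $c + d' = t$ — lets one pick the partition so that the middle-weight threshold forcing the problematic case is strictly smaller than the bound $|V| \ge |T|$ permits, yielding property (b). Property (c) is then verified analogously by noting that any minimal $1$-set covered by a $g = 1$ input must equal $S$, equal $T$, or contain $(S \setminus T) \cup (T \setminus S)$, each of which requires at least two of the three blocks to be all-$1$. The resulting ADA contradicts $t$-ADA-freeness, completing the argument.
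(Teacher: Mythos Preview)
Your strategy differs substantially from the paper's, and it has a gap in the parameter selection for the ADA construction.

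The paper proves the claim without touching the ADA definition. It first shows $|S| = t$: assuming $|S| > t$, take $i, j \in S$ that are either both in $T$ or both outside $T$, and identify them to obtain a minor $g$ of arity $\ell - 1$. The images $S', T'$ remain inclusion-minimal intersecting $1$-sets of $g$, so by the outer induction $\minw(g) \le t - 1$; lifting a witness back (it must use the merged coordinate, since $\minw(f) \ge t$) yields a minimal $X \in f^{-1}(1)$ of size exactly $t$ with $i \in X \cap S$, so the pair $(X, S)$ contradicts the lex-minimality of $(S, T)$. Once $|S| = t$, the same identification trick applied to two elements of $T \setminus S$ (there are at least two since $|T| > t$ while $|S \cap T| \le t - 1$) produces a size-$t$ minimal $X$ meeting $S \setminus T$, so $(S, X)$ beats $(S, T)$. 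No ADA is built here; that only happens \emph{after} $|S| = |T| = t$ is established.

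Your plan --- extract the structural fact that every third minimal $1$-set $V$ contains $(S \setminus T) \cup (T \setminus S)$ and then exhibit a $(c, d')$-ADA minor directly --- can be pushed through, but not with the parameter choices you suggest. The choice $(c, d') = (t-1, 1)$ already fails for $t = 4$, $a = |S \setminus T| = 2$, $d = |S \cap T| = 3$, $b = |T \setminus S| = 3$: nothing in the hypotheses, your structural constraint, or lex-minimality rules out $V := (S \setminus T) \cup (T \setminus S) \cup \{v\}$ (for a single $v \in S \cap T$) as a minimal $1$-set, since $|V| = 6 = |T|$. With the singleton partition you then get $g(1, e_v, 1) = f(V) = 1$ at weight $3 < 4$, so property~(b) fails; and your alternative $(1, t-1)$ is unavailable because $a = 2 < t - 1$. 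The construction \emph{does} go through if one always takes $d' = \min(a, t-1)$ and $c = t - d'$: either $c \le d'$ and the problematic case is vacuous, or (with a balanced partition of $S \cap T$ into $c$ groups, each of size $\ge 1$ since $d \ge c$) the $c - w(y) \ge a + 1$ excluded groups force $|G_y| \le d - a - 1 < d - a \le |V \cap (S \cap T)|$. So the gap is specifically in the case split; the paper's identification argument is both much shorter and avoids the issue entirely.
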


  \begin{proof}
    Let us start with $|S|$ and suppose for contradiction that $|S| >
    t \ge 2$.  Let $i, j \in S$ be such that either both are in $T$ or
    neither are in $T$ (since $|S| \ge 3$ such $i,j$ must exist).
    Consider the minor $g$ of $f$ obtained by identifying $x_i$ and
    $x_j$, and the sets of coordinates $S'$ and $T'$ obtained from $S$
    and $T$ by identifying these two coordinates; note that
    $|S'|=|S|-1$, while $|T'|$ is either $|T|-1$ or $|T|$ depending on
    whether the identified coordinates $i,j$ are in $T$ or not.

    Since $S'$ and $T'$ are inclusion-wise minimal sets in
    $g^{-1}(1)$ that intersect, the inductive hypothesis implies that
    $\minw(g) = t-1$.  It follows that there exists a set
    $X \subseteq [\ell]$ of size $|X| = t < |S|$, containing $i,j$, such
    that $f(X) = 1$.  Since $S$ is inclusion-wise minimal we cannot have $X \subseteq S$.  But then $(X, S)$ would have been a valid choice
    of the sets $(S, T)$ as they intersect at $i$.  Since $S$ was chosen with minimum possible cardinality, this contradicts the assumption that $|S| > t$.

    Having established $|S|=t$, suppose for contradiction that $|T| >
    t$.  This means that $|T \setminus S| \ge 2$ (since $S$ cannot be
    contained in $T$).  Repeating the argument above with two elements in $T \setminus S$ we establish the existence of an $X \subseteq [\ell]$ of size $|X|=t$
    and intersecting $T \setminus S$ with $f(X) = 1$.  By
    inclusion-wise minimality of $T$, $X$ must intersect $[\ell]
    \setminus T = S \setminus T$ and hence $(S, X)$ would have been a
    valid choice of the sets $(S, T)$.
  \end{proof}

  Let $c = |S \cap T|$ and $d = |S \setminus T| = |T \setminus S| = t-c$.
  By reordering the $\ell = |S \cup T| = c+2d$ variables of $f$ we can write $f$ as a function $f(x, y, z)$ on $\{0,1\}^d
  \times \{0,1\}^c \times \{0,1\}^d$, with $x$ being the variables
  from $S \setminus T$, $z$ the variables from $T \setminus
  S$, and $y$ the variables from $S \cap T$.
  We claim that with this ordering of the variables, $f$ is a $(c,d)$-ADA.  To wit, let us verify the properties:
    \begin{enumerate}
    \item[(a)] $f(1^d, 1^c, 0^d) = f(0^d, 1^c, 1^d) = 1$.  This is clear since the first value is $f(S)$ and the second is $f(T)$.
    
    \item[(b)] $f(x, y, z) = 0$ if $w(x)+w(y)+w(z) < c+d$.  This follows since $\minw(f) \ge t = c+d$.
    
    \item[(c)] $f(x, y, z) = 0$ unless at least two of $x, y, z$ are all-$1$s.
        Suppose for contradiction that this is not true and take a counterexample $(x, y, z)$ of minimum weight $w(x)+w(y)+w(z)$.  Assume without loss of generality that $x \ne 1^d$ (otherwise, we can swap the roles of $S$ and $T$ which swaps $x$ and $z$).  We then have $(y, z) \ne (1^c, 1^d)$, and in particular the subset $\tilde{S} \subseteq [\ell]$ of $1$-coordinates of $(x, y, z)$ satisfies the following two properties:
        \begin{itemize}
            \item $\tilde{S} \not\supseteq S$ (since $x \ne 1^d$) and $\tilde{S} \not\supseteq T$ (since $(y,z) \ne (1^c, 1^d)$).
            \item $\tilde{S}$ is inclusion-wise minimal in $f^{-1}(1)$ (since $(x, y, z)$ were chosen with minimum possible weight)
        \end{itemize}
        Now consider the function $\tilde{f} \in \Minion^0$ of arity $\ell-1$ obtained by fixing some coordinate in $S \setminus \tilde{S}$ to $0$.  Then $(\tilde{S}, T)$ is a pair of inclusion-wise minimal but intersecting sets in $\tilde{f}^{-1}(1)$ and hence $\minw(\tilde{f}) \le t-1$ by the induction hypothesis, but then $\minw(f) \le \minw(\tilde{f}) \le t-1$. 

    \end{enumerate}

  Thus we have now established that $f \in \Minion^0$ is a
  $(t-d, d)$-ADA, which contradicts the fact that $\Minion^0$
  is $t$-ADA-free, and the inductive step follows.
\end{proof}

A useful corollary of the technical lemma is the following.

\begin{corollary}
\label{ADA assignment}
    Let $\Minion$ be a folded minion such that $\Minion^0$ is $t$-ADA-free.  Then for every $f \in \Minion$ it holds that $\minw(f) \le t-1$.
\end{corollary}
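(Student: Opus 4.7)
The plan is to suppose for contradiction that some $f \in \Minion$ satisfies $\minw(f) \ge t$ and combine \cref{ADA-free sets} with the folded hypothesis to derive a contradiction. I would immediately discard the trivial case $f(0^\ell) = 1$ (in which $\minw(f) = 0 \le t-1$), so assume $f(0^\ell) = 0$ and hence $f(1^\ell) = 1$ by foldedness, making $f^{-1}(1)$ non-empty. Applying \cref{ADA-free sets} to $f \in \Minion^0$, the assumption $\minw(f) \ge t$ rules out its first alternative, so the inclusion-minimal sets $S_1, \ldots, S_m$ of $f^{-1}(1)$ are pairwise disjoint, each of size $\ge t \ge 2$. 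The argument then splits on $m = 1$ versus $m \ge 2$.

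The case $m = 1$ is a short counting argument. Every $T \in f^{-1}(1)$ must contain the unique minimal $S_1$ (descend to a minimal subset), hence $|f^{-1}(1)| \le 2^{\ell - |S_1|}$. But foldedness forces $|f^{-1}(1)| = 2^{\ell - 1}$, since the pairs $\{x, \neg x\}$ each contribute exactly one $1$. Together these give $|S_1| \le 1$, contradicting $|S_1| \ge t \ge 2$.

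The case $m \ge 2$ is the main one. Here I would construct an explicit transversal by picking an arbitrary $e_i \in S_i$ for each $i$ and setting $T = \{e_1, \ldots, e_m\}$. Pairwise disjointness of the $S_i$'s gives $T \cap S_i = \{e_i\}$, and combined with $|S_i| \ge 2$ this means neither $T$ nor $T^c = [\ell] \setminus T$ contains any minimal $S_i$. Hence $f(T) = f(T^c) = 0$, which contradicts $f(T^c) = \neg f(T)$ from foldedness. I do not anticipate any subtle obstacle beyond choosing this transversal: \cref{ADA-free sets} does all the hard work, and the folded hypothesis is exploited via a purely combinatorial pigeonhole-style observation.
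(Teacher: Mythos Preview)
Your proof is correct and genuinely different from the paper's. The paper argues by induction on the arity $\ell$: it identifies two coordinates $i,j$ to form a minor $g$, uses the inductive hypothesis to obtain a set of size $t$ in $f^{-1}(1)$ containing $i,j$, then repeats with a suitable new pair $i',j'$ to produce a second such set that intersects the first, contradicting \cref{ADA-free sets}. By contrast, you apply \cref{ADA-free sets} once to $f$ and then argue directly about the antichain of minimal members of $f^{-1}(1)$, splitting on whether there is exactly one or at least two. Your approach is more elementary---it never forms minors and needs no induction---and it makes transparent that the corollary is really a combinatorial fact about any single folded function whose minimal $1$-sets are pairwise disjoint and large. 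The paper's route, on the other hand, dovetails with how \cref{ADA-free sets} itself is proved (via minors and induction) and exhibits the two intersecting minimal sets more constructively. One small remark: your case $m=1$ can be subsumed into the $m \ge 2$ argument by taking $T = \{e_1\}$ (a single element of $S_1$), since then neither $T$ nor its complement contains $S_1$, so you could streamline into a single case.
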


\begin{proof}
    Proof by induction on the arity $\ell$ of $f$.  As base case we have $\ell < 2t-1$ for which the result is trivially true since every folded function $f$ of arity $\ell$ has $\minw(f) \le \lceil \ell/2 \rceil$.

    For the inductive case assume the claim is true for $\ell-1$ and assume for contradiction that it is false for some $f$ of arity $\ell \ge 2t-1 > t$.   For two arbitrary coordinates $i,j \in [\ell]$, consider the minor $g$ of $f$ formed by identifying $i$ and $j$.  By the induction hypothesis, there is a set $S'$ of size $\le t-1$ for $g$ such that $g(S')=1$.  Since $f$ does not have such a set, $S'$ must include the new coordinate.  Replacing the new coordinate by $i$ and $j$ we get a set $S \subseteq [\ell]$ of size $t$ such that $f(S) = 1$ and $i,j \in S$.

    Let $i' \in S$ and $j' \not\in S$, and repeat this process to obtain a set $T \subseteq [\ell]$ of size $t$ such that $f(T) = 1$ and $i',j' \in S$.  This gives two intersecting sets contradicting \cref{ADA-free sets}.
\end{proof}

\subsection{Strengthened Hardness Conditions}

Using the notion of $t$-ADA-free minions we can strengthen the previous conditions for small fixing assignments.

\begin{theorem}[Strengthening of \cref{lemma:matching}]
    \label{match+ADA}
    Let $\Minion$ be a folded idempotent minion such that $\Minion^0$ is $t_1$-ADA-free and every $f \in \Minion$ has matching number at most $t_2$.  Then all $f \in \Minion$ have a fixing set of size at most $(t_1-1) \cdot t_2$.
\end{theorem}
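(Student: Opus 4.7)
The plan is to adapt the greedy construction from the proof of \cref{lemma:matching}, replacing the role of \cref{AND assignment} with the strictly weaker \cref{ADA-free sets}. Starting from $F := \emptyset$, I would iterate as follows: consider $f' := f|_{x_F = 0} \in \Minion^0$; if $f' \equiv 0$, stop and declare $F$ a fixing set; otherwise invoke \cref{ADA-free sets} to land in one of two cases. In the easy case~(a), $\minw(f') \le t_1 - 1$, I would append an inclusion-wise minimal $1$-set of $f'$ of size at most $t_1 - 1$ to $F$ and iterate. In the hard case~(b), all inclusion-wise minimal $1$-sets $T_1, \ldots, T_k$ of $f'$ are pairwise disjoint; these may individually be larger than $t_1 - 1$ and so require separate handling.

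The key trick for case~(b) is that the disjointness forces $f'$ to be essentially an $\OR$ of $\AND$s on the $T_j$'s, so every $1$-set of $f'$ contains some $T_j$. Consequently a single-element transversal $H = \{h_1, \ldots, h_k\}$ with $h_j \in T_j$ already hits every $1$-set of $f'$ and has size exactly $k$. I would append $H$ to $F$ and terminate: any $1$-set of $f$ disjoint from $F \setminus H$ is a $1$-set of the current $f'$, hence contains some $T_j$, and hence meets $H$.

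For the size analysis, let $S_1, \ldots, S_m$ be the $1$-sets appended in case~(a), and $T_1, \ldots, T_k$ those identified at the case~(b) exit (with $k = 0$ if the process halted because $f' \equiv 0$). By construction $S_1, \ldots, S_m, T_1, \ldots, T_k$ are pairwise disjoint $1$-sets of $f$ and hence form a matching of $f$, so the matching-number hypothesis yields $m + k \le t_2$. Therefore
\[
|F| \;\le\; (t_1 - 1)\,m + k \;\le\; (t_1 - 1)\,m + (t_2 - m) \;=\; (t_1 - 2)\,m + t_2 \;\le\; (t_1 - 1)\,t_2,
\]
using $m \le t_2$ and $t_1 \ge 2$.

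The principal obstacle to anticipate is case~(b): in contrast to the setting of \cref{lemma:matching}, $\minw(f')$ is not uniformly bounded across $\Minion^0$, so the naive strategy of always appending a small $1$-set fails. The resolution, trading a potentially large minimal set for a single-element transversal of the pairwise disjoint family $\{T_j\}$, is exactly what makes the tight budget $(t_1-1)\,t_2$ go through.
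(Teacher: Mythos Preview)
Your proof is correct and follows essentially the same approach as the paper's: both run the greedy construction of disjoint $1$-sets, invoke \cref{ADA-free sets} when the restricted function has no small $1$-set, and replace the remaining (pairwise disjoint) minimal $1$-sets by a single-element transversal. The only differences are presentational---the paper first constructs the full greedy matching $S_1,\ldots,S_m$ and then locates the first index where $|S_i|\ge t_1$, whereas you interleave the case check with the iteration---and that the paper makes explicit the use of foldedness to convert ``$F$ set to $0$ forces $f=0$'' into a fixing set in the sense of the definition.
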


\begin{theorem}[Strengthening of \cref{lemma:invmatch}]
    \label{invmatch+ADA}
    Let $\Minion$ be a folded minion such that $\Minion^0$ is $t_1$-ADA-free and all $f \in \Minion$ have inverted matching number $\le t_2$.  Then all
    $f \in \Minion$ have a fixing assignment of size at most $t_1-1 + t_2^2$.
\end{theorem}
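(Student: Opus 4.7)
The plan is to mimic the proof of the earlier \cref{lemma:invmatch}, replacing only the single step where $\minw(f) \le t_1-1$ was obtained via \cref{AND assignment}. Since the ADA-free hypothesis is strictly weaker than forbidding $\AND_{t_1}$ in $\Minion^0$ (as noted in \cref{claim:AND_free_ADA_free}), and we have precisely the analogous bound on $\minw(f)$ for folded minions in \cref{ADA assignment}, the substitution will be essentially plug-and-play.

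More concretely, I would proceed in three steps. First, from the hypothesis that $\Minion$ is folded and all $f \in \Minion$ have inverted matching number at most $t_2$, invoke \cref{folded+inverted implies NAND} to conclude $\NAND_{t_2+1} \notin \Minion^{0,1}$. Second, from the hypothesis that $\Minion$ is folded and $\Minion^0$ is $t_1$-ADA-free, invoke \cref{ADA assignment} to conclude that every $f \in \Minion$ satisfies $\minw(f) \le t_1 - 1$. Third, fix an arbitrary $f \in \Minion$ (if $f$ is identically $0$ the empty assignment works, so assume not) and pick a minimum-weight set $S$ with $f(S) = 1$, giving $|S| \le t_1 - 1$. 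Then apply \cref{NAND+invmatch} to $f$ and this $S$: the hypotheses are satisfied with parameters $t_2+1$ and $t_2$, so there exists a set $T$ disjoint from $S$ of size at most $((t_2+1)-1)\cdot t_2 = t_2^2$ such that setting $x_S = 1^{|S|}$ and $x_T = 0^{|T|}$ is a fixing assignment of $f$. The total size of this fixing assignment is $|S| + |T| \le (t_1-1) + t_2^2$, as required.

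There is no serious obstacle here; every component of the argument is already established earlier in the excerpt, and the theorem is essentially the composition of \cref{ADA assignment} (to handle the weakened low-weight condition) with the existing \cref{NAND+invmatch,folded+inverted implies NAND} (which already do the extension from low-weight $1$-assignments to fixing assignments in the presence of bounded inverted matching number). The only mild point of care is to notice that \cref{NAND+invmatch} does not actually require the minion to be folded or idempotent, so it applies verbatim here, and that the constants line up: forbidding $\NAND_{t_2+1}$ in $\Minion^{0,1}$ yields the quadratic $t_2^2$ factor rather than any worse dependence.
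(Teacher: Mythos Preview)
Your proposal is correct and follows exactly the same approach as the paper, which simply states that \cref{invmatch+ADA} follows immediately from \cref{ADA assignment} combined with \cref{NAND+invmatch} and \cref{folded+inverted implies NAND}. You have spelled out the composition in full detail, including the correct bookkeeping of constants.
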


\begin{theorem}[Strengthening of \cref{lemma:unate+g}]
    \label{thm:unate+ADA}
    Let $\Minion$ be a folded, idempotent, and unate minion such that $\Minion^0$ is $t_1$-ADA-free and $\xNOR_{t_2} \not \in \Minion^0$.  Then every $f$ in $\Minion$ has a $(t_1+t_2-3)$-fixing assignment.
\end{theorem}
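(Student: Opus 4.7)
The plan is to follow the same two-step recipe used for \cref{lemma:unate+g}, but upgrade the first step from the $\AND$-based low-weight bound of \cref{AND assignment} to the ADA-based bound provided by \cref{ADA assignment}. The second step, which produces the negative part of the fixing assignment, can be left completely unchanged since it only uses the $\xNOR$-freeness and the unate+idempotent structure.

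Concretely, I would fix an arbitrary $f \in \Minion$ and first apply \cref{ADA assignment} to $f$: since $\Minion$ is folded and $\Minion^0$ is $t_1$-ADA-free, the corollary gives $\minw(f) \le t_1-1$, so there is a set $S \subseteq [\ell]$ with $|S| \le t_1-1$ and $f(S)=1$. (Note that the folded hypothesis on $\Minion$ is what makes \cref{ADA assignment} applicable, as opposed to the weaker conclusion of the technical \cref{ADA-free sets}, which only controls one representative of each pair of intersecting minimal supports.)

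Next, with this $S$ in hand, I would invoke \cref{xNOR assignment}: since $\Minion$ is idempotent and unate and $\xNOR_{t_2} \notin \Minion^0$, there exists $T \subseteq [\ell] \setminus S$ with $|T| \le t_2-2$ such that setting $x_S = 1$ and $x_T = 0$ forces $f = 1$, i.e.\ $(S \cup T, 1^{|S|} 0^{|T|})$ is a fixing assignment of $f$. Its size is $|S|+|T| \le (t_1-1)+(t_2-2) = t_1+t_2-3$, proving the theorem.

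There is essentially no obstacle here: all the real work has already been absorbed into \cref{ADA-free sets}/\cref{ADA assignment} and into \cref{xNOR assignment}, and what remains is to verify that the hypotheses line up. The only thing worth double-checking is that the folded assumption on $\Minion$ is genuinely used (only in the first step, to rule out the disjoint-minimal-supports alternative of \cref{ADA-free sets}) and that idempotence together with unateness is precisely what \cref{xNOR assignment} needs in the second step. Both are supplied by the hypotheses of the theorem, so the composition goes through without modification.
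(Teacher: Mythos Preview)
Your proposal is correct and matches the paper's own proof essentially verbatim: the paper states that \cref{thm:unate+ADA} follows immediately by combining \cref{ADA assignment} (using foldedness of $\Minion$ and $t_1$-ADA-freeness of $\Minion^0$) with \cref{xNOR assignment} (using idempotence, unateness, and $\xNOR_{t_2}\notin\Minion^0$), exactly as you describe.
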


\cref{invmatch+ADA,thm:unate+ADA} follow immediately from \cref{ADA assignment} combined with the previous results.  For \cref{invmatch+ADA} we use \cref{NAND+invmatch} and \cref{folded+inverted implies NAND} while for \cref{thm:unate+ADA} we use \cref{xNOR assignment}.

\cref{match+ADA} is not quite as immediate, because the proof of the original condition \cref{lemma:matching} uses the conclusion of \cref{AND assignment} that all $f \in \Minion^0$ have small $\minw(f)$ (as opposed to the other two results which only use the property that all $f \in \Minion$ have small $\minw(f)$), and this is not guaranteed by \cref{ADA-free sets} or \cref{ADA assignment}.  Nevertheless, we can extend the proof of \cref{lemma:matching} to obtain the desired result.

\begin{proof}[Proof of \cref{match+ADA}]
  We begin as in the proof \cref{lemma:matching}, picking a sequence
  of disjoint $S_1, \ldots, S_m$ such that each $f(S_i)=1$ and $S_i$
  is of minimum possible cardinality subject to being disjoint from
  $S_1 \cup \ldots \cup S_{i-1}$ (the assumption that $f$ is idempotent guarantees $S_i \ne \emptyset$).  The union of these forms a fixing
  set and $m \le t_2$.  If all $|S_i| \le t_1-1$ then we are done.

  Otherwise, let $i$ be the first index such that $|S_i| \ge t_1$, and
  consider the function $g \in \Minion^0$ obtained by fixing the
  coordinates of $S_1, \ldots, S_{i-1}$ to $0$.  Since $\minw(g) =
  |S_i| \ge t_1$, we have by \cref{ADA-free sets} that all
  inclusion-wise minimal sets in $g^{-1}(1)$ are pairwise disjoint.
  This implies that $S_i, \ldots, S_m$ are the only inclusion-wise minimal
  sets in $g^{-1}(1)$ and in particular 
  any $T \in g^{-1}(1)$ contains $S_j$ for some $j
  \in \{i, \ldots, m\}$.  Thus picking one coordinate each
  from the sets $S_i, \ldots, S_m$ yields a hitting set for $g^{-1}(1)$
  and setting these together with $S_1, \ldots, S_{i-1}$ to $0$ yields
  a fixing assignment forcing $f(x)=0$ of size at most $(i-1)(t_1-1) + m-i+1 \le (t_1-1) \cdot
  m \le (t_1-1) \cdot t_2$.  As $f$ is folded, the variables set to 0 is  fixing set.
\end{proof}

\subsection{Beyond \texorpdfstring{$\xNOR$}{xNOR}: Unate Controlled ADAs}
\label{sec:split}

In this section we focus on unate minions and describe another sufficient condition for such minions to have small fixing assignments.  We say that a function $f$ is $(p,q)$-unate if it can be written as $f: \{0,1\}^p \times \{0,1\}^q \rightarrow \{0,1\}$ where $f(x,y)$ is non-negative in $x$, and non-positive in $y$ (if $f$ does not depend on some variable it can be placed in either group, so it is possible for a function to be both $(p,q)$-unate and $(p+1,q-1)$-unate).  

For fixing assignments of unate functions, we say that $S \subseteq [p], T \subseteq [q]$ of sizes $|S|=s$ and $|T|=t$ is an $(s,t)$-fixing assignment for a $(p,q)$-unate function $f: \{0,1\}^p \times \{0,1\}^q \rightarrow \{0,1\}$ if $f(S, \overline{T}) = 1$ (i.e., setting all non-negative variables of $S$ to $1$, and all non-positive variables of $T$ to $0$, fixes $f$ to $1$).  Our general goal in this section is to identify finite conditions which are sufficient to guarantee that all functions in a unate minion have $(t-1,1)$-fixing assignments for some $t$, i.e., fixing assignments of size $t$ which sets one variable to $0$ and $t-1$ variables to $1$.

Similar to the approach in \cref{sec:ADA} and the notion of $t$-ADA-free minions, the condition is based on forbidding certain functions that look similar to a disjunction of two overlapping ANDs, but in this case each of the ANDs is ``controlled'' by some negative inputs and we generally refer to them as Controlled ADAs.  There are (unfortunately) two very similar but subtly different types of functions.  Let us define the first (main) type.

\begin{definition}
\label{def:split}
    A \emph{$(c,d)$-UnCADA} (Unate Controlled Approximate Double-AND) is a $(c+2d+1,3)$-unate folded and idempotent function $f: \{0,1\}^{c+2d+1} \times \{0,1\}^3 \rightarrow \{0,1\}$ such that
    \begin{enumerate}
    \item[(a)] $f(1^d1^c0^d0, 011) = f(0^d1^c1^d0, 101) = 1$
    \item[(b)] for every $x \in \{0,1\}^{c+2d}$ and $y \in \{0,1\}^2$ such that $w(x) \le c+d-1$ and $w(y) 
    \ge 1$ it holds that $f(x0, y1) = 0$.
    \end{enumerate}
    A minion $\Minion$ is \emph{$t$-UnCADA-free}, $t\geq2$, if it does not contain a $(t-d, d)$-UnCADA for any $1 \le d \le t-1$.
\end{definition}

It may be instructive to compare \cref{def:split} with \cref{ADA definition} of $(c,d)$-ADAs, since the definitions are very similar.  The conditions (a) in both definitions are essentially the same, saying that $f$ has two $1$-assignments with a certain overlap pattern determined by $c$ and $d$.  Likewise the conditions (b) are analogous, with the main difference stemming from \cref{def:split} requiring that $f$ is positive in the $x$-variables and negative in the $y$-variables (unlike an ADA, which has no such requirements).  Another minor difference is the presence of the last positive variable in \cref{def:split}, which is $0$ in all prescribed function values.  This takes a similar role as $M^0$ does in previous arguments -- rather than dropping this variable and using $\Minion^0$ being UnCADA-free, we explicitly include it in the definition and operate directly on the original $\Minion$.  The reason for this is to be able to require $f$ being positive in this variable.

The second type of function has an easier definition.

\begin{definition}\label{def:UnDADA}
    For $t \ge 3$, a \emph{$t$-UnDADA} (Unate Double-controlled Approximate Double-AND) is a $(t,4)$-unate folded and idempotent function $f: \{0,1\}^t \times \{0,1\}^4 \rightarrow \{0,1\}$ such that
    \begin{enumerate}
        \item[(a)] $f(1^{t-1}0, 0011) = f(01^{t-1}, 1001) = 1$
        \item[(b)] for every $x \in \{0,1\}^t$ and $y \in \{0,1\}^3$ such that $w(x) \le t-1$ and $w(y) \ge 2$, it holds that $f(x, y1) = 0$
    \end{enumerate}
\end{definition}

Thus in a UnDADA, two negative variables (``control bits'') need to be set to $0$ to activate each of the ``ANDs'', as opposed to a single negative variable in an UnCADA.  One important but subtle difference in the definition is that an UnCADA has an extra positive variable which is $0$ in all prescribed function values, but the UnDADA does not have this.  While this may seem like a minor detail, it turns out to be very important -- there are many minions of interest which do not have $t$-UnDADAs, but that would have them if we were to add the additional $0$-variable in the UnDADA definition. 

We are finally ready to state our fixing assignment condition.

\begin{theorem}
\label{thm:split}
    Let $\Minion$ be a folded, idempotent, and unate minion which is $t$-ADA-free, $t$-UnCADA-free, and does not have a $t$-UnDADA.  Then every $f \in \Minion$ has a $(t-1,1)$-fixing assignment.
\end{theorem}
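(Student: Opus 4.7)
The plan is to argue by contrapositive: assume some $f \in \Minion$ has no $(t-1, 1)$-fixing assignment, and produce in $\Minion$ either a $(c, d)$-ADA with $c + d = t$, a $(c, d)$-UnCADA with $c + d = t$, or a $t$-UnDADA, thereby contradicting one of the three hypotheses on $\Minion$.

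The first step is to use $t$-ADA-freeness together with \cref{ADA assignment} (which applies by foldedness) to conclude that $\minw(f) \le t-1$. By unateness, this yields a set $S$ of positive variables with $|S| \le t-1$ satisfying $f(S, 0^q) = 1$, so that $(S, [q])$ is already a $(|S|, q)$-fixing assignment. The task is then to shrink the negative side of such a fixing assignment to size at most $1$ while keeping $|S| \le t-1$. Let $\mu^{*}$ be the minimum value of $|T|$ over all fixing assignments $(S, T)$ with $|S| \le t-1$; the goal becomes to prove $\mu^{*} \le 1$.

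Suppose for contradiction that $\mu^{*} \ge 2$, and pick a fixing assignment $(S^{*}, T^{*})$ realizing this minimum, chosen so that $T^{*}$ is inclusion-minimal subject to $S^{*}$. Then for every $i \in T^{*}$ we have $f(S^{*}, \overline{T^{*} \setminus \{i\}}) = 0$, so each negative in $T^{*}$ is critical. The core of the proof is a case analysis on how this minimal fixing interacts with others. If one can find a second fixing assignment $(S', T')$ with $|S'| \le t - 1$ whose positive set overlaps $S^{*}$ in $c$ coordinates and differs on $d$-sized wings (with $c + d = t$), and whose $T'$ shares with $T^{*}$ a single common negative playing the role of a shared control bit, then identifying the common positives of $S^{*}, S'$ and the shared control bit produces a minor of $f$ matching the definition of a $(c, d)$-UnCADA, contradicting $t$-UnCADA-freeness. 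If instead two shared control bits are needed to rule out all nearby $(t-1, 1)$-fixing candidates, the analogous construction yields a $t$-UnDADA minor. In the degenerate case where no such second fixing assignment is available, identifying negatives inside $T^{*}$ itself produces a pure $(c, d)$-ADA minor violating $t$-ADA-freeness.

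The main obstacle, I expect, will be verifying conditions (b) of the UnCADA and UnDADA definitions, which require $f = 0$ on a structured family of assignments with low positive weight and high negative weight. These have to be deduced from the assumption that $f$ has no $(t-1, 1)$-fixing assignment, and translating this assumption into the precise weight-threshold form relies on a careful choice of which variables are merged and which are kept distinct. The subtle distinction between a UnCADA (which carries an additional ``$0$-positive'' variable) and a UnDADA (which does not) likely mirrors whether the obstruction arises from one critical negative or from a pair of them, so the proof must handle both situations separately, and the bulk of the technical work presumably lies in realizing the correct minor in each case.
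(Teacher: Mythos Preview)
Your overall plan---assume $f$ has no $(t-1,1)$-fixing assignment and exhibit a forbidden ADA/UnCADA/UnDADA minor---is the right contrapositive stance, and step~1 (using $t$-ADA-freeness plus \cref{ADA assignment} to bound $\minw(f)$ and then unateness to land on the positive block) is fine. But the heart of the argument, steps~4--5, has a genuine gap: you never explain \emph{where the second fixing assignment $(S',T')$ with controlled overlap comes from}. You write ``If one can find a second fixing assignment\ldots'' and ``If instead two shared control bits are needed\ldots'', but there is no mechanism in your sketch that produces such an $(S',T')$, let alone one whose positive part overlaps $S^*$ in exactly $c$ coordinates and whose negative part shares a specific control bit with $T^*$. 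Starting from a single minimal $(S^*,T^*)$ and the criticality of each $j\in T^*$ only tells you that $f(S^*,\overline{T^*\setminus\{j\}})=0$; it does not manufacture another small fixing assignment with the overlap pattern required by the UnCADA/UnDADA definitions.

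The paper supplies this missing mechanism via a \emph{double induction on $(p,q)$} together with a preparatory claim (\cref{split inductive step}). The inductive step identifies a positive variable with a negative one (or two negatives with each other), obtains a $(t-1,1)$-fixing assignment for the resulting minor by the inductive hypothesis, and lifts it back to $f$ as a $(t,1)$- or $(t-1,2)$-fixing assignment with a prescribed variable forced in or out. Iterating this yields \emph{two} such fixing assignments with the exact overlap structure needed, after which building the minor is mechanical. The case split is not ``one control bit vs.\ two'' as you guessed, but rather $p>t$ (room for the extra $0$-positive variable $x_{c+2d+1}$, hence UnCADA) versus $p=t$ (no such room, hence UnDADA, built from two $(t-1,2)$-fixings sharing a negative). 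Your ``degenerate case~$\Rightarrow$~ADA'' has no counterpart: ADA-freeness is used only at the induction base $q\le 1$ via \cref{ADA assignment}. Without the induction, your proposal does not get off the ground.
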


Let us first prove the following preparatory general claim, which does
not require any specific properties of $\Minion$ but captures how the
inductive hypothesis is used in the proof.

\begin{claim}
  \label{split inductive step}
  Let $\Minion$ be a unate minion such that all $(p-1,q)$-unate $f \in \Minion$ and all
  $(p,q-1)$-unate $f \in \Minion$ have a $(t-1,1)$-fixing
  assignment.  Then for any $(p,q)$-unate $f \in \Minion$, either
  $f$ has a $(t-1,1)$-fixing assignment, or the following conditions all hold:
  \begin{enumerate}
  \item[(a)] For any $i \in [p]$ and $j \in [q]$, $f$ has a $(t,1)$-fixing assignment $(S, T)$ such that $i \in
    S$ and $j \not\in T$.
  \item[(b)] For any $i \in [p]$ and $j \in [q]$, $f$ has a $(t-1,2)$-fixing assignment $(S,T)$ such that $i \not\in
    S$ and $j \in T$
  \item[(c)] For any $j, j' \in [q]$, $f$ has a $(t-1,2)$-fixing assignment $(S, \{j,j'\})$.
  \end{enumerate} 
\end{claim}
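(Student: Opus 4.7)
The plan is to prove the three conditions (a), (b), (c) separately, each via an appropriate minor construction combined with the inductive hypothesis. Throughout, I will assume $f$ has no $(t{-}1,1)$-fixing assignment (otherwise we are done), and the recurring key observation is that whenever a $(t{-}1,1)$-fixing of a minor fails to involve an identified variable, lifting it produces a $(t{-}1,1)$-fixing of $f$ and contradicts the assumption.

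For (c), given $j \ne j'$ in $[q]$, I would form the minor $g$ of $f$ by identifying $y_j \sim y_{j'}$. Since $g$ is $(p,q{-}1)$-unate, by the inductive hypothesis it admits some $(t{-}1,1)$-fixing $(\tilde S, \tilde T)$. Either the merged negative variable is not in $\tilde T$, in which case the lift is a $(t{-}1,1)$-fixing of $f$ (contradiction), or the merged variable is in $\tilde T$, in which case the lift is precisely a $(t{-}1,2)$-fixing $(\tilde S, \{j, j'\})$, establishing (c).

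For (a) and (b), the plan is to use a simultaneous identification. Given $i \in [p], j \in [q]$, pick $i' \ne i$ and $j' \ne j$ and identify both $x_i \sim x_{i'}$ and $y_j \sim y_{j'}$. The resulting minor $g$ is $(p{-}1, q{-}1)$-unate, and by induction admits a $(t{-}1,1)$-fixing $(\tilde S, \tilde T)$. I split into four cases depending on whether the merged positive belongs to $\tilde S$ and whether the merged negative belongs to $\tilde T$. ``Neither in'' lifts to a $(t{-}1,1)$-fixing of $f$ and is impossible. ``Merged positive only in $\tilde S$'' lifts to a $(t,1)$-fixing of $f$ with $i \in S$ and $\tilde T$ a singleton disjoint from $\{j,j'\}$, which is exactly (a) for this pair $(i,j)$. ``Merged negative only in $\tilde T$'' lifts to a $(t{-}1,2)$-fixing with $T = \{j,j'\}$ and $i \notin S$, which is exactly (b). The remaining ``both merged in'' case lifts to a $(t,2)$-fixing with $i \in S$ and $T = \{j,j'\}$, which is too large to give (a) or (b) directly.

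The hard part will be the fourth ``both merged in'' case, since the inductive hypothesis only yields some fixing and we cannot force it into one of the favourable cases. The plan is to vary the choice of $(i',j')$: if for some pair the fixing lands in one of the first three cases we are done. Otherwise every pair forces the fourth case, producing an abundance of $(t,2)$-fixings of $f$ whose $T$-sets all contain $j$. In this exceptional scenario I intend to iterate the construction on a larger identification, for instance merging $x_i$ with two other positives simultaneously with $y_j \sim y_{j'}$, and apply the inductive hypothesis to the resulting $(p{-}2,q{-}1)$-unate minor. A careful bookkeeping of the cases that arise, combined with the standing assumption that $f$ has no $(t{-}1,1)$-fixing, should force the appearance of the favourable configurations and yield (a); (b) then follows by the same argument with the roles of positive and negative variables interchanged.
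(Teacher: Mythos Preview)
Your argument for (c) is correct and matches the paper's. The issue is your plan for (a) and (b).

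First, a technical gap: by identifying $x_i \sim x_{i'}$ \emph{and} $y_j \sim y_{j'}$ you produce a $(p{-}1,q{-}1)$-unate minor, but the hypothesis of the claim only covers $(p{-}1,q)$- and $(p,q{-}1)$-unate functions. In the ambient induction of \cref{thm:split} this would be harmless, but the claim as stated does not give you this.

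More seriously, your ``both merged in'' case is a genuine obstacle, not just bookkeeping. Varying $i',j'$ need not help: nothing prevents every choice of $(i',j')$ from landing in case~4, and your fallback of merging $x_i$ with \emph{two} other positives makes things worse, since now the favourable case lifts to a $(t{+}1,1)$-fixing rather than the $(t,1)$-fixing that (a) requires. There is no evident way to close this loop.

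The paper avoids all of this with a single move you did not consider: identify the positive variable $x_i$ \emph{directly with} the negative variable $y_j$. Since the minion is unate, the merged variable $z$ is either positive or negative in the minor $g$, so $g$ is $(p,q{-}1)$-unate or $(p{-}1,q)$-unate and the hypothesis applies. If $z$ lies in the resulting $(t{-}1,1)$-fixing $(S',T')$, one checks (using that $f$ is non-positive in $y_j$ and non-negative in $x_i$) that the lift is already a $(t{-}1,1)$-fixing of $f$, a contradiction. Hence $z\notin S'\cup T'$, so the fixing is free in $z$; setting $z=1$ in the lift gives $f(S'\cup\{i\},\overline{T'})=1$, which is (a), while setting $z=0$ gives $f(S',\overline{T'\cup\{j\}})=1$, which is (b). Both conclusions come from the same minor with no residual cases.
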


\begin{proof}
  For (a) and (b), identify the positive variable $i \in [p]$ and the
  non-positive $j \in [q]$ to obtain a minor $g$.  The function $g$ is
  either a $(p-1,q)$-unate or $(p,q-1)$-unate function, so by the
  assumption of the lemma, $g$ has a $(t-1,1)$-fixing assignment
  $(S',T')$.  Note that the newly created variable cannot be included
  in the fixing assignment -- if it was, it would either correspond to
  a $(t-1,2)$-fixing assignment which sets the negative variable $j$ to
  $1$, or a $(t,1)$-fixing assignment which sets the positive
  variable $i$ to $0$, and in either case one variable can be dropped
  to obtain a $(t-1,1)$-fixing assignment for $f$.

  This implies that $(S' \cup \{i\}, T')$ and $(S', T' \cup \{j\})$
  are fixing assignments of $f$, establishing items (a) and (b).

  The last item (c) follows in a similar manner by identifying the two
  negative variables $j$ and $j'$, obtaining a $(p,q-1)$-unate
  function.  The resulting minor $g$ must now have a $(t-1,1)$-fixing
  where the non-positive variable used is the newly identified
  variable.
\end{proof}

\begin{proof}[Proof of \cref{thm:split}]
  Consider a general $(p,q)$-unate function $f \in \Minion$.  We
  do double induction on $p$ and $q$.  
  The base cases $p \le t-1$ are trivial as $\Minion$ is idenpotent.  For the inductive step, we have to treat the first step $p=t$ differently, so we divide into two cases $p=t$ and $p > t$.

  \paragraph{Case 1: $p=t$}

  The case $p=t$ is special and requires a bit of extra care (in fact
  it may be more instructive to read the general case $p > t$ below
  first).  In this case the base cases $q \le 2$ follow by $f$ being
  folded -- any folded $(p,q)$-unate function has a $(\lceil p/2
  \rceil, \lceil q/2 \rceil)$-fixing assignment.  So we may assume $q
  \ge 3$.

  For $i \in [p]$ and $j,j' \in [q]$, let us say that the triple $(i,
  j, j')$ is fixing if $([p]-i, \{j,j'\})$ is a $(t-1,2)$-fixing
  assignment of $f$.
  By \cref{split inductive step} we have the following two properties:
  \begin{itemize}
  \item For every $i \in [p]$ and $j \in [q]$, there is a $j' \in [q]$ such that $(i, j, j')$ is fixing.
  \item For every $j, j' \in [q]$, there is an $i \in [p]$ such that $(i, j, j')$ is fixing.
  \end{itemize}

  These two properties imply that there are two fixing triples $(i_1,
  j_1, j')$ and $(i_2, j_2, j')$ with $i_2 \ne i_1$ and $j_2 \ne j_1$.
  (To see this, note that if this was not the case then the second
  property would imply that there is a unique $i$ such that for every
  $j,j'$ only $(i,j,j')$ is fixing, but this contradicts the first
  property.)
  
  Let $S_1 = [p]-i_1$, $T_1 = \{j_1, j'\}$, $S_2 = [p]-i_2$, and $T_2
  = \{j_2, j'\}$ -- these are now two $(t-1,2)$-fixing assignments
  such that $S_1 \ne S_2$ but not disjoint, and $T_1 \ne T_2$ but not
  disjoint.
  Create a minor $g(x, y)$ of $f$ on
  $\{0,1\}^{t} \times \{0,1\}^{4}$ as follows:
  \begin{itemize}
  \item $x$ are the $t$ coordinates of $[p]$, with $i_1$ as $x_1$ and $i_2$ as $x_t$. 
  \item $y_1,y_2,y_3$ are the $3$ coordinates $j_1,j',j_2$ of $T_1 \cup T_2$.
  \item $y_4$ is the remaining $q-3$ coordinates of $[q]$, identified to a single variable.
  \end{itemize}
  
  We can now check that $g$ is in fact a $t$-UnDADA.  It is folded and idempotent since $f$ is.  Let us verify the other properties.
  \begin{enumerate}
  \item[(a)]
  $g(1^{t-1}0, 0011) = f(S_1, \overline{T_1}) = 1$\\
  $g(01^{t-1}, 1001) = f(S_t, \overline{T_2}) = 1$
  \item[(b)] If $w(x) \le t-1$ and $w(y) \ge 2$ then $g(x, y1) = 0$ since $f$ does not have a $(t-1,1)$-fixing assignment.
  \end{enumerate}

  This contradicts the assumption that $\Minion$ does not have a  $t$-UnDADA, so
  $f$ must have a $(t-1,1)$-fixing assignment.

  \paragraph{Case 2: $p > t$.}

  When $p > t$, the base cases $q \in \{0,1\}$ follow by $\Minion$ being $t$-ADA-free and \cref{ADA assignment},
  so we may assume $q \ge 2$.  By \cref{split inductive step} and the
  inductive hypothesis, $f$ has a $(t,1)$-fixing assignment
  $(S_1,T_1 = \{j_1\})$.  Applying \cref{split inductive step} again to some $i
  \not\in S_1$ (such $i$ exists since $p \ge t+1$) and $j_1$, we
  see that $f$ also has a $(t,1)$-fixing assignment $(S_2, T_2 = \{j_2\})$ such that $i
  \in S_2$ and $T_2 = \{j_2\} \ne \{j_1\} = T_1$.  In particular $S_1 \ne
  S_2$ and $T_1 \ne T_2$.
  
  Let $c = |S_1 \cap S_2|$ and $d = |S_1 \setminus S_2| = |S_2 \setminus S_1|$.  Create a minor $g(x, y)$ of $f$ on
  $\{0,1\}^{c+2d+1} \times \{0,1\}^{3}$ as follows:
  \begin{itemize}
  \item $x_1 \ldots x_d$ are the coordinates of $S_1 \setminus S_2$. 
  \item $x_{1+d} \ldots x_{c+d}$ are the coordinates of $S_1 \cap S_2$.
  \item $x_{c+d+1} \ldots x_{c+2d}$ are the coordinates of $S_2 \setminus S_1$.
  \item $x_{c+2d+1}$ are the remaining positive coordinates, $[p] \setminus (S_1 \cup S_2)$, identified to a single variable.
  \item $y_1$ is the coordinate $T_1$.
  \item $y_2$ is the coordinate $T_2$.
  \item $y_3$ are the remaining negative coordinates $[q] \setminus (T_1 \cup T_2)$, identified to a single variable.
  \end{itemize}
  Some of these (in particular the variables identified into $x_{c+2d+1}$ and $y_3$) might be empty, in which case
  $g$ simply does not depend on that variable.
  
  We can now check that $g$ is in fact a $(c, d)$-UnCADA.  It is idempotent and folded since $f$ is.  Let us verify the other properties.
  \begin{enumerate}
  \item[(a)] $g(1^d1^c0^d0, 011) = f(S_1, \overline{T_1}) = 1$\\
  $g(0^d1^c1^d0, 101) = f(S_2, \overline{T_2}) = 1$
  \item[(b)] For any $x \in \{0,1\}^{c+2d}$ and $y \in \{0,1\}^2$ such that $w(x) \le c+d-1=t-1$ and $w(y) \ge 1$, it holds that $g(x0, y1)=0$ since $f$ does not have a $(t-1,1)$-fixing assignment.
  \end{enumerate}

  This contradicts the assumption that $\Minion$ is $t$-UnCADA-free, so
  $f$ must have a $(t-1,1)$-fixing assignment.  This concludes the $p > t$ case of the proof, and finishes the overall proof.
\end{proof}

\begin{example}
\label{example:UnDADA}
    The main reason for the introduction of the hardness condition given by \cref{thm:split} that uses UnCADA and UnDADA is to capture the hardness of $\fiPCSP(\{\bstr{0011}, \bstr{0101}, \bstr{0110}, \bstr{1000}, \bstr{1001}\}, \OR)$. This predicate is the lone example for $k=4$ of a $\PCSP$ whose hardness is not captured by any of the other hardness conditions (Theorems \ref{match+ADA}, \ref{invmatch+ADA} and \ref{thm:unate+ADA}). The inspiration for \cref{thm:split} came directly from studying the polymorphisms of this $\PCSP$.  
\end{example}

\section{Results for Small Arities} \label{sec:small}

In this section we use the tools developed to classify predicates of small arities.  It turns out that we are able to completely characterize all predicates of arity $k \le 4$.  For arity $k=5$, we are able to classify the vast majority of predicates but there is a small fraction ($\approx 10^{-5}$) of predicates left unclassified by the conditions in \cref{sec:hard}.   We do not study the case of $k \geq 6$ due to the large number of different predicates which makes them difficult to study individually even by computer.  

There are several algorithmic aspects to consider when applying our algorithm and hardness conditions to a given Promise-SAT problem.  These are discussed in some detail in \cref{sec:computational aspects}.

\subsection{Results for Promise-SAT (with idempotence)}
\label{sec:small arities fiPCSP}

Let us start by discussing the complexity of $\fiPCSP(A, \OR)$ for all predicates of arity $k \le 5$.  

\subsubsection{Equivalence Classes}
\label{sec:equivalence classes}

Since the $\OR$ predicate is symmetric, taking a predicate $A$ and permuting the $k$ input bits has no effect on the complexity of $\fiPCSP(A, \OR)$.  Thus any two predicates which differ only by such a permutation are considered equivalent, and we only consider one predicate from each equivalence class.  Thus while the total number of predicates of arity $k$ equals $2^{2^k-1}-1$ (one $-1$ comes from $A$ not accepting the all-$0$ string, the other from $A$ being non-empty), the total number of equivalence classes is approximately $\frac{2^{2^k-1}}{k!}$ (but a bit larger because some equivalence classes are smaller than $k!$).

The representative of each equivalence class is chosen as follows. Each $A$ is naturally identified by the integer $r(A) = \sum_{x \in A} 2^{b(x)}$, where $b(x)$ means that we interpret $x \in \{0,1\}^k$ as an integer in base $2$.  E.g., the predicate $A = \{\bstr{001}, \bstr{011}\}$ on three bits would be identified by $r(A) = 2^1 + 2^3 = 10$.  For each equivalence class of predicates we choose the member for which this identifier is minimized as the representative of the class.
For example, in the case of $k=2$, we have $5$ different equivalence classes.  The representatives of these are $\{\bstr{01}\}, \{\bstr{01}, \bstr{10}\}, \{\bstr{11}\}, \{\bstr{01}, \bstr{11}\}, \{\bstr{01}, \bstr{10}, \bstr{11}\}$.

\subsubsection{Summary}

It turns out that for $k \leq 4$, we are able to determine the complexity of $\fiPCSP(A, \OR)$ for all $A$ while for arity $k=5$,  we are able to classify all except $189$ out of the over $18.6$ million predicates.   A summary is given in \cref{tab:overview}.

\begin{table}[h]
    \caption{Summary of classification of complexity of $\fiPCSP(A, \OR)$ for $A$ of arity up to $5$.\label{tab:overview}}
    \centering
    \begin{tblr}{
      colspec = {l|rrrr},
      rowhead = 1,
      row{odd} = {blue9},
      row{1} = {gray9},
      columns = {colsep=4pt},
    } 
         & Total & Tractable & NP-hard & Unknown \\
    \hline
        $k=2$ & $5$ & $5$ & $0$ & $0$ \\
        $k=3$ & $39$ & $33$ & $6$ & $0$ \\
        $k=4$ & $1\,991$ & $956$ & $1\,035$ & $0$ \\
        $k=5$ & $18\,666\,623$ & $1\,290\,862$ & $17\,375\,572$ & $189$ \\
    \end{tblr}
\end{table}

For the tractable predicates, it is interesting to analyze which families of block-symmetric polymorphisms are admitted by $\fiPCSP(A, \OR)$.   As discussed in \cref{sec:tract} we have only seen five families of polymorphisms that enable us to apply the BLP+AIP algorithm.
The relative frequencies of the five families are given in \cref{tab:tractable_overview}.  It is interesting to note that majority is, by a huge margin, the most useful polymorphism.

\begin{table}[h]
    \caption{Overview of sources of tractability for $\fiPCSP(A, \OR)$.  The number after the slash is the count of predicates that admit the corresponding family of polymorphisms. The number before the slash is the count of predicates that \emph{only} admit this family and none of the other four families.\label{tab:tractable_overview}}
    \centering
    \begin{tblr}{
      colspec = {l|rrrrr},
      rowhead = 1,
      row{odd} = {blue9},
      row{1} = {gray9},
      columns = {colsep=4pt},
    } 
         & $\MajFam$ & $\ParFam$ & $\ATFam$ & $\idneg{\MajFam}$ & $\idneg{\ParFam}$ \\
    \hline
        $k=2$ & $1/5$ & $0/4$ & $0/4$ & $0/4$ & $0/4$ \\
        $k=3$ & $13/31$ & $1/19$ & $0/18$ & $0/17$ & $0/17$ \\
        $k=4$ & $720/915$ & $31/219$ & $0/172$ & $1/163$ & $0/162$ \\
        $k=5$ & $1\,267\,621/1\,282\,927$ & $7\,259/21\,557$ & $15/11\,485$ & $260/11\,970$ & $11/11\,182$ \\
    \end{tblr}
\end{table}

For the hardness results, we have several different conditions which guarantee small fixing assignments.  The frequencies at which they apply are shown in \cref{tab:hard_overview}.  It may seem surprising that all four conditions give essentially the same set of hard predicates, but this is because most predicates are hard for many reasons.  A perhaps more fair picture is given by only looking at the minimal hard $A$'s.  Note also that \cref{thm:unate+ADA} (for the unate case) does not yield any hardness results that are not also given by one of the three other results.  This is not surprising as the more complicated \cref{thm:split} was designed to provide stronger hardness in the unate case.  However we are not aware of a proof that \cref{thm:split} covers all results implied by \cref{thm:unate+ADA} so we include also the latter in the overview.

\begin{table}[h]
    \caption{An overview of all known NP-hard predicates for each $k$. The table counts how many different predicates satisfy each category. The second number of each pair is the number of predicates that satisfy this condition while the first only counts the number of predicates that satisfy no other condition.\label{tab:hard_overview}}
    \centering
    \begin{tblr}{
     colspec = {l|rrrr},
      rowhead = 1,
      row{odd} = {blue9},
      row{1} = {gray9},
      columns = {colsep=4pt},
    } 
         & \cref{match+ADA} & \cref{invmatch+ADA} & \cref{thm:unate+ADA} & \cref{thm:split} \\
    \hline
        $k=2$ & $0/0$ & $0/0$ & $0/0$ & $0/0$ \\
        $k=3$ & $0/6$ & $0/6$ & $0/6$ & $0/6$ \\
        $k=4$ & $2/1\,029$ & $0/1\,031$ & $0/1\,030$ & $1/1\,032$ \\
        $k=5$ & $409/17\,368\,311$ & $93/17\,373\,401$ & $0/17\,371\,388$ & $687/17\,355\,043$ \\
        \end{tblr}
\end{table}

\subsubsection{Detailed results}
\label{sec:fiPCSP detailed}

Let us now provide a more detailed inspection of the tractable and NP-hard predicates.  For each arity up to $4$, we describe the minimal NP-hard, and maximal tractable predicates.  Minimal/maximal here means that if we remove/add one satisfying assignment from/to that predicate, then the complexity for the predicate changes.  By the monotonicity of $\fiPCSP(A, \OR)$ (\cref{pcsp monotonicity fact}), the complexity of any other predicate can be derived from these two lists of minimal NP-hard and maximal tractable predicates.

For the tractable predicates, we indicate which families of polymorphisms enable applying the BLP+AIP algorithm for this predicate, and for the NP-hard predicates we indicate which of the different hardness conditions yields the hardness result.

\paragraph{Arity $2$.}
In the case of $k=2$, the situation is very simple as all $5$ predicates are tractable -- they can be solved by $\twoSAT$.  In polymorphism terminology they all allow the $\MajFam$ family.

\paragraph{Arity $3$.}

For $k=3$ there are a total of $39$ non-equivalent predicates of which $33$ are tractable and $6$ are NP-hard.  There are two maximal easy predicates and these are given in \cref{tab:3hardest_easy}. A \texttt{*} indicates that this coordinate can take any value.   It is easy to see that the first predicate is parity and the second $\twoSAT$ in the first two variables.  There is a single minimal NP-hard predicate, given in \cref{tab:3easiest_hard}.  It is ``$1$-in-$3$-SAT'' augmented with a single assignment of weight two.


\begin{longtblr}[
  caption = {A list of the two maximal tractable predicates for $k=3$. The second number in the last column indicates the number of tractable predicates that are subsets of this predicate.  The first number counts the number of such predicates not subset of any other maximal tractable predicate.},
  label = {tab:3hardest_easy},
]{
  colspec = {l@{}l|cccccr},
  rowhead = 1,
  row{odd} = {blue9},
  row{1} = {gray9},
  columns = {colsep=4pt},
} 
     & Predicate & $\MajFam$ & $\ParFam$ & $\ATFam$ & $\idneg{\MajFam}$ & $\idneg{\ParFam}$ & Dep. \\
\hline
    1. & $\{\bstr{001}, \bstr{010}, \bstr{100}, \bstr{111}\}$ &  & \gcmark &  &  &  & $2/7$ \\
    2. & $\{\bstr{*01}, \bstr{*10}, \bstr{*11}\}$ & \gcmark &  &  &  &  & $26/31$ \\
\end{longtblr}


\begin{longtblr}[
  caption = {The only minimal NP-hard predicate for $k=3$.},
  label = {tab:3easiest_hard},
]{
  colspec = {l|cccc},
  rowhead = 1,
  row{odd} = {blue9},
  row{1} = {gray9},
  columns = {colsep=4pt},
} 
    Predicate & $\ref{match+ADA}$ & $\ref{invmatch+ADA}$ & $\ref{thm:unate+ADA}$ & $\ref{thm:split}$ \\
\hline
    $\{\bstr{001}, \bstr{010}, \bstr{011}, \bstr{100}\}$ & \gcmark & \gcmark & \gcmark & \gcmark \\
\end{longtblr}

\paragraph{Arity $4$.}
For $k=4$ there are a total of $1991$ predicates of which $956$ are tractable and $1035$ are NP-hard. There are six maximal tractable predicates, given in \cref{tab:4hardest_easy}.   Easy to recognize predicates are parity of all variables (predicate number 3) and parity of the first three variables (number 4).  Our old friend $\twoSAT$ is present as number 5 while the sixth predicate is (non-strict) majority of all four variables.  The first two predicates are less standard.  There are $14$ minimal NP-hard predicates of arity $4$, given in \cref{tab:4easiest_hard}. These are more difficult to name and we indicate what properties we can establish of the polymorphisms.


\begin{longtblr}[
  caption = {A list of the 6 maximal tractable predicates for $k=4$. The last column is as in Table~\ref{tab:3hardest_easy} and the $*$s indicate that this position is free to take any value.},
  label = {tab:4hardest_easy},
]{
  colspec = {l@{}l|cccccr},
  rowhead = 1,
  row{odd} = {blue9},
  row{1} = {gray9},
  columns = {colsep=4pt},
} 
     & Predicate & $\MajFam$ & $\ParFam$ & $\ATFam$ & $\idneg{\MajFam}$ & $\idneg{\ParFam}$ & Dep. \\
\hline
    1. & $\{\bstr{0011}, \bstr{0101}, \bstr{0110}, \bstr{1000}\}$ &  &  & \gcmark &  & \gcmark & $1/7$ \\
    2. & $\{\bstr{0011}, \bstr{0100}, \bstr{0110}, \bstr{1000}, \bstr{1001}\}$ &  &  &  & \gcmark &  & $1/17$ \\
    3. & \begin{tabular}{@{}l@{}l@{}l@{}l@{}l@{}l@{}l}$\{ $&$\bstr{0001},\,$&$ \bstr{0010},\,$&$ \bstr{0100},\,$&$ \bstr{0111},\,$&$ \bstr{1000},\,$&$ \bstr{1011}, $ \\
 &$ $&$  $&$  $&$  $&$ \bstr{1101},\,$&$ \bstr{1110}\}$\end{tabular} &  & \gcmark &  &  &  & $11/34$ \\
    4. & $\{\bstr{*001}, \bstr{*010}, \bstr{*100}, \bstr{*111}\}$ &  & \gcmark &  &  &  & $24/79$ \\
    5. & $\{\bstr{**01}, \bstr{**10}, \bstr{**11}\}$ & \gcmark &  &  &  &  & $684/905$ \\
    6. & \begin{tabular}{@{}l@{}l@{}l@{}l@{}l@{}l@{}l}$\{ $&$\bstr{0011},\,$&$ \bstr{0101},\,$&$ \bstr{0110},\,$&$ \bstr{0111},\,$&$ \bstr{1001},\,$&$ \bstr{1010}, $ \\
 &$ $&$ \bstr{1011},\,$&$ \bstr{1100},\,$&$ \bstr{1101},\,$&$ \bstr{1110},\,$&$ \bstr{1111}\}$\end{tabular} & \gcmark &  &  &  &  & $10/179$ \\
\end{longtblr}

\begin{longtblr}[
  caption = {A list of the 14 minimal NP-hard predicates for $k=4$. The last column called dependency lists how many predicates are NP-hard by direct implication and how many of these which are not implied by any other predicate.},
  label = {tab:4easiest_hard},
]{
  colspec = {l@{}l|ccccr},
  rowhead = 1,
  row{odd} = {blue9},
  row{1} = {gray9},
  columns = {colsep=4pt},
} 
     & Predicate & $\ref{match+ADA}$ & $\ref{invmatch+ADA}$ & $\ref{thm:unate+ADA}$ & $\ref{thm:split}$ & Dep \\
\hline
    1. & $\{\bstr{0001}, \bstr{0010}, \bstr{0011}, \bstr{0100}\}$ & \gcmark & \gcmark & \gcmark & \gcmark & $45/570$ \\
    2. & $\{\bstr{0001}, \bstr{0011}, \bstr{0101}, \bstr{0110}, \bstr{1000}\}$ &  & \gcmark & \gcmark & \gcmark & $1/360$ \\
    3. & $\{\bstr{0011}, \bstr{0100}, \bstr{0111}, \bstr{1000}\}$ & \gcmark & \gcmark & \gcmark & \gcmark & $36/694$ \\
    4. & $\{\bstr{0011}, \bstr{0101}, \bstr{0110}, \bstr{0111}, \bstr{1000}\}$ & \gcmark &  &  &  & $1/240$ \\
    5. & $\{\bstr{0010}, \bstr{0100}, \bstr{0110}, \bstr{1001}\}$ & \gcmark & \gcmark & \gcmark & \gcmark & $13/520$ \\
    6. & $\{\bstr{0010}, \bstr{0011}, \bstr{0100}, \bstr{0111}, \bstr{1001}\}$ & \gcmark & \gcmark & \gcmark & \gcmark & $5/560$ \\
    7. & $\{\bstr{0011}, \bstr{0101}, \bstr{0110}, \bstr{1000}, \bstr{1001}\}$ &  &  &  & \gcmark & $3/360$ \\
    8. & $\{\bstr{0010}, \bstr{0100}, \bstr{0111}, \bstr{1000}, \bstr{1001}\}$ & \gcmark & \gcmark &  &  & $1/398$ \\
    9. & $\{\bstr{0011}, \bstr{0100}, \bstr{0111}, \bstr{1001}, \bstr{1010}\}$ & \gcmark & \gcmark & \gcmark & \gcmark & $3/372$ \\
    10. & $\{\bstr{0001}, \bstr{0010}, \bstr{0111}, \bstr{1011}, \bstr{1100}\}$ & \gcmark & \gcmark & \gcmark & \gcmark & $2/304$ \\
    11. & $\{\bstr{0001}, \bstr{0010}, \bstr{0100}, \bstr{1000}, \bstr{1111}\}$ & \gcmark &  &  &  & $1/90$ \\
    12. & $\{\bstr{0011}, \bstr{0101}, \bstr{0110}, \bstr{1000}, \bstr{1111}\}$ & \gcmark & \gcmark & \gcmark & \gcmark & $1/220$ \\
    13. & $\{\bstr{0001}, \bstr{0010}, \bstr{0111}, \bstr{1000}, \bstr{1111}\}$ & \gcmark & \gcmark & \gcmark & \gcmark & $6/279$ \\
    14. & $\{\bstr{0011}, \bstr{0100}, \bstr{0110}, \bstr{1000}, \bstr{1001}, \bstr{1111}\}$ & \gcmark & \gcmark & \gcmark & \gcmark & $1/200$ \\
\end{longtblr}

\paragraph{Arity $5$.}
For $k=5$ there are a total of $18\,666\,623$ predicates of which $1\,290\,862$ are tractable, $17\,375\,572$ are NP-hard and $189$ are unknown. There are $32$ maximal tractable predicates, described in detail in \cref{sec:tractable5}.  There is a rather large number of $241$ minimal hard predicates and we refrain from listing these.  Of more interest are the minimal and maximal of the $189$ unknown predicates.    There are $25$ and $19$ such predicates, respectively, and they can be found in \cref{sec:fiPCSP unknown list}.    It is also interesting to look at the distribution of tractable vs NP-hard predicates as a function of the number of satisfying assignments, and this is shown in \cref{fig:fiPCSP-by-weight}.

\begin{figure}[ht]
    \centering
    \includegraphics[width=0.9\textwidth]{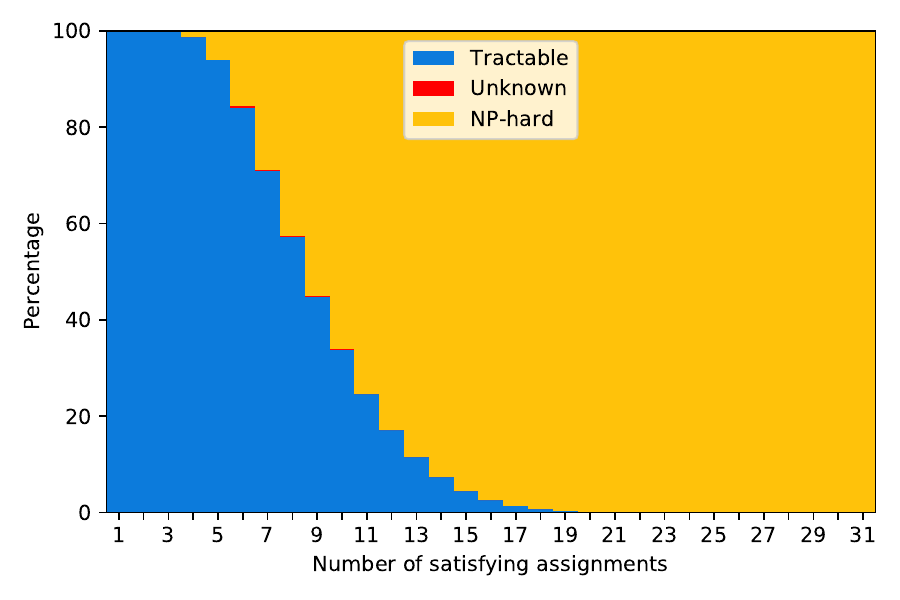}
    \caption{Distribution of predicates $A$ of arity $5$ such that $\fiPCSP(A, \OR)$ is tractable or NP-hard, by weight.  The number of unknown predicates is very small but exists for weights between 6 and 12 (both inclusive), and can be glimpsed as thin red lines at weights $6$, $7$, and $8$.}
    \label{fig:fiPCSP-by-weight}
\end{figure}

\subsection{Results for Promise-SAT (without idempotence)}
\label{small arity fPCSP}

Let us now turn to the non-idempotent setting and the complexity of $\fPCSP(A, \OR)$.  Here we use \cref{lemma:idempotent}, which as discussed in \cref{sec:non-idempotent OR} effectively says that the complexity of $\fPCSP(A, \OR)$ is captured by either $\fiPCSP(A, \OR)$ or $\fiPCSP(A \XOR 1^k, \OR)$, whichever is easier (with the second option only being possible when $1^k \not\in A$).  

In this setting, there is an additional symmetry causing the equivalence classes of predicates to be larger: if $1^k \not\in A$, then the complexity of $\fPCSP(A, \OR)$ is the same as the complexity of $\fPCSP(A \XOR 1^k, \OR)$ and thus we consider the two predicates $A$ and $A \XOR 1^k$ equivalent.  As a result the total number of non-equivalent predicates is approximately $25\%$ smaller in this setting than in the $\fiPCSP$ setting (approximately half the predicates $A$ satisfy $1^k \not\in A$ and these are paired up in equivalent pairs by this additional symmetry).

Using \cref{lemma:idempotent} together with our classification of $\fiPCSP(A, \OR)$, it is straightforward to obtain a similar classification for $\fPCSP(A, \OR)$.  For brevity we only give a summary in \cref{tab:table_overview_nonidem.tex} and refrain from performing a deeper dive into the data, as it looks largely similar to the idempotent case in \cref{sec:small arities fiPCSP}.  However one interesting detail to note is that the relative number of unknown predicates for arity $5$ is noticeably smaller here than in \cref{sec:small arities fiPCSP}.  In other words many of the unknown predicates $A$ for $\fiPCSP(A, \OR)$ are such that $\fPCSP(A, \OR)$ is easy by virtue of having non-idempotent block-symmetric polymorphisms (in particular either $\overline{\Maj}$ or $\overline{\Par}$).

\begin{longtblr}[
          caption = {Summary of classification of complexity of $\fPCSP(A, \OR)$ for $A$ of arity up to $5$.},
          label = {tab:table_overview_nonidem.tex},
]{
  colspec = {l|rrrr},
  rowhead = 1,
  row{odd} = {blue9},
  row{1} = {gray9},
  columns = {colsep=4pt},
} 
     & Total & Tractable & NP-hard & Unknown \\
\hline
    $k=2$ & $5$ & $5$ & $0$ & $0$ \\
    $k=3$ & $32$ & $28$ & $4$ & $0$ \\
    $k=4$ & $1\,549$ & $848$ & $701$ & $0$ \\
    $k=5$ & $14\,003\,603$ & $1\,253\,003$ & $12\,750\,541$ & $59$ \\
\end{longtblr}

\subsection{Results for Promise-Usefulness}\label{small arity usefuleness}

Finally let us consider the classification of which predicates $A$ are promise-useful.
As shown in \cref{promise constants irrelevant}, the two á priori possibly different notions of $\fiPCSP$-usefulness and $\fPCSP$-usefulness cannot be distinguished by BLP+AIP or small fixing assignments, so the data we provide here applies for both $\fiPCSP$-usefulness and $\fPCSP$-usefulness.

After having classified the complexity of $\fiPCSP(A, \OR)$ for all (or most) $A$, \cref{lemma:OR} gives a simple procedure to tell if a predicate $A$ is $\fiPCSP$-useful or $\fiPCSP$-useless.  It states that $A$ is $\fiPCSP$-useful if and only if there exists $b \not \in A$ such that $\fiPCSP(A \XOR b, \OR)$ is tractable.  If for a single $b$ this is the case we can conclude that $A$ is promise useful.  On the other hand we need to establish that it is NP-hard for all possible $b$ to conclude that $A$ is promise-useless.
This enables us to fully classify which predicates are promise-useful for $k\leq 4$ while for $k=5$ the complexity of several predicates remains unknown.

In addition to the already applied symmetry of the input bits (\cref{sec:equivalence classes}), there is now additional symmetry -- clearly from the characterization above it follows that $A$ and $A \XOR b$ are equivalent (w.r.t.~promise-usefulness) for any $b \not\in A$.  Thus the total number of (equivalence classes of) predicates is smaller here than in \cref{sec:small arities fiPCSP} by a factor of roughly $2^{k-1}$ (since this is the typical number of $b \not\in A$).

\cref{tab:promise_overview} gives an overview of the number of promise-useful and promise-useless (and unknown) predicates of arities up to $5$.

\begin{longtblr}[
  caption = {Summary of classification of promise-usefulness of predicates $A$ of arity up to $5$.},
  label = {tab:promise_overview},
]{
  colspec = {l|rrrr},
  rowhead = 1,
  row{odd} = {blue9},
  row{1} = {gray9},
  columns = {colsep=4pt},
} 
     & Total & Useful & Useless & Unknown \\
\hline
    $k=2$ & $4$ & $4$ & $0$ & $0$ \\
    $k=3$ & $20$ & $16$ & $4$ & $0$ \\
    $k=4$ & $400$ & $230$ & $170$ & $0$ \\
    $k=5$ & $1\,228\,156$ & $156\,135$ & $1\,071\,962$ & $59$ \\
\end{longtblr}

\paragraph{Arity $2$.}

Again for $k=2$ all predicates are useful as they are indeed useful for $\OR$.

\paragraph{Arity $3$.}

For $k=3$ it is again parity and $\twoSAT$ that give the maximal useful 
predicates as indicated in \cref{tab:promise_3hardest_useful}. 
We also have two minimally useless predicates as indicated by \cref{tab:promise_3easiest_useless}.
\begin{longtblr}[
  caption = {A list of the two maximal promise-useful predicates for $k=3$.},
  label = {tab:promise_3hardest_useful},
]{
  colspec = {l@{}l|rrr},
  rowhead = 1,
  row{odd} = {blue9},
  row{1} = {gray9},
  columns = {colsep=4pt},
} 
     & Predicate & $\MajFam$ & $\ParFam$ & Dep. \\
\hline
    1. & $\{\bstr{00*}, \bstr{01*}, \bstr{10*}\}$ & \gcmark &  & $12/15$ \\
    2. & $\{\bstr{000}, \bstr{011}, \bstr{101}, \bstr{110}\}$ &  & \gcmark & $1/4$ \\
\end{longtblr}
\begin{longtblr}[
  caption = {A list of the two minimal promise-useless predicates for $k=3$.},
  label = {tab:promise_3easiest_useless},
]{
  colspec = {l@{}l|r},
  rowhead = 1,
  row{odd} = {blue9},
  row{1} = {gray9},
  columns = {colsep=4pt},
} 
     & Predicate & Dep. \\
\hline
    1. & $\{\bstr{000}, \bstr{001}, \bstr{011}, \bstr{101}, \bstr{110}\}$ & $2/3$ \\
    2. & $\{\bstr{001}, \bstr{010}, \bstr{011}, \bstr{100}, \bstr{101}, \bstr{110}\}$ & $1/2$ \\
\end{longtblr}

\paragraph{Arity $4$.}

The maximal useful predicates for $k=4$ are shown in \cref{tab:promise_4hardest_useful}.  
The two first maximal predicates correspond to $\twoSAT$ and (non-strict) majority of $4$ variables (recall that we can negate variables -- in our ordering the representatives chosen are the negated versions of these two predicates).  The other two predicates are parity constraints on three or four variables.
It turns out that there are $5$ minimal useless predicates of arity $4$ and these are given in \cref{tab:promise_4easiest_useless}.

\begin{longtblr}[
  caption = {A list of the $4$ maximal promise-useful predicates for $k=4$.},
  label = {tab:promise_4hardest_useful},
]{
  colspec = {l@{}l|rrr},
  rowhead = 1,
  row{odd} = {blue9},
  row{1} = {gray9},
  columns = {colsep=4pt},
} 
     & Predicate & $\MajFam$ & $\ParFam$ & Dep. \\
\hline
    1. & $\{\bstr{00**}, \bstr{01**}, \bstr{10**}\}$ & \gcmark &  & $91/216$ \\
    2. & \begin{tabular}{@{}l@{}l@{}l@{}l@{}l@{}l@{}l@{}l@{}l}$\{ $&$\bstr{0000},\,$&$ \bstr{0001},\,$&$ \bstr{0010},\,$&$ \bstr{0011},\,$&$ \bstr{0100},\,$&$ \bstr{0101},\,$&$ \bstr{0110},\,$&$ \bstr{1000}, $ \\
 &$ $&$  $&$  $&$  $&$  $&$ \bstr{1001},\,$&$ \bstr{1010},\,$&$ \bstr{1100}\}$\end{tabular} & \gcmark &  & $8/132$ \\
    3. & $\{\bstr{000*}, \bstr{011*}, \bstr{101*}, \bstr{110*}\}$ &  & \gcmark & $3/21$ \\
    4. & $\{\bstr{0001}, \bstr{0010}, \bstr{0100}, \bstr{0111}, \bstr{1000}, \bstr{1011}, \bstr{1101}, \bstr{1110}\}$ &  & \gcmark & $1/15$ \\
\end{longtblr}

\begin{longtblr}[
  caption = {A list of the $5$ minimal promise-useless predicates for $k=4$.},
  label = {tab:promise_4easiest_useless},
]{
  colspec = {l@{}l|r},
  rowhead = 1,
  row{odd} = {blue9},
  row{1} = {gray9},
  columns = {colsep=4pt},
} 
     & Predicate & Dep. \\
\hline
    1. & $\{\bstr{0000}, \bstr{0111}, \bstr{1001}, \bstr{1010}, \bstr{1100}\}$ & $19/132$ \\
    2. & $\{\bstr{0001}, \bstr{0010}, \bstr{0011}, \bstr{0111}, \bstr{1001}, \bstr{1010}, \bstr{1100}\}$ & $3/105$ \\
    3. & $\{\bstr{0011}, \bstr{0100}, \bstr{0111}, \bstr{1001}, \bstr{1010}, \bstr{1100}\}$ & $10/138$ \\
    4. & $\{\bstr{0011}, \bstr{0101}, \bstr{0110}, \bstr{0111}, \bstr{1000}, \bstr{1001}, \bstr{1010}, \bstr{1100}\}$ & $1/39$ \\
    5. & $\{\bstr{0011}, \bstr{0100}, \bstr{0110}, \bstr{0111}, \bstr{1000}, \bstr{1001}, \bstr{1011}, \bstr{1100}\}$ & $1/29$ \\
\end{longtblr}

\paragraph{Arity $5$.}

For $k=5$ we have $6$ maximal useful predicates shown in \cref{tab:promise_5hardest_useful} -- the $4$ maximal predicates of arity $4$, and two obvious generalizations of them to arity $5$.  Note that the second predicate does not quite correspond to a simply majority predicate but is a bit more complicated, whereas the last predicate is simply even parity on all five variables.
There is a large number of $73$ minimal promise-useless predicates and we refrain from listing these.  Of more interest are the minimal and maximal predicates whose promise-usefulness we are unable to classify.  There are $9$ and $7$ of these, respectively, and they can be found in \cref{sec:promise unknown list}.

\begin{longtblr}[
  caption = {A list of the $6$ maximal promise-useful predicates for $k=5$.},
  label = {tab:promise_5hardest_useful},
]{
  colspec = {l@{}l|rrr},
  rowhead = 1,
  row{odd} = {blue9},
  row{1} = {gray9},
  columns = {colsep=4pt},
} 
     & Predicate & $\MajFam$ & $\ParFam$ & Dep. \\
\hline
    1. & $\{\bstr{00***}, \bstr{01***}, \bstr{10***}\}$ & \gcmark &  & $91\,485/141\,063$ \\
    2. & \begin{tabular}{@{}l@{}l@{}l@{}l@{}l@{}l@{}l}$\{ $&$\bstr{00000},\,$&$ \bstr{00001},\,$&$ \bstr{00010},\,$&$ \bstr{00011},\,$&$ \bstr{00100},\,$&$ \bstr{00101}, $ \\
 &$\bstr{00110},\,$&$ \bstr{00111},\,$&$ \bstr{01000},\,$&$ \bstr{01001},\,$&$ \bstr{01010},\,$&$ \bstr{01011}, $ \\
 &$\bstr{01100},\,$&$ \bstr{01101},\,$&$ \bstr{01110},\,$&$ \bstr{10000},\,$&$ \bstr{10001},\,$&$ \bstr{10010}, $ \\
 &$ $&$  $&$  $&$  $&$ \bstr{10100},\,$&$ \bstr{11000}\}$\end{tabular} & \gcmark &  & $355/30\,027$ \\
    3. & \begin{tabular}{@{}l@{}l@{}l@{}l@{}l@{}l@{}l}$\{ $&$\bstr{0000*},\,$&$ \bstr{0001*},\,$&$ \bstr{0010*},\,$&$ \bstr{0011*},\,$&$ \bstr{0100*},\,$&$ \bstr{0101*}, $ \\
 &$ $&$ \bstr{0110*},\,$&$ \bstr{1000*},\,$&$ \bstr{1001*},\,$&$ \bstr{1010*},\,$&$ \bstr{1100*}\}$\end{tabular} & \gcmark &  & $12\,549/60\,975$ \\
    4. & $\{\bstr{000**}, \bstr{011**}, \bstr{101**}, \bstr{110**}\}$ &  & \gcmark & $205/645$ \\
    5. & \begin{tabular}{@{}l@{}l@{}l@{}l@{}l@{}l@{}l}$\{ $&$\bstr{0001*},\,$&$ \bstr{0010*},\,$&$ \bstr{0100*},\,$&$ \bstr{0111*},\,$&$ \bstr{1000*},\,$&$ \bstr{1011*}, $ \\
 &$ $&$  $&$  $&$  $&$ \bstr{1101*},\,$&$ \bstr{1110*}\}$\end{tabular} &  & \gcmark & $58/458$ \\
    6. & \begin{tabular}{@{}l@{}l@{}l@{}l@{}l@{}l@{}l}$\{ $&$\bstr{00000},\,$&$ \bstr{00011},\,$&$ \bstr{00101},\,$&$ \bstr{00110},\,$&$ \bstr{01001},\,$&$ \bstr{01010}, $ \\
 &$\bstr{01100},\,$&$ \bstr{01111},\,$&$ \bstr{10001},\,$&$ \bstr{10010},\,$&$ \bstr{10100},\,$&$ \bstr{10111}, $ \\
 &$ $&$  $&$ \bstr{11000},\,$&$ \bstr{11011},\,$&$ \bstr{11101},\,$&$ \bstr{11110}\}$\end{tabular} &  & \gcmark & $50/150$ \\
\end{longtblr}

\section{Asymptotic Bounds for Large Arities} \label{sec:large arity}

In this section, we study promise-usefulness and $\fiPCSP(A, \OR)$ for predicates $A$ of large arity $k$, and show that almost all such $A$, even with a relatively small number of satisfying assignments, are hard.

\newcommand{\randA}{\mathcal{A}}

Let $\randA_{k,p}$ denote the distribution over a random predicate $A$ on $\{0,1\}^{k}$, where each $x \in \{0,1\}^{k}$ is included in $A$ with probability $p$, independently.  We condition our results on $A$ not containing $0^k$.  As the $p$ we consider is very small this is a very mild conditioning but allowing this to happen makes some of our arguments more uniform, avoiding special cases.

\begin{theorem}
\label{thm:random hard}
    We have
    \[
    \Pr_{A \sim \randA_{k,p}}\left[ \text{$\fPCSP(A, \OR)$ admits a non-dictator} \,|\, 0^k \not\in A \right] \le O\left(2^k (e^{-p2^{k/6}}+e^{-p^2 2^k})\right).
    \]
\end{theorem}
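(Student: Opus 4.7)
The aim is to show, with high probability over $A \sim \randA_{k,p}$ conditioned on $0^k \notin A$, that no folded non-dictator $f: \{0,1\}^\ell \rightarrow \{0,1\}$ is a polymorphism of $\fPCSP(A, \OR)$.  Equivalently, for every such $f$, an obstruction matrix $M \in A^\ell$ with $f(M) = 0^k$ must exist with high probability --- i.e., $\ell$ columns from $A$ whose induced rows all lie in $f^{-1}(0)$.

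My first step would be to reduce the task to ruling out a finite canonical list of non-dictator polymorphisms.  Since $\Pol(\fPCSP(A, \OR))$ is closed under minors, any non-dictator polymorphism yields, by variable identification, non-dictator minors of smaller arity.  At arity~$2$ every folded function is a dictator, so we focus on arity~$3$ and slightly higher, where only finitely many folded non-dictators live: $\Maj_3$, $\Par_3$, $\AT_3$, $\idneg{\Maj}_3$, $\idneg{\Par}_3$, and the other variants covered by minion-atomicity (\cref{maj atomic,par atomic,at atomic,id atomic}).  The subtle structural claim needed here is that every non-dictator folded polymorphism --- even one not lying literally in one of these five families --- reduces via minors to some canonical non-dictator of small arity.

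Next, for each canonical non-dictator $f$, I would translate ``$f \in \Pol(\fPCSP(A, \OR))$'' into an explicit combinatorial condition on $A$ and bound its probability.  For parity-type $f$ (e.g.~$\Par_3$), the condition is: for every non-zero $c \in \{0,1\}^k$, there is no ordered pair $(a, b) \in A^2$ with $a \XOR b = c$.  For each $c$, the expected number of such pairs is $\Theta(p^2 \cdot 2^k)$, and a concentration argument yields a failure probability of $e^{-\Omega(p^2 2^k)}$; a union bound over the $O(2^k)$ choices of $c$ then gives the $2^k e^{-p^2 2^k}$ term.  For majority-type $f$ (e.g.~$\Maj_3$), the condition is ``no three elements of $A$ have pairwise disjoint supports''.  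The exponent $k/6$ indicates a subtler construction --- most likely, one fixes a balanced substructure of $[k]$ admitting $2^{k/6}$ candidate witnesses, each independently in $A$ with probability $p$, giving a $(1-p)^{2^{k/6}} \le e^{-p 2^{k/6}}$ bound per structure, followed by a union bound over $O(2^k)$ base configurations to obtain the $2^k e^{-p 2^{k/6}}$ term.

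The main obstacle I anticipate is the structural reduction step: while minion-atomicity of the five standard families handles the natural candidates, arguing that more exotic non-dictator polymorphisms must also be witnessed at the canonical list requires a careful analysis of how folded minions interact with the $\OR$ output predicate.  A secondary challenge is the sharp combinatorial counting needed to achieve the specific $2^{k/6}$ exponent --- the authors explicitly note this exponent is likely not tight, which confirms that optimizing it is delicate and probably requires a clever balanced construction.  Once these two pieces are in place, summing contributions over the canonical candidates yields the claimed bound $O(2^k (e^{-p 2^{k/6}} + e^{-p^2 2^k}))$.
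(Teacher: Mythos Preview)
Your high-level strategy (reduce to bounded arity via minors, then build obstructions) matches the paper, and your treatment of the parity case is essentially correct and yields the $e^{-p^2 2^k}$ term exactly as in \cref{lemma:xor}.  However, the reduction step and the non-parity case diverge from the paper in ways that matter.

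The paper does \emph{not} reduce to the five named families at arity~$3$.  Instead, \cref{small nondictator minions} shows that any non-dictator in the minion yields a non-dictator minor of arity exactly~$6$: pick witnesses $x,y$ for dependence on two distinct coordinates $i,j$, then partition the remaining variables into four groups by the value of $(x_m,y_m)$, giving a $6$-variable minor that still depends on both $i$ and $j$.  Trying to push further down to arity~$3$ would risk collapsing to a dictator, and there is no clean claim that every non-dictator has one of the five canonical families as a minor --- so your ``canonical list'' reduction is the wrong target.

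For the non-parity case, the paper treats an \emph{arbitrary} arity-$6$ non-dictator uniformly.  Write $f(x)=\bigoplus_{i\le t} x_i \oplus g(x_{t+1},\dots,x_6)$; for each $j>t$ there is an input $x^{(j)}$ with $f(x^{(j)})=f(x^{(j)}\oplus e_j)=0$.  Now partition the $k$ rows into six blocks of size $k/6$, one per input coordinate of $f$.  In the block for coordinate $j$, each row is one of two fixed patterns (either $e_j\,z$ vs.\ $0^t y$ for $j\le t$, or $x^{(j)}$ vs.\ $x^{(j)}\oplus e_j$ for $j>t$), both of which force $f$ to be $0$ on that row.  Choosing which pattern to use in each of the $k/6$ rows of a block gives $2^{k/6}$ degrees of freedom per column, and \cref{lemma:outside} (a simple union bound over the $2^{5k/6}$ fixed bits) shows each column can be placed in $A$ except with probability $2^{5k/6}e^{-p2^{k/6}}$.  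This is where the $k/6$ exponent comes from --- not from a ``balanced substructure'' specific to majority, but from the arity of the minor.
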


In particular, for $p$ is slightly larger than $2^{-k/6}$, a random predicate $A$ from $\randA_{k,p}$ is NP-hard with high probability as $k \rightarrow \infty$.  It is not unlikely that the value for $p$ can be improved but currently we see no way of establishing hardness in the very sparse regime with a $p$ of the form $2^{o(k)-k}$, i.e., for predicates accepting only $2^{o(k)}$ strings.  

The problem is that very sparse predicates allow for all polymorphisms of small arity.  For example, suppose $A$ contains $K$ strings, and we are interested in polymorphisms of arity 6.  In this case there are $K^6$ possible obstruction matrices, each being, except for repetitions, totally random.  The probability that an individual random matrix lacks any fixed row, say $000001$ is at most $(63/64)^k$.  Thus if $K=2^{o(k)}$ all rows are likely to be present in all matrices and thus all arity 6 functions are polymorphisms.  This implies that to get results for this small range of $p$s we need to analyze polymorphisms of large arity and this seems to require new methods.

Let us turn to the proof of \cref{thm:random hard} and first prove that it is enough to study polymorphisms of arity 6.

\begin{claim}
\label{small nondictator minions}
   Let $\Minion$ be a minion containing a non-dictator function $f: \{0,1\}^\ell \rightarrow \{0,1\}$.  Then $\Minion$ contains a non-dictator function $g: \{0,1\}^6 \rightarrow \{0,1\}$ of arity $6$.
\end{claim}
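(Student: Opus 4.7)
The plan is induction on the arity $\ell$ of $f$.

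For $\ell \le 6$, we pad via the identity embedding $\pi: [\ell] \hookrightarrow [6]$ to obtain the minor $g = f_\pi \in \Minion$ given by $g(y_1, \ldots, y_6) = f(y_1, \ldots, y_\ell)$. Since $g$ does not depend on $y_{\ell+1}, \ldots, y_6$, if $g$ were a dictator $y_k$ we would necessarily have $k \le \ell$ and hence $f = x_k$, contradicting that $f$ is a non-dictator. Thus $g$ is the desired non-dictator of arity $6$.

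For the inductive step, assume $\ell \ge 7$. It suffices to exhibit an identification of two variables of $f$ yielding a non-dictator minor of arity $\ell-1$, since we can then iterate down to arity $6$. We argue the contrapositive: if every identification of two variables of $f$ is a dictator, then $f$ itself is a dictator, contradicting the hypothesis. Under this assumption, iteratively identifying more pairs shows that for every partition of $[\ell]$ into $m$ parts with $1 \le m \le \ell-1$, the corresponding minor is a dictator. The $1$-part collapse forces $f$ idempotent. For each $i$, the $2$-part partition $(\{i\}, [\ell]\setminus\{i\})$ yields a label $\sigma_i \in \{I,O\}$ depending on whether the singleton or the ``rest'' side dictates; equivalently this determines the value $f(\vec{e_i})$. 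The $3$-part partitions $(\{i\},\{j\},\text{rest})$ for $i\ne j$, evaluated on the inputs $(1,0,0)$ and $(0,1,0)$, force $\sigma_i = \sigma_j = I$ to be impossible, so at most one $\sigma_i$ equals $I$. If exactly one $\sigma_{i^*} = I$, then analysing the $3$-part partitions $(\{i^*\}, S, T)$ for every $(S,T)$ covering $[\ell] \setminus \{i^*\}$ forces $f(x) = x_{i^*}$ on every input $x$, making $f$ a dictator. If instead all $\sigma_i = O$, the $(m+1)$-part partitions with $m$ singletons and one ``rest'' block (for $1 \le m \le \ell-2$) impose $f(1_T) = 0$ for $|T| = m$ and $f(1_T) = 1$ for $|T| = \ell-m$; since $\ell \ge 7$, we can pick some $m$ so that both $m$ and $\ell-m$ lie in $\{2, \ldots, \ell-2\}$, yielding a weight class forced to be both $0$ and $1$, a contradiction.

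The main obstacle is the final case analysis in the all-$O$ scenario; the bound $\ell \ge 7$ is what guarantees two different $(m+1)$-part partitions whose weight constraints overlap on a common weight, producing the desired contradiction. This last step is just a short combinatorial check on the compatible ranges of $m$ and $\ell - m$.
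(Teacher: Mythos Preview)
Your proof is correct but takes a genuinely different route from the paper's.

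The paper gives a one–shot construction: since a non-dictator $f$ is sensitive in two distinct coordinates $i\ne j$ (witnessed by inputs $x,y$ with $f(x)\ne f(x\oplus e_i)$ and $f(y)\ne f(y\oplus e_j)$), it groups the remaining $\ell-2$ variables into four blocks according to the pattern $(x_k,y_k)\in\{0,1\}^2$ and identifies each block to a single variable. The resulting arity-$6$ minor is still sensitive at (the images of) $i$ and $j$, hence depends on at least two variables and is a non-dictator.

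Your argument instead inducts on $\ell$: pad up for $\ell\le 6$, and for $\ell\ge 7$ show by a case analysis on the dictator labels $\sigma_i$ that if every two-variable identification were a dictator then $f$ itself would be a dictator. Both the ``exactly one $I$'' and the ``all $O$'' cases go through as you describe. (A small remark: in the all-$O$ case the overlap of weight constraints already appears at $\ell\ge 4$ --- take $m=2$ --- so the $\ell\ge 7$ bound from your induction hypothesis is comfortably more than you need, not tight.)

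What each approach buys: the paper's proof is shorter and constructive, producing the arity-$6$ minor explicitly with an identified pair of sensitive coordinates. Your argument is longer and non-constructive, but it cleanly sidesteps a minor gap in the paper's opening line (the paper's ``non-dictator $\Leftrightarrow$ depends on at least two variables'' forgets constants and anti-dictators; your contrapositive handles these automatically, and they are of course trivial to patch in the paper's approach too).
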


\begin{proof}
    A function is a non-dictator iff it depends on at least two variables and hence there are inputs $x$ and $y$ and $i\not= j$ such that $f(x)\not= f(x \XOR e_i)$ and  $f(y)\not= f(y \XOR e_j)$.  We can divide the variables outside indices $i$ and $j$ into four groups depending on the value of the pair $(x_k, y_k)$.  Identify the variables inside a group by a single variable.  These four variables jointly with $i$ and $j$ gives a six variable minor of $f$ which is not a dictator.
\end{proof}

Next we have the following observation.

\begin{lemma} \label{lemma:outside}
Let $S\subseteq [k]$ be of cardinality $L$.  The probability that there is some assignment $\alpha$ on $S$ such that $A$ contains no string that is equal to $\alpha$ on $S$ is bounded by $2^{L} e^{-p 2^{k-L}}$.
\end{lemma}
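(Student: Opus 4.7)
The statement is essentially a straightforward union bound computation, so the plan is quite short. First I would fix an arbitrary partial assignment $\alpha \in \{0,1\}^S$ and count the strings in $\{0,1\}^k$ that are consistent with it on $S$: there are exactly $2^{k-L}$ such strings, since the $k-L$ coordinates outside $S$ are free. Since each string of $\{0,1\}^k$ is independently included in $A$ with probability $p$ under the distribution $\randA_{k,p}$, the probability that none of these $2^{k-L}$ strings ends up in $A$ is
\[
(1-p)^{2^{k-L}} \le e^{-p \cdot 2^{k-L}},
\]
using the standard inequality $1-p \le e^{-p}$.

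Then I would apply a union bound over all $\alpha \in \{0,1\}^S$, of which there are $2^L$, to conclude that
\[
\Pr\left[\,\exists \alpha \in \{0,1\}^S : A \cap \{x : x_S = \alpha\} = \emptyset\,\right] \le 2^L \cdot e^{-p \cdot 2^{k-L}},
\]
which is the claimed bound.

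The only subtle point worth double-checking is whether we need to condition on $0^k \notin A$ (as is done in \cref{thm:random hard}); the conditioning event has probability $1-p$, which only inflates conditional probabilities by a factor $1/(1-p) = 1+O(p)$, so it does not affect the asymptotics and in any case the unconditional bound already gives what is needed here. There is no serious obstacle: the lemma is a clean union-bound estimate and will slot directly into the main theorem where one needs to rule out the existence of small subsets of coordinates on which $A$ misses some pattern.
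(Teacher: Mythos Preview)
Your proposal is correct and matches the paper's proof essentially line for line: fix $\alpha$, observe there are $2^{k-L}$ consistent strings each absent independently with probability $1-p$, bound $(1-p)^{2^{k-L}} \le e^{-p2^{k-L}}$, then union bound over the $2^L$ choices of $\alpha$. Your remark on conditioning is a reasonable aside, though the lemma itself is stated and proved unconditionally in the paper.
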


\begin{proof}
    We do a union bound over different values of $\alpha$ (resulting in the factor $2^L$).  Once we fix $\alpha$, there are $2^{k-L}$ candidate strings that that are equal to $\alpha$ on $S$.  The probability that none of them belongs to $A$ is $(1-p)^{2^{k-L}} \leq e^{-p 2^{k-L}}$.
\end{proof}

We need one more simple fact.

\begin{lemma}\label{lemma:xor}
    Fix an $\alpha \in \{ 0, 1\}^k$, $\alpha \ne 0^k$.  The probability that $A$ does not contain two strings $a$ and $b$ such that $a \XOR b =\alpha$ is at most $e^{-p^2 2^{k-1}}$.
\end{lemma}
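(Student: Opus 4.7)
The plan is to exploit the fact that for fixed nonzero $\alpha$, the map $x \mapsto x \XOR \alpha$ is a fixed-point-free involution on $\{0,1\}^k$. This partitions $\{0,1\}^k$ into exactly $2^{k-1}$ disjoint pairs of the form $P_i = \{x_i, x_i \XOR \alpha\}$, and the event of interest — that $A$ contains no two strings differing by $\alpha$ — is precisely the event that no pair $P_i$ is entirely contained in $A$.

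Next I would exploit independence. Since the pairs $P_1, \ldots, P_{2^{k-1}}$ are disjoint and each element of $\{0,1\}^k$ is placed in $A$ independently with probability $p$, the events ``$P_i \subseteq A$'' for different $i$ are mutually independent, each occurring with probability $p^2$. Therefore
\[
\Pr\left[\,\nexists\, a, b \in A : a \XOR b = \alpha\,\right] \;=\; \prod_{i=1}^{2^{k-1}} \Pr[P_i \not\subseteq A] \;=\; (1 - p^2)^{2^{k-1}}.
\]
Applying the standard inequality $1 - x \le e^{-x}$ to each factor yields the claimed bound $e^{-p^2 \cdot 2^{k-1}}$.

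There is no real obstacle here; the whole argument is a one-line pairing argument followed by independence. The only point worth checking is that $\alpha \ne 0^k$ is essential so that the pairs are genuine pairs rather than singletons, and that we are using the unconditional distribution (the additional conditioning on $0^k \notin A$ mentioned in \cref{thm:random hard} is not needed for this particular lemma, since it is stated without conditioning).
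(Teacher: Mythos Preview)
Your proof is correct and follows essentially the same approach as the paper's: partition $\{0,1\}^k$ into the $2^{k-1}$ disjoint pairs $\{x, x \XOR \alpha\}$, use independence to get $(1-p^2)^{2^{k-1}}$, and bound by $e^{-p^2 2^{k-1}}$. The paper's proof is terser but identical in substance.
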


\begin{proof}  There are $2^{k-1}$ possible pairs $(a,b)$ with the given property.  Each pair is in $A$ with probability $p^2$ and as they are disjoint, the events are independent, and the lemma follows by a simple calculation.
\end {proof}

\begin{proof}[Proof of \cref{thm:random hard}]
    By \cref{small nondictator minions}, it suffices to prove that with high probability, $A \sim \randA_{k,p}$ has no non-dictator polymorphisms of arity $6$.  Let $f$ be a non-dictator function of arity 6 and let us construct an obstruction matrix for it. Let us first observe that $f$, for each $j$ either can written as $x_j \XOR g(x')$ where $x'$ gives the other five variables or there is some input $x^{(j)}$ such that
    $f(x^{(j)})=f(x^{(j)} \XOR e_j)=0$.  Using this observation and permuting the variables we can assume that $f(x)=\oplus_{i=1}^t x_i \XOR g(x')$ where $x'$ corresponds to the variables $x_{t+1} \ldots x_6$ and we have assignments $x^{(j)}$ such that $f(x^{(j)})=f(x^{(j)} \XOR e_j)=0$ for $t+1 \leq j \leq 6$.

If $g$ is constant then $f$ is a parity function.  In this case $t$ is odd size since $f$ is folded.  If $t=1$ then $f$ is a dictator and there is nothing to prove.  If $t=5$ we can make two columns equal reducing to the case $t=3$.  In this case, fix the first column to any element of $A$.  Then, since we condition on $0^k \not\in A$, \cref{lemma:xor} implies that we can, with high probability, pick the second and third columns to complete the obstruction matrix.  Even using a union bound over all values of the first column, the failure probability is bounded by $2^ke^{-p^2 2^{k-1}}$.

Now suppose that $g$ is non-constant.  We construct an obstruction matrix and suppose for notational convenience that $k/6$ is an integer.  For each value $1 \leq j \leq 6$ we assign $k/6$ rows.  Let $y$ and $z$ be such that $g(y)=0$ and $g(z)=1$.   Now for $1 \leq j \leq t$  we assign $k/6$ rows that are either $e_j$ followed by $z$ or $0^t$ followed by $y$,  For $t+1 \leq j \leq  6$ we have $k/6$ rows that are either $x^{(j)}$ or $x^{(j)} \XOR e_j$ where $x^{(j)}$ is as discussed above.   Note that, by construction, $f$ evaluates to 0 on each row and we claim that, with high probability, we can make the choices such that each column belongs to $A$.

Let us first make the choices for the first $t$ columns.  Note for these columns, all values outside the $k/6$ rows that belongs the column itself are independent of the choices for other rows as the two alternatives give equal values in this coordinate.   Hence using \cref{lemma:outside}, except with probability $2^{5k/6} e^{-p 2^{k/6}}$ we can make choices to make this column belong to $A$.

 The argument can now be repeated for the remaining $6-t$ columns.  The values outside its own rows are now fixed and we can apply \cref{lemma:outside}.
\end{proof}

\begin{remark}
If we do not allow negations (and hence are interested in $f$ which are not folded), then the equivalent of \cref{thm:random hard} still holds.  In this situation we need to study $\PCSP(A, \NAE)$.  In the above proof we used twice that a potential polymorphism is folded.  Firstly to conclude that if $f$ is a parity then this is of odd size.  It is, however, straightforward to deal with parities of even size.  We also used the fact that there are inputs $x^{(j)}$ such that $f(x^{(j)})=f(x^{(j)} \oplus e_j)=0$.  This is no longer true if $f$ is not folded.  The corresponding fact is that $f$ takes the value $b$ on at least half the inputs then for each $j$ it is possible to find a $x^{(j)}$ such that $f(x^{(j)})=f(x^{(j)} \oplus e_j)=b$.  It is then possible to find an obstruction matrix where the output is $b^k$. The rest of the proof is unchanged.  We omit the details.
\end{remark}

As observed above the probability that $A$ is useful for $\OR$ is bounded by $2^{-\omega(k)}$ whenever $p = \omega (k 2^{-k/6})$.  The analysis that $A$ is useful for any $\OR \oplus b$ is identical and as there are only $2^k$ possible values for $b$ we conclude that any such $A$ is useless except with probability $2^{-\omega(k)}$.  We record this as a corollary.

\begin{corollary}
\label{random useless}
Suppose $A$ is selected according to $\randA_{k,p}$.  If  $p = \omega (k 2^{-k/6})$ then, except with probability $2^{-\omega(k)}$, we have that $A$ is promise-useless.  This is true with or without negations, as well as with or without allowed constants.
\end{corollary}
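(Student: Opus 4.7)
The plan is to derive the corollary from \cref{thm:random hard} via a union bound over at most $2^k$ shifts. By \cref{lemma:OR}, $A$ is $\fiPCSP$-useful (or $\fPCSP$-useful) if and only if $\fiPCSP(A \XOR b, \OR)$ (or $\fPCSP(A \XOR b, \OR)$) is tractable for some $b \notin A$. The strategy is to show that, with probability $1 - 2^{-\omega(k)}$ over $A \sim \randA_{k,p}$, every such $b$ produces an NP-hard problem.

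The connection between \cref{thm:random hard} and NP-hardness is clean. If $\fPCSP(A', \OR)$ admits no non-dictator polymorphism, then every polymorphism is of the form $x_i$ or $\neg x_i$, each of which has a $1$-fixing assignment. By \cref{thm:fixing}, $\fPCSP(A', \OR)$ is then NP-hard. The same conclusion holds for $\fiPCSP(A', \OR)$ since $\Pol(\fiPCSP(A', \OR)) \subseteq \Pol(\fPCSP(A', \OR))$, so every polymorphism there is likewise a dictator with a $1$-fixing assignment.

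Next I would exploit the XOR-invariance of $\randA_{k,p}$: for any fixed $b$, the distribution of $A \XOR b$ is itself $\randA_{k,p}$, and the event $0^k \notin A \XOR b$ is precisely the event $b \notin A$. Hence for each $b$, \cref{thm:random hard} applied to $A \XOR b$ yields
\[
\Pr\bigl[\fPCSP(A \XOR b, \OR) \text{ admits a non-dictator} \,\bigm|\, b \notin A\bigr] \le O\bigl(2^k(e^{-p \cdot 2^{k/6}} + e^{-p^2 \cdot 2^k})\bigr).
\]
A union bound over the at most $2^k$ possible values of $b$ then gives an overall bound of $O(2^{2k}(e^{-p \cdot 2^{k/6}} + e^{-p^2 \cdot 2^k}))$ for the probability that some $b \notin A$ makes the corresponding Promise-SAT problem tractable. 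Plugging in $p = \omega(k \cdot 2^{-k/6})$ yields $p \cdot 2^{k/6} = \omega(k)$, so the first term is $2^{-\omega(k)}$ even after multiplication by the $2^{2k}$ prefactor, and the second term is vastly smaller. For the non-folded setting the identical argument goes through via the remark following \cref{thm:random hard} (replacing $\OR$ by $\NAE$), and the idempotent/non-idempotent distinction is subsumed by the $\Pol$-inclusion noted above.

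There is no really substantial obstacle: the proof reduces to bookkeeping once one spots that all four flavors collapse to the same ``no non-dictator polymorphism'' question via \cref{thm:fixing}, and once the XOR-invariance of the product distribution $\randA_{k,p}$ is exploited to make the same tail bound apply to every shift $A \XOR b$ uniformly. The only minor quantitative point to verify is that the $2^{2k}$ prefactor from combining the theorem's internal $2^k$ with the outer union bound is still dominated by $e^{-p \cdot 2^{k/6}}$ in the chosen regime of $p$, which it clearly is.
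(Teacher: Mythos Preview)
Your proposal is correct and follows essentially the same route as the paper: reduce promise-usefulness to tractability of some $\fPCSP(A \XOR b, \OR)$ via \cref{lemma:OR}, observe that ``only dictators'' forces NP-hardness (you make this explicit through \cref{thm:fixing}, which the paper leaves implicit), use the XOR-invariance of $\randA_{k,p}$ so that \cref{thm:random hard} applies verbatim to every shift, and finish with a union bound over the $2^k$ values of $b$. Your explicit check that the extra $2^{2k}$ prefactor is absorbed by $e^{-p\,2^{k/6}} = 2^{-\omega(k)}$ when $p = \omega(k\,2^{-k/6})$ is exactly the quantitative point the paper glosses over.
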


We next proceed to prove BLP+AIP is only applicable to very sparse random predicates.

\subsection{Threshold for the BLP+AIP Algorithm}

In this section it is easier to work in $\pm 1$-notation with $-1$ taking the role of $1$.  We start with a definition.  

\begin{definition}
Let $\epsilon > 0$.  A set, $A \subseteq \{ -1, 1\}^k$ is {\em $\epsilon$-somewhat spread} if for any unit vector $u$ there is an $a \in A$ such that $\scalprod{a}{u} \leq -\epsilon$.
\end{definition}

We get an immediate consequence.
\begin{lemma}\label{lemma:addone}
    Suppose $\Vert v \Vert \geq k/\epsilon \ge 1$ and that $A \subseteq \{-1,1\}^k$ is $\epsilon$-somewhat spread.  Then there is an $a\in A$ such that $\Vert v+a \Vert^2 \leq \Vert v \Vert^2 - k$.
\end{lemma}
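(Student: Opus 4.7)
The plan is to make a direct computation using the fact that every $a \in A$ has $\Vert a\Vert^2 = k$ (since $a \in \{-1,1\}^k$). Expanding gives
\[
\Vert v+a\Vert^2 = \Vert v\Vert^2 + 2\scalprod{v}{a} + k,
\]
so the desired inequality $\Vert v+a\Vert^2 \leq \Vert v\Vert^2 - k$ is equivalent to $\scalprod{v}{a} \leq -k$. Thus it suffices to exhibit an $a \in A$ satisfying this bound.

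To do this, I would apply the $\epsilon$-somewhat spread hypothesis to the unit vector $u = v/\Vert v\Vert$ (which is well defined since $\Vert v\Vert \geq k/\epsilon > 0$). The definition then gives some $a \in A$ with $\scalprod{a}{u} \leq -\epsilon$, which when multiplied through by $\Vert v\Vert$ yields $\scalprod{a}{v} \leq -\epsilon \Vert v\Vert$. Finally, the hypothesis $\Vert v\Vert \geq k/\epsilon$ turns this into $\scalprod{a}{v} \leq -k$, which is exactly what we needed.

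Substituting back, we get $\Vert v+a\Vert^2 \leq \Vert v\Vert^2 - 2k + k = \Vert v\Vert^2 - k$, completing the proof. There is no real obstacle here: the lemma is essentially a bookkeeping exercise that packages the spread condition into a quantitative decrease in norm when adding a vector from $A$, and the only subtlety is remembering that $\Vert a\Vert^2$ contributes an extra $+k$ on the right hand side, which is precisely what forces the threshold $k/\epsilon$ on $\Vert v\Vert$ (rather than the naive $k/(2\epsilon)$ one might guess).
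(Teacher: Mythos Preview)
Your proof is correct and follows essentially the same approach as the paper: expand $\Vert v+a\Vert^2$, use the $\epsilon$-somewhat spread condition applied to $v/\Vert v\Vert$ to get $\scalprod{a}{v}\le -\epsilon\Vert v\Vert$, and then use $\Vert v\Vert\ge k/\epsilon$ together with $\Vert a\Vert^2=k$ to conclude.
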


\begin{proof}  Take a vector in $a$ with $\scalprod{a}{v} \leq - \epsilon \Vert v \Vert$. Then
    \[
    \scalprod{v+a}{v+a} = \scalprod{v}{v} +2 \scalprod{a}{v} + \scalprod{a}{a} \leq \Vert v \Vert^2 - 2\epsilon \Vert v \Vert + k \leq \Vert v \Vert^2 - k.
    \]
\end{proof}

We are interested in a second property of vectors and in order to formulate this we need to study integer lattices.  We start by recalling some basic definitions.  We do not state definitions in full generality and for a more careful treatment we refer to a text-book such as \cite{MG02}.  In our case we are interested in lattices that are subsets of $\mathbb{Z}^n$ and let $L$ be a generic lattice. It is usually given by a basis, $(b_i)_{i=1}^n$  with integer coordinates and  $L$ is given by the points $\sum_{i=1}^n a_i b_i$ where $a_i \in \mathbb{Z}$.   In a basis the vectors are linearly independent over $\mathbb{R}$ but it is possible to consider a larger set of possibly linearly dependent vectors and use them as a generating set.  The set of all integer combinations of such a set is also a lattice and it is easy to compute a basis for this lattice given the generators.

The determinant, $\mbox{det}(L)$,  of $L$ is the absolute value of the determinant of the matrix which has the basis vectors as columns.  As we are considering only lattices which are subsets of $\mathbb{Z}^n$ a lattice can be described by a set of modular equations.  The system can be written as $\sum_{j=1}^n a_{ij} x_j \equiv 0$ modulo $p_i^{m_i}$ where $p_i$ are (not necessarily distinct) prime numbers and $m_i \geq 1$.  The equations can be chosen such that $\prod_i p_i^{m_i}= \mbox{det}(L)$. 

For any $a\in A$ we create a vector $a'$ of dimension $k+1$ by adding an additional coordinate that always takes the value 1.  Let $L$ be the lattice in $k+1$ dimension that is generated by these vectors.   We say that $A$ is {\em modular free} if $L$ consists of all integer vectors where all coordinates have the same parity.  As all $a'$ in the generating set have this property property this is as large as $L$ can be. The two defined properties just defined turn out to imply that $A$ does not have any (block)-symmetric polymorphisms by the following lemma.

\begin{lemma}\label{lemma:combine}
Suppose $A \subseteq \{-1,1\}^k$ is modular free and $\epsilon$-somewhat spread for some $\epsilon > 0$.
Then for all $x \in \{ -1 ,1 \}^k$,  and all sufficiently large odd integers $\ell$, the vector $(x_1, \ldots, x_k, \ell)$ can be written as a non-negative integer combination of $a'$ where $a\in A$.  
\end{lemma}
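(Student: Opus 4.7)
The plan is to decompose the target $(x,\ell)$ as $(x, \ell^*) + M(0^k, N_0)$, where $(0^k, N_0)$ is a ``kernel'' non-negative integer combination $\sum_a n^{(0)}_a a'$ whose coefficients $n^{(0)}_a$ are all strictly positive, and $(x, \ell^*)$ is a (possibly signed) integer representation that exists by modular freeness. Adding sufficiently many copies $M$ of the kernel to the signed representation will push every coefficient to a non-negative value while increasing the last coordinate by $MN_0$; handling each odd residue class modulo $N_0$ separately will then cover all sufficiently large odd $\ell$.

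The first step is to verify that $0$ lies in the interior of $K(A)$. If a point $v \in \RR^k$ of norm $<\epsilon$ were not in $K(A)$, then \cref{thm:separate} would produce a hyperplane strictly separating $v$ from $K(A)$, and rescaling its normal to a unit vector $u$ would yield $\scalprod{u}{a} > -\|v\| > -\epsilon$ for every $a \in A$, contradicting $\epsilon$-somewhat spread; hence $B(0, \epsilon) \subseteq K(A)$.

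I would then construct the kernel combination with strictly positive coefficients. For each $a_0 \in A$, since $-\delta a_0 \in K(A)$ for some small rational $\delta > 0$, write $-\delta a_0 = \sum_a \mu^{(a_0)}_a a$ with $\mu^{(a_0)}_a \geq 0$ rational and summing to $1$ (possible because $K(A)$ is a rational polytope). Summing the rearranged identities $\delta a_0 + \sum_a \mu^{(a_0)}_a a = 0$ over all $a_0 \in A$ gives $\sum_a \nu_a a = 0$ with every coefficient $\nu_a = \delta + \sum_{a_0} \mu^{(a_0)}_a \geq \delta > 0$. Clearing denominators (and scaling up if necessary so each entry is at least $1$) yields integers $n^{(0)}_a \geq 1$ with $\sum_a n^{(0)}_a a = 0$ and $N_0 := \sum_a n^{(0)}_a$; since $(0^k, N_0) = \sum_a n^{(0)}_a a'$ must lie in the lattice $L$ and has even first $k$ coordinates, modular freeness forces $N_0$ to be even as well.

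Finally, for each $x \in \{-1,1\}^k$ and each odd $\ell^* \in \{1, 3, \ldots, N_0 - 1\}$, the vector $(x, \ell^*)$ has all coordinates odd and so lies in $L$ by modular freeness, giving a (possibly signed) integer representation $(x, \ell^*) = \sum_a c^{(x, \ell^*)}_a a'$. Letting $C$ be the maximum of $|c^{(x, \ell^*)}_a|$ over the finitely many triples $(x, \ell^*, a)$, any odd $\ell \geq (C+1)N_0$ decomposes as $\ell = \ell^* + MN_0$ with $\ell^*$ the odd residue of $\ell$ modulo $N_0$ and $M \geq C$, and
\[
(x, \ell) = \sum_a \bigl(c^{(x, \ell^*)}_a + M\,n^{(0)}_a\bigr) a'
\]
is a non-negative integer combination because each coefficient is at least $-C + M \geq 0$. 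The main obstacle is achieving the strict positivity $n^{(0)}_a \geq 1$ of the kernel coefficients: without it, the correction step could fail on a coordinate where $n^{(0)}_a = 0$ but $c^{(x,\ell^*)}_a < 0$, and this strict positivity is exactly where we must use $0 \in \mathrm{int}\, K(A)$ rather than merely $0 \in K(A)$.
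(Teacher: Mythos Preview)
Your proof is correct and takes a genuinely different route from the paper's.

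The paper starts from the vector $w = M \sum_{a \in A} a'$ (so that every coefficient is already $M$), and then uses \cref{lemma:addone} iteratively: because $A$ is $\epsilon$-somewhat spread, one can repeatedly add more $a'$'s so that the first $k$ coordinates decrease in squared norm by at least $k$ at each step, eventually landing at a vector $(v,\ell)$ with $\|v\|\le k/\epsilon$ for any desired sufficiently large odd $\ell$. Modular freeness is then used once at the end to write $(x-v,0)$ as an integer combination with coefficients bounded by $M$, and adding this to the running combination keeps every coefficient non-negative.

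Your argument instead first extracts the structural consequence of $\epsilon$-somewhat spread in one shot, namely $B(0,\epsilon)\subseteq K(A)$, and from this builds a single ``kernel'' relation $\sum_a n^{(0)}_a a' = (0^k,N_0)$ with all $n^{(0)}_a\ge 1$. You then cover all odd residue classes modulo the even number $N_0$ by finitely many signed representations and pad each with enough copies of the kernel. This bypasses the iterative norm-reduction lemma entirely and makes very explicit why ``$\epsilon$-somewhat spread'' is exactly the hypothesis needed: it is equivalent to $0$ lying in the interior of $K(A)$, which is precisely what permits a strictly positive kernel. The paper's approach, on the other hand, is closer to an algorithm and reuses \cref{lemma:addone} that is already stated for other purposes in that section. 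Both approaches use modular freeness in the same way, to supply the finitely many signed integer representations that get corrected.
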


Before proving this lemma we give the important corollary.
\begin{corollary}
Suppose $A$ is modular-free and $\epsilon$-somewhat spread then for all sufficiently large odd integers $\ell$, $\fPCSP(A, \OR)$ does not admit a block-symmetric polymorphism with block sizes $\ell$ and $\ell+1$.
\end{corollary}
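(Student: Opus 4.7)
The plan is to argue by contradiction: suppose some folded $(\ell, \ell+1)$-block-symmetric function $f$ is a polymorphism of $\fPCSP(A, \OR)$ for a given sufficiently large odd $\ell$. Writing $f(y, z) = g(s_1(y), s_2(z))$ where $s_1(y) = \sum_{i=1}^{\ell} y_i$ and $s_2(z) = \sum_{i=1}^{\ell+1} z_i$ (in $\pm 1$ notation, with $-1$ representing true), the goal is to construct a matrix $M \in A^{2\ell+1}$, with its columns split into blocks of sizes $\ell$ and $\ell+1$, such that every row of $M$ has block-sum profile forcing $f = +1$. This will show $f(M) = (+1, \ldots, +1) \notin \OR$, contradicting the polymorphism property.

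The first step is to locate a convenient accepting value of $g$. Folded-ness of $f$ yields the relation $g(-s_1, -s_2) = -g(s_1, s_2)$; specializing to $s_2 = 0$ gives $g(1, 0) = -g(-1, 0)$, so exactly one of the two values $g(1, 0)$ and $g(-1, 0)$ equals $+1$. Let $s_1^* \in \{-1, 1\}$ denote the corresponding first coordinate. The aim now becomes to build a matrix in which every row achieves the block-sum profile $(s_1(i), s_2(i)) = (s_1^*, 0)$.

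For block 1 (of size $\ell$, which is odd), I will apply \cref{lemma:combine} with the target vector $(s_1^*, s_1^*, \ldots, s_1^*) \in \{-1,1\}^k$; this produces $\ell$ columns in $A$ whose row sums are uniformly $s_1^*$, provided $\ell$ is sufficiently large. For block 2 (of size $\ell+1$, which is even) \cref{lemma:combine} cannot be applied directly because of the parity mismatch, so I will instead fix an arbitrary $a^* \in A$, use it as one column of block 2, and apply \cref{lemma:combine} with target $-a^* \in \{-1,1\}^k$ and $\ell$ (odd) columns to produce the remaining $\ell$ columns. The total row-sum vector in block 2 is then $a^* + (-a^*) = (0, \ldots, 0)$, as desired.

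Concatenating these two blocks yields the matrix $M \in A^{2\ell+1}$: every row $i$ has $(s_1(i), s_2(i)) = (s_1^*, 0)$, so $f(M_i) = g(s_1^*, 0) = +1$, completing the contradiction. The only mildly subtle point is the parity mismatch between the two block sizes, which is handled cleanly by ``spending'' one column of block 2 on the fixed element $a^* \in A$ and adjusting the target of \cref{lemma:combine} accordingly.
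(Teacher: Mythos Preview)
Your proof is correct and follows essentially the same approach as the paper: both arguments use \cref{lemma:combine} to fill the odd block with a constant row-sum vector and handle the even block by fixing one column $a^* \in A$ and targeting $-a^*$ with the remaining $\ell$ columns, then invoke foldedness via $g(-s_1,-s_2)=-g(s_1,s_2)$ to ensure one of $g(\pm 1,0)$ is false. The only cosmetic difference is that the paper constructs matrices for both values $g(1,0)$ and $g(-1,0)$ and then picks the failing one, whereas you first identify $s_1^*$ and build only the relevant matrix.
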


\begin{proof}
    Let $x$ be an arbitrary element of $A$.
    By \cref{lemma:combine}, the vectors $(1^k, \ell)$, $((-1)^k, \ell)$ and $(-x, \ell)$ can all be written as non-negative combinations of $a'$ for $a \in A$.  Clearly then also $(0^k, \ell+1)$ can be written as such a combination.

    A block-symmetric function $f: \{-1,1\}^{2\ell+1} \rightarrow \{-1,1\}$ with block sizes $\ell$ and $\ell+1$ can be written as $f(x) = g(\sum_{i=1}^{\ell} {x_i}, \sum_{i=\ell+1}^{2\ell+1} x_i)$.    
    Since $(1^k, \ell)$, $((-1)^k, \ell)$ and $(0^k, \ell+1)$ can all be written as non-negative integer combinations as described above, we can create two $k \times (2\ell+1)$ matrices $M^{(+)}$ and $M^{(-)}$ such that $f(x,y) = g(1,0)$ for all rows of $M^{(+)}$ and $f(x,y) = g(-1,0)$ for all rows of $M^{(-)}$.  However if $f$ is folded, one of $g(1,0)$ and $g(-1,0)$ is false, yielding an obstruction to $f$ being a polymorphism of $\fPCSP(A, \OR)$.
\end{proof}

\begin{proof}[Proof of \cref{lemma:combine}]
By the assumption of being modular-free we know that any integer vector of the form $(v,0)$ where all components of $v$ are even can be written in form $\sum_{a \in A} b_a a'$ for some integers $b_a$.  Let $M$ be the maximal absolute value needed as a coefficient to represent any vector $v$ with norm at most $2 k / \epsilon$.  Define $w=M \sum_{a \in A} a'$ and notice that $\Vert w \Vert \leq k 2^k M$. Using \cref{lemma:addone} we can add additional vectors $a'$ with $a\in A$ to $w$ such that for all sufficiently large odd $\ell$ we can get a vector of the form $(v,\ell)$ where the length of $v$ is bounded  by $k/\epsilon$.  Using that $(x-v,0)$ can be written as an integer combination of $a'$ with coefficients bounded by $M$ we conclude that $(x,\ell)$ can be written as a positive integer combination of $a'$ with $a \in A$.  This follows as any coefficient of $w$ already is at least $M$ preventing it from turning negative. 
\end{proof}

We need to prove that a random $A$ has the two properties.  On road to this we have the following definition.

\begin{definition}
    A set $B \subseteq \mathbb{R}$ of unit vectors is an {\em $\epsilon$-net} if for each unit vector $v\in \mathbb{R}^k$ there is $b\in B$ such that $\Vert b-v \Vert \leq \epsilon$.
\end{definition}

It is well known that it is possible to construct an $\epsilon$-net of cardinality $(1/\epsilon)^{O(k)}$ \cite{kochol94constructive}.  We want to prove that if $A$ has $Ck$ random elements for a sufficiently large value of $C$, then it is very likely to be $c$-somewhat spread for an absolute constant $c$.  Before embarking on the proof let us point out that the result is straightforward if $A$ has $C k \log k $ elements.  

Let us consider an $1/4 \sqrt{k}$-net $B$.  It has $2^{O(k \log k)}$ elements.  Take any $b\in B$.  The probability that $\scalprod{a}{b} \leq -1/2$ for a random $a$ is easily seen to be $\Omega (1)$.  By the union bound, for a sufficiently large constant $C$, if we pick $C k \log k$ random $a$ as elements of $A$, then, with high probability, for each $b\in B$ there exists an $a\in A$ such that $\scalprod{a}{b}\leq -1/2$.  This implies that for any unit vector $u$, if $b$ is the closest element of $B$, then $\scalprod{a}{u} \leq -1/4$ for any $a$ such that $\scalprod{a}{b} \leq -1/2$. To make do with $O(k)$ vectors in $A$ we need a more complicated argument and we use the techniques of \cite{pajhrandomsupport}.   Let us recall the highlights.  

We choose $B$ to be an $\epsilon$-net for a small, but absolute constant $\epsilon$ specified below.  Let $K$ be the convex body $\sum_a c_a a'$ where $c_a \in [0,1]$ and the sum is over $a\in A$.  For any direction $b\in B$ we look at the number $m_b= \min_{x \in K}\scalprod{b}{x}$ and $M_b=\max_{x \in K} \scalprod{b}{x}$. By inspection we have that $m_b = \sum_a \min (0, \scalprod{b}{a'})$.  When picking a random $a$ we have that $\scalprod{a'}{b}$ is a symmetric random variable with mean 0 and variance $1$.  It is not difficult to see that the expectation of $|\scalprod{a}{b}|$ is $e_b$, where there are absolute constants, $m$ and $M$ such that $0< m \leq  e_b \leq M$ for any unit vector $b$.  By symmetry it follows that $E[\min (0, \scalprod{b}{x}]= -e_b/2$.

Set $\epsilon = m/16M$ and note that $\epsilon < 1/16$.  Using standard Chernoff estimates it follows that, for sufficiently large constant $C$, with high probability when $A$ has $Ck$ random vectors, then  $m_b \leq -Ck e_b/4$ and $M_b \leq Ck e_b$ for any $b \in B$.  Suppose this is the case.  We have some simple claims.

\begin{claim}
    For any $x \in K$ we have $\Vert x \Vert  \leq 2CkM$.
\end{claim}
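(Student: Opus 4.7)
The plan is a standard $\epsilon$-net argument. We have already established that $M_b \le Ck e_b \le CkM$ for every $b \in B$, where $M_b = \max_{x \in K} \scalprod{b}{x}$. The goal is to promote this bound from the discrete set $B$ to all unit directions, losing only a small multiplicative factor.

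The key step is the following observation: for any unit vector $u \in \mathbb{R}^{k+1}$, by the $\epsilon$-net property of $B$ there exists $b \in B$ with $\Vert u - b\Vert \le \epsilon$. For any $x \in K$, Cauchy--Schwarz then gives
\[
\scalprod{u}{x} = \scalprod{b}{x} + \scalprod{u - b}{x} \le M_b + \Vert u - b\Vert \cdot \Vert x \Vert \le CkM + \epsilon \Vert x \Vert.
\]
Assuming $x \ne 0$ (otherwise the claim is trivial), I would now specialize to $u = x/\Vert x\Vert$, which yields
\[
\Vert x \Vert = \scalprod{u}{x} \le CkM + \epsilon \Vert x \Vert,
\]
and hence $\Vert x \Vert \le CkM/(1-\epsilon)$. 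Since $\epsilon = m/(16M) < 1/16 < 1/2$, this gives $\Vert x \Vert \le 2CkM$, as required.

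There is no real obstacle here: the heavy lifting (the Chernoff concentration that yields $M_b \le CkM$ uniformly over $b \in B$ with high probability) was done before the claim is stated, and all that remains is the elementary $\epsilon$-net ``slack'' argument above. The only thing to keep an eye on is ensuring that the constant $\epsilon = m/(16M)$ fixed earlier is indeed small enough to absorb the $1/(1-\epsilon)$ factor into the final constant $2$; the choice $\epsilon < 1/16$ is comfortably sufficient.
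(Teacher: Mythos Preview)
Your proof is correct and takes essentially the same approach as the paper: pick $b \in B$ close to $x/\Vert x\Vert$, use the established bound $M_b \le CkM$, and absorb the $\epsilon \Vert x\Vert$ error term. The paper's version is terser, writing directly that $\scalprod{b}{x} \ge \Vert x\Vert/2$ (which follows from $\scalprod{b}{x} \ge (1-\epsilon)\Vert x\Vert$ and $\epsilon < 1/16$) and then invoking $M_b$, but the underlying computation is the same as yours.
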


\begin{proof}
Take any vector $x\in K$ and let $b\in B$ be as close as possible to $x/\Vert x \Vert$. It is not difficult to see that $\scalprod{b}{x} \geq \Vert x \Vert /2$ and the claim follows by the bound on $M_b$.
\end{proof}

\begin{claim}\label{lemma: boundu}
For any unit vector $u$ we have $m_u \leq -mCk/8$.
\end{claim}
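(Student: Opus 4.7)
The plan is to exploit the $\epsilon$-net $B$ to reduce the claim for an arbitrary unit vector $u$ to the already-controlled bound $m_b \leq -C k e_b/4$ for vectors $b \in B$. First, I would pick $b \in B$ with $\|u - b\| \leq \epsilon$, and let $x^* \in K$ be a minimizer witnessing $m_b = \scalprod{b}{x^*}$. Since $m_u = \min_{x \in K} \scalprod{u}{x} \leq \scalprod{u}{x^*}$, I can write
\begin{equation*}
m_u \leq \scalprod{b}{x^*} + \scalprod{u-b}{x^*} = m_b + \scalprod{u-b}{x^*}.
\end{equation*}

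Next I would control each term separately. For the first term, using $e_b \geq m$ together with the assumption $m_b \leq -C k e_b/4$ gives $m_b \leq -Ckm/4$. For the second term, Cauchy--Schwarz together with the preceding claim (every $x \in K$ satisfies $\|x\| \leq 2CkM$) yields
\begin{equation*}
|\scalprod{u-b}{x^*}| \leq \|u-b\| \cdot \|x^*\| \leq \epsilon \cdot 2CkM.
\end{equation*}
Plugging in the choice $\epsilon = m/(16M)$ makes this error term at most $Ckm/8$, so combining the two bounds gives
\begin{equation*}
m_u \leq -\frac{Ckm}{4} + \frac{Ckm}{8} = -\frac{Ckm}{8},
\end{equation*}
as claimed.

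No step here is really the main obstacle: the only thing to watch out for is that the diameter bound on $K$ from the previous claim is used as a black box, and that one must remember that $e_b \geq m$ holds uniformly in $b$ so the guarantee $m_b \leq -Cke_b/4$ that was established with high probability on the net actually gives a usable absolute constant $-Ckm/4$. Everything else is just a one-line triangle-inequality calculation with the parameter $\epsilon = m/(16M)$ chosen precisely so that the $\epsilon$-net error absorbs at most half of the bound on $m_b$.
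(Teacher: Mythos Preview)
Your proof is correct and essentially identical to the paper's own argument: both pick $b \in B$ with $\|u-b\|\le\epsilon$, take the minimizer $x^*\in K$ of $\scalprod{b}{\cdot}$, and bound $\scalprod{u}{x^*} = m_b + \scalprod{u-b}{x^*}$ using $m_b \le -Cke_b/4 \le -Ckm/4$, Cauchy--Schwarz, the diameter bound $\|x^*\|\le 2CkM$, and the choice $\epsilon = m/(16M)$.
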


\begin{proof}
As $B$ is an $\epsilon$-net we can write $u=b+w$ where $b\in B$ and $\Vert w \Vert \leq \epsilon$.  Take $x \in K$ that minimizes $\scalprod{b}{x}$, then
$$
\scalprod{u}{x}= \scalprod{b}{x}+\scalprod{w}{x} \leq -m C k/4 + \Vert w \Vert \Vert x \Vert 
\leq -m C k / 4 + 2Ck M \epsilon \leq -mCk/8.$$
    \end{proof}

As $m_u =\sum_{a \in A} \min(0, \scalprod{u}{a'})$ and the sum has $Ck$ terms it follows from \cref{lemma: boundu} that $A$ is $m/8$-somewhat spread and we turn to the question of $A$ being modular free.

We start by establishing that we get a full dimensional lattice.
\begin{lemma}
    Suppose $A$ contains $3k$ random elements, then there is a constant $c<0$ such that with probability $1-2^{-ck}$ the vectors $a'$ for $a\in A$ span $\mathbb{R}^{k+1}$.
\end{lemma}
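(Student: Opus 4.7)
The plan is to reduce the statement to a clean linear-algebra question over $\mathbb{F}_2$ and then apply a union bound. In what follows I assume $c>0$ (the ``$c<0$'' in the statement appears to be a typo, since otherwise $2^{-ck}\to\infty$).

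First I would observe that the vectors $a' = (a, 1) \in \mathbb{R}^{k+1}$ for $a \in A$ span $\mathbb{R}^{k+1}$ if and only if the differences $a^{(i)} - a^{(1)} \in \mathbb{R}^k$ (for $i = 2,\ldots,3k$) span $\mathbb{R}^k$: one direction is immediate by subtracting copies of $a^{(1)'}$, and the other follows because if $(c_0,c_1,\ldots,c_k)$ is a nonzero annihilator of all $a^{(i)'}$ then $(c_1,\ldots,c_k)$ is nonzero (else $c_0 = 0$) and annihilates all differences.

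Next, set $b^{(i)} := (a^{(i)} - a^{(1)})/2 \in \{-1,0,1\}^k$. Coordinate-wise, $b^{(i)}_j$ is $0$ if $a^{(i)}_j = a^{(1)}_j$ and $-a^{(1)}_j$ otherwise, so $b^{(i)} = -a^{(1)} \odot \mathbf{1}_{S_i}$ where $S_i = \{j : a^{(i)}_j \ne a^{(1)}_j\}$ and $\odot$ is the coordinate-wise product. Since multiplying by the $\pm 1$ vector $-a^{(1)}$ is a linear isomorphism of $\mathbb{R}^k$, the vectors $\{b^{(i)}\}$ span $\mathbb{R}^k$ iff the indicator vectors $\{\mathbf{1}_{S_i}\}_{i=2}^{3k}$ do. Moreover, conditionally on $a^{(1)}$, the sets $S_2,\ldots,S_{3k}$ are i.i.d. and each is a uniformly random subset of $[k]$ (each coordinate included independently with probability $1/2$), so the $a^{(1)}$-dependence disappears.

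I would then reduce from $\mathbb{R}$ to $\mathbb{F}_2$: if $\mathbf{1}_{S_2},\ldots,\mathbf{1}_{S_{3k}}$ span $\mathbb{F}_2^k$, then some $k$ of them are linearly independent over $\mathbb{F}_2$, so their $k\times k$ $\{0,1\}$-matrix has odd determinant, hence nonzero determinant over $\mathbb{Z}$, so they are linearly independent (and spanning) over $\mathbb{R}$ as well. Finally I would apply a union bound: for any fixed nonzero $c \in \mathbb{F}_2^k$, $\langle c, \mathbf{1}_{S_i}\rangle \pmod 2$ is a sum of $|\mathrm{supp}(c)| \ge 1$ independent $\mathrm{Bern}(1/2)$ variables, hence uniform on $\mathbb{F}_2$, so $\Pr[\forall i : \langle c, \mathbf{1}_{S_i}\rangle = 0] = 2^{-(3k-1)}$. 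Taking a union bound over the $2^k - 1$ nonzero $c$ gives failure probability at most $2^{k} \cdot 2^{-(3k-1)} \le 2 \cdot 2^{-2k}$, which yields the claim with (say) $c = 3/2$.

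\paragraph{Main obstacle.} There isn't a hard step, only a choice of framing: the key is to turn the $\pm 1$ vectors with an appended $1$-coordinate into genuinely random $\{0,1\}$ vectors by differencing out $a^{(1)}$ and stripping the sign pattern. Once that reformulation is in place, the $\mathbb{F}_2$ union bound is immediate and easily beats the $2^{-ck}$ target. The only thing worth double-checking is that one really gets i.i.d.\ uniform subsets after conditioning on $a^{(1)}$, which is clear from the coordinate-wise independence of the $a^{(i)}_j$.
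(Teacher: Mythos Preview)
Your argument is correct (and you are right that the sign of $c$ in the statement is a typo). It is, however, a genuinely different route from the paper's.

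The paper argues sequentially: reveal the $3k$ vectors one at a time and note that whenever the current span is a proper subspace of $\mathbb{R}^{k+1}$, any hyperplane through the origin contains at most half of the points $(a,1)$ with $a\in\{-1,1\}^k$, so the next vector increases the rank with probability at least $1/2$. The event ``not full after $3k$ draws'' is then dominated by the event of obtaining at most $k$ successes in $3k$ fair coin flips, and a Chernoff bound finishes.

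Your approach instead subtracts off $a^{(1)}$, strips the $\pm 1$ sign pattern to get genuine i.i.d.\ uniform $\{0,1\}$-vectors, passes to $\mathbb{F}_2$ via the observation that an odd determinant is nonzero over $\mathbb{Z}$, and then runs a union bound over nonzero annihilators. This is slightly more work to set up but pays off with an explicit constant ($2\cdot 2^{-2k}$) and avoids any concentration inequality; the paper's version is shorter to state but only gives $2^{-\Omega(k)}$ via Chernoff. Both are perfectly adequate here.
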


\begin{proof} We select the elements of $A$ one by one.  We claim that if the corresponding vectors $a'$ chosen so far do not span $\mathbb{R}^{k+1}$ then with probability at least $1/2$ the next vector is linearly independent of the vectors so far. This follows as any $k-1$-dimensional space in $\mathbb{R}^k$ only contains at most half the points of the hypercube.  This implies that the probability of not spanning the entire space is at most the probability of getting $k-1$ zeroes in a random binary string of length $3k$.  It is well known that the probability of this event is as stated in the lemma.
\end{proof}

The first time the vectors $a'$ span $\mathbb{R}^{k+1}$ we can consider the full dimensional lattice, $L^0$ spanned by these vectors.  Hadamard's inequality, saying that the determinant of a matrix is at most the product of the lengths of the row vectors, implies that the determinant of $L^0$ is at most $(k+1)^{(k+1)/2}$. Let $L^t$ be the lattice where we have added an additional $t$ elements to $A$.  Suppose $L^0$ is defined by the equations $\sum_{j} a_{ij} x_j=0$ modulo $p_i^{m_i}$ for $i=1, \ldots r$.  By the bound on the determinant of $L^0$ we know that $r \leq k \log k$.  We first treat powers of prime different from 2.  

\begin{lemma}
Fix any prime $p > 2$ that divides $det(L^0)$.   The probability that $p$ divides $det(L^ {3k})$ is bounded by $2^{-ck}$ for some constant $c>0$.
\end{lemma}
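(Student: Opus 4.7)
The plan is to track the $p$-defect of the lattice $L^t$ as we add random generators. Define $V_t = \{c \in \mathbb{F}_p^{k+1} : \scalprod{c}{v} \equiv 0 \pmod p \text{ for all } v \in L^t\}$, the annihilator of $L^t$ modulo $p$, and let $d_t = \dim_{\mathbb{F}_p} V_t$. Then $p \mid \det(L^t)$ if and only if $d_t \geq 1$, and since appending a single generator can shrink the annihilator by at most one dimension, $d_{t+1} \in \{d_t, d_t - 1\}$ for every $t$.

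First I would bound the one-step \emph{stay probability} $\Pr[d_{t+1} = d_t \mid V_t]$ when $d_t \geq 1$. Pick any $c \in V_t \setminus \{0\}$. Because every generator of $L^t$ has last coordinate $1$, at least one of $c_1, \ldots, c_k$ must be nonzero modulo $p$; otherwise the annihilation condition would force $c_{k+1} \equiv 0$, contradicting $c \neq 0$. The event $d_{t+1} = d_t$ is contained in the event $\scalprod{c}{a'} \equiv 0 \pmod p$ for this fixed $c$, and since $a \in \{-1,1\}^k$ is uniform with at least one nonzero coefficient modulo the odd prime $p$, a Littlewood--Offord style anti-concentration bound gives probability at most $1/2$.

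A Chernoff bound over the $3k$ independent trials then implies that with probability $1 - 2^{-\Omega(k)}$ the process experiences at least $(3/2 - \varepsilon)k$ drops of $d$. Combined with the Hadamard bound $d_0 \leq \log_p \det(L^0) \leq \tfrac{(k+1)\log(k+1)}{2\log p}$, this already delivers the theorem whenever $p$ is not too small, e.g.\ $p \geq k^{\Omega(1)}$. The main obstacle is the regime of small fixed $p$, where $d_0$ can be as large as $\Theta(k \log k)$ and a uniform $1/2$ drop rate cannot shed it in only $3k$ steps. To handle this regime I would sharpen the stay probability: the affine set $\{x \in V_t^\perp : x_{k+1} = 1\}$ has only $p^{k-d_t}$ points, so the stay probability is in fact at most $\min\bigl(p^{k-d_t}/2^k,\ 1/2\bigr)$. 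For $d_t$ substantially above the threshold $k(1 - \log_p 2)$ this bound is exponentially small, hence the dimension drops at nearly every step and rapidly sheds the large initial defect; once $d_t$ falls into the moderate regime the previous $1/2$-Chernoff analysis takes over. A combined potential argument that splices the ``large $d_t$'' geometric-decay phase with the ``small $d_t$'' Chernoff phase then yields the claimed $2^{-ck}$ bound with a positive constant $c = c(p)$.
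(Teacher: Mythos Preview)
Your approach is essentially the paper's, and the first phase of your argument is correct and complete on its own. The ``main obstacle'' you identify is illusory: $d_0 = \dim_{\mathbb{F}_p} V_0$ is the dimension of a subspace of $\mathbb{F}_p^{\,k+1}$, so trivially $d_0 \le k+1$; it can never be $\Theta(k\log k)$. (You conflated the Hadamard bound on $\det(L^0)$, which controls the \emph{total} $p$-adic valuation, with the dimension $d_0$, which counts only how many invariant factors are divisible by $p$.) The paper in fact uses the slightly sharper $d_0 \le k-1$, noting that any two distinct vectors $a'$ (each with last coordinate $1$ and other coordinates in $\{\pm 1\}$) are linearly independent modulo an odd prime, so the span of $L^0$ already has dimension at least $2$.

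With $d_0 \le k-1$, your one-step stay bound of $1/2$ and a straight Chernoff argument finish the proof exactly as the paper does: if $p \mid \det(L^{3k})$ then at most $d_0-1 \le k-2$ of the $3k$ steps are drops, so at least $2k+2$ are stays; since each stay has conditional probability $\le 1/2$, this event has probability $2^{-\Omega(k)}$. The sharpened stay probability $p^{k-d_t}/2^k$ and the two-phase potential argument are correct but unnecessary.
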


\begin{proof}
Each time we add a vector to $A$, the probability that $a'$ satisfies any existing non-trivial relation modulo $p$ is bounded by 1/2.  To see this, assume without loss of generality the relations depends on $x-1$. Then, fixing all coordinate of $a$ except the first, only one of the two possible values for $a_1$ makes the added vector satisfy the relation.  We have at most $k-1$ linearly independent equations modulo $p$ satisfied by $L^0$.  Thus for $p$ to divide $det(L^{3k})$ an event that happens with probability 1/2 must happen at least $2k$ times out of $3k$ possibilities.  This is the same event as we considered in the proof of the previous lemma.
\end{proof}

The situation for $p=2$ is slightly different as each vector added has only odd coordinates.  Instead of ruling out relations modulo 2 we need to rule out relations modulo 4 but nothing essential differs.  We omit the details.

We sum up the above reasoning in a theorem.

\begin{theorem}
\label{thm:BLP+AIP random threshold}
Suppose $A$ is selected according to $\randA_{k,p}$.  If  $p \geq Ck 2^{-k}$, then, for a sufficiently large value of $C$,
the probability that BLP+AIP can be used to solve $\fiPCSP(A, \OR)$ is  $2^{-\Omega(k)}$.
\end{theorem}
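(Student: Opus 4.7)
The plan is to assemble the pieces developed in this subsection. Given $p \ge Ck \cdot 2^{-k}$, the cardinality $|A|$ is a sum of $2^k$ independent Bernoulli random variables with mean at least $Ck$, so a standard Chernoff bound yields $|A| \ge Ck/2$ except with probability $2^{-\Omega(k)}$. By choosing $C$ large enough we can therefore assume that $|A|$ exceeds the thresholds demanded by each of the preceding lemmas, conditional on a $2^{-\Omega(k)}$ failure event that we absorb into the final bound.

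First, I would invoke the $\epsilon$-net argument culminating in \cref{lemma: boundu} to conclude that $A$ is $(m/8)$-somewhat spread with probability $1 - 2^{-\Omega(k)}$, where $m$ is the absolute constant appearing in that discussion. The Chernoff concentration used there on the quantities $m_b$ and $M_b$ requires only that a sufficiently large constant times $k$ elements have been sampled, which is exactly the regime we are in.

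Second, I would argue that $A$ is modular free, again with failure probability $2^{-\Omega(k)}$. The spanning lemma ensures that the vectors $\{a' : a \in A\}$ span $\mathbb{R}^{k+1}$ after the first $3k$ random elements, yielding a well-defined full-dimensional lattice $L^0$ with $\det(L^0) \le (k+1)^{(k+1)/2}$, and hence at most $O(k \log k)$ distinct prime divisors. For each such prime, the two lemmas (covering odd primes, and the prime $2$ via modulus $4$) state that the corresponding modular relation is destroyed by subsequent random additions except with probability $2^{-\Omega(k)}$. A union bound over the $O(k \log k)$ primes still leaves total failure probability $2^{-\Omega(k)}$, so with this probability the lattice generated by $A$ contains every integer vector whose coordinates share a common parity -- i.e.\ $A$ is modular free.

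Having both properties in hand, the corollary following \cref{lemma:combine} precludes $\fPCSP(A, \OR)$ from admitting any $(\ell, \ell+1)$-block-symmetric polymorphism for sufficiently large odd $\ell$. Since a polymorphism of $\fiPCSP(A, \OR)$ is in particular a folded polymorphism of $\fPCSP(A, \OR)$, by \cref{thm:blpaffine} this rules out BLP+AIP as a solver for $\fiPCSP(A, \OR)$. The main subtlety lies in the modular-freeness step: one has to verify that the Hadamard-type bound on $\det(L^0)$ keeps the number of ``bad'' primes at $O(k \log k)$, so that the union bound does not swamp the $2^{-\Omega(k)}$ tail provided by each individual lemma; the rest is a matter of combining the already-established ingredients.
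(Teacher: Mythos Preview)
Your proposal is correct and follows essentially the same approach as the paper, which simply states ``We sum up the above reasoning in a theorem'' and relies on the preceding lemmas; you have spelled out the assembly (Chernoff for $|A|$, the $\epsilon$-net argument for somewhat-spreadness, the spanning and prime-elimination lemmas for modular-freeness, the corollary of \cref{lemma:combine}, and finally \cref{thm:blpaffine}) exactly as intended.
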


\section{Concluding Remarks}
\label{sec:conclusions}

We have introduced the notion of promise-usefulness, and studied this in the folded Boolean setting together with the closely related Promise-SAT problem.  While many questions remain on the road to a complete characterization, our results are strong enough to characterize almost all predicates of arity up to five, as well as the vast majority of all predicates of large arity.

The most obvious open problem is of course to obtain a complete classification of promise-usefulness and Promise-SAT (a classification of the latter yields a classification of the former, but not necessarily the other way around), and beyond that Boolean PCSPs in general.

While we only studied the folded setting in this paper, the non-folded setting also warrants investigation.  When it comes to the distinction between the idempotent and non-idempotent setting, the current techniques do not distinguish between them, and it would be interesting to understand if this is true in general or not (see discussion in \cref{sec:non-idempotent OR}).

The algorithmic results in this paper are applications of the known characterization of the BLP+AIP algorithm.  A key question here is of course whether there are tractable cases of Promise-SAT (or again, more generally Boolean PCSP) which are not solvable by BLP+AIP.  Less ambitiously, it is not clear if there are examples of Promise-SAT problems solvable by BLP+AIP but requiring other block-symmetric polymorphisms than one of the five families used in this paper.  If such examples turn out not to exist, then the simple sufficient conditions of \cref{thm:useful_condition} are an exact characterization of what can be shown promise-useful using BLP+AIP.  The evidence for such a nice state of affairs is not strong at this point but it is a possibility.

The hardness results are more involved, and we are quite certain that these can be both improved and simplified.  Among the four conditions for a minion having small fixing assignments, \cref{match+ADA,invmatch+ADA,thm:unate+ADA,thm:split}, the first three are in our opinion relatively natural.  However, the fourth result based on UnCADA and UnDADA, \cref{thm:split}, is different. It was specifically tailored to show the hardness of $\fiPCSP(\{\bstr{0011}, \bstr{0101}, \bstr{0110}, \bstr{1000}, \bstr{1001}\}, \OR)$ as mentioned in \cref{example:UnDADA}. The issue with \cref{thm:split} is that the two families of polymorphisms involved, UnCADA and UnDADA, are somewhat ad-hoc, and from a practical viewpoint they have relatively high arity making them computationally expensive to test for.  It is highly doubtful that the route taken here is the right approach and there may exist a much simpler underlying hardness condition that can replace \cref{thm:split}.

In the asymptotic regime, it would be interesting to determine the threshold on the number of satisfying assignments (as a function of the arity $k$) where most predicates become useless.  There is a rather large gap between the upper bound $\approx 2^{5k/6}$ given by \cref{random useless}, and the lower bound of $\approx k$.

\subsection{Concrete Predicates}

It would be interesting to classify also the small number of unknown predicates for $k=5$.  Recall that, as stated in \cref{sec:tractcondition}, we have established that BLP+AIP is not sufficient to prove tractability for these.  However, while our concrete hardness conditions are not sufficient to establish NP-hardness, it is conceivable that all remaining unknown predicates have small fixing assignments and are thus NP-hard by \cref{thm:fixing} -- we have not found any families of polymorphisms with arbitrarily large minimal fixing assignments for any of the unknown predicates.  Being less ambitious, one can try to classify some of the concrete remaining individual predicates of arity five.  Let us highlight a few of them that we consider of particular interest.

\begin{enumerate}
\item
Consider the predicate $A$ that accepts the $10$ cyclic strings of weight $2$ or $3$ where all ones (and hence all zeroes) are consecutive. In other words, 
\begin{align*}
    A &= \{\bstr{00011}, \bstr{00110}, \bstr{01100}, \bstr{11000},\bstr{10001},\\
&\hphantom{=\,\,\{}\bstr{11100}, \bstr{11001}, \bstr{10011}, \bstr{00111}, \bstr{01110}\}.
\end{align*}
This is a nice symmetric predicate, closed under both cyclic shifts and negations.
In our tables in \cref{sec:fiPCSP unknown list} listing minimal and maximal unknown predicates for $k=5$, the representative of $A$'s equivalence class is the 25th predicate in \cref{tab:5easiest_unknown} and the 16th predicate in \cref{tab:5hardest_unknown}.  Notably, it is both a minimal and a maximal unknown predicate, meaning that removing any assignment from $A$ makes $\fiPCSP(A, \OR)$ tractable, and adding any assignment to $A$ makes it NP-hard. 

This predicate is perhaps even more interesting and natural in the non-folded non-idempotent setting.  In particular $\PCSP(A, \NAE)$ is a natural discrepancy-type hypergraph coloring problem: given an ordered hypergraph which has a $2$-coloring with discrepancy $1$ \emph{where all red vertices are consecutive within an edge}, the objective is to find a normal $2$-coloring.  This can be compared with the early results on $\PCSP$s \cite{2pluseps}, establishing that this problem without the guarantee on red vertices being consecutive is NP-hard (corresponding to the setting when $A$ is all strings of weight $2$ and $3$).

\item
Consider the predicate $A = \{\bstr{00111}, \bstr{01011}, \bstr{01100}, \bstr{10001}, \bstr{10010}, \bstr{10100}\}$ (predicate number 23 in \cref{tab:5easiest_unknown}.  Unlike the previous example this has no clear structure or symmetries, but it is still interesting for a few reasons: (i) $\fiPCSP(A, \OR)$ has a quite rich set of polymorphisms.  For example, it admits $\AT_{13}$ (but not $\AT_{15}$), which does not have fixing assignments of size $6$.  This is in contrast to the NP-hard predicates of arity five, where we are typically able to bound the size of fixing assignments by $4$ or less.  This suggests that some new ideas might be needed to handle it.  (ii) Almost half ($93$ out of $189$) of the remaining unknown predicates of arity $5$ are supersets of $A$, so if $\fiPCSP(A, \OR)$ is NP-hard this alone would reduce the number of unknown predicates significantly.

This predicate and several other remaining ones in fact even have $(7,8)$-block-symmetric polymorphisms.  The predicates of \cref{tab:5easiest_unknown} that have this property are predicates number 8, 13, 16, 21, 23, and 24.


\end{enumerate}

\bibliographystyle{alpha}
\bibliography{bib}

@book{boyd2004convex,
  title={Convex Optimization},
  author={Boyd, S. and Vandenberghe, L.},
  isbn={9781107394001},
  doi={10.1017/CBO9780511804441},
  year={2004},
  publisher={Cambridge University Press}
}

@InProceedings{bababe2021,
  author =	{Barto, Libor and Battistelli, Diego and Berg, Kevin M.},
  title =	{Symmetric Promise Constraint Satisfaction Problems: Beyond the \protect{Boolean} Case},
  booktitle =	{38th International Symposium on Theoretical Aspects of Computer Science (STACS 2021)},
  pages =	{10:1--10:16},
  series =	{Leibniz International Proceedings in Informatics (LIPIcs)},
  ISBN =	{978-3-95977-180-1},
  ISSN =	{1868-8969},
  year =	{2021},
  volume =	{187},
  editor =	{Bl\"{a}ser, Markus and Monmege, Benjamin},
  publisher =	{Schloss Dagstuhl -- Leibniz-Zentrum f{\"u}r Informatik},
  address =	{Dagstuhl, Germany},
  doi =		{10.4230/LIPIcs.STACS.2021.10},
  
}

@InProceedings{LarrauriPCSPSearch,
  author =	{Alberto Larrauri},
  title =	{{Ineffectiveness for Search and Undecidability of PCSP
Meta-Problems}},
  booktitle =	{FOCS 2025 (to appear)},
  year =	{2025},
  URL =		{https://arxiv.org/abs/2504.04639},
}

@InProceedings{LichterPago,
  author =	{Lichter, Moritz and Pago, Benedikt},
  title =	{{Limitations of Affine Integer Relaxations for Solving Constraint Satisfaction Problems}},
  booktitle =	{52nd International Colloquium on Automata, Languages, and Programming (ICALP 2025)},
  pages =	{166:1--166:17},
  series =	{Leibniz International Proceedings in Informatics (LIPIcs)},
  ISBN =	{978-3-95977-372-0},
  ISSN =	{1868-8969},
  year =	{2025},
  volume =	{334},
  editor =	{Censor-Hillel, Keren and Grandoni, Fabrizio and Ouaknine, Jo\"{e}l and Puppis, Gabriele},
  publisher =	{Schloss Dagstuhl -- Leibniz-Zentrum f{\"u}r Informatik},
  address =	{Dagstuhl, Germany},
  URN =		{urn:nbn:de:0030-drops-235431},
  doi =		{10.4230/LIPIcs.ICALP.2025.166},
  annote =	{Keywords: constraint satisfaction, affine relaxation, promise CSPs, \mathbb{Z}-affine k-consistency, cohomological k-consistency algorithm, Tseitin, graph isomorphism}
}

@book{MG02,
author = {Micciancio, Daniele and Goldwasser, S.},
title = {Complexity of Lattice Problems},
year = {2002},
isbn = {0792376889},
publisher = {Kluwer Academic Publishers},
address = {USA}
}

@article{kochol94constructive,
author="M. Kochol",
title="{Constructive approximation of a ball by polytopes}",
journal="Math. Slovaca",
volume="44",
number="1",
year="1994",
pages="99--105"
}

@article{pajhrandomsupport,
author = {Austrin, Per and H\r{a}stad, Johan},
title = {Randomly Supported Independence and Resistance},
journal = {SIAM Journal on Computing},
volume = {40},
number = {1},
pages = {1-27},
year = {2011},
doi = {10.1137/100783534},
}

@inproceedings{BK24,
  author       = {Demian Banakh and
                  Marcin Kozik},
  editor       = {Pawel Sobocinski and
                  Ugo Dal Lago and
                  Javier Esparza},
  title        = "{Injective hardness condition for {PCSP}s}",
  booktitle    = {Proceedings of the 39th Annual {ACM/IEEE} Symposium on Logic in Computer
                  Science, {LICS} 2024, Tallinn, Estonia, July 8-11, 2024},
  pages        = {8:1--8:10},
  publisher    = {{ACM}},
  year         = {2024},
  doi          = {10.1145/3661814.3662072},
  timestamp    = {Sun, 19 Jan 2025 13:30:28 +0100},
  biburl       = {https://dblp.org/rec/conf/lics/BanakhK24.bib},
  bibsource    = {dblp computer science bibliography, https://dblp.org}
}

@article{BLPAFFINE,
author = {Brakensiek, Joshua and Guruswami, Venkatesan and Wrochna, Marcin and {\v Z}ivn{\'y}, Stanislav},
title = {The Power of the Combined Basic Linear Programming and Affine Relaxation for Promise Constraint Satisfaction Problems},
journal = {SIAM Journal on Computing},
volume = {49},
number = {6},
pages = {1232-1248},
year = {2020},
doi = {10.1137/20M1312745},
    abstract = { In the field of constraint satisfaction problems (CSPs), promise CSPs are an exciting new direction of study. In a promise CSP, each constraint comes in two forms: “strict” and “weak,” and in the associated decision problem one must distinguish between being able to satisfy all the strict constraints versus not being able to satisfy all the weak constraints. The most commonly cited example of a promise CSP is the approximate graph coloring problem-which has recently seen exciting progress [Bulín, Krokhin, and Oprs̆al, Proceedings of the Symposium on Theory of Computing, 2019, pp. 602--613 and Wrochna and Živný, Proceedings of the Symposium on Discrete Algorithms, 2020, pp. 1426--1435] benefiting from a systematic algebraic approach to promise CSPs based on “polymorphisms,” operations that map tuples in the strict form of each constraint to tuples in the corresponding weak form. In this work, we present a simple algorithm which in polynomial time solves the decision problem for all promise CSPs that admit infinitely many symmetric polymorphisms, which are invariant under arbitrary coordinate permutations. This generalizes previous work of the first two authors [Brakensiek and Guruswami, Proceedings of the Symposium on Discrete Algorithms, 2019, pp. 436--455]. We also extend this algorithm to a more general class of block-symmetric polymorphisms. As a corollary, this single algorithm solves all polynomial-time tractable Boolean CSPs simultaneously. These results give a new perspective on Schaefer's classic dichotomy theorem and shed further light on how symmetries of polymorphisms enable algorithms. Finally, we show that block symmetric polymorphisms are not only sufficient but also necessary for this algorithm to work, thus establishing its precise power. }
}

@article{bgs2025,
      title={\protect{SDP}s and Robust Satisfiability of Promise \protect{CSP}}, 
      author={Joshua Brakensiek and Venkatesan Guruswami and Sai Sandeep},
      year={2025},
      journal={Discrete Analysis},
      number = {14},
      doi = {10.19086/da.143808},
      archivePrefix={arXiv},
      primaryClass={cs.DS},
     
}

@inproceedings{Schaefer,
author = {Schaefer, Thomas J.},
title = {The complexity of satisfiability problems},
year = {1978},
isbn = {9781450374378},
publisher = {Association for Computing Machinery},
address = {New York, NY, USA},
doi = {10.1145/800133.804350},
abstract = {The problem of deciding whether a given propositional formula in conjunctive normal form is satisfiable has been widely studied. I t is known that, when restricted to formulas having only two literals per clause, this problem has an efficient (polynomial-time) solution. But the same problem on formulas having three literals per clause is NP-complete, and hence probably does not have any efficient solution.In this paper, we consider an infinite class of satisfiability problems which contains these two particular problems as special cases, and show that every member of this class is either polynomial-time decidable or NP-complete. The infinite collection of new NP-complete problems so obtained may prove very useful in finding other new NP-complete problems. The classification of the polynomial-time decidable cases yields new problems that are complete in polynomial time and in nondeterministic log space.We also consider an analogous class of problems, involving quantified formulas, which has the property that every member is either polynomial time decidable or complete in polynomial space.},
booktitle = {Proceedings of the Tenth Annual ACM Symposium on Theory of Computing},
pages = {216–226},
numpages = {11},
location = {San Diego, California, USA},
series = {STOC '78}
}

@article{Zhuk,
author = {Zhuk, Dmitriy},
title = {A Proof of the {CSP} Dichotomy Conjecture},
year = {2020},
issue_date = {October 2020},
publisher = {Association for Computing Machinery},
address = {New York, NY, USA},
volume = {67},
number = {5},
issn = {0004-5411},
doi = {10.1145/3402029},
abstract = {Many natural combinatorial problems can be expressed as constraint satisfaction problems. This class of problems is known to be NP-complete in general, but certain restrictions on the form of the constraints can ensure tractability. The standard way to parameterize interesting subclasses of the constraint satisfaction problem is via finite constraint languages. The main problem is to classify those subclasses that are solvable in polynomial time and those that are NP-complete. It was conjectured that if a constraint language has a weak near-unanimity polymorphism then the corresponding constraint satisfaction problem is tractable; otherwise, it is NP-complete.In the article, we present an algorithm that solves Constraint Satisfaction Problem in polynomial time for constraint languages having a weak near unanimity polymorphism, which proves the remaining part of the conjecture.1},
journal = {Journal of the ACM},
month = aug,
articleno = {30},
numpages = {78},
keywords = {CSP dichotomy, Constraint satisfaction problem, computational complexity, relational clones, weak near-unanimity}
}

@INPROCEEDINGS{Bulatov,
  author={Bulatov, Andrei A.},
  booktitle={2017 IEEE 58th Annual Symposium on Foundations of Computer Science (FOCS)}, 
  title={A Dichotomy Theorem for Nonuniform \protect{CSPs}}, 
  year={2017},
  volume={},
  number={},
  pages={319-330},
  keywords={Algebra;Complexity theory;Computer science;Standards;Algorithm design and analysis;Terminology;Electronic mail;Constraint Satisfaction problem;dichotomy conjecture},
  doi={10.1109/FOCS.2017.37}}

@article{2pluseps,
author = {Austrin, Per and Guruswami, Venkatesan and H\r{a}stad, Johan},
title = {$(2+\varepsilon)$-Sat is NP-hard},
journal = {SIAM Journal on Computing},
volume = {46},
number = {5},
pages = {1554-1573},
year = {2017},
doi = {10.1137/15M1006507},
    abstract = { We prove the following hardness result for a natural promise variant of the classical CNF-satisfiability problem: Given a CNF-formula where each clause has width \$w\$ and the guarantee that there exists an assignment satisfying at least \$g = \lceil \frac{w}{2}\rceil -1\$ literals in each clause, it is NP-hard to find a satisfying assignment to the formula (that sets at least one literal to true in each clause). On the other hand, when \$g = \lceil \frac{w}{2}\rceil\$, it is easy to find a satisfying assignment via simple generalizations of the algorithms for 2-Sat. Viewing 2-Sat \$\in \mathrm{P}\$ as tractability of Sat when 1 in 2 literals are true in every clause, and NP-hardness of 3-Sat as intractability of Sat when 1 in 3 literals are true, our result shows, for any fixed \$\varepsilon > 0\$, the difficulty of finding a satisfying assignment to instances of “\$(2+\varepsilon)\$-Sat” where the density of satisfied literals in each clause is guaranteed to exceed \$\frac{1}{2+\varepsilon}\$. We also strengthen the results to prove that, given a (2k+1)-uniform hypergraph that can be 2-colored such that each edge has perfect balance (at most k+1 vertices of either color), it is NP-hard to find a 2-coloring that avoids a monochromatic edge. In other words, a set system with discrepancy 1 is hard to distinguish from a set system with worst possible discrepancy. Finally, we prove a general result showing the intractability of promise constraint satisfaction problems based on the paucity of certain “weak polymorphisms.” The core of the above hardness results is the claim that the only weak polymorphisms in these particular cases are juntas depending on few variables. }
}

@article{symmetricBooleanDichotomy,
author = {Joshua Brakensiek and Venkatesan Guruswami},
title = {Promise Constraint Satisfaction: Structure Theory and a Symmetric \protect{Boolean} Dichotomy},
    abstract = { Abstract A classic result of Schaefer [STOC, 1978] classifies all constraint satisfaction problems (CSPs) over the Boolean domain to be either in P or NP-hard. This paper considers a promise-problem variant of CSPs called PCSPs. Many problems such as approximate graph and hypergraph coloring, the (2 + ∊)-SAT problem due to Austrin, Guruswami, and Håstad [SIAM Journal on Computing, 2017], and the digraph homomorphism problem can be placed in this framework. This paper is motivated by the pursuit of understanding the computational complexity of Boolean PCSPs, determining which PCSPs are polynomial-time tractable or NP-hard. As our main result, we show that PCSPs exhibits a dichotomy (it is either polynomial-time tractable or NP-hard) when the clauses are symmetric and allow for negations of variables. In particular, we show that every such polynomial-time tractable instance can be solved via either Gaussian elimination over F2 or a linear programming relaxation. We achieve our dichotomy theorem by extending the weak polymorphism framework of AGH which itself is a generalization of the algebraic approach used by polymorphisms to study CSPs. In both the algorithm and hardness portions of our proof, we incorporate new ideas and techniques not utilized in the CSP case. },
journal = {SIAM Journal on Computing},
doi = {10.1137/19M128212X},
volume = {50},
number = {6},
pages = {1663-1700},
year = {2021}
}

@article{algebraicApproach,
author = {Barto, Libor and Bul\'{\i}n, Jakub and Krokhin, Andrei and Opr\v{s}al, Jakub},
title = {Algebraic Approach to Promise Constraint Satisfaction},
year = {2021},
issue_date = {August 2021},
publisher = {Association for Computing Machinery},
address = {New York, NY, USA},
volume = {68},
number = {4},
issn = {0004-5411},
doi = {10.1145/3457606},
abstract = {The complexity and approximability of the constraint satisfaction problem (CSP) has been actively studied over the past 20 years. A new version of the CSP, the promise CSP (PCSP), has recently been proposed, motivated by open questions about the approximability of variants of satisfiability and graph colouring. The PCSP significantly extends the standard decision CSP. The complexity of CSPs with a fixed constraint language on a finite domain has recently been fully classified, greatly guided by the algebraic approach, which uses polymorphisms—high-dimensional symmetries of solution spaces—to analyse the complexity of problems. The corresponding classification for PCSPs is wide open and includes some long-standing open questions, such as the complexity of approximate graph colouring, as special cases. The basic algebraic approach to PCSP was initiated by Brakensiek and Guruswami, and in this article, we significantly extend it and lift it from concrete properties of polymorphisms to their abstract properties. We introduce a new class of problems that can be viewed as algebraic versions of the (Gap) Label Cover problem and show that every PCSP with a fixed constraint language is equivalent to a problem of this form. This allows us to identify a “measure of symmetry” that is well suited for comparing and relating the complexity of different PCSPs via the algebraic approach. We demonstrate how our theory can be applied by giving both general and specific hardness/tractability results. Among other things, we improve the state-of-the-art in approximate graph colouring by showing that, for any k≥ 3, it is NP-hard to find a (2k-1)-colouring of a given k-colourable graph.},
journal = {J. ACM},
month = jul,
articleno = {28},
numpages = {66},
keywords = {Constraint satisfaction, promise problem, approximation, graph colouring, polymorphism}
}

@article{FederVardi,
author = {Feder, Tom\'{a}s and Vardi, Moshe Y.},
title = {{The Computational Structure of Monotone Monadic SNP and Constraint Satisfaction: A Study through Datalog and Group Theory}},
journal = {SIAM Journal on Computing},
volume = {28},
number = {1},
pages = {57-104},
year = {1998},
doi = {10.1137/S0097539794266766},
    abstract = { This paper starts with the project of finding a large subclass of NP which exhibits a dichotomy. The approach is to find this subclass via syntactic prescriptions. While the paper does not achieve this goal, it does isolate a class (of problems specified by) "monotone monadic SNP without inequality" which may exhibit this dichotomy. We justify the placing of all these restrictions by showing, essentially using Ladner's theorem, that classes obtained by using only two of the above three restrictions do not show this dichotomy. We then explore the structure of this class. We show that all problems in this class reduce to the seemingly simpler class CSP. We divide CSP into subclasses and try to unify the collection of all known polytime algorithms for CSP problems and extract properties that make CSP problems NP-hard. This is where the second part of the title, "a study through Datalog and group theory," comes in. We present conjectures about this class which would end in showing the dichotomy. }
}

@article{Jeavons98,
title = {On the algebraic structure of combinatorial problems},
journal = {Theoretical Computer Science},
volume = {200},
number = {1},
pages = {185-204},
year = {1998},
issn = {0304-3975},
doi = {10.1016/S0304-3975(97)00230-2},
author = {Peter Jeavons},
keywords = {Complexity, Satisfiability, Relational structure, Closure, Homomorphism},
abstract = {We describe a general algebraic formulation for a wide range of combinatorial problems including Satisfiability, Graph Colorability and Graph Isomorphism In this formulation each problem instance is represented by a pair of relational structures, and the solutions to a given instance are homomorphisms between these relational structures. The corresponding decision problem consists of deciding whether or not any such homomorphisms exist. We then demonstrate that the complexity of solving this decision problem is determined in many cases by simple algebraic properties of the relational structures involved. This result is used to identify tractable subproblems of Satisfiability, and to provide a simple test to establish whether a given set of Boolean relations gives rise to one of these tractable subproblems.}
}

@article{Jeavons97,
author = {Jeavons, Peter and Cohen, David and Gyssens, Marc},
title = {Closure properties of constraints},
year = {1997},
issue_date = {July 1997},
publisher = {Association for Computing Machinery},
address = {New York, NY, USA},
volume = {44},
number = {4},
issn = {0004-5411},
doi = {10.1145/263867.263489},
abstract = {Many combinatorial search problems can be expressed as “constraint satisfaction problems” and this class of problems is known to be NP-complete in general. In this paper, we investigate the subclasses that arise from restricting the possible constraint types. We first show that any set of constraints that does not give rise to an NP-complete class of problems must satisfy a certain type of algebraic closure condition. We then investigate all the different possible forms of this algebraic closure property, and establish which of these are sufficient to ensure tractability. As examples, we show that all known classes of tractable constraints over finite domains can be characterized by such an algebraic closure property. Finally, we describe a simple computational procedure that can be used to determine the closure properties of a given set of constraints. This procedure involves solving a particular constraint satisfaction problem, which we call an “indicator problem.”},
journal = {Journal of the ACM},
month = jul,
pages = {527–548},
numpages = {22},
keywords = {NP-completeness, complexity, constraint satisfaction problem, indicator problem}
}

@article{BWZ21,
author = {Brandts, Alex and Wrochna, Marcin and {\v Z}ivn{\'y}, Stanislav},
title = {The Complexity of Promise \protect{SAT} on Non-\protect{B}oolean Domains},
year = {2021},
issue_date = {December 2021},
publisher = {Association for Computing Machinery},
address = {New York, NY, USA},
volume = {13},
number = {4},
issn = {1942-3454},
doi = {10.1145/3470867},
journal = {ACM Trans. Comput. Theory},
month = sep,
articleno = {26},
numpages = {20},
keywords = {label cover, algebraic approach, polymorphisms, PCSP, Promise constraint satisfaction}
}

@article{NAE,
title = {Beyond {PCSP}(1-in-3,{NAE})},
journal = {Information and Computation},
volume = {289},
pages = {104954},
year = {2022},
issn = {0890-5401},
doi = {10.1016/j.ic.2022.104954},
author = {Brandts, Alex and {\v Z}ivn{\'y}, Stanislav},
abstract = {1-in-3-SAT and Not-All-Equal-3-SAT are classic examples of Boolean symmetric (non-promise) constraint satisfaction problems (CSPs). While both problems are NP-hard, Brakensiek and Guruswami showed [SICOMP'21] that given a satisfiable instance of 1-in-3-SAT one can find a solution to the corresponding instance of (weaker) Not-All-Equal-3-SAT. In other words, the promise CSP template (1-in-3,NAE) is tractable. Unlike previously established dichotomy results for fragments of promise CSPs (PCSPs), we focus on non-symmetric PCSPs. In particular, we study PCSP templates obtained from the Boolean template (t-in-k,NAE) by either adding tuples to t-in-k or removing tuples from NAE. For the former, we classify all templates as either tractable or not solvable by one of the strongest known algorithm for PCSPs, the combined basic LP and affine IP relaxation of Brakensiek, Guruswami, Wrochna, and Živný [SICOMP'20]. For the latter, we classify all templates as either tractable or NP-hard.}
}

@article{usefulness_predicates,
author = {Austrin, Per and H\r{a}stad, Johan},
title = {On the usefulness of predicates},
year = {2013},
issue_date = {May 2013},
publisher = {Association for Computing Machinery},
address = {New York, NY, USA},
volume = {5},
number = {1},
issn = {1942-3454},
doi = {10.1145/2462896.2462897},
abstract = {Motivated by the pervasiveness of strong inapproximability results for Max-CSPs, we introduce a relaxed notion of an approximate solution of a Max-CSP. In this relaxed version, loosely speaking, the algorithm is allowed to replace the constraints of an instance by some other (possibly real-valued) constraints, and then only needs to satisfy as many of the new constraints as possible.To be more precise, we introduce the following notion of a predicate P being useful for a (real-valued) objective Q: given an almost satisfiable Max-P instance, there is an algorithm that beats a random assignment on the corresponding Max-Q instance applied to the same sets of literals. The standard notion of a nontrivial approximation algorithm for a Max-CSP with predicate P is exactly the same as saying that P is useful for P itself.We say that P is useless if it is not useful for any Q. This turns out to be equivalent to the following pseudo-randomness property: given an almost satisfiable instance of Max-P, it is hard to find an assignment such that the induced distribution on k-bit strings defined by the instance is not essentially uniform.Under the unique games conjecture, we give a complete and simple characterization of useful Max-CSPs defined by a predicate: such a Max-CSP is useless if and only if there is a pairwise independent distribution supported on the satisfying assignments of the predicate. It is natural to also consider the case when no negations are allowed in the CSP instance, and we derive a similar complete characterization (under the UGC) there as well. Finally, we also include some results and examples shedding additional light on the approximability of certain Max-CSPs.},
journal = {ACM Trans. Comput. Theory},
month = may,
articleno = {1},
numpages = {24},
keywords = {Constraint satisfaction problem, approximation resistance, unique games conjecture, usefulness}
}

@InProceedings{symmetricBooleanDichotomy2,
  author =	{Ficak, Miron and Kozik, Marcin and Ol\v{s}\'{a}k, Miroslav and Stankiewicz, Szymon},
  title =	{{Dichotomy for Symmetric Boolean {PCSP}s}},
  booktitle =	{46th International Colloquium on Automata, Languages, and Programming (ICALP 2019)},
  pages =	{57:1--57:12},
  series =	{Leibniz International Proceedings in Informatics (LIPIcs)},
  ISBN =	{978-3-95977-109-2},
  ISSN =	{1868-8969},
  year =	{2019},
  volume =	{132},
  editor =	{Baier, Christel and Chatzigiannakis, Ioannis and Flocchini, Paola and Leonardi, Stefano},
  publisher =	{Schloss Dagstuhl -- Leibniz-Zentrum f{\"u}r Informatik},
  address =	{Dagstuhl, Germany},
  URN =		{urn:nbn:de:0030-drops-106339},
  doi =		{10.4230/LIPIcs.ICALP.2019.57},
  annote =	{Keywords: promise constraint satisfaction problem, PCSP, algebraic approach}
}

@article{rainbow_fixing_assignment,
author = {Guruswami, Venkatesan and Sandeep, Sai},
title = {Rainbow Coloring Hardness via Low Sensitivity Polymorphisms},
journal = {SIAM Journal on Discrete Mathematics},
volume = {34},
number = {1},
pages = {520-537},
year = {2020},
doi = {10.1137/19M127731X},
}

\appendix

\section{Computational Aspects of The Conditions}
\label{sec:computational aspects}

In this section we discuss some practical details of how to check for the various conditions described in the previous sections for concrete predicates.  We first briefly discuss testing applicability of the BLP+AIP algorithm and then discuss the various hardness conditions in more detail. An implementation of the methods described in this section are available at \protect\href{https://github.com/bjorn-martinsson/On-the-Usefulness-of-Promises}{github.com/bjorn-martinsson/On-the-Usefulness-of-Promises}.

\subsection{Testing Applicability of BLP+AIP}\label{sec:tractcondition}

Our tractability condition is to check whether $\fiPCSP(A, \OR)$ can be solved using the BLP + AIP algorithm.  Here we use \cref{thm:blpaffine} and hence we need to study the possible existence of (block-)symmetric polymorphisms of arbitrary large arity. We do this in two steps.

We first test if $\Pol(\fiPCSP(A, \OR))$ contains infinitely many functions from any of the five families of (block-)symmetric functions from \cref{lemma:easy} -- $\MajFam, \ParFam, \ATFam, \idneg{\MajFam}, \idneg{\ParFam}$. In the case of $\MajFam, \ATFam, \idneg{\MajFam}$, this test can be done using an LP-solver by finding the separating hyperplanes described in \cref{lemma:test_maj,lemma:test_AT,lemma:condition_mino}. In the case of $\ParFam$ and $\idneg{\ParFam}$, the affine conditions given by \cref{lemma:test_odd,lemma:condition_pario} can be checked efficiently using basic bit operations on integers.

If this test fails, then we test whether $\Pol(\fiPCSP(A, \OR))$ does not contain large block-symmetric polymorphisms of other kinds.  According to \cref{thm:blpaffine} this is equivalent to $\Pol(\fiPCSP(A, \OR))$ not containing an $(\ell, \ell+1)$-block-symmetric polymorphism for some $\ell$. To use this condition we keep incrementing $\ell$ until we can establish that no $(\ell, \ell+1)$-block-symmetric polymorphisms exist. 
In principle this process might not even terminate (indeed, as recently shown by Larrauri \cite{LarrauriPCSPSearch}, checking whether BLP+AIP solves an arbitrary PCSP template is in general undecidable), but it turns out that for $k \leq 5$ this search terminates relatively quickly for all remaining predicates.
Even so this second step is relatively time-consuming, but in our case when analyzing all predicates of a given arity, we can be speed it up significantly by only performing it for (inclusion-wise) minimal predicates that are not resolved already in the first step.

So following these steps, we can completely determine for which predicates $A$ of arity at most five $\fiPCSP(A, \OR)$ can be solved using the BLP+AIP algorithm (and as a by-product we also establish that all such predicates can be handled using the five families of block-symmetric functions described above).

\subsection{Ruling out Restricted Polymorphisms} \label{sec:rulingout}

In general for a $k$-ary predicate $A$, the set of $\ell$-ary polymorphisms $\Pol(\fiPCSP(A, \OR))$ is the solution set of a large $k$-SAT formula , where each matrix $M \in A^\ell$ gives rise to a clause $f(M_1) \vee f(M_2) \vee \ldots \vee f(M_k)$, and we additionally require $f(x) = \neg f(x)$ (folded) and $f(0^k) = 0$ (idempotent).  All of our hardness conditions boil down to proving that there are no $\ell$-ary polymorphisms that take certain predefined values at a given list of inputs.  In principle this can then simply be checked by generating the aforementioned SAT formula, fixing the predefined values, and then asking a SAT solver whether the formula is satisfiable.  In practice the formulas we are interested in are quite large (sometimes on the order billions of clauses) and it is essential to use the predefined values already when generating the SAT formula to avoid generating already satisfied clauses.

While these SAT formulas in general have a very large number of variables and clauses and could be computationally infeasible to analyze, it turns out that they are in practice often very easy.  For the specific forbidden functions in our hardness conditions, it surprisingly turns out that it is often possible to rule them out using simple unit propagation of the restricted SAT formulas.

Several hardness conditions also involve the existence of certain functions $f \in \Pol(\fiPCSP(A, \OR))^0$ (the set of functions obtained by taking a polymorphism and fixing some number of variables to $0$).  Any such function $f$ of arity $\ell$ can be obtained from an actual polymorphism $\tilde{f}$ of arity $\ell+1$ by fixing a single bit to $0$.   I.e., $\Pol(\fiPCSP(A, \OR))^0$ contains a function $f$ satisfying $f(x)=y$ for all predetermined values $(x,y) \in P$ if and only if $\Pol(\fiPCSP(A, \OR))$ contains a function $\tilde{f}$ satisfying $f(0x)=y$ for all $(x,y) \in P$.

\subsection{Sufficient parameters for small arities}

In practice it can be difficult to find obstructions for polymorphisms of large arity because the SAT formulas become very large. However, it is not always needed to consider high arity polymorphisms. To illustrate this, we provide a table, \cref{tab:parameters}, listing sufficient values of the parameter $t$ used in hardness conditions given by Theorems \ref{match+ADA}, \ref{invmatch+ADA}, \ref{thm:unate+ADA} and \ref{thm:split}. The table contains the parameters that are sufficient to show that at least one of the hardness conditions, Theorems \ref{match+ADA}, \ref{invmatch+ADA}, \ref{thm:unate+ADA} and \ref{thm:split}, can be applied.

\begin{longtblr}[
  caption={Given any predicate $A$ that we have identified is NP-hard using Theorems \ref{match+ADA}, \ref{invmatch+ADA}, \ref{thm:unate+ADA} and \ref{thm:split}, this table contains the smallest values of $t$ that we have identified to be sufficient to apply at least one of theorems.},
  label={tab:parameters},
]{
  colspec = {l|ccccc},
  rowhead = 1,
  row{odd} = {blue9},
  row{1} = {gray9},
  columns = {colsep=4pt},
} 
     & Matching & Inv. Matching & $t$-ADA-free & $t$-UnCADA-free & $t$-UnDADA-free \\
\hline
    $k=3$ & $1$ & $2$ & $2$ & $2$ & $3$ \\
    $k=4$ & $3$ & $2$ & $3$ & $3$ & $4$ \\
    $k=5$ & $3$ & $3$ & $5$ & $4$ & $4$\\
\end{longtblr}


Out of the different conditions given in the table, the most difficult one to handle computationally is UnCADA. When $k=5$ and $t=4$, the polymorphisms in question have arity $11$. In our experience, identifying all predicates that are $4$-UnCADA-free using a Python script on a laptop is instantaneous for $k=3$, requires several minutes for $k=4$, and takes several days for $k=5$.

Another parameter of interest is the largest value of $\ell$ such that $\Pol(\fiPCSP(A, \OR))$ contains a $(\ell, \ell+1 )$-block-symmetric polymorphism but no $(\ell +1 ,\ell +2)$-block-symmetric polymorphism. Or in other words, what is the maximum value of $\ell$ for which $\fiPCSP(A, \OR)$ contains at least one $(\ell, \ell+1)$-block-symmetric polymorphism but where the $\PCSP$ is not solvable by BLP+AIP. We verified numerically, by an exhaustive search, that if $k=3$ then $\ell=1$, if $k=4$ then $\ell=3$, and if $k=5$ then $\ell=7$.

\subsection{Unate Minions}

The condition of a minion being unate is easy to check because of the following lemma.

\begin{lemma}
    A minion $\Minion$ contains only unate functions if and only if $\Minion$ does not contain a function $f$ of arity $5$ satisfying
    \begin{align*}
    f(00011) &= 0  & f(00101) = 1 \\
    f(10011) &= 1  & f(10101) = 0
    \end{align*}
\end{lemma}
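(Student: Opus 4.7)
The forward direction is immediate: the function $f$ described in the statement is itself non-unate in its first variable, since $f(00011) = 0 < 1 = f(10011)$ while $f(00101) = 1 > 0 = f(10101)$. Thus a minion containing only unate functions cannot contain $f$.

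For the converse, I would argue that a single non-unate function in $\Minion$ already has such an $f$ as a minor. Suppose $g: \{0,1\}^\ell \to \{0,1\}$ lies in $\Minion$ and is not unate. Then there is a coordinate $i$ in which $g$ is neither non-decreasing nor non-increasing, witnessed by sets $S, T \subseteq [\ell] \setminus \{i\}$ with $g(S) = 0$, $g(S \cup \{i\}) = 1$, $g(T) = 1$, and $g(T \cup \{i\}) = 0$.

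The natural thing to do is to partition the remaining coordinates according to how they behave on $S$ and on $T$: for each $(a,b) \in \{0,1\}^2$ let $G_{ab} = \{\, j \in [\ell] \setminus \{i\} \,:\, S_j = a,\ T_j = b \,\}$. Define a minor $f$ of $g$ of arity $5$ by identifying all coordinates in $G_{00}$ to a single new variable $x_2$, all coordinates in $G_{01}$ to $x_3$, all of $G_{10}$ to $x_4$, all of $G_{11}$ to $x_5$, and mapping the coordinate $i$ itself to $x_1$. If some $G_{ab}$ happens to be empty, the corresponding $x_j$ is simply a dummy variable of $f$, which causes no harm since the required values of $f$ then do not constrain it. Since $f$ is a minor of $g \in \Minion$, we have $f \in \Minion$.

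A direct check then verifies the four prescribed values. For instance, the assignment $(x_1,x_2,x_3,x_4,x_5) = (0,0,0,1,1)$ feeds $g$ exactly the input $S$ (coordinate $i$ gets $0$; coordinates in $G_{00}$ and $G_{01}$ get $0$, matching $S_j=0$; coordinates in $G_{10}$ and $G_{11}$ get $1$, matching $S_j=1$), so $f(00011) = g(S) = 0$. The same calculation gives $f(10011) = g(S \cup \{i\}) = 1$, $f(00101) = g(T) = 1$, and $f(10101) = g(T \cup \{i\}) = 0$, which is precisely the function in the statement. No step is technically difficult; the only point to be careful about is allowing empty groups $G_{ab}$, which is handled by the standing convention that minors may have dummy variables.
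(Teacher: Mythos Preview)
Your proof is correct and follows essentially the same approach as the paper's: both partition the coordinates $j \ne i$ into four groups according to the pair $(S_j, T_j)$ and form the arity-$5$ minor by collapsing each group to a single variable while keeping $i$ separate. You are simply more explicit than the paper about the variable ordering, the verification of the four values, and the handling of empty groups.
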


\begin{proof}  Suppose $g \in \Minion$ is not unate and let $a$ and $b$ witness this for some variable $i$.  Partition all variables $j \ne i$ into four groups depending on the values of $a_j$ and $b_j$.  Create a minor $f$ of $g$ by identifying all variables in the same group with new variables and keeping $x_i$ as its own variable.  It is easy to check that with an appropriate ordering of the five variables ($x_i$ becomes the first variable) this function $f$ satisfies the constraints above.
\end{proof}

\subsection{Bounded Matching and Inverted Matching}

For bounded matchings, we base our test on the observation that having a matching  family of size $t$ is in fact determined by having a polymorphism satisfying a condition that only depends on $t$.

\begin{lemma} \label{lemma:test_matching}
    All functions in a minion $\Minion$ have matching number $\leq t$ if and only if $\Minion$ does not contain a $(t+2)$-ary function $f$ such that $f(\{i\}) = 1$ for all $1 \le i \le t+1$.
\end{lemma}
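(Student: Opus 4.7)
The plan is to prove the two directions separately by direct use of the definition of a minion (closure under variable identification, with dummy variables allowed) and the definition of matching number.

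The easier (``if'') direction is immediate: suppose $\Minion$ contains a $(t+2)$-ary $f$ with $f(\{i\}) = 1$ for every $i \in [t+1]$. Then the $t+1$ singletons $\{1\}, \{2\}, \ldots, \{t+1\}$ are pairwise disjoint subsets of $[t+2]$ on each of which $f$ evaluates to $1$, so $f$ itself has matching number at least $t+1 > t$.

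For the other direction, I would argue by contrapositive. Suppose some $g \in \Minion$ of arity $n$ has matching number at least $t+1$, witnessed by pairwise disjoint $S_1, \ldots, S_{t+1} \subseteq [n]$ with $g(S_i) = 1$ for every $i$. Define $\pi: [n] \to [t+2]$ by sending every coordinate in $S_i$ to $i$ for $i \in [t+1]$, and every coordinate outside $\bigcup_i S_i$ to $t+2$. In the boundary case $\bigcup_i S_i = [n]$, the index $t+2$ simply does not appear in the image of $\pi$, so $x_{t+2}$ becomes a dummy variable in the minor — this is allowed by the definition of a minor, where $\pi$ need not be surjective. Setting $f = g_\pi$ yields a function of arity exactly $t+2$ that lies in $\Minion$ by closure under minors, and by construction $f(\{i\}) = g(S_i) = 1$ for each $i \in [t+1]$, giving the desired polymorphism.

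Since both directions reduce to a single minor construction together with the combinatorial definition of matching number, there is really no substantive obstacle. The only piece of bookkeeping that needs to be mentioned explicitly is the dummy variable $x_{t+2}$ in the edge case when the sets $S_i$ already cover all coordinates of $g$; this is the only reason the witness arity is $t+2$ rather than $t+1$.
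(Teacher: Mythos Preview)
Your proof is correct and follows essentially the same approach as the paper: both directions are handled identically, and your treatment of the boundary case (a non-surjective $\pi$ leaving $x_{t+2}$ as a dummy) is exactly what the paper does when it says one can ``extend to arity $\ell+1$ by adding an extra unused variable.''
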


\begin{proof}
    If all $f \in \Minion$ have matching number $\leq t$ then clearly $\Minion$ does not contain such a function since it would have matching number $t+1$.

    Conversely, assume $\Minion$ does contains an $\ell$-arity function $g$ with matching number $t+1$, where the matching consists of the sets $S_1,\ldots,S_{t+1}$. We can without loss of generality assume that $\bigcup_i S_i \subsetneq [\ell]$ since otherwise we could extend $f$ to arity $\ell +1$ by adding an extra unused variable to $f$. 

    Now the $(t+2)$-ary minor $f$ of $g$ obtained by identifying all variables inside each $S_i$, as well as the variables in $[\ell] - (S_1 \cup \ldots \cup S_{t+1})$ is a function $f$ of the prescribed type.
\end{proof}

\begin{remark}
\cref{lemma:test_matching} gives a method to test whether all $f \in \Minion$ have matching number $t$, but it does not say anything about how large $t$ can be for our polymorphism minions of interest coming from Promise-SAT problems.  Clearly if the block-symmetric functions $\Par$ and $\AT$ exist of arbitrarily large arity then $\Minion$ does not have bounded matching number, but this is not particularly interesting since the existence of these already guarantee tractability of the underlying Promise-SAT problem anyway.  A more useful property, which is applicable also in non-tractable cases, is whether $\Minion^0$ contains arbitrarily large $\OR$ functions -- it is easy to see that if this is the case then $\Minion$ does not have bounded matching number (and also does not have bounded fixing sets).  Somewhat surprisingly, it turns out that for arity up to five, all $\fiPCSP(A, \OR)$ problems that are not tractable by BLP+AIP either have matching number at most $5$, or have arbitrarily large $\OR$ functions in $\Minion^0$ and therefore unbounded matching number. 
\end{remark}

Analogously we have a similar characterization of inverted matching number.

\begin{lemma} \label{lemma:test_inv_matching}
    All functions in a minion $\Minion$ have inverted matching number $\leq t$ if and only if $\Minion$ does not contain a $(t+3)$-ary function $f$ such that $f(\{t+3\})=1$ and $f(\{i,t+3\})=0$ for all $1 \le i \le t+1$.
\end{lemma}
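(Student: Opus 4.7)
My plan is to mimic almost verbatim the proof of \cref{lemma:test_matching} above, since the notion of inverted matching number is a very close cousin of matching number: the only differences are (i) we track a fixed ``base set'' $S$ along with the disjoint collection, and (ii) we care about flipping $f$ from $1$ to $0$ rather than exhibiting $f=1$ several times. Thus I expect a single additional coordinate (for the base set $S$) in the canonical forbidden function, which is why the arity goes up to $t+3$ rather than $t+2$.

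For the easy (``only if'') direction, I would observe that the prescribed function $f$ itself has inverted matching number at least $t+1$: take $S=\{t+3\}$, which satisfies $f(S)=1$ by assumption, and the singletons $T_i=\{i\}$ for $1\le i\le t+1$ as the disjoint witnessing collection. These $T_i$ are pairwise disjoint and disjoint from $S$, and each satisfies $f(S\cup T_i)=f(\{i,t+3\})=0$. So the existence of such an $f$ in $\Minion$ contradicts the hypothesis that every function in $\Minion$ has inverted matching number $\le t$.

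For the nontrivial (``if'') direction I would start from any $g\in\Minion$ of some arity $\ell$ with inverted matching number at least $t+1$, witnessed by $S\in g^{-1}(1)$ and disjoint $T_1,\dots,T_{t+1}\subseteq[\ell]\setminus S$ with $g(S\cup T_i)=0$ for each $i$. I then construct a minor $f$ of arity exactly $t+3$ by defining a map $\pi:[\ell]\to[t+3]$ that sends every coordinate in $T_i$ to $i$ (for $i=1,\dots,t+1$), every coordinate in $S$ to $t+3$, and every remaining coordinate (those in $[\ell]\setminus(S\cup T_1\cup\dots\cup T_{t+1})$) to $t+2$. Then $f=g_\pi$ and a direct check gives $f(\{t+3\})=g(S)=1$ and $f(\{i,t+3\})=g(S\cup T_i)=0$ for $1\le i\le t+1$, as required; since $\Minion$ is closed under minors, $f\in\Minion$.

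The only subtlety—analogous to what the proof of \cref{lemma:test_matching} silently handles—is the possibility that $S\cup T_1\cup\dots\cup T_{t+1}=[\ell]$, in which case no coordinate of $g$ is mapped to $t+2$. This is not a real obstacle: the resulting $f=g_\pi$ is still a legitimate minor of $g$ under the stated definition, it simply does not depend on its $(t+2)$-th variable. So no special case handling is needed and the minor-closure of $\Minion$ does the rest.
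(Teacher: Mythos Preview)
Your proof is correct and follows exactly the approach the paper intends: the paper itself omits the proof, stating only that it ``is identical to the preceding lemma,'' and your argument is indeed the natural transcription of the proof of \cref{lemma:test_matching} to the inverted setting. Your handling of the edge case (allowing $\pi$ to be non-surjective so that $f$ simply ignores its $(t+2)$-th coordinate) is valid under the paper's definition of minor and is arguably cleaner than the ``add a dummy variable'' trick used in the matching lemma.
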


The proof is identical to the preceding lemma and we omit it here.

\subsection{The obstructions of \texorpdfstring{$\AND_t \in \Minion^0$}{AND\_t in M\^{}0}}

In \cref{sec:hard} there are two types of hardness conditions, those based on $\AND_t \not \in \Minion^0$ for some $t$,
\cref{lemma:matching,lemma:invmatch,lemma:unate+g}, and the strictly stronger results based on $\Minion^0$ being $t$-ADA-free for some $t$, \cref{match+ADA,,invmatch+ADA,,thm:unate+ADA,,thm:split}. One of the reasons why we choose to introduce hardness conditions based on $\AND_t \not \in \Minion^0$, despite these results being strictly weaker than those derived from being $t$-ADA-free, is that it is straightforward to determine if $\AND_t \not \in \Pol(\fiPCSP(A, \OR))^0$ for some $t$.

Let us see how to construct an obstruction matrix for $\AND_t \in \Pol(\fiPCSP(A, \OR))^0$. This is a matrix $M \in A^{t+1}$ where the first column corresponds to the variables fixed to $0$ as discussed at the end of \cref{sec:rulingout}. Using the fact that the polymorphisms of $\fiPCSP(A, \OR)$ are folded, the matrix is an obstruction of $\AND_t \in \Pol(\fiPCSP(A, \OR))^0$ if and only if
\begin{itemize}
        \item for every row $i$ that starts with a $0$, $\AND_t(M_i^2,\ldots,M_i^{t+1}) = 0$.
        \item for every row $i$ that starts with a $1$, $\negate{\AND}_t(\neg M_i^2,\ldots, \neg M_i^{t+1}) = 0$. Or in other words, $M_i^2 = \ldots = M_i^{t+1} = 0$.
\end{itemize}

Start by fixing the first column of $M$. The second condition restricts the set of columns to choose from to the subset of $A$ with only zeroes in the positions where the first column is one.  For each such element, $a$, let $S_a$ denote the set of coordinates where $a$ is zero.  If the union of these sets does not cover the set of rows where the first column is zero, then it is not possible to construct an obstruction matrix with this first column. If the union of the $S_a$ cover all rows where the first columns has a zero, then this coverage can be used to construct an obstruction matrix if $t \geq k - 1$. Note that we need at most $k-1$ columns to cover all rows that start with a zero \footnote{As $a$ is not $0^k$, only at most $k-1$ rows start with a
zero.}. This yields an obstruction matrix for $\AND_{k-1}$. Appending additional columns to this matrix gives an obstruction matrix for larger $t$.  We summarize this in a lemma.

\begin{lemma} \label{lemma:test_mon}
    We have $\AND_{k-1} \in \Pol(\fiPCSP(A, B))^0$ if and only if $\AND_{t} \in \Pol(\fiPCSP(A, B))^0$ for every $t \geq k - 1$.
\end{lemma}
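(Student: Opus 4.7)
The plan is to leverage the explicit obstruction-matrix characterization for $\AND_t \in \Pol(\fiPCSP(A,\OR))^0$ that was just derived in the paragraph preceding the lemma (I will read the $B$ in the statement as $\OR$, matching the surrounding discussion). The $(\Leftarrow)$ direction is immediate by specializing the hypothesis to $t = k-1$.

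For the non-trivial $(\Rightarrow)$ direction, I would argue by contrapositive: suppose $\AND_t \notin \Pol(\fiPCSP(A,\OR))^0$ for some $t \ge k-1$, so that an obstruction matrix $M \in A^{t+1}$ exists, and produce an obstruction for $\AND_{k-1}$. Write $a := M^1$ and put $A_a := \{a' \in A : a'_i = 0 \text{ whenever } a_i = 1\}$ and $S_{a'} := \{i : a'_i = 0\}$ for $a' \in A_a$. The two bullets preceding the lemma force every column $M^j$ ($2 \le j \le t+1$) to lie in $A_a$ (from the ``row starts with $1$'' condition) and force the sets $S_{M^j}$ to collectively cover the zero-coordinates of $a$ (from the ``row starts with $0$'' condition). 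Since $A \subseteq \OR$, the string $a$ has at least one $1$ and therefore at most $k-1$ zero-coordinates, so from the $t$ columns $M^2, \dots, M^{t+1}$ I can extract a sub-cover of size at most $k-1$.

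To build the desired obstruction matrix for $\AND_{k-1}$, I take the first column $a$ together with this sub-cover and pad with repetitions of any one of its columns until I have exactly $k-1$ non-first columns, obtaining $M' \in A^k$. Repetition is harmless by the standard CSP equivalence noted in \cref{sec:Preliminaries}. Both bulletted obstruction conditions are preserved for $M'$: every padded column still lies in $A_a$, so the rows $i$ with $a_i = 1$ remain all-zero across columns $2,\ldots,k$; and the rows $i$ with $a_i = 0$ are still covered since duplicating a column of $A_a$ can only enlarge the union of zero-sets. Hence $M'$ is an obstruction witnessing $\AND_{k-1} \notin \Pol(\fiPCSP(A,\OR))^0$, completing the contrapositive.

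No serious obstacle is expected here; the result is essentially a formalization of the informal argument already sketched in the paragraph preceding the lemma. The only delicate point is verifying that padding with repeated columns really preserves both obstruction conditions, which is immediate once one writes them out on $M'$.
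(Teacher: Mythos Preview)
Your proof is correct and follows essentially the same approach as the paper: both arguments reduce to the observation that an obstruction matrix (with a fixed first column $a$) exists for some $t \ge k-1$ if and only if the zero-sets of columns drawn from $A_a$ can cover the at most $k-1$ zero-coordinates of $a$, so a minimal cover of size $\le k-1$ yields an obstruction for $\AND_{k-1}$, and padding (by repetition or by appending arbitrary $A_a$-columns) yields obstructions for all larger $t$. The only cosmetic difference is that the paper phrases the argument as a direct construction from the covering condition, while you phrase it as a contrapositive starting from a given obstruction for $t$.
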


\subsection{The obstructions of \texorpdfstring{$\xNOR_t \in \Minion^0$}{ANDNOR\_t in M\^{}0}}

In order to be able to apply hardness conditions given by \cref{lemma:unate+g,thm:unate+ADA} for unate polymorphisms, we need to be able to tell if there exists a $t$ such that $\xNOR_t \not \in \Pol(\fiPCSP(A, \OR))^0$. The obstruction matrices of $\xNOR_t$ can be constructed in a similar manner to those of $\AND_t$. Using the fact that the polymorphisms of $\fiPCSP(A, \OR)$ are folded, a matrix $M \in A^{t+1}$ is an obstruction matrix of $\xNOR_t \in \Pol(\fiPCSP(A, \OR))^0$ if and only if
\begin{itemize}
        \item for every row $i$ that starts with a $0$, $\xNOR_t(M_i^2,\ldots,M_i^{t+1}) = 0$.
        \item for every row $i$ that starts with a $1$, $\negate{\xNOR}_t(\neg M_i^2,\ldots, \neg M_i^{t+1}) = 0$.
\end{itemize}
This equates to the following conditions.
\begin{itemize}
        \item No row starts with $11$.
        \item For every row $i$ that starts with $01$, $\OR_{t-1}(M_i^3,\ldots,M_i^{t+1}) = 1$.
        \item for every row $i$ that starts with $10$, $M_i^3 = \ldots = M_i^{t+1} = 1$.
\end{itemize}

Analogously as with $\AND$, we start by fixing the first two columns such that no row starts with $11$. The third condition then restricts the set of columns to choose from to the subset of $A$ with only ones in the positions where the first two columns are $10$. For each such element $a$, let $S_a$ be the set of coordinates where $a$ is $1$. If the union the $S_a$ does not cover the set of all rows that start with $01$, then it is impossible to satisfy the second condition, implying that this choice of the first two columns cannot yield an obstruction matrix. But if the union does cover all rows, then we can construct an obstruction matrix. Note that we need at most $k-1$ columns to cover to cover all the rows that start with $01$ since there are at most $k-1$ rows to be covered. This yields an obstruction matrix for $\xNOR_{k}$. Appending additional columns results in an obstruction matrix for larger $t$. This is summarized in the following lemma, which is an analogue of \cref{lemma:test_mon}.

\begin{lemma} \label{lemma:test_gmon}
    We have $\xNOR_{k} \in \Pol(\fiPCSP(A, B))^0$ if and only if $\xNOR_{t} \in \Pol(\fiPCSP(A, B))^0$ for every $t \geq k$.
\end{lemma}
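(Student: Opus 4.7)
The plan is to follow the blueprint sketched in the paragraph preceding the lemma, which reduces the existence of an obstruction for $\xNOR_t \in \Pol(\fiPCSP(A, B))^0$ to a covering condition that is largely independent of $t$. Recall the characterisation derived there: an obstruction matrix $M \in A^{t+1}$ exists iff one can pick $M^1, M^2 \in A$ and $M^3, \ldots, M^{t+1} \in A$ satisfying (i) no row of $M$ starts with $\bstr{11}$, (ii) every row starting with $\bstr{01}$ has a $1$ somewhere in columns $3, \ldots, t+1$, and (iii) every row starting with $\bstr{10}$ is constantly $1$ across columns $3, \ldots, t+1$. The backward direction of the lemma is immediate by specialising $t = k$, so the content lies in the forward direction, equivalently in the contrapositive statement: an obstruction for $\xNOR_t$ with $t > k$ forces an obstruction for $\xNOR_k$.

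For this I would start from an obstruction matrix $M \in A^{t+1}$ for $\xNOR_t$, keep its first two columns $M^1, M^2$, and consider the induced row-partition. Let $I_{10}, I_{01}$ denote the sets of rows starting with $\bstr{10}$ respectively $\bstr{01}$, and let $A' = \{\,a \in A \,:\, a_i = 1 \text{ for all } i \in I_{10}\,\}$ be the set of admissible later columns. By condition (ii) applied to $M$, the sets $\{\,i \in I_{01} : a_i = 1\,\}$ for $a$ ranging over $A'$ collectively cover $I_{01}$. The key numerical observation is that $|I_{01}| \le k - 1$: this holds because $0^k \notin A$ (since $A \subseteq \OR$), so $M^1 \ne 0^k$ and there are at most $k - 1$ rows where $M^1$ is zero. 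Consequently the cover can be realised using at most $k - 1$ elements of $A'$, which together with $M^1$ and $M^2$ give a matrix in $A^{k+1}$ witnessing $\xNOR_k \notin \Pol^0$.

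To complete the picture and record the slightly stronger equivalence ``obstruction for some $t \geq k$ iff obstruction for every $t \geq k$'', I would also verify the upward step: given any obstruction $M \in A^{k+1}$ for $\xNOR_k$, append copies of column $M^3$ to reach width $t+1$ for any desired $t > k$. Condition (i) depends only on columns $1$ and $2$ and is unaffected; condition (ii) can only become easier as the index set of columns grows; and condition (iii) is preserved because $M^3$ itself satisfies it, so its values on rows in $I_{10}$ are $1$.

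The argument is essentially bookkeeping and no step is genuinely difficult. The one point requiring mild care is the bound $|I_{01}| \le k - 1$, which silently uses $0^k \notin A$; this is the same hypothesis implicitly used in the proof of \cref{lemma:test_mon}, reflected in its footnote, and corresponds to the Promise-SAT setting $A \subseteq \OR$ in which the lemma is actually applied.
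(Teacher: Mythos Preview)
Your proof is correct and follows essentially the same approach as the paper: fix the first two columns of a given obstruction, use $0^k \notin A$ to bound $|I_{01}| \le k-1$, select a subcover of that size from the admissible later columns, and pad upward by repeating an admissible column. The only cosmetic point is that ``at most $k-1$ elements'' must be padded to exactly $k-1$ to land in $A^{k+1}$, but the paper glosses over this in the same way.
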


\subsection{Monotonicity of ADA-free minions}

When it comes to ADAs, these are more complicated to rule out.  But we can at least establish some basic monotonicity properties which reduce the amount of different $(c,d)$-ADAs we have to forbid, and establishing that a $t$-ADA-free minion must also be $(t+1)$-ADA-free.

\begin{claim}
    Let $\Minion$ be a minion.  If $\Minion$ does not contain a $(c,d)$-ADA for some $c,d \ge 1$, then it also does not contain a $(c+1,d)$-ADA.
\end{claim}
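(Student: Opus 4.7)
The plan is to argue contrapositively: assume $\Minion$ contains a $(c{+}1,d)$-ADA and construct from it a $(c,d)$-ADA as a minor. Since a minion is closed under minors, the constructed function lies in $\Minion$, giving the desired contradiction. The natural reduction in arity (from $c+1+2d$ down to $c+2d$) comes from identifying two variables of the middle block.

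Concretely, let $g \colon \{0,1\}^d \times \{0,1\}^{c+1} \times \{0,1\}^d \to \{0,1\}$ be a $(c{+}1,d)$-ADA in $\Minion$, and define
\[
f\bigl(x,(y_1,\ldots,y_c),z\bigr) \;=\; g\bigl(x,(y_1,y_1,y_2,\ldots,y_c),z\bigr).
\]
This is a minor of $g$, so $f \in \Minion$. The only remaining task is to verify the three defining properties of \cref{ADA definition} for $f$ with parameters $(c,d)$.

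Property (a) is immediate: $f(1^d,1^c,0^d) = g(1^d,1^{c+1},0^d) = 1$ and likewise $f(0^d,1^c,1^d) = g(0^d,1^{c+1},1^d) = 1$. For (b), write $y' = (y_1,y_1,y_2,\ldots,y_c)$ and note that $w(y') = y_1 + w(y) \le w(y)+1$. Hence $w(x)+w(y)+w(z) < c+d$ implies $w(x)+w(y')+w(z) < (c{+}1)+d$, so property (b) for $g$ gives $f(x,y,z) = g(x,y',z) = 0$. For (c), if $f(x,y,z) = 1$ then $g(x,y',z) = 1$, so at least two of $x,y',z$ are all-$1$ strings by property (c) for $g$; and $y' = 1^{c+1}$ forces $y_1 = 1$, hence $y = 1^c$, so at least two of $x,y,z$ are all-$1$.

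There is no real obstacle here; the construction is a single variable-identification and each of (a)--(c) follows by a short direct check. The only point that requires a moment's care is property (b), where one must observe that duplicating the coordinate $y_1$ can increase the Hamming weight by at most one, which is exactly the slack available since the weight threshold grows from $c+d$ to $c+d+1$.
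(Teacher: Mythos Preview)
Your proof is correct and takes exactly the same approach as the paper: identify two of the middle ($y$-)variables of a $(c{+}1,d)$-ADA to obtain a $(c,d)$-ADA. The paper's proof is a one-liner that leaves the verification of properties (a)--(c) to the reader, whereas you spell these out, but the argument is identical.
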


\begin{proof}
    The contrapositive is easily proved: if $f$ is a $(c+1,d)$-ADA then the minor obtained by identifying two $y$-variables is a $(c,d)$-ADA.
\end{proof}

\begin{claim}
    Let $\Minion$ be a minion.  If $\Minion$ does not contain a $(c,d)$-ADA for some $d \ge c \ge 1$, then it also does not contain a $(c,d+1)$-ADA.
\end{claim}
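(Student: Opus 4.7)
The plan is to prove the contrapositive: assuming $\Minion$ contains a $(c, d+1)$-ADA $f$, I will produce a $(c, d)$-ADA $g \in \Minion$. The natural candidate is the minor of $f$ that collapses two variables in each of the two outer blocks. Concretely, define $g \colon \{0,1\}^d \times \{0,1\}^c \times \{0,1\}^d \rightarrow \{0,1\}$ by $g(x',y,z') = f(x'_1, x'_1, x'_2, \ldots, x'_d,\, y,\, z'_1, z'_1, z'_2, \ldots, z'_d)$. Since $g$ is a minor of $f$, it lies in $\Minion$.

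Next I would verify the three defining properties of a $(c,d)$-ADA for $g$. Property (a) is immediate: the identifications send $(1^d, 1^c, 0^d)$ to $(1^{d+1}, 1^c, 0^{d+1})$ and $(0^d, 1^c, 1^d)$ to $(0^{d+1}, 1^c, 1^{d+1})$, on which $f$ evaluates to $1$ by property (a) for $f$. Property (c) transfers immediately as well, since the expanded $x$-block is all-$1$s exactly when $x'$ is (and similarly for $z$). The subtlety lies entirely in property (b).

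For property (b), write $x = (x'_1, x'_1, x'_2, \ldots, x'_d)$ and $z = (z'_1, z'_1, z'_2, \ldots, z'_d)$, and note that $w(x) + w(y) + w(z) = w(x') + w(y) + w(z') + x'_1 + z'_1$. Hence if $w(x') + w(y) + w(z') < c + d$ and $x'_1 + z'_1 \le 1$, then property (b) of $f$ (which triggers below weight $c + d + 1$) already forces $g(x',y,z') = 0$. The remaining case, and the main obstacle, is $x'_1 = z'_1 = 1$ together with $w(x') + w(y) + w(z') < c + d$. Here I would instead appeal to property (c) of $f$: if any two of $x, y, z$ were all-$1$s, then in the configurations where $x, y$ or $y, z$ are both all-$1$s the weight sum $w(x') + w(y) + w(z')$ is already at least $c + d$, while in the configuration where $x$ and $z$ are both all-$1$s the sum is at least $2d$, which is at least $c + d$ precisely because $d \ge c$. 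So at most one of $x, y, z$ is all-$1$s, and property (c) of $f$ yields $f(x, y, z) = 0$, i.e., $g(x', y, z') = 0$, as required. This is the step where the hypothesis $d \ge c$ enters essentially; without it, the configuration with both $x$ and $z$ all-$1$s could escape both (b) and (c) of $f$, and the argument would break down.
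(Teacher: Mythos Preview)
Your proof is correct and follows essentially the same approach as the paper: you identify two variables in each outer block of the $(c,d+1)$-ADA to obtain a $(c,d)$-ADA, and the key use of the hypothesis $d \ge c$ is precisely to rule out the case where both expanded outer blocks are all-$1$s when verifying property (b). The paper identifies the last two variables rather than the first two, and organizes the (b) verification slightly differently (first reducing to the exact-weight boundary case before invoking (c)), but these are cosmetic differences.
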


\begin{proof}
    To prove the contrapositive, assume $\Minion$ contains a $(c,d+1)$-ADA $f: \{0,1\}^{d+1} \times \{0,1\}^{c} \times \{0,1\}^{d+1}$.  Consider the minor $g: \{0,1\}^d \times \{0,1\}^c \times \{0,1\}^d$ obtained by identifying $x_{d}$ with $x_{d+1}$ and $z_d$ with $z_{d+1}$, i.e.,
    \[
    g(x,y,z) = f(x',y',z')
    \]
    where $x' = (x_1, \ldots, x_d, x_d)$, $y'=y$, and $z' = (z_1, \ldots, z_d, z_d)$.  We claim that $g$ is a $(c,d)$-ADA.  Let us verify the properties.
    \begin{enumerate}
    \item[(a)] $g(1^d, 1^c, 0^d) = g(0^d, 1^c, 1^d) = 1$ by construction.
    \item[(b)] $g(x, y, z) = f(x',y',z') =0$ if $w(x)+w(y)+w(z) < c+d$.  Note that if either $x_d=0$ or $z_d=0$, then $w(x')+w(y')+w(z') \le w(x)+w(y)+w(z)+1 < c+d+1$ so $f(x',y',z')=0$ due to $f$ being a $(c,d+1)$-ADA.  If $x_d=z_d=1$ then $w(x')+w(y')+w(z')=w(x)+w(y)+w(z)+2$.  If $w(x)+w(y)+w(z) < c+d-1$ we are again done, so the remaining case is $x_d=z_d=1$ and $w(x')+w(y')+w(z')=c+d+1$.  Note that since $d \ge c$, not both of $x'$ and $z'$ can equal $1^{d+1}$.  But since $x'$ and $z'$ are both non-zero, this implies that at most one of $x',y',z'$ is all-ones, so $f(x',y',z')=0$ as desired since $f$ is an ADA.
    \item[(c)] $g(x, y, z) = f(x',y',z') = 0$ unless at least two of $x, y, z$ are all-$1$s is easily verified by construction.
    \end{enumerate}
\end{proof}

\subsection{Monotonicity of UnCADA-free and UnDADA-free minions}

The two families UnCADA (Unate Controlled Approximate Double-AND) and UnDADA (Unate Double-controlled Approximate Double-AND) are part of the hardness condition of \cref{thm:split}. While their definitions arise naturally from the inductive proof of \cref{thm:split}, they are nonetheless quite intricate. Some monotonic properties of being UnCADA-free and UnDADA-free follows almost directly from their definitions. Such as $\Minion$ not containing $(c,d)$-UnCADA implying that $\Minion$ does not contain a $(c+1,d)$-UnCADA and $\Minion$ does not contain a $t$-UnDADA implying that $\Minion$ does not contain a $(t+1)$-UnDADA. 

\begin{claim}
\label{split inc v}
    Let $\Minion$ be a minion. If $\Minion$ does not contain a $(c,d)$-UnCADA for some $c,d \ge 1$, then it also does not contain a $(c+1,d)$-UnCADA.
\end{claim}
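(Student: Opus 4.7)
I will prove the contrapositive: if $\Minion$ contains a $(c+1,d)$-UnCADA, then $\Minion$ contains a $(c,d)$-UnCADA. Let $f: \{0,1\}^{(c+1)+2d+1} \times \{0,1\}^3 \rightarrow \{0,1\}$ be a $(c+1,d)$-UnCADA, so $f$ is folded, idempotent, and $(c+2d+2,3)$-unate, and satisfies
\[
f(1^d1^{c+1}0^d0, 011) = f(0^d1^{c+1}1^d0, 101) = 1
\]
together with $f(x0, y1)=0$ whenever $w(x) \le c+d$ and $w(y) \ge 1$.

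The plan is to produce a $(c,d)$-UnCADA $g \in \Minion$ by identifying two coordinates inside the length-$(c+1)$ middle positive block of $f$. Concretely, let $g$ be the minor of $f$ obtained by identifying the last two coordinates of that middle block into a single new coordinate. Then $g$ is a function on $\{0,1\}^{c+2d+1} \times \{0,1\}^3$, and one should verify the following routine closure facts: the minor of an idempotent function is idempotent, the minor of a folded function is folded, and identifying two positive variables preserves the unate partition, so $g$ is $(c+2d+1,3)$-unate, folded, and idempotent.

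Next I would verify the two defining conditions of a $(c,d)$-UnCADA for $g$. Condition (a) is immediate since duplicating a $1$ in the middle block converts $1^c$ into $1^{c+1}$, so $g(1^d1^c0^d0, 011) = f(1^d1^{c+1}0^d0, 011) = 1$ and analogously for the second identity. For condition (b), given $u \in \{0,1\}^{c+2d}$ with $w(u) \le c+d-1$ and $y \in \{0,1\}^2$ with $w(y)\ge 1$, the assignment $(u0, y1)$ to $g$ corresponds to an assignment $(x0, y1)$ to $f$ where $x$ is obtained from $u$ by duplicating the relevant middle coordinate; hence $w(x) \le w(u)+1 \le c+d$, and condition (b) for $f$ gives $f(x0, y1) = 0$, i.e., $g(u0, y1)=0$.

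The only step requiring any care is bookkeeping the coordinate groupings so that the identified pair actually lies in the middle block (and not across the block boundary), but this is just a matter of choosing the right pair of variables to identify. No non-trivial obstacle arises; the argument is entirely analogous to the ADA monotonicity proof already given for $(c,d)$-ADAs.
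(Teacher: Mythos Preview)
Your proof is correct and takes essentially the same approach as the paper: prove the contrapositive by identifying two coordinates of the middle length-$(c+1)$ block of a $(c+1,d)$-UnCADA to obtain a $(c,d)$-UnCADA. The paper identifies $x_{d+1}$ and $x_{d+2}$ while you identify the last two coordinates of the middle block, but this makes no difference; your verification of conditions (a) and (b) and of the preserved structural properties (folded, idempotent, unate) is accurate.
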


\begin{proof}
    The contrapositive is easily proved: if $f: \{0,1\}^{(c+1)+2d+1} \times \{0,1\}^3 \rightarrow \{0,1\}$ is a $(c+1,d)$-UnCADA then the minor obtained from $f(x,y)$ by identifying $x_{d+1}$ and $x_{d+2}$ is a $(c,d)$-UnCADA.
\end{proof}

\begin{claim}
\label{doublesplit inc t}
    Let $\Minion$ be a minion.  If $\Minion$ does not contain a $t$-UnDADA for some $t \ge 3$, then it also does not contain a $(t+1)$-UnDADA.
\end{claim}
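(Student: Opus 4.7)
The plan is to prove the contrapositive: assuming $\Minion$ contains a $(t+1)$-UnDADA $f \colon \{0,1\}^{t+1} \times \{0,1\}^4 \to \{0,1\}$, I would exhibit a $t$-UnDADA in $\Minion$ as a minor of $f$. The idea mirrors the proof of \cref{split inc v} for UnCADAs: identify two positive variables of $f$ that play identical roles in both prescribed accepting inputs.

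Concretely, the two accepting inputs of $f$ are $(1^t 0, 0011)$ and $(0 1^t, 1001)$, so positions $2, 3, \ldots, t$ of the positive block are $1$ in both. Since $t \ge 3$, at least two such ``middle'' positions exist. I would then form the minor
$$g(x'_1, x'_2, x'_3, \ldots, x'_t, y) := f(x'_1, x'_2, x'_2, x'_3, \ldots, x'_t, y)$$
by identifying positions $2$ and $3$ of the positive block of $f$ into a single positive variable of $g$, leaving the four negative variables untouched.

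Verifying that $g$ is a $t$-UnDADA would then be routine. Foldedness, idempotence, and $(t,4)$-unateness carry over from $f$, since identifying two positive variables preserves these properties. The prescribed values match: $g(1^{t-1}0, 0011) = f(1^t 0, 0011) = 1$ and $g(0 1^{t-1}, 1001) = f(0 1^t, 1001) = 1$. For the forbidden values, given $x' \in \{0,1\}^t$ and $y \in \{0,1\}^3$ with $w(x') \le t-1$ and $w(y) \ge 2$, the lifted input to $f$ has positive-block weight $w(x') + x'_2 \le t$ and final four coordinates $y1$ of weight $\ge 3 \ge 2$, so property (b) of $f$ forces $g(x', y1) = 0$.

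There is no real obstacle here: the only subtlety is the hypothesis $t \ge 3$, which is precisely what guarantees at least two middle positive positions available for identification — and it is for the same reason that \cref{def:UnDADA} is only stated for $t \ge 3$.
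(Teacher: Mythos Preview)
Your proof is correct and follows essentially the same approach as the paper: the paper also identifies the second and third positive variables of the $(t+1)$-UnDADA to produce a $t$-UnDADA, relying on the hypothesis $t\ge 3$ (equivalently $t+1\ge 4$) to ensure these two ``middle'' positions exist.
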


\begin{proof}
    According to the definition of UnDADA, \cref{def:UnDADA}, a function $f: \{0,1\}^t \times \{0,1\}^4 \rightarrow \{0,1\}$ is a $t$-UnDADA if and only if
    \begin{enumerate}
        \item[(a)] $f(1^{t-1}0, 0011) = f(01^{t-1}, 1001) = 1$
        \item[(b)] for every $x \in \{0,1\}^t$ and $y \in \{0,1\}^3$ such that $w(x) \le t-1$ and $w(y) \ge 2$, it holds that $f(x, y1) = 0$
    \end{enumerate}

    From these conditions it follows that identifying the second and third variable of $f$ results in minor that is a $(t-1)$-UnDADA, assuming $t\geq4$. This shows that if $\Minion$ does not contain a $t$-UnDADA for some $t\geq3$, then it also does not contain a $(t+1)$-UnDADA.
\end{proof}

The following lemma establishes that if $\Minion$ is $t$-UnCADA-free, then it is also $(t+1)$-UnCADA-free, a fact that is not immediately evident from its definition.

\begin{lemma}
    Let $\Minion$ be a minion.  If $\Minion$ is $t$-UnCADA-free for some $t \geq 2$, then it is also $(t+1)$-UnCADA-free.
\end{lemma}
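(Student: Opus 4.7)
I plan to proceed by contrapositive: assume $\Minion$ contains a $(c, d)$-UnCADA with $c+d = t+1$ and $c, d \geq 1$, and I exhibit some $(c', d')$-UnCADA in $\Minion$ with $c'+d' = t$ and $c', d' \geq 1$, contradicting $t$-UnCADA-freeness. The case $c \geq 2$ is immediate from \cref{split inc v}: applying that claim to the given UnCADA yields a $(c-1, d)$-UnCADA in $\Minion$, and $(c-1)+d = t$ with $c-1, d \geq 1$. So the nontrivial case, which is the focus of the plan, is $c=1$ (hence $d=t$): given $f \in \Minion$ a $(1, t)$-UnCADA, I need to produce a $(c', d')$-UnCADA in $\Minion$ with $c'+d' = t$.

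For this case, the natural candidate is the minor $g$ of $f$ obtained by simultaneously identifying the first two variables of $f$'s $S_1 \setminus S_2$ block (both of which take the value pattern $(1,0)$ across the two prescribed values of $f$) and the first two variables of $f$'s $S_2 \setminus S_1$ block (both of which take the pattern $(0,1)$). Since each identification merges variables that agree on both prescribed values, the identifications are consistent with $f$'s prescribed values. The resulting $g$ has $2t$ positive variables and is structured in the form of a $(1, t-1)$-UnCADA; verifying condition (a) of the UnCADA definition for $g$ is then immediate by direct substitution.

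The key technical step is verifying condition (b) for $g$: for every $u$ and $y$ with appropriate weights, $g(u0, y1) = 0$. Translating $g$-inputs back to $f$-inputs, the induced $f$-positive-weight equals $w_g(u) + \iota$, where $\iota \in \{0,1,2\}$ counts how many of the two identified variables take value $1$ in $u$. When $\iota \leq 1$, the induced $f$-weight is at most $t$, so $f$'s own condition (b) directly yields $g(u0, y1) = 0$. The main obstacle is the case $\iota = 2$, where the induced $f$-weight becomes $t+1$ and $f$'s condition (b) does not apply; this is innocuous for $t = 2$ since then $w_g(u) \leq 1$ rules out $\iota = 2$ outright, but for $t \geq 3$ it is a real issue. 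Handling this case is where I expect most of the technical work to live: the likely approach is to leverage the unateness and foldedness of $f$ (together possibly with an auxiliary minor operation or iterated identification) to show that whenever the problematic configuration $\iota = 2$ arises, either $g$ still evaluates to $0$ at that input, or else a different minor construction produces the desired $(c', d')$-UnCADA with $c' + d' = t$.
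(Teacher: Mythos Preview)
Your reduction to the case $c=1$, $d=t$ via \cref{split inc v} is correct, and you have correctly located the essential obstacle: after passing to the smaller minor $g$, condition (b) can fail at inputs where both identified variables are $1$ (your $\iota=2$ case). However, the proposal stops precisely where the real work begins. Saying that ``the likely approach is to leverage unateness and foldedness'' is not a plan; neither property forces $g$ to vanish at those inputs in general, and this step is the entire content of the lemma.

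The paper resolves the obstacle as follows. First, instead of identifying two pairs at the outer ends of the $d$-blocks, it identifies the \emph{three} variables $x'_t,x'_{t+1},x'_{t+2}$ (the triple straddling the middle). In the problematic case the merged variable is $1$, so the resulting $f$-input $x'$ has weight $t+1$ and automatically satisfies $x'_{t+1}=1$. Second---and this is the key step you did not carry out---rather than trying to show $g$ vanishes, the paper \emph{uses} such an $x'$ (together with whichever $y$-pattern witnessed the failure) as a new condition-(a) witness and pairs it with the original condition-(a) assignment of $f$ carrying the opposite $y$-pattern. These two weight-$(t+1)$ assignments overlap in at least two positive coordinates; identifying all coordinates where both are $0$ with the extra variable yields a minor $h$ that is a $(c'',d'')$-UnCADA with $c''+d''=t+1$ and $c''\ge 2$. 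This is exactly what your already-established $c\ge 2$ case forbids, giving the contradiction.

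Two small corrections to your stated expectations: the auxiliary UnCADA produced in the bad case has $c''+d''=t+1$ (not $t$), and one then appeals to the $c\ge 2$ reduction rather than directly to $t$-UnCADA-freeness. Also, your specific choice of minor (pairs at the outer edges) could probably be pushed through, but the boundary-triple identification is cleaner because it forces the overlap at the center coordinate without further case analysis.
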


\begin{proof}
    By \cref{split inc v}, the assumption that $\Minion$ is $t$-UnCADA-free implies that it does not contain a $(c+1,t-c)$-UnCADA for any $1 \le c \le t-1$, so it remains to prove that $\Minion$ does not contain a $(1, t)$-UnCADA.

    Suppose for contradiction that $\Minion$ contains a $(1,t)$-UnCADA $f: \{0,1\}^{2t+2} \times \{0,1\}^3$.  Consider the minor $g(x,y)$ of $f(x',y)$ obtained by identifying $x'_t$, $x'_{t+1}$, and $x'_{t+2}$.  The function $g$ satisfies the following properties:
    \begin{enumerate}
        \item[(a)] $g(1^{t-1}10^{t-1}0, 011) = f(1^{t}110^{t-1}0, 011) \ge f(1^t10^t0,011) = 1$ (since $f$ is positive in $x$), and analogously\\
        $g(0^{t-1}11^{t-1}0, 101) = 1$.

        \item[(b)] If $w(x) < t-1$, or if $w(x) < t+1$ and $x_t=0$, then $g(x0, y1) = 0$ for all $y$ of weight $w(y) \ge 1$.  This follows since such an input corresponds to an input $x',y$ for $f$ such that $w(x') < t+1$ and $w(y) \ge 1$ and hence $f(x'0, y1)=0$ since $f$ is a $(1,t)$-UnCADA.
    \end{enumerate}
    Thus, since $g$ cannot be a $(t-1,1)$-UnCADA (by the assumption that $\Minion$ is $t$-UnCADA-free), there must $x'$ of $w(x')=t+1$ such that $x'_t=x'_{t+1}=x'_{t+2}=1$ and $y$ with $w(y) \ge 1$, such that $f(x'0, y1)=1$.  Without loss of generality we may assume $w(y)=1$ since $f$ is negative in $y$.  Suppose $y=01$ (the other case $y=10$ is symmetric).  Then $f$ satisfies
    $f(x'0, 011) = 1$ and $f(1^t10^t, 101)=1$ since $f$ is a $(1,t)$-UnCADA.  Since $w(x')=t+1$ and $x_t=x_{t+1}=1$, there must be some $t+2 \le i \le 2t+1$ such that $x'_i=0$.  Let $I$ be the set of all such $i$, and consider the minor $h$ of $f$ obtained by identifying all coordinates of $I$ with the last $x$-variable.  We claim that, after applying an appropriate reordering of variables, $h$ is a $(|I|+1, t-|I|)$-UnCADA:
    \begin{enumerate}
        \item[(a)] There are two assignments $x^{(1)}$ and $x^{(2)}$ of weight $t+1$ and overlap $|I|+1$ such that $h(x^{(1)}0, 011) = h(x^{(2)}0, 101) = 1$ -- these are the two assignments $x'$ and $1^t10^t$ with the coordinates of $I$ removed.
        \item[(a)] $h(x0,y1)=0$ for all $x$ of $w(x) \le t$ and $w(y) \ge 1$ follows since our identification of the variables of $I$ with the last $x$-variable is effectively just fixing those variables to $0$.
    \end{enumerate}
    But this is now a contradiction to the property, noted above, that $\Minion$ does not contain a $(c+1,t-c)$-UnCADA for any $1 \le c \le t-1$.
\end{proof}

\section{Lists of Predicates of Arity 5}\label{sec:unkn5}

In this section we list various interesting categories of predicates of arity $5$.

\subsection{Maximal Tractable Predicates for \texorpdfstring{$\fiPCSP(A, \OR)$}{fiPCSP(A, OR)}}
\label{sec:tractable5}

The maximal tractable (as far as we know) predicates for $\fiPCSP(A, \OR)$ of arity $5$ are given in \cref{tab:5hardest_easy}.   Easy to recognize predicates are parity of three, four or five variables (number 29, 27, and 28).  Of course $\twoSAT$ is present as number 30 as well as the (non-strict) majority of four variables (number 31).  Predicate 32 is the closely related function which is a threshold function where the first coordinate has weight 2 and the other coordinates have weight 1.

\begin{longtblr}[
  caption = {A list of the 32 maximal tractable predicates for $k=5$.  See \cref{tab:3hardest_easy} (\cref{sec:fiPCSP detailed}) for explanation of the last column.},
  label = {tab:5hardest_easy},
]{
  colspec = {l@{}l|cccccr},
  rowhead = 1,
  row{odd} = {blue9},
  row{1} = {gray9},
  columns = {colsep=4pt},
} 
     & Predicate & $\MajFam$ & $\ParFam$ & $\ATFam$ & $\idneg{\MajFam}$ & $\idneg{\ParFam}$ & Dep. \\
\hline
    1. & $\{\bstr{00011}, \bstr{00101}, \bstr{00110}, \bstr{01000}, \bstr{10000}\}$ &  &  & \gcmark &  &  & $1/11$ \\
    2. & \begin{tabular}{@{}l@{}l@{}l@{}l@{}l@{}l}$\{ $&$\bstr{00011},\,$&$ \bstr{00101},\,$&$ \bstr{00110},\,$&$ \bstr{01001},\,$&$ \bstr{01010}, $ \\
 &$ $&$  $&$  $&$ \bstr{01100},\,$&$ \bstr{10000}\}$\end{tabular} &  &  & \gcmark &  &  & $3/21$ \\
    3. & \begin{tabular}{@{}l@{}l@{}l@{}l@{}l@{}l}$\{ $&$\bstr{00011},\,$&$ \bstr{00100},\,$&$ \bstr{00110},\,$&$ \bstr{01000},\,$&$ \bstr{01001}, $ \\
 &$ $&$  $&$  $&$ \bstr{10000},\,$&$ \bstr{10001}\}$\end{tabular} &  &  &  & \gcmark &  & $2/48$ \\
    4. & \begin{tabular}{@{}l@{}l@{}l@{}l@{}l@{}l}$\{ $&$\bstr{00011},\,$&$ \bstr{00101},\,$&$ \bstr{00111},\,$&$ \bstr{01000},\,$&$ \bstr{01001}, $ \\
 &$ $&$  $&$  $&$ \bstr{10000},\,$&$ \bstr{10001}\}$\end{tabular} &  &  &  & \gcmark &  & $1/51$ \\
    5. & $\{\bstr{0011*}, \bstr{0101*}, \bstr{0110*}, \bstr{1000*}\}$ &  &  & \gcmark &  & \gcmark & $12/77$ \\
    6. & \begin{tabular}{@{}l@{}l@{}l@{}l@{}l@{}l}$\{ $&$\bstr{00011},\,$&$ \bstr{00101},\,$&$ \bstr{00111},\,$&$ \bstr{01000},\,$&$ \bstr{01010}, $ \\
 &$ $&$ \bstr{01100},\,$&$ \bstr{01110},\,$&$ \bstr{10000},\,$&$ \bstr{10001}\}$\end{tabular} &  &  &  & \gcmark &  & $8/253$ \\
    7. & \begin{tabular}{@{}l@{}l@{}l@{}l@{}l@{}l}$\{ $&$\bstr{00010},\,$&$ \bstr{00110},\,$&$ \bstr{00111},\,$&$ \bstr{01000},\,$&$ \bstr{01001}, $ \\
 &$ $&$ \bstr{01100},\,$&$ \bstr{01101},\,$&$ \bstr{10001},\,$&$ \bstr{10010}\}$\end{tabular} &  &  &  & \gcmark &  & $4/320$ \\
    8. & \begin{tabular}{@{}l@{}l@{}l@{}l@{}l@{}l}$\{ $&$\bstr{00111},\,$&$ \bstr{01001},\,$&$ \bstr{01010},\,$&$ \bstr{01101},\,$&$ \bstr{01110}, $ \\
 &$ $&$  $&$  $&$ \bstr{10000},\,$&$ \bstr{10011}\}$\end{tabular} &  &  &  &  & \gcmark & $2/67$ \\
    9. & \begin{tabular}{@{}l@{}l@{}l@{}l@{}l@{}l}$\{ $&$\bstr{00011},\,$&$ \bstr{00110},\,$&$ \bstr{00111},\,$&$ \bstr{01000},\,$&$ \bstr{01001}, $ \\
 &$ $&$ \bstr{01100},\,$&$ \bstr{01101},\,$&$ \bstr{10001},\,$&$ \bstr{10011}\}$\end{tabular} &  &  &  & \gcmark &  & $2/342$ \\
    10. & \begin{tabular}{@{}l@{}l@{}l@{}l@{}l@{}l}$\{ $&$\bstr{00100},\,$&$ \bstr{00101},\,$&$ \bstr{00110},\,$&$ \bstr{01010},\,$&$ \bstr{01101}, $ \\
 &$ $&$ \bstr{10000},\,$&$ \bstr{10001},\,$&$ \bstr{10010},\,$&$ \bstr{10011}\}$\end{tabular} &  &  &  & \gcmark &  & $4/330$ \\
    11. & $\{\bstr{0011*}, \bstr{0100*}, \bstr{0110*}, \bstr{1000*}, \bstr{1001*}\}$ &  &  &  & \gcmark &  & $36/449$ \\
    12. & \begin{tabular}{@{}l@{}l@{}l@{}l@{}l@{}l}$\{ $&$\bstr{00101},\,$&$ \bstr{00110},\,$&$ \bstr{01001},\,$&$ \bstr{01010},\,$&$ \bstr{01101}, $ \\
 &$\bstr{01110},\,$&$ \bstr{10000},\,$&$ \bstr{10001},\,$&$ \bstr{10010},\,$&$ \bstr{10011}\}$\end{tabular} &  &  &  & \gcmark &  & $11/320$ \\
    13. & \begin{tabular}{@{}l@{}l@{}l@{}l@{}l@{}l}$\{ $&$\bstr{00110},\,$&$ \bstr{01001},\,$&$ \bstr{01100},\,$&$ \bstr{01101},\,$&$ \bstr{01110}, $ \\
 &$ $&$ \bstr{10000},\,$&$ \bstr{10001},\,$&$ \bstr{10010},\,$&$ \bstr{10011}\}$\end{tabular} &  &  &  & \gcmark &  & $1/227$ \\
    14. & \begin{tabular}{@{}l@{}l@{}l@{}l@{}l@{}l}$\{ $&$\bstr{00010},\,$&$ \bstr{00100},\,$&$ \bstr{01010},\,$&$ \bstr{01011},\,$&$ \bstr{01100}, $ \\
 &$ $&$ \bstr{01101},\,$&$ \bstr{10001},\,$&$ \bstr{10010},\,$&$ \bstr{10100}\}$\end{tabular} &  &  &  & \gcmark &  & $2/233$ \\
    15. & \begin{tabular}{@{}l@{}l@{}l@{}l@{}l@{}l}$\{ $&$\bstr{00011},\,$&$ \bstr{01001},\,$&$ \bstr{01010},\,$&$ \bstr{01101},\,$&$ \bstr{01110}, $ \\
 &$ $&$  $&$  $&$ \bstr{10011},\,$&$ \bstr{10100}\}$\end{tabular} &  &  &  &  & \gcmark & $2/52$ \\
    16. & \begin{tabular}{@{}l@{}l@{}l@{}l@{}l@{}l}$\{ $&$\bstr{00111},\,$&$ \bstr{01001},\,$&$ \bstr{01010},\,$&$ \bstr{01101},\,$&$ \bstr{01110}, $ \\
 &$ $&$  $&$  $&$ \bstr{10011},\,$&$ \bstr{10100}\}$\end{tabular} &  &  &  &  & \gcmark & $1/57$ \\
    17. & \begin{tabular}{@{}l@{}l@{}l@{}l@{}l@{}l}$\{ $&$\bstr{00110},\,$&$ \bstr{01000},\,$&$ \bstr{01001},\,$&$ \bstr{01010},\,$&$ \bstr{01101}, $ \\
 &$ $&$ \bstr{10000},\,$&$ \bstr{10001},\,$&$ \bstr{10011},\,$&$ \bstr{10100}\}$\end{tabular} &  &  &  & \gcmark &  & $3/215$ \\
    18. & \begin{tabular}{@{}l@{}l@{}l@{}l@{}l@{}l}$\{ $&$\bstr{00110},\,$&$ \bstr{01001},\,$&$ \bstr{01011},\,$&$ \bstr{01100},\,$&$ \bstr{01110}, $ \\
 &$ $&$ \bstr{10000},\,$&$ \bstr{10001},\,$&$ \bstr{10011},\,$&$ \bstr{10100}\}$\end{tabular} &  &  &  & \gcmark &  & $4/329$ \\
    19. & \begin{tabular}{@{}l@{}l@{}l@{}l@{}l@{}l}$\{ $&$\bstr{00110},\,$&$ \bstr{01001},\,$&$ \bstr{01010},\,$&$ \bstr{01101},\,$&$ \bstr{01110}, $ \\
 &$ $&$ \bstr{10000},\,$&$ \bstr{10001},\,$&$ \bstr{10011},\,$&$ \bstr{10100}\}$\end{tabular} &  &  &  & \gcmark &  & $2/325$ \\
    20. & \begin{tabular}{@{}l@{}l@{}l@{}l@{}l@{}l}$\{ $&$\bstr{00110},\,$&$ \bstr{01001},\,$&$ \bstr{01100},\,$&$ \bstr{01101},\,$&$ \bstr{01110}, $ \\
 &$ $&$ \bstr{10000},\,$&$ \bstr{10001},\,$&$ \bstr{10011},\,$&$ \bstr{10100}\}$\end{tabular} &  &  &  & \gcmark &  & $2/358$ \\
    21. & \begin{tabular}{@{}l@{}l@{}l@{}l@{}l@{}l}$\{ $&$\bstr{00110},\,$&$ \bstr{01011},\,$&$ \bstr{01101},\,$&$ \bstr{01110},\,$&$ \bstr{10000}, $ \\
 &$ $&$  $&$  $&$ \bstr{10011},\,$&$ \bstr{10101}\}$\end{tabular} &  &  &  &  & \gcmark & $2/63$ \\
    22. & \begin{tabular}{@{}l@{}l@{}l@{}l@{}l@{}l}$\{ $&$\bstr{00101},\,$&$ \bstr{01010},\,$&$ \bstr{01011},\,$&$ \bstr{01100},\,$&$ \bstr{01101}, $ \\
 &$ $&$ \bstr{10001},\,$&$ \bstr{10010},\,$&$ \bstr{10011},\,$&$ \bstr{10101}\}$\end{tabular} &  &  &  & \gcmark &  & $2/184$ \\
    23. & \begin{tabular}{@{}l@{}l@{}l@{}l@{}l@{}l}$\{ $&$\bstr{00101},\,$&$ \bstr{01010},\,$&$ \bstr{01100},\,$&$ \bstr{01101},\,$&$ \bstr{01110}, $ \\
 &$ $&$ \bstr{10001},\,$&$ \bstr{10010},\,$&$ \bstr{10011},\,$&$ \bstr{10101}\}$\end{tabular} &  &  &  & \gcmark &  & $1/127$ \\
    24. & \begin{tabular}{@{}l@{}l@{}l@{}l@{}l@{}l}$\{ $&$\bstr{00011},\,$&$ \bstr{00100},\,$&$ \bstr{01001},\,$&$ \bstr{01011},\,$&$ \bstr{01100}, $ \\
 &$\bstr{01110},\,$&$ \bstr{10010},\,$&$ \bstr{10011},\,$&$ \bstr{10100},\,$&$ \bstr{10101}\}$\end{tabular} &  &  &  & \gcmark &  & $16/446$ \\
    25. & \begin{tabular}{@{}l@{}l@{}l@{}l@{}l@{}l}$\{ $&$\bstr{00011},\,$&$ \bstr{00101},\,$&$ \bstr{00110},\,$&$ \bstr{01001},\,$&$ \bstr{01010}, $ \\
 &$\bstr{01100},\,$&$ \bstr{10001},\,$&$ \bstr{10010},\,$&$ \bstr{10100},\,$&$ \bstr{11000}\}$\end{tabular} &  &  & \gcmark &  &  & $11/33$ \\
    26. & \begin{tabular}{@{}l@{}l@{}l@{}l@{}l@{}l}$\{ $&$\bstr{00101},\,$&$ \bstr{00110},\,$&$ \bstr{01011},\,$&$ \bstr{01110},\,$&$ \bstr{10011}, $ \\
 &$ $&$  $&$  $&$ \bstr{10101},\,$&$ \bstr{11000}\}$\end{tabular} &  &  &  &  & \gcmark & $1/53$ \\
    27. & \begin{tabular}{@{}l@{}l@{}l@{}l@{}l@{}l}$\{ $&$\bstr{0001*},\,$&$ \bstr{0010*},\,$&$ \bstr{0100*},\,$&$ \bstr{0111*},\,$&$ \bstr{1000*}, $ \\
 &$ $&$  $&$ \bstr{1011*},\,$&$ \bstr{1101*},\,$&$ \bstr{1110*}\}$\end{tabular} &  & \gcmark &  &  &  & $2\,388/3\,875$ \\
    28. & \begin{tabular}{@{}l@{}l@{}l@{}l@{}l@{}l}$\{ $&$\bstr{00001},\,$&$ \bstr{00010},\,$&$ \bstr{00100},\,$&$ \bstr{00111},\,$&$ \bstr{01000}, $ \\
 &$\bstr{01011},\,$&$ \bstr{01101},\,$&$ \bstr{01110},\,$&$ \bstr{10000},\,$&$ \bstr{10011}, $ \\
 &$\bstr{10101},\,$&$ \bstr{10110},\,$&$ \bstr{11001},\,$&$ \bstr{11010},\,$&$ \bstr{11100}, $ \\
 &$ $&$  $&$  $&$  $&$ \bstr{11111}\}$\end{tabular} &  & \gcmark &  &  &  & $674/1\,087$ \\
    29. & $\{\bstr{**001}, \bstr{**010}, \bstr{**100}, \bstr{**111}\}$ &  & \gcmark &  &  &  & $4\,313/7\,099$ \\
    30. & $\{\bstr{***01}, \bstr{***10}, \bstr{***11}\}$ & \gcmark &  &  &  &  & $1\,118\,234/1\,249\,651$ \\
    31. & \begin{tabular}{@{}l@{}l@{}l@{}l@{}l@{}l}$\{ $&$\bstr{*0011},\,$&$ \bstr{*0101},\,$&$ \bstr{*0110},\,$&$ \bstr{*0111},\,$&$ \bstr{*1001}, $ \\
 &$\bstr{*1010},\,$&$ \bstr{*1011},\,$&$ \bstr{*1100},\,$&$ \bstr{*1101},\,$&$ \bstr{*1110}, $ \\
 &$ $&$  $&$  $&$  $&$ \bstr{*1111}\}$\end{tabular} & \gcmark &  &  &  &  & $31\,133/157\,103$ \\
    32. & \begin{tabular}{@{}l@{}l@{}l@{}l@{}l@{}l}$\{ $&$\bstr{00011},\,$&$ \bstr{00101},\,$&$ \bstr{00111},\,$&$ \bstr{01001},\,$&$ \bstr{01011}, $ \\
 &$\bstr{01101},\,$&$ \bstr{01110},\,$&$ \bstr{01111},\,$&$ \bstr{10001},\,$&$ \bstr{10011}, $ \\
 &$\bstr{10101},\,$&$ \bstr{10110},\,$&$ \bstr{10111},\,$&$ \bstr{11001},\,$&$ \bstr{11010}, $ \\
 &$\bstr{11011},\,$&$ \bstr{11100},\,$&$ \bstr{11101},\,$&$ \bstr{11110},\,$&$ \bstr{11111}\}$\end{tabular} & \gcmark &  &  &  &  & $355/43\,951$ \\
\end{longtblr}

\subsection{Minimal and Maximal Unknown Predicates for \texorpdfstring{$\fiPCSP(A, \OR)$}{fiPCSP(A, OR)}}
\label{sec:fiPCSP unknown list}

\cref{tab:5easiest_unknown} lists the minimal predicates $A$ of arity $5$ where we have been unable to determine the complexity of $\fiPCSP(A, \OR)$.   \cref{tab:5hardest_unknown} lists the maximal such predicates.

\begin{longtblr}[
  caption = {A list of the 25 minimal unknown predicates for Promise-SAT of arity $k=5$.  A checkmark in the ``unate'' column indicates that all polymorphisms are unate.  A cross in the ADA (resp.~UnCADA or UnDADA) columns indicates that the polymorphism minion is $t$-ADA-free (resp.~$t$-UnCADA-free or $t$-UnDADA-free) for some $t$.  The second value in the ``Dep.'' column gives the number of unknown predicates implied by this predicate (in particular this number of unknown Promise-SAT problems would be NP-hard if this predicate is shown NP-hard), while the first value in the ``Dep.'' column gives the number of such predicates that are not implied by any other predicate in the table.},
  label = {tab:5easiest_unknown},
]{
  colspec = {l@{}l@{}|cccc@{}r},
  rowhead = 1,
  row{odd} = {blue9},
  row{1} = {gray9},
  columns = {colsep=2pt},
} 
     & Predicate & Unate & ADA & UnCADA & UnDADA & Dep. \\
\hline
    1. & $\{\bstr{00011}, \bstr{00111}, \bstr{01001}, \bstr{01010}, \bstr{01100}, \bstr{10000}\}$ & \gcmark & \rxmark &  & \rxmark & $1/2$ \\
    2. & $\{\bstr{00101}, \bstr{00110}, \bstr{00111}, \bstr{01011}, \bstr{01100}, \bstr{10000}\}$ &  & \rxmark & \rxmark & \rxmark & $1/3$ \\
    3. & $\{\bstr{00110}, \bstr{00111}, \bstr{01001}, \bstr{01011}, \bstr{01100}, \bstr{10000}\}$ & \gcmark & \rxmark &  & \rxmark & $1/3$ \\
    4. & $\{\bstr{00011}, \bstr{00110}, \bstr{00111}, \bstr{01011}, \bstr{01101}, \bstr{10000}\}$ &  & \rxmark & \rxmark & \rxmark & $1/8$ \\
    5. & $\{\bstr{00011}, \bstr{00111}, \bstr{01011}, \bstr{01101}, \bstr{01110}, \bstr{10000}\}$ &  & \rxmark &  &  & $1/6$ \\
    6. & $\{\bstr{00110}, \bstr{00111}, \bstr{01001}, \bstr{01010}, \bstr{01100}, \bstr{10001}\}$ & \gcmark & \rxmark &  & \rxmark & $1/4$ \\
    7. & $\{\bstr{00011}, \bstr{00110}, \bstr{01010}, \bstr{01100}, \bstr{10000}, \bstr{10001}\}$ & \gcmark & \rxmark &  &  & $1/2$ \\
    8. & $\{\bstr{00111}, \bstr{01001}, \bstr{01010}, \bstr{01100}, \bstr{10000}, \bstr{10001}\}$ &  & \rxmark &  &  & $3/5$ \\
    9. & $\{\bstr{00101}, \bstr{00111}, \bstr{01011}, \bstr{01100}, \bstr{10000}, \bstr{10001}\}$ & \gcmark & \rxmark &  &  & $1/4$ \\
    10. & $\{\bstr{00111}, \bstr{01011}, \bstr{01100}, \bstr{01110}, \bstr{10000}, \bstr{10001}\}$ &  & \rxmark &  &  & $1/8$ \\
    11. & $\{\bstr{00111}, \bstr{01011}, \bstr{01101}, \bstr{01110}, \bstr{10000}, \bstr{10001}\}$ &  & \rxmark &  &  & $4/7$ \\
    12. & $\{\bstr{00011}, \bstr{00111}, \bstr{01001}, \bstr{01100}, \bstr{10001}, \bstr{10010}\}$ &  & \rxmark &  &  & $2/14$ \\
    13. & $\{\bstr{00101}, \bstr{00111}, \bstr{01011}, \bstr{01100}, \bstr{10001}, \bstr{10010}\}$ & \gcmark & \rxmark &  &  & $3/28$ \\
    14. & $\{\bstr{00110}, \bstr{01010}, \bstr{01100}, \bstr{01101}, \bstr{10001}, \bstr{10010}\}$ &  & \rxmark &  &  & $1/10$ \\
    15. & $\{\bstr{00111}, \bstr{01010}, \bstr{01100}, \bstr{01101}, \bstr{10001}, \bstr{10010}\}$ & \gcmark & \rxmark &  &  & $2/25$ \\
    16. & $\{\bstr{00111}, \bstr{01011}, \bstr{01100}, \bstr{01101}, \bstr{10001}, \bstr{10010}\}$ &  & \rxmark &  &  & $9/81$ \\
    17. & $\{\bstr{00111}, \bstr{01011}, \bstr{01100}, \bstr{10000}, \bstr{10001}, \bstr{10010}\}$ &  & \rxmark &  &  & $1/5$ \\
    18. & $\{\bstr{00111}, \bstr{01010}, \bstr{01100}, \bstr{01101}, \bstr{10000}, \bstr{10011}\}$ &  & \rxmark & \rxmark & \rxmark & $1/4$ \\
    19. & $\{\bstr{00111}, \bstr{01010}, \bstr{01101}, \bstr{10000}, \bstr{10001}, \bstr{10011}\}$ &  & \rxmark & \rxmark & \rxmark & $1/4$ \\
    20. & \begin{tabular}{@{}l@{}l@{}l@{}l@{}l@{}l@{}l}$\{ $&$\bstr{00111},\,$&$ \bstr{01001},\,$&$ \bstr{01010},\,$&$ \bstr{01100},\,$&$ \bstr{10001},\,$&$ \bstr{10010}, $ \\
 &$ $&$  $&$  $&$  $&$  $&$ \bstr{10011}\}$\end{tabular} &  & \rxmark &  &  & $2/3$ \\
    21. & $\{\bstr{00111}, \bstr{01011}, \bstr{01100}, \bstr{10001}, \bstr{10010}, \bstr{10011}\}$ &  & \rxmark &  &  & $4/57$ \\
    22. & $\{\bstr{00101}, \bstr{01011}, \bstr{01100}, \bstr{10001}, \bstr{10010}, \bstr{10100}\}$ & \gcmark & \rxmark &  &  & $3/21$ \\
    23. & $\{\bstr{00111}, \bstr{01011}, \bstr{01100}, \bstr{10001}, \bstr{10010}, \bstr{10100}\}$ & \gcmark & \rxmark &  &  & $10/93$ \\
    24. & $\{\bstr{00011}, \bstr{01101}, \bstr{01110}, \bstr{10011}, \bstr{10100}, \bstr{11000}\}$ & \gcmark & \rxmark &  & \rxmark & $4/18$ \\
    25. & \begin{tabular}{@{}l@{}l@{}l@{}l@{}l@{}l@{}l}$\{ $&$\bstr{00110},\,$&$ \bstr{01001},\,$&$ \bstr{01100},\,$&$ \bstr{01101},\,$&$ \bstr{01110},\,$&$ \bstr{10001}, $ \\
 &$ $&$  $&$ \bstr{10010},\,$&$ \bstr{10011},\,$&$ \bstr{10110},\,$&$ \bstr{11001}\}$\end{tabular} &  &  & \rxmark & \rxmark & $1/1$ \\
\end{longtblr} 

\begin{longtblr}[
  caption = {A list of the 19 maximal unknown predicates for Promise-SAT of arity $k=5$.  Most columns are as in \cref{tab:5easiest_unknown} but the ``Dep.''~column differs.  The second value in this column now gives the number of unknown predicates that imply this predicate (in particular this number of unknown Promise-SAT problems would be tractable if this predicate is shown tractable), while the first value gives the number of such predicates that do not imply any other predicate in the table.},
  label = {tab:5hardest_unknown},
]{
  colspec = {l@{}l|ccccr},
  rowhead = 1,
  row{odd} = {blue9},
  row{1} = {gray9},
  columns = {colsep=4pt},
} 
     & Predicate & Unate & ADA & UnCADA & UnDADA & Dep. \\
\hline
    1. & \begin{tabular}{@{}l@{}l@{}l@{}l@{}l@{}l}$\{ $&$\bstr{00011},\,$&$ \bstr{00101},\,$&$ \bstr{00111},\,$&$ \bstr{01001},\,$&$ \bstr{01011}, $ \\
 &$ $&$  $&$ \bstr{01101},\,$&$ \bstr{01110},\,$&$ \bstr{10000}\}$\end{tabular} &  & \rxmark & \rxmark & \rxmark & $3/6$ \\
    2. & \begin{tabular}{@{}l@{}l@{}l@{}l@{}l@{}l}$\{ $&$\bstr{00110},\,$&$ \bstr{00111},\,$&$ \bstr{01001},\,$&$ \bstr{01010},\,$&$ \bstr{01100}, $ \\
 &$ $&$  $&$  $&$ \bstr{10000},\,$&$ \bstr{10001}\}$\end{tabular} & \gcmark & \rxmark &  & \rxmark & $3/5$ \\
    3. & \begin{tabular}{@{}l@{}l@{}l@{}l@{}l@{}l}$\{ $&$\bstr{00110},\,$&$ \bstr{00111},\,$&$ \bstr{01010},\,$&$ \bstr{01011},\,$&$ \bstr{01101}, $ \\
 &$ $&$  $&$ \bstr{01110},\,$&$ \bstr{10000},\,$&$ \bstr{10001}\}$\end{tabular} &  & \rxmark & \rxmark & \rxmark & $1/8$ \\
    4. & \begin{tabular}{@{}l@{}l@{}l@{}l@{}l@{}l}$\{ $&$\bstr{00110},\,$&$ \bstr{00111},\,$&$ \bstr{01001},\,$&$ \bstr{01011},\,$&$ \bstr{01100}, $ \\
 &$ $&$  $&$ \bstr{10000},\,$&$ \bstr{10001},\,$&$ \bstr{10010}\}$\end{tabular} & \gcmark & \rxmark &  & \rxmark & $6/8$ \\
    5. & \begin{tabular}{@{}l@{}l@{}l@{}l@{}l@{}l}$\{ $&$\bstr{00111},\,$&$ \bstr{01011},\,$&$ \bstr{01100},\,$&$ \bstr{01101},\,$&$ \bstr{01110}, $ \\
 &$ $&$  $&$ \bstr{10000},\,$&$ \bstr{10001},\,$&$ \bstr{10010}\}$\end{tabular} &  & \rxmark &  &  & $2/10$ \\
    6. & \begin{tabular}{@{}l@{}l@{}l@{}l@{}l@{}l}$\{ $&$\bstr{00111},\,$&$ \bstr{01010},\,$&$ \bstr{01100},\,$&$ \bstr{01101},\,$&$ \bstr{01110}, $ \\
 &$ $&$  $&$ \bstr{10000},\,$&$ \bstr{10001},\,$&$ \bstr{10011}\}$\end{tabular} &  & \rxmark & \rxmark & \rxmark & $6/9$ \\
    7. & \begin{tabular}{@{}l@{}l@{}l@{}l@{}l@{}l}$\{ $&$\bstr{00101},\,$&$ \bstr{00111},\,$&$ \bstr{01011},\,$&$ \bstr{01100},\,$&$ \bstr{10001}, $ \\
 &$ $&$  $&$  $&$ \bstr{10010},\,$&$ \bstr{10011}\}$\end{tabular} & \gcmark & \rxmark &  & \rxmark & $1/3$ \\
    8. & \begin{tabular}{@{}l@{}l@{}l@{}l@{}l@{}l}$\{ $&$\bstr{00111},\,$&$ \bstr{01001},\,$&$ \bstr{01010},\,$&$ \bstr{01100},\,$&$ \bstr{10000}, $ \\
 &$ $&$  $&$ \bstr{10001},\,$&$ \bstr{10010},\,$&$ \bstr{10011}\}$\end{tabular} &  & \rxmark & \rxmark &  & $3/5$ \\
    9. & \begin{tabular}{@{}l@{}l@{}l@{}l@{}l@{}l}$\{ $&$\bstr{00111},\,$&$ \bstr{01011},\,$&$ \bstr{01101},\,$&$ \bstr{01110},\,$&$ \bstr{10000}, $ \\
 &$ $&$  $&$ \bstr{10001},\,$&$ \bstr{10010},\,$&$ \bstr{10011}\}$\end{tabular} &  & \rxmark & \rxmark &  & $2/4$ \\
    10. & \begin{tabular}{@{}l@{}l@{}l@{}l@{}l@{}l}$\{ $&$\bstr{00011},\,$&$ \bstr{00111},\,$&$ \bstr{01001},\,$&$ \bstr{01010},\,$&$ \bstr{01100}, $ \\
 &$ $&$  $&$ \bstr{10001},\,$&$ \bstr{10010},\,$&$ \bstr{10100}\}$\end{tabular} & \gcmark & \rxmark &  & \rxmark & $2/5$ \\
    11. & \begin{tabular}{@{}l@{}l@{}l@{}l@{}l@{}l}$\{ $&$\bstr{00111},\,$&$ \bstr{01001},\,$&$ \bstr{01010},\,$&$ \bstr{01100},\,$&$ \bstr{10001}, $ \\
 &$ $&$  $&$ \bstr{10010},\,$&$ \bstr{10100},\,$&$ \bstr{11000}\}$\end{tabular} & \gcmark & \rxmark &  &  & $1/3$ \\
    12. & \begin{tabular}{@{}l@{}l@{}l@{}l@{}l@{}l}$\{ $&$\bstr{00110},\,$&$ \bstr{00111},\,$&$ \bstr{01010},\,$&$ \bstr{01011},\,$&$ \bstr{01100}, $ \\
 &$\bstr{01101},\,$&$ \bstr{10001},\,$&$ \bstr{10010},\,$&$ \bstr{10100},\,$&$ \bstr{11000}\}$\end{tabular} & \gcmark & \rxmark &  & \rxmark & $15/30$ \\
    13. & \begin{tabular}{@{}l@{}l@{}l@{}l@{}l@{}l}$\{ $&$\bstr{00110},\,$&$ \bstr{01010},\,$&$ \bstr{01101},\,$&$ \bstr{01110},\,$&$ \bstr{10001}, $ \\
 &$ $&$  $&$ \bstr{10011},\,$&$ \bstr{10100},\,$&$ \bstr{11000}\}$\end{tabular} &  & \rxmark & \rxmark & \rxmark & $1/2$ \\
    14. & \begin{tabular}{@{}l@{}l@{}l@{}l@{}l@{}l}$\{ $&$\bstr{00111},\,$&$ \bstr{01001},\,$&$ \bstr{01011},\,$&$ \bstr{01100},\,$&$ \bstr{01110}, $ \\
 &$\bstr{10010},\,$&$ \bstr{10011},\,$&$ \bstr{10100},\,$&$ \bstr{10101},\,$&$ \bstr{11000}\}$\end{tabular} & \gcmark & \rxmark &  & \rxmark & $12/30$ \\
    15. & \begin{tabular}{@{}l@{}l@{}l@{}l@{}l@{}l}$\{ $&$\bstr{00111},\,$&$ \bstr{01001},\,$&$ \bstr{01010},\,$&$ \bstr{01011},\,$&$ \bstr{01100}, $ \\
 &$\bstr{01101},\,$&$ \bstr{01110},\,$&$ \bstr{10001},\,$&$ \bstr{10011},\,$&$ \bstr{10101}, $ \\
 &$ $&$  $&$  $&$ \bstr{10110},\,$&$ \bstr{11000}\}$\end{tabular} & \gcmark & \rxmark &  & \rxmark & $18/71$ \\
    16. & \begin{tabular}{@{}l@{}l@{}l@{}l@{}l@{}l}$\{ $&$\bstr{00110},\,$&$ \bstr{01001},\,$&$ \bstr{01100},\,$&$ \bstr{01101},\,$&$ \bstr{01110}, $ \\
 &$\bstr{10001},\,$&$ \bstr{10010},\,$&$ \bstr{10011},\,$&$ \bstr{10110},\,$&$ \bstr{11001}\}$\end{tabular} &  &  & \rxmark & \rxmark & $1/1$ \\
    17. & \begin{tabular}{@{}l@{}l@{}l@{}l@{}l@{}l}$\{ $&$\bstr{00011},\,$&$ \bstr{00111},\,$&$ \bstr{01010},\,$&$ \bstr{01011},\,$&$ \bstr{01101}, $ \\
 &$\bstr{01110},\,$&$ \bstr{10010},\,$&$ \bstr{10100},\,$&$ \bstr{10110},\,$&$ \bstr{11001}\}$\end{tabular} & \gcmark & \rxmark &  & \rxmark & $4/21$ \\
    18. & \begin{tabular}{@{}l@{}l@{}l@{}l@{}l@{}l}$\{ $&$\bstr{00011},\,$&$ \bstr{00110},\,$&$ \bstr{01010},\,$&$ \bstr{01011},\,$&$ \bstr{01101}, $ \\
 &$\bstr{01110},\,$&$ \bstr{10011},\,$&$ \bstr{10100},\,$&$ \bstr{10110},\,$&$ \bstr{11001}\}$\end{tabular} & \gcmark & \rxmark &  & \rxmark & $6/25$ \\
    19. & \begin{tabular}{@{}l@{}l@{}l@{}l@{}l@{}l}$\{ $&$\bstr{00111},\,$&$ \bstr{01011},\,$&$ \bstr{01100},\,$&$ \bstr{01101},\,$&$ \bstr{01110}, $ \\
 &$\bstr{10001},\,$&$ \bstr{10010},\,$&$ \bstr{10011},\,$&$ \bstr{10100},\,$&$ \bstr{10110}, $ \\
 &$ $&$  $&$  $&$ \bstr{11000},\,$&$ \bstr{11001}\}$\end{tabular} & \gcmark & \rxmark &  & \rxmark & $16/67$ \\
\end{longtblr}

\subsection{Minimal and Maximal Unknown Predicates for Promise-Usefulness}
\label{sec:promise unknown list}

\cref{tab:promise_5easiest_unknown} lists the minimal predicates of arity $5$ where we have been unable to determine whether they are promise-useful or not.  \cref{tab:promise_5hardest_unknown} lists the maximal such predicates.

\begin{longtblr}[
  caption = {A list of the $9$ minimal predicates with unknown promise-usefulness status for $k=5$.},
  label = {tab:promise_5easiest_unknown},
]{
  colspec = {l@{}l|r},
  rowhead = 1,
  row{odd} = {blue9},
  row{1} = {gray9},
  columns = {colsep=4pt},
} 
     & Predicate & Dep. \\
\hline
    1. & $\{\bstr{00011}, \bstr{01101}, \bstr{01110}, \bstr{10011}, \bstr{10100}, \bstr{11000}\}$ & $13/18$ \\
    2. & $\{\bstr{00110}, \bstr{00111}, \bstr{01001}, \bstr{01101}, \bstr{01110}, \bstr{10011}, \bstr{10100}, \bstr{11000}\}$ & $2/16$ \\
    3. & $\{\bstr{00111}, \bstr{01001}, \bstr{01010}, \bstr{01101}, \bstr{01110}, \bstr{10011}, \bstr{10100}, \bstr{11000}\}$ & $3/14$ \\
    4. & $\{\bstr{00110}, \bstr{01001}, \bstr{01100}, \bstr{01101}, \bstr{01110}, \bstr{10011}, \bstr{10100}, \bstr{11000}\}$ & $1/12$ \\
    5. & $\{\bstr{00110}, \bstr{01010}, \bstr{01101}, \bstr{01111}, \bstr{10000}, \bstr{10011}, \bstr{10100}, \bstr{11000}\}$ & $1/1$ \\
    6. & $\{\bstr{00110}, \bstr{01001}, \bstr{01101}, \bstr{01110}, \bstr{10001}, \bstr{10011}, \bstr{10100}, \bstr{11000}\}$ & $2/14$ \\
    7. & $\{\bstr{00111}, \bstr{01010}, \bstr{01101}, \bstr{01110}, \bstr{10001}, \bstr{10011}, \bstr{10100}, \bstr{11000}\}$ & $6/24$ \\
    8. & \begin{tabular}{@{}l@{}l@{}l@{}l@{}l@{}l@{}l@{}l@{}l}$\{ $&$\bstr{00111},\,$&$ \bstr{01011},\,$&$ \bstr{01100},\,$&$ \bstr{01101},\,$&$ \bstr{01110},\,$&$ \bstr{10001},\,$&$ \bstr{10010},\,$&$ \bstr{10011}, $ \\
 &$ $&$  $&$  $&$  $&$  $&$  $&$ \bstr{10100},\,$&$ \bstr{11000}\}$\end{tabular} & $1/4$ \\
    9. & \begin{tabular}{@{}l@{}l@{}l@{}l@{}l@{}l@{}l@{}l@{}l}$\{ $&$\bstr{00111},\,$&$ \bstr{01000},\,$&$ \bstr{01100},\,$&$ \bstr{01110},\,$&$ \bstr{01111},\,$&$ \bstr{10000},\,$&$ \bstr{10001},\,$&$ \bstr{10011}, $ \\
 &$ $&$  $&$  $&$  $&$  $&$  $&$ \bstr{10111},\,$&$ \bstr{11000}\}$\end{tabular} & $1/1$ \\
\end{longtblr}
\begin{longtblr}[
  caption = {A list of the $7$ maximal predicates with unknown promise-usefulness status for $k=5$.},
  label = {tab:promise_5hardest_unknown},
]{
  colspec = {l@{}l|r},
  rowhead = 1,
  row{odd} = {blue9},
  row{1} = {gray9},
  columns = {colsep=4pt},
} 
     & Predicate & Dep. \\
\hline
    1. & $\{\bstr{00011}, \bstr{00110}, \bstr{00111}, \bstr{01001}, \bstr{01100}, \bstr{01101}, \bstr{01110}, \bstr{10011}, \bstr{10100}, \bstr{11000}\}$ & $6/17$ \\
    2. & $\{\bstr{00011}, \bstr{00111}, \bstr{01001}, \bstr{01010}, \bstr{01100}, \bstr{01101}, \bstr{01110}, \bstr{10011}, \bstr{10100}, \bstr{11000}\}$ & $4/14$ \\
    3. & $\{\bstr{00110}, \bstr{01010}, \bstr{01101}, \bstr{01111}, \bstr{10000}, \bstr{10011}, \bstr{10100}, \bstr{11000}\}$ & $1/1$ \\
    4. & \begin{tabular}{@{}l@{}l@{}l@{}l@{}l@{}l@{}l@{}l@{}l@{}l@{}l}$\{ $&$\bstr{00101},\,$&$ \bstr{00110},\,$&$ \bstr{00111},\,$&$ \bstr{01011},\,$&$ \bstr{01100},\,$&$ \bstr{01101},\,$&$ \bstr{01110},\,$&$ \bstr{10001},\,$&$ \bstr{10010},\,$&$ \bstr{10011}, $ \\
 &$ $&$  $&$  $&$  $&$  $&$  $&$  $&$  $&$ \bstr{10100},\,$&$ \bstr{11000}\}$\end{tabular} & $8/19$ \\
    5. & \begin{tabular}{@{}l@{}l@{}l@{}l@{}l@{}l@{}l@{}l@{}l@{}l@{}l}$\{ $&$\bstr{00110},\,$&$ \bstr{00111},\,$&$ \bstr{01001},\,$&$ \bstr{01011},\,$&$ \bstr{01100},\,$&$ \bstr{01101},\,$&$ \bstr{01110},\,$&$ \bstr{10001},\,$&$ \bstr{10010},\,$&$ \bstr{10011}, $ \\
 &$ $&$  $&$  $&$  $&$  $&$  $&$  $&$  $&$ \bstr{10100},\,$&$ \bstr{11000}\}$\end{tabular} & $16/27$ \\
    6. & $\{\bstr{00111}, \bstr{01001}, \bstr{01011}, \bstr{01100}, \bstr{01110}, \bstr{10010}, \bstr{10011}, \bstr{10100}, \bstr{10101}, \bstr{11000}\}$ & $1/4$ \\
    7. & $\{\bstr{00111}, \bstr{01000}, \bstr{01100}, \bstr{01110}, \bstr{01111}, \bstr{10000}, \bstr{10001}, \bstr{10011}, \bstr{10111}, \bstr{11000}\}$ & $1/1$ \\
\end{longtblr}
\section{Proof of \texorpdfstring{\cref{thm:fixing}}{Theorem 2.12}}
\label{fixing proof appendix}

In what follows, a \emph{choice function} $C$ is a function defined on a minion which, given an $\ell$-ary function $f$, identifies a subset $C(f) \subseteq [\ell]$ of the coordinates of $f$.  Banakh and Kozik \cite{BK24} proved the following result (this is the special case of Theorem 3.1 in their paper with two layers).

\begin{theorem}
\label{injective choice hardness}
   Suppose there exists a constant $t$ and choice function $C$ on the polymorphism minion $\Pol(\PCSP(A, B))$ with the following properties:
   \begin{enumerate}
       \item $|C(f)| \le t$ for all polymorphisms $f$ of $\PCSP(A, B)$.
       \item For every polymorphism $f$ and minor $f_{\pi}$ of $f$ such that $\pi$ is injective on $C(f)$, it holds that $\pi(C(f)) \cap C(f_\pi) \ne \emptyset$.
   \end{enumerate}
   Then $\PCSP(A, B)$ is NP-hard.
\end{theorem}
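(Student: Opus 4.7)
My plan is to apply Theorem \ref{injective choice hardness} to the polymorphism minion $\Pol(\fPCSP(A,B))$; the idempotent case is then analogous since minors preserve both foldedness and idempotency. The choice function will be defined in the simplest possible way: for each polymorphism $f$ of arity $\ell$, let $C(f) \subseteq [\ell]$ be the coordinate set of the lexicographically smallest minimum-cardinality fixing assignment of $f$. The size bound $|C(f)| \le t$ is immediate from the hypothesis, so all the real work lies in verifying the injective intersection property.

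The argument will rest on the following structural observation, which I will prove first: in the folded setting, \emph{any two fixing sets of a single polymorphism must intersect}. Indeed, if $(T_1, \alpha_1)$ and $(T_2, \alpha_2)$ were disjoint fixing assignments fixing $f$ to values $c_1$ and $c_2$, foldedness would imply that $(T_2, \neg \alpha_2)$ is also a fixing assignment, but to $\neg c_2$. Evaluating $f$ on inputs extending $(T_1, \alpha_1)$ together with each of $(T_2, \alpha_2)$ and $(T_2, \neg \alpha_2)$ (both well-defined by disjointness of $T_1$ and $T_2$) would then force $c_1 = c_2$ and $c_1 = \neg c_2$ simultaneously, a contradiction. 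In particular, $C(f)$ is itself a hitting set for the family of all fixing sets of $f$.

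To verify the injective intersection property, suppose $\pi \colon [\ell] \to [\ell']$ is injective on $C(f)$. I will factor $\pi = \tau \circ \sigma$, where $\sigma$ is the quotient map for the equivalence $i \sim j \iff \pi(i) = \pi(j)$ and $\tau \colon [\ell]/\mathord{\sim} \to [\ell']$ is the induced injective map, and set $g := f_\sigma$, so that $g$ is a folded polymorphism and $f_\pi = g_\tau$. Because $\tau$ is injective, fixing sets of $f_\pi$ correspond bijectively and size-preservingly to fixing sets of $g$ (minimum fixing sets automatically avoid dummy coordinates outside $\tau([\ell]/\mathord{\sim})$), so writing $C(f_\pi) = \tau(T^\dagger)$ identifies $T^\dagger$ as a minimum fixing set of $g$. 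On the other hand, since $\sigma$ is injective on $C(f)$, the fixing assignment of $f$ defining $C(f)$ pushes forward to a bona fide fixing assignment of $g$ with coordinate set $\sigma(C(f))$ of size $\le t$. Applying the structural observation to the folded polymorphism $g$ gives $\sigma(C(f)) \cap T^\dagger \ne \emptyset$, and applying the injective $\tau$ yields $\pi(C(f)) \cap C(f_\pi) \ne \emptyset$, as required.

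The main conceptual subtlety will be the passage through the intermediate minor $g$: directly pulling a fixing assignment of $f_\pi$ back to $f$ only controls $f$ on the subspace of inputs respecting the identifications of $\pi$ and does not yield a genuine fixing of $f$. Replacing $f$ by the minor $g = f_\sigma$ packages exactly this restricted behavior as a bona fide fixing assignment of a single folded polymorphism, at which point the intersecting-family lemma does all the remaining work. Once the intersection property is in place, Theorem \ref{injective choice hardness} delivers NP-hardness of $\fPCSP(A,B)$, and the identical argument applied to $\Pol(\fiPCSP(A,B))$ yields NP-hardness of $\fiPCSP(A,B)$.
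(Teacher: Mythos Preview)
Your proposal is correct and follows essentially the same approach as the paper: define $C(f)$ via a canonical fixing assignment, prove the ``any two fixing assignments of a folded function have intersecting supports'' lemma (the paper's \cref{fixing assigns intersect}), and use it to verify the injective intersection condition of \cref{injective choice hardness}. The paper's version is slightly more direct---it pushes the fixing assignment of $f$ forward to $f_\pi$ (using injectivity of $\pi$ on $C(f)$) and applies the intersection lemma to $f_\pi$ itself, so your factorization through the intermediate minor $g=f_\sigma$ is unnecessary---but the argument is the same in substance.
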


As our choices come from fixing assignments the below simple claim is useful.  The proof is standard as if two fixing assignments give values to disjoint sets of variables then one can simultaneously force the function to take the value 0 and the value 1.
\begin{claim}
\label{fixing assigns intersect}
    Let $f: \{0,1\}^{\ell} \rightarrow \{0,1\}$ be a folded Boolean function.  Then any two fixing assignments $(S_1, \alpha_1)$, $(S_2, \alpha_2)$ must satisfy $S_1 \cap S_2 \ne \emptyset$.
\end{claim}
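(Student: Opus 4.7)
The plan is to argue by contradiction, exploiting the folding property to convert a fixing-to-$1$ assignment into an implicit fixing-to-$0$ assignment on the complementary input. Suppose for contradiction that $S_1 \cap S_2 = \emptyset$. Because the two sets are disjoint, I can pick an input $x \in \{0,1\}^\ell$ with $x_{S_1} = \alpha_1$ and $x_{S_2} = \neg \alpha_2$ simultaneously, filling the coordinates outside $S_1 \cup S_2$ arbitrarily.

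From $(S_1, \alpha_1)$ being a fixing assignment, the first pattern immediately forces $f(x) = 1$. For the opposing conclusion, note that $(\neg x)_{S_2} = \alpha_2$, so $(S_2, \alpha_2)$ gives $f(\neg x) = 1$. Invoking the folding identity $f(\neg x) = \neg f(x)$ then yields $f(x) = 0$, contradicting the first conclusion. Therefore $S_1 \cap S_2 \neq \emptyset$.

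The only conceptual step is recognizing that folding lets a ``fix-to-$1$'' condition on $\alpha_2$ act like a ``fix-to-$0$'' condition on $\neg \alpha_2$, so that disjoint fixing sets would simultaneously demand $f(x)=1$ and $f(x)=0$. There is essentially no obstacle beyond this observation, which matches the author's remark that the proof is standard.
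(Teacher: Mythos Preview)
Your proof is correct and matches the paper's approach exactly: the paper simply remarks that with disjoint fixing sets one can simultaneously force $f$ to take both values $0$ and $1$, which is precisely the contradiction you derive via folding.
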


Let us now prove \cref{thm:fixing}, restated here for convenience.

\FixingAssignmentHardness*

\begin{proof}
    We define a choice function $C$ on the polymorphisms of $\PCSP(A, B)$ such that $C(f)$ returns the coordinates of an arbitrarily chosen (say, the lexicographically smallest) fixing assignment $T$ of size $t$.  Clearly this satisfies the first condition of \cref{injective choice hardness} and we now verify the second.

    Take an arbitrary polymorphism $f: \{0,1\}^{\ell}$ and minor $f_\pi: \{0,1\}^{\ell'}$ for which $\pi: [\ell] \rightarrow [\ell']$ is injective on $C(f)$.
    Since $\pi$ is injective, every partial assignment $x_{C(f)}$ for $f$ is the preimage (under $\pi$) of a partial assignment $y_{\pi(C(f))}$ for $f_\pi$.  In particular for
    $f$'s fixing assignment on $C(f)$ there is a corresponding fixing assignment for $f_\pi$ on $\pi(C(f))$.
    In other words, both $(C(f_\pi), \alpha_1)$ and $(\pi(C(f), \alpha_2))$ are fixing assignments for $f$ for some $\alpha_1, \alpha_2$, so by \cref{fixing assigns intersect} $C(f_\pi)$ and $\pi(C(f))$ cannot be disjoint.
\end{proof}

\section{Proofs for \texorpdfstring{$\MajFam$, $\ParFam$, and $\ATFam$}{Maj, Par, and AT}}
\label{blocksymmetry proofs}

In this section we, for completeness, give the proofs of lemmas from \cref{sec:Maj Par AT conditions}, characterizing $\MajFam$, $\ParFam$, and $\ATFam$ as polymorphisms of Promise-SAT.  We do not claim that these lemmas are new and for instance the lemmas for majority and AT appear in \cite{bgs2025}.

\MajConditionLemma*

\begin{proof}
    Suppose that $\Maj_\ell$ is not a polymorphism for some $\ell$, then there exists a $k \times \ell$ obstruction matrix $M$ such that $\sum_j M^j/\ell \in [0,1/2)^k$. As $\sum_j M^j/\ell$ is a point in $K(A)$ this shows that $[0,1/2)^k \cap K(A) \neq \emptyset$. We conclude that \ref{item_K} implies \ref{item_maj}.

    On the other hand, suppose that $[0,1/2)^k \cap K(A) \neq \emptyset$ and contains the point $x$. As all vectors in $A$ have rational entries,  $x$ can be chosen to be rational and can hence be written as $\sum_i \frac {\alpha_i}{\beta} a^i$, where $\sum_i \alpha_i = \beta $ and $\alpha_1,\alpha_2,\ldots$ and $\beta$ are non-negative integers and $a^i \in A$.   As the components of $x$ are rational numbers with denominator $\beta$, each such component is bounded from above by $\frac 12-\frac 1{2\beta}$.

    We claim that for $\ell = 2d\beta +\gamma$ where $d \geq \beta$ and $\gamma < 2\beta$, $\Maj_\ell$ is not a polymorphism.  Indeed, its obstruction matrix can be constructed by taking $2d \alpha_i$ columns of each $a^i$, joint with $\gamma$ arbitrary element of $A$.  In this matrix, the sum of each row is bounded by \[ 2d\beta \left(\frac 12 -\frac 1{2\beta}\right)+\gamma \leq d\beta-d+\gamma < \ell/2.\]  As any odd number greater than $2\beta^2$ can be written on the given form we conclude that \ref{item_maj} implies \ref{item_K}.

    Finally, \ref{item_sep} immediately implies \ref{item_K} and let us establish the reverse implication.  Firstly, note that $[0,1/2)^k \cap K(A) = (-\infty,1/2)^k \cap K(A)$ as all vectors in $K(A)$ only have non-negative components. We apply \cref{thm:separate} to obtain $c_1,\ldots,c_k$ and $b$ such that $\sum_{j=1}^k x_j c_j < b$ for all $x \in (-\infty,1/2)^k$ and $\sum_{j=1}^k y_j c_j\geq b$ for all $y \in K(A)$. It is easy to see that the first condition implies that $c_1,\ldots,c_k$ are non-negative and in view of this, $b$ can be modified to take the value $\sum_{j=1}^k c_j/2$. Finally the set of vectors $c$ that satisfy the inequalities in \ref{item_sep} is a rational polytope and hence if it is nonempty it contains a rational vector.  Since it is closed under multiplication by scalars it also contains an integral vector.   This completes the proof.
\end{proof}

\ParConditionLemma*

\begin{proof}
    Let $M$ be an obstruction matrix for $\Par_\ell$, $\ell$ odd. Note that the columns of $M$ are elements of $A$, and that $\XOR_j M^j = 0^k$. This implies the existence of an odd sized subset $B$ such that $\XOR_{s \in B} s = 0^k$. This shows that \ref{item_oddset} implies \ref{item_oddpari}.

    On the other hand, suppose that there exists an odd sized subset $B$ of $A$ such that $\XOR_{s \in B} s = 0^k$. We use this subset can be to create obstructions for $\Par_\ell$ for any odd $\ell \geq |B|$. The first $|B|$ columns are given by $B$ while the remaining columns all equal some fixed element of $A$. As this column appears an even number of times, it will not change the parity of the matrix. This shows that \ref{item_oddpari} implies \ref{item_oddset}.

    To show that \ref{item_oddset} and \ref{item_alwaysodd} are equivalent, note that $\{\XOR_{s \in B} s: B \subseteq A, |B| \text{ odd}\}$ is an affine subspace in $\mathbf{F}_2^k$. Any affine subspace can always be represented as the set of solutions $x$ of a linear equation system $M x = b$ for some matrix $M \in \mathbf{F}_2^{\ell \times k}$. The statement that the affine set does not contain $0^k$ is equivalent to $b \neq 0^\ell$ and we can take any $i \in [\ell]$ where $b_i\neq0$ and let  $M_i$ define the set $\beta$ in \ref{item_alwaysodd}. We conclude that \ref{item_oddset} implies \ref{item_alwaysodd} and the reverse implication is easy to see.
 \end{proof}

\ATConditionLemma*

\begin{proof}
The proof is very similar to the proof of Lemma~\ref{lemma:test_maj} but for completeness let us give most details.  Suppose $\AT_\ell \notin \Pol(\fPCSP(A, \OR))$ for some odd $\ell$ and let $M$ be an obstruction matrix $M$.  Let $z = \frac{2}{\ell - 1}\sum_{i=1}^{\ell - 1}(-1)^{i + 1} M_i$ and note that $z$ can written as $x - y$ where both $x$ and $y$ are in $K(A)$ by letting $x$ be the sum of the odd terms, and $y$ be the sum of the even terms.  Since $M$ is an obstruction for $\AT_\ell$, $\sum_{i=1}^{\ell}(-1)^{i + 1} M_i^j < 0$ for all $j$ and as the last term of this sum is positive $\sum_{i=1}^{\ell - 1}(-1)^{i + 1} M_i^j < 0$. This shows that $z \in (-\inf,0)^k$. We conclude that \ref{item_diff} implies \ref{item_AT} and let us establish the reverse inclusion.

 Assume that $\{x - y: x,y \in K(A)\} \cap (-\infty,0)^k \neq \emptyset$ and let $x,y \in K(A)$ such that $x - y \in (-\infty,0)^k$. We can assume that $x$ and $y$ have rational coefficients and hence can be expressed as $x = \sum_i \alpha_i^x/\beta a^i$ and $y = \sum_i \alpha^y_i/\beta a^i$, where $\sum_i \alpha_i^x= \sum_i \alpha^y_i = \beta$ and $\alpha^x_1,\alpha_2,\ldots, \alpha^y_1,\alpha^y_2,\ldots$ and $\beta$ are non-negative integers and $a^i \in A$. Note also than any coordinate of $x-y$ is bounded from above by $-1/\beta$.

 We construct an obstruction matrix $M$ for $\AT_\ell$ for any $\ell > 4\beta^2$. 
 Suppose $\ell =2\beta d + \gamma$ where $\gamma < 2\beta $.  At odd indices, put $d \alpha_i^x$ columns $a^i$ and at even columns $d \alpha_i^y$ columns $a^i$ and complete this by any $\gamma$ columns from $A$.  It is easy to see that the alternating sum defining $AT$ is bounded by $-d+\gamma$ in any coordinate and hence $M$ is an obstruction matrix.

It is easy to see that \ref{item_linear} implies \ref{item_diff} as the linear form given by the $c$-vector is 0 on $K(A)-K(A)$ and negative in the negative orthant.  The reverse implication follows, as previously, by the separation theorem, Theorem~\ref{thm:separate}.  We omit the details.

\end{proof}

\end{document}